\documentclass[12pt]{article}
\usepackage[margin=1.05in]{geometry}
\usepackage{amsmath,amssymb,amsthm,mathtools}
\usepackage{graphicx}
\usepackage{longtable,array}
\usepackage[colorlinks=true,linkcolor=blue,citecolor=blue]{hyperref}
\theoremstyle{plain}
\newtheorem{theorem}{Theorem}
\newtheorem{proposition}{Proposition}
\newtheorem{lemma}{Lemma}
\newtheorem{corollary}{Corollary}
\theoremstyle{plain}
\newtheorem{definition}{Definition}
\newtheoremstyle{remstyle}{\topsep}{\topsep}{}{}{\bfseries}{.}{.5em}{}
\theoremstyle{remstyle}
\newtheorem{remark}{Remark}
\newtheorem{condition}{Condition}

\newcommand{\Lap}{\triangle_k}
\newcommand{\Lapm}{\triangle_{k,\varepsilon_1}}
\newcommand{\Lapy}{\triangle_{y}}
\newcommand{\Lape}{\triangle_{\eta}}
\newcommand{\Hk}{\mathcal H_k}
\newcommand{\Hkp}{\mathcal H_k'}
\newcommand{\ip}[2]{\langle #1,#2\rangle_k}
\newcommand{\ipL}[2]{\langle #1,#2\rangle}
\newcommand{\Rea}{\operatorname{Re}}
\newcommand{\Ima}{\operatorname{Im}}
\newcommand{\dist}{\operatorname{dist}}
\newcommand{\spn}{\operatorname{span}}
\newcommand{\ord}{\operatorname{ord}}
\newcommand{\tr}{\operatorname{tr}}
\newcommand{\lp}[1]{\lambda^{\mathrm p}_{#1}}
\newcommand{\ep}[1]{e^{\mathrm p}_{#1}}
\newcommand{\tep}[1]{\tilde e^{\mathrm p}_{#1}}
\newcommand{\dd}{\,\mathrm d}

\title{Fredholm Backstepping with Heymann Pre-Feedback\\
for Linearized Navier--Stokes 2D Channel}
\author{Miroslav Krstic\thanks{Department of Mechanical and Aerospace Engineering,
University of California, San Diego, La Jolla, CA 92093-0411, USA
(\texttt{mkrstic@ucsd.edu}).}}
\date{}

\begin{document}
\maketitle

\begin{abstract}
We stabilize the linearized 2-D Poiseuille channel flow at arbitrary Reynolds numbers,
and assign the decay of every wavenumber to any user-prescribed rate. The actuation is
as in Vazquez--Krstic (IEEE TAC, 2007), VK'07 below: the two velocity components at
only one wall. This is the first Fredholm backstepping design for a Navier--Stokes
system. In VK'07 the wall-normal input renders the plant spatially causal, after which
the tangential input closes a Volterra backstepping loop. Here the normal input carries
a finite-rank feedback of arbitrarily small gain, whose sole role, in the sense of
Heymann's lemma (TAC, 1968), is to give the plant simple spectrum and make every mode
controllable from the tangential velocity. The tangential input then applies a
stabilizing feedback, obtained through an invertible Fredholm transform to a Stokes
target PDE, shifted left by an arbitrary decay. No simplicity of eigenvalues of the
Orr--Sommerfeld operator is assumed and the exclusions of plant coefficient values that
have accompanied Fredholm backstepping are removed. We state the comparison with the
Volterra design without embellishment: the Fredholm architecture is powerful, but its
gain is spectral where VK'07 is closed-form, the simpler of the two, and one from which
the Fredholm approach takes its model, its actuation, and its target.
What is demonstrated here is the reach of the Fredholm method on its most demanding
plant to date. The feedback is tested at Reynolds number $30000$, with the dominant
part of the closed-loop spectrum assigned so that it amounts to slowing the flow about
twentyfold.
\end{abstract}

\newpage
\tableofcontents
\newpage

\section{Introduction}\label{sec:intro}

\subsection{Problem}
The plane Poiseuille flow in a 2-D channel, linearized about its parabolic profile, is a
benchmark for boundary feedback stabilization of the Navier--Stokes equations: it is
linearly unstable for large Reynolds numbers, its geometry is the simplest that carries
the instability, and its control from the wall raises, in miniature, the difficulties of
turbulence suppression. Vazquez and Krstic \cite{VK07} gave the first explicit feedback
law stabilizing this flow at arbitrary Reynolds number without discretization. Their
actuation is the two velocity components at the upper wall, $u(x,1,t)=U_c$ (streamwise)
and $V(x,1,t)=V_c$ (wall-normal), and their design is sequential: the normal input $V_c$
is a dynamic feedback --- per wavenumber, a first-order filter --- chosen so that the
transformed plant becomes spatially causal in $y$ (``strict-feedback''), and the
tangential input $U_c$ then closes a backstepping loop of Volterra type, mapping the plant
to a heat equation.

In the same channel, \cite{VTC08} adds damping to the Volterra target and tracks
trajectories between parabolic profiles, with exponential decay of the error in $L^2$,
$H^1$ and $H^2$.

Rigorous boundary stabilization of this flow has also been obtained by spectral
methods. Barbu \cite{Barbu2007} controls the wall-normal velocity at both walls,
projects onto the finitely many unstable modes, and stabilizes that block by a
finite-dimensional feedback under a rank condition on the modal coefficients,
admitting geometric and algebraic multiplicity up to two. Munteanu reduces the
actuation to a single component: the wall-normal velocity at one wall
\cite{Munteanu2012}, and the tangential velocity at one wall \cite{Munteanu2011},
placing the poles of the unstable block by scalar feedback. These designs are
finite-dimensional and spectral, they act on the unstable modes only, and their
validity rests on properties of the Orr--Sommerfeld spectrum --- semi-simplicity of
the selected eigenvalues, and a rank condition that degenerates when two of them
coincide --- assumed rather than established for the channel. In bounded
two-dimensional domains a prescribed decay rate is reached by the same splitting, with
the feedback on the modes to the right of $-\omega$ obtained from a Riccati equation
\cite{Raymond2006} and the remaining modes decaying under viscosity.

This paper solves the same stabilization problem, with the same model and the same two
actuators, by a design in which the two inputs exchange their burdens:
\begin{itemize}
\item the normal input carries only the \emph{static, finite-rank} law
$V_c(t,x)=\varepsilon_1\int_0^1 e^{-\theta_1(1-y)}V(t,x,y)\dd y$, a single weighted
average of the normal velocity field --- full-state feedback returning a scalar per
wavenumber --- with a weight decaying exponentially away from the actuated wall;
\item the tangential input carries a static finite-rank term of the same type plus the
stabilizing feedback, which is constructed from an invertible transformation of
\emph{Fredholm} type --- a bounded, boundedly invertible map $T$ with $T$ minus an
isomorphism Hilbert--Schmidt --- onto a shifted Stokes target system.
\end{itemize}
The role of the static laws is that of the pre-feedback in Heymann's lemma
\cite{Heymann68}: they do not stabilize anything; they make every mode
controllable from the tangential wall velocity, for \emph{every} wavenumber $k$ and
Reynolds number $Re$, by splitting whatever multiple eigenvalues and Jordan chains the
Orr--Sommerfeld operator may possess. The Fredholm
step then assigns the entire closed-loop spectrum.

The methodological template is the two-input Fredholm design for the
Kuramoto--Sivashinsky equation of \cite{BKS26}, which in turn builds on the Fredholm
backstepping line initiated by Coron and L\"u \cite{CoronLu2015} and abstracted by Hayat and
Loko \cite{HayatLoko26}. Two features of the channel make the present problem
substantially harder than Kuramoto--Sivashinsky and are the mathematical content of this
paper. First, the plant operator is \emph{not self-adjoint}: the mean flow enters as a
bounded but skew-dominated perturbation of the Stokes operator, eigenvalues are complex,
and eigenvalue degeneracies may carry nontrivial Jordan structure --- documented for the
Orr--Sommerfeld problem at isolated parameter values \cite{Shanthini89} --- so the
splitting analysis cannot follow eigenfunctions and must be built on Riesz projections,
resolvent identities, and a multiplicity calculus of Weinstein--Aronszajn type. Second,
the state space is not $L^2$ but the $H^1$ energy space of the normal velocity, and the
pre-feedback changes the space itself, not only the domain: it replaces the homogeneous
boundary condition at the actuated wall by a weighted average of the normal velocity
across the channel, and $\triangle_k=\partial_{yy}-4\pi^2k^2$ must be inverted under that
condition.

Measured against the Fredholm backstepping literature, with
\cite{CoronLu2015,CoronHXZ22,HayatLoko26} as the yardstick, this is the most demanding instance of
the method executed to date: the first Navier--Stokes plant, the first non-normal plant
with a mass operator in its evolution, and, together with \cite{BKS26}, the first
designs in this line that carry no hypotheses on the plant's spectrum --- the
plant-side exclusion sets present since \cite{Cerpa2010,CoronLu2015} are moved onto the
design parameters, where they exclude nothing of consequence.

\subsection{Contributions}
\begin{enumerate}
\item \emph{Fredholm backstepping reaches the Navier--Stokes equations.} This is not
merely an extension from 1D to 2D. It is an extension from plants whose spectrum is
given to a plant whose spectrum is not: the Orr--Sommerfeld operator is non-normal, its
eigenvalues are complex and may carry Jordan chains of any admissible length, and its
time derivative appears under a second operator that must be inverted before the design
begins, so the Riesz basis on which Fredholm backstepping rests is constructed here
rather than inherited. What carries the method across is Heymann's lemma,
applied at the boundary of a partial differential equation: a wall feedback of
vanishing gain makes every mode controllable from the other wall velocity component,
and the Fredholm transformation then assigns the whole spectrum. Stabilizing this flow
is not itself new --- it is done in \cite{VK07} by Volterra backstepping and in
\cite{Barbu2007,Munteanu2011,Munteanu2012} by spectral feedback; what is new is that
the Fredholm method now reaches it.
\item \emph{Stabilization of the channel without any spectral hypothesis.} The
spectral designs for this flow reduce stabilization to a finite-dimensional
controllability condition on the modes they select
\cite{Barbu2007,Munteanu2011,Munteanu2012}, which for a single wall velocity component
requires those modes to be distinct; every previous Fredholm design excludes plants
outright \cite{Cerpa2010,CoronLu2015,HayatLoko26}. Whether the Orr--Sommerfeld
operator meets such conditions at all $(k,Re)$ is not established, and degeneracies
are documented at isolated parameter values \cite{Shanthini89}. Here nothing is
required of it: the two wall velocity components are used to convert whatever
multiplicities and Jordan chains the plant carries into simple spectrum with every
mode controllable from the tangential component, and every exclusion falls on
parameters the designer chooses before seeing the flow.
\item \emph{Why the design uses two wall velocity components, and two stages.} The
temporal Orr--Sommerfeld spectrum of this flow is not semisimple: \cite{Shanthini89}
finds infinitely many parameter combinations at which two modes coalesce into a single
mode of order two. Earlier Fredholm designs assume that away; here the second wall
velocity component converts whatever Jordan chains the plant carries into simple
spectrum with every mode controllable from the tangential one, and a single rank-one
stage cannot do that for every chain configuration, so there are two.
\end{enumerate}

The stabilization is proved wavenumber by wavenumber and assembled into the physical
channel with the pressure reconstructed; the rate is arbitrary, and the band of
actuated wavenumbers widens with it.
The controller and the main theorem are stated in Section \ref{sec:main1}, before any
analysis, in the form the PDE backstepping reader expects: the feedback through the
wall traces of a gain kernel, the kernel posed --- formally --- by a kernel PDE on the
square, and a stability estimate in the $L^2$ energy of the velocity pair.
What the paper demonstrates is two-sided, and both sides are reported as findings. It
demonstrates the reach of Fredholm backstepping: the channel is stabilized without
imposing spatial causality on the plant and with no spectral hypotheses.
And it demonstrates, by executing the alternative, what the Volterra design of
\cite{VK07} secured: a closed-form synthesis, and higher-norm
stability, which the present spectral construction does not match. The Volterra design
is the simpler one, and it is the design from which this paper takes its model, its
actuation, and its target. Section \ref{sec:status} gives the grounds for that
comparison.

\subsection{An Analogy-Based Roadmap for the Finite-Dimensional Control Reader}
\label{sec:fd-roadmap}

This subsection translates the architecture of the paper into the language of
finite-dimensional linear control, so that the design can be read by a control theorist
who does not work in PDEs. Read Navier--Stokes as a very large descriptor LTI system
\(E\dot x=Ax+B_1u_1+B_2u_2\): the horizontal and vertical velocity components are
linked algebraically through the pressure, which carries no time derivative and is not
a state. A first transformation, of state and inputs together, decomposes this system
into independent subsystems, designed one at a time and reassembled at the end.

The plant is first shown to be controllable from the two inputs. A static
pre-feedback, in the manner of Heymann's lemma \cite{Heymann68}, then makes the
eigenvalues distinct and every mode controllable from one designated input. This
pre-feedback does not stabilize; it turns a two-input plant with repeated eigenvalues
and Jordan blocks into a one-input plant. It is applied in two stages because one
stage moves only one Jordan chain.

Fredholm backstepping is a similarity transformation \(T\) onto a target:
\(T A_{\rm cl}T^{-1}=S-qI\). The target \(S\) is the dissipative Stokes dynamics, which
in the variables used later is heat dynamics for the streamwise derivative of
vorticity, and \(q\) is user-assigned. With 150,000 states in the simulation of
Section \ref{sec:numerics}, no one chooses that many pole locations; here none are
chosen. The entire target spectrum is shifted left by the user-assignable \(q\).

Two features have no matrix counterpart. The plant is highly non-normal, so its
eigenvectors can be badly conditioned: in the numerical example the feedback gain is of order ten while the
similarity that produces it has condition number about \(10^7\). And a matrix has
finitely many modes, whereas here the infinite tail must be shown harmless:
eigenvectors still give usable coordinates, feedback coefficients remain summable, the
similarity is bounded and invertible, the uncontrolled subsystems are uniformly stable,
and the reconstructed pressure remains square-integrable.

Table~\ref{tab:fd-big} gives the design and analysis concepts at the coarsest level.
A second roadmap, Table~\ref{tab:fd-detail}, is placed immediately before the proofs,
for the reader who wants to enter them.

\begin{table}[t]
\centering\small
\caption{Big-picture roadmap in finite-dimensional control language.}
\label{tab:fd-big}
\begin{tabular}{p{0.20\textwidth}p{0.55\textwidth}p{0.18\textwidth}}
\hline
\textbf{Paper operation} & \textbf{Finite-dimensional control reading} & \textbf{Where established}\\
\hline
Plant representation & Start from a descriptor system \(E\dot x=Ax+B_1u_1+B_2u_2\), not standard \(\dot x=Ax+Bu\). A coordinate/input transformation decomposes it into independent subsystems. & Section~\ref{sec:model}; Definition~\ref{def:ops}\\
Two inputs & The wall-normal and the tangential wall velocity. Together they control the plant. & Lemmas~\ref{lem:trace}, \ref{lem:inj}; Theorem~\ref{thm:2ctrl}\\
Why one input can fail & At a repeated eigenvalue with a two-dimensional eigenspace, a scalar input necessarily misses a left-eigenvector direction. & Proposition~\ref{prop:oneinput}\\
Heymann pre-feedback & Static feedback through both inputs makes the eigenvalues distinct and every mode controllable from one of the two inputs. It does not stabilize. & Theorem~\ref{thm:spectral}; Propositions~\ref{lem:step1}, \ref{lem:step2}\\
Fredholm backstepping & Construct an invertible state similarity \(T\) taking the controlled plant to a target. & Theorems~\ref{thm:riesz}, \ref{thm:main}\\
Target and arbitrary rate & Choose \(S-qI\), where \(S=S^\top<0\). Increasing \(q\) shifts the entire target spectrum left; this removes the need for individual pole placement. & Definition~\ref{def:target}; Theorem~\ref{thm:main}\\
Reassembly & Reassemble the designed subsystems, leave sufficiently fast ones uncontrolled, return to the original variables. & Proposition~\ref{lem:mean}; Lemmas~\ref{lem:sector}, \ref{lem:pgraph}; Theorem~\ref{thm:phys}\\
Global conclusion & Well-posed closed loop decays at prescribed rate. & Theorem~\ref{thm:first}\\
\hline
\end{tabular}
\end{table}

\section{Linearized channel flow and its two wall inputs}\label{sec:model}

This section is self-contained; it reproduces the model, the linearization, and the
Orr--Sommerfeld reduction of \cite[Secs.~II, V]{VK07}, which we adopt verbatim, together
with the wall actuation.

Consider a 2-D incompressible channel flow in
$\Omega=(-\infty,\infty)\times[0,1]$, with streamwise velocity $U$, wall-normal velocity
$V$, pressure $P$, governed by the Navier--Stokes equations
\begin{equation}\label{eq:NS}
U_t=\tfrac1{Re}(U_{xx}+U_{yy})-UU_x-VU_y-P_x,\qquad
V_t=\tfrac1{Re}(V_{xx}+V_{yy})-UV_x-VV_y-P_y,
\end{equation}
and the continuity equation $U_x+V_y=0$, where $Re$ is the Reynolds number. The
equilibrium is the parabolic Poiseuille profile
\begin{equation}\label{eq:poiseuille}
U^e=4y(1-y),\qquad V^e=0,\qquad P^e=P_0-\tfrac{8}{Re}\,x ,
\end{equation}
which is linearly unstable for large $Re$ \cite{SchmidHenningson}. Writing the
fluctuations $u=U-U^e$, $V$, $p=P-P^e$ and linearizing about \eqref{eq:poiseuille} gives
the linearized equations
\begin{align}
u_t&=\tfrac1{Re}(u_{xx}+u_{yy})+4y(y-1)u_x+4(2y-1)V-p_x,\label{eq:linu}\\
V_t&=\tfrac1{Re}(V_{xx}+V_{yy})+4y(y-1)V_x-p_y,\label{eq:linV}\\
0&=u_x+V_y,\label{eq:cont}
\end{align}
with boundary conditions
\begin{equation}\label{eq:bc}
u(x,0)=V(x,0)=0,\qquad u(x,1)=U_c(x),\qquad V(x,1)=V_c(x),
\end{equation}
and the pressure boundary conditions obtained by evaluating \eqref{eq:linV} at the walls,
as in \cite[(15)--(16)]{VK07}. The two actuation variables $U_c$ (tangential) and $V_c$
(normal) live on the upper wall $y=1$; no actuation occurs inside the channel or at the
lower wall.

Taking the Laplacian of \eqref{eq:linV} and eliminating the pressure by the Poisson
equation that $p$ satisfies yields the autonomous Orr--Sommerfeld equation for the normal
velocity,
\begin{equation}\label{eq:OSxy}
\triangle V_t=\tfrac1{Re}\triangle^2V+4y(y-1)\triangle V_x-8V_x ,
\end{equation}
with boundary conditions, from \eqref{eq:bc} and continuity \eqref{eq:cont} evaluated at
the walls,
\begin{equation}\label{eq:OSbcxy}
V(x,0)=V_y(x,0)=0,\qquad V(x,1)=V_c(x),\qquad V_y(x,1)=-(U_c)_x(x).
\end{equation}
Fourier transform in $x$, $f(k,y)=\int_{\mathbb R}f(x,y)e^{-2\pi ikx}\dd x$, so
$\partial_x\mapsto2\pi ki$; the $L^2$ norm is preserved (Parseval), and different
wavenumbers decouple by spatial invariance. Write
\begin{equation}\label{eq:Lapdef}
\Lap\coloneqq\partial_{yy}-4\pi^2k^2 .
\end{equation}
For $k\ne0$, \eqref{eq:OSxy}--\eqref{eq:OSbcxy} become, per wavenumber,
\begin{gather}
\Lap V_t=\mathcal L_kV,\qquad
\mathcal L_k\coloneqq\tfrac1{Re}\Lap^2+8\pi ki\,y(y-1)\Lap-16\pi ki ,\label{eq:OSk}\\
V(k,0)=V_y(k,0)=0,\qquad V(k,1)=V_c(k),\qquad V_y(k,1)=-2\pi ki\,U_c(k).\label{eq:OSkbc}
\end{gather}
The streamwise velocity is recovered from continuity: $u=V_y/(-2\pi ki)$. Thus at each
$k\ne0$ the plant is the single scalar equation \eqref{eq:OSk} with the \emph{two} scalar
inputs
\begin{equation}\label{eq:channels}
\text{normal wall velocity: the trace }V(k,1),\qquad
\text{tangential wall velocity: the trace }V_y(k,1),
\end{equation}
both free at the actuated wall. At the lower wall $u=V=0$: the no-slip/no-penetration
conditions of fluid mechanics, which in the Orr--Sommerfeld normal-velocity variable
become $V=V_y=0$; we call them the no-slip wall conditions throughout. This is the
actuation of
a fourth-order operator at one end: two Cauchy data are imposed at $y=0$, two are
available for feedback at $y=1$.

\medskip
\noindent\textbf{The design.} Fix real design parameters
\begin{equation}\label{eq:params}
c\in\mathbb R\ \text{(ratio of the two components)},\quad
\theta_1,\theta_2>0\ \text{(gain shapes)},\quad
\varepsilon_1,\varepsilon_2\in\mathbb R\ \text{(strengths)},
\end{equation}
and the gain profiles $g_i(y)\coloneqq e^{-\theta_i(1-y)}$, decaying away from the
actuated wall. The pair $(\theta_2,\varepsilon_2)$ parametrizes the second stage; it
enters the design through the pre-compensated operator of Theorem
\ref{thm:spectral}, whose simple spectrum, Riesz basis and reachability depend on it,
and not through the kernel system \eqref{eq:kpde}. Throughout, $\ipL{f}{h}\coloneqq\int_0^1f(y)\overline{h(y)}\dd y$ denotes the
$L^2(0,1)$ inner product, and $\ip{\cdot}{\cdot}$ the energy inner product
\eqref{eq:Hk}; since $g_1,g_2$ are real, $\ipL{V}{g_i}=\int_0^1V g_i\dd y$ is linear in
$V$. The feedback laws are
\begin{equation}\label{eq:prefeed}
V_c(k)=\varepsilon_1\,\ipL{V}{g_1},\qquad
U_c(k)=\frac{i}{2\pi k}\Big[c\,\varepsilon_1\,\ipL{V}{g_1}+\varepsilon_2\,\ipL{V}{g_2}+F_k[V]\Big],
\end{equation}
where $F_k$ is the Fredholm feedback functional constructed in
Section~\ref{sec:design}. Equivalently, by \eqref{eq:OSkbc}, the boundary conditions in
closed loop read
\begin{equation}\label{eq:clbc}
V(1)=\varepsilon_1\ipL{V}{g_1},\qquad
V_y(1)=c\,\varepsilon_1\ipL{V}{g_1}+\varepsilon_2\ipL{V}{g_2}+F_k[V].
\end{equation}
The static parts of \eqref{eq:prefeed} are the Heymann pre-feedback; $F_k$ carries the
stabilization. In physical space the pre-feedback is
$V_c(t,x)=\varepsilon_1\int_0^1e^{-\theta_1(1-y)}V(t,x,y)\dd y$ (restricted to the
controlled wavenumber band of Section~\ref{sec:phys}); its kernel is real, so the
reality of the physical control is automatic (Lemma~\ref{lem:real}).

\medskip
\noindent\textbf{Problem.} For each $k\ne0$, $Re>0$, construct $F_k$ and admissible
parameters \eqref{eq:params} such that the closed loop
\eqref{eq:OSk}, \eqref{eq:clbc} is exponentially stable at a prescribed decay rate in the
$L^2$ norm of the velocity pair $(u,V)$, through an invertible transformation to a stable
target system --- with every restriction placed on \eqref{eq:params} and none on
$(k,Re)$.

\section{Controller and main result}\label{sec:main1}

\subsection{Controller and stabilization theorem}\label{sec:main-thm}

The channel is taken periodic of period $L$ in $x$,
$\Omega_L\coloneqq(\mathbb R/L\mathbb Z)\times(0,1)$, with Fourier components
$V(k,y,t)=\int_0^LV(t,x,y)e^{-2\pi ikx}\dd x$, $k\in\mathbb Z/L$, to which Section
\ref{sec:model} applies verbatim per wavenumber. Given a decay rate $\omega$, the feedback acts on the finitely many wavenumbers of the
band $\mathcal B$ of Theorem \ref{thm:first}: those whose open-loop decay is not already at
least $\omega$, the modes outside decaying at $2\pi^2k^2/Re\ge\omega$ under viscosity
alone. The band selects the slow wavenumbers, not the unstable ones; the long waves
are stable but slow, and stabilizing them at the rate $\omega$ is what makes the
field estimate rapid. At the mean mode $k=0$ incompressibility and $V(0)=0$ force $V\equiv0$, so that mode
is invisible in the normal velocity and is stabilized by tangential actuation acting
on $u$. Its plant is the one-dimensional heat equation $u_t=u_{yy}/Re$ with Dirichlet
actuation at the upper wall: no incompressibility coupling survives there, hence none
of the spatial nonlocality that forces the Fredholm construction at $k\ne0$, and the
mean mode is stabilized at an arbitrary rate by Volterra backstepping with the
classical kernel \cite[Ch.~4]{SGK2009} appearing in \eqref{eq:Ucphys}. The paper
therefore carries two kernels of different provenance: the Fredholm kernels
$\mathcal K^{(k)}$, constructed spectrally, for $k\ne0$, and this closed-form Volterra
kernel for the mean mode. Wavenumbers do not appear in the main theorem, which
is stated in the physical variables; one wavenumber at a time is treated only inside
the design, where the superscript in $\mathcal K^{(k)}$ marks the
dependence (dropped when $k$ is fixed).

Three objects are used in this section ahead of their systematic development. The inner
product $\ip{V}{W}=\ipL{V_y}{W_y}+4\pi^2k^2\ipL{V}{W}$ on $H^1(0,1)$, with norm
$\|\cdot\|_k$: by continuity $u=V_y/(-2\pi ki)$, so
$\|u\|^2+\|V\|^2=\|V\|_k^2/(4\pi^2k^2)$ at each wavenumber, and $\|\cdot\|_k$ is the
$L^2$ energy of the velocity pair (Section \ref{sec:setting}). The Stokes operator
$S_k$ of the channel --- the plant \eqref{eq:OSk} with the mean flow removed, under
no-slip conditions at both walls --- with real, simple eigenvalues $\lambda_1>\lambda_2>\cdots\to-\infty$
and eigenfunctions $e_j$, $\ip{e_i}{e_j}=\delta_{ij}$ (Section \ref{sec:stokes}). And
the advection coefficients of \eqref{eq:OSk},
$a(y)\coloneqq8\pi ki\,y(y-1)$, $b\coloneqq-16\pi ki$.

The design transforms the closed-loop state, per wavenumber, by a Fredholm integral
transformation
the Fredholm transformation
\begin{equation}\label{eq:transf}
\alpha(k,y,t)=V(k,y,t)-\int_0^1\mathcal K^{(k)}(y,\eta)\,V(k,\eta,t)\dd\eta ,
\end{equation}
with kernel $\mathcal K^{(k)}$ to be found, onto the target system
\begin{equation}\label{eq:targ}
\Lap\alpha_t=\frac1{Re}\Lap^2\alpha-q\,\Lap\alpha,
\qquad \alpha(0)=\alpha_y(0)=\alpha(1)=\alpha_y(1)=0:
\end{equation}
the Stokes system shifted left by a constant $q$ of the designer's choice, the
wavenumber argument of $\alpha$ and the superscript of $\mathcal K$ being dropped
whenever $k$ is fixed. With $q=0$ the target is the unshifted Stokes system --- the
heat-type equation obtained by removing the mean flow, the target of \cite{VK07}. The
shift moves the target spectrum to $\{\lambda_n-q\}_{n\ge1}$; the mean flow is removed
entirely by the transformation, and $q\ge\omega$ makes the target decay at the
prescribed rate: the shift is what provides ``rapid'' in the present construction.

The design parameters $c,\theta_2,\varepsilon_2$ of \eqref{eq:prefeed} do not enter the
physical form of the feedback; they enter the construction of its gain, through
Theorem \ref{thm:spectral}. For $k\ne0$ the
modal controller is carried by the wall traces of the kernel, and at $k=0$ by the
Volterra kernel $k_0$:
\begin{gather}
V_c(k)=\varepsilon_1\int_0^1e^{-\theta_1(1-\eta)}V(k,\eta,t)\dd\eta,\qquad
U_c(k)=\frac{i}{2\pi k}\int_0^1\mathcal K^{(k)}_y(1,\eta)\,V(k,\eta,t)\dd\eta ,
\label{eq:kfeed}\\
U_c(0)=\int_0^1k_0(1,\eta)\,u(0,\eta,t)\dd\eta,\qquad
k_0(y,\eta)\coloneqq-c_0Re\,\eta\,
\frac{I_1\big(\sqrt{c_0Re(y^2-\eta^2)}\big)}{\sqrt{c_0Re(y^2-\eta^2)}},\quad c_0\ge\omega,
\label{eq:kfeed0}
\end{gather}
with $I_1$ the modified Bessel function of the first kind.

\begin{theorem}[Rapid stabilization of linearized channel flow]\label{thm:first}
Let $Re>0$, $L>0$ and a decay rate $\omega>0$ be given. There exist a gain shape
$\theta_1>0$ outside a discrete subset of $(0,\infty)$, a strength
$\varepsilon_1\ne0$ of arbitrarily small modulus, a mean-mode damping $c_0\ge\omega$,
a shift $q\ge\omega$ in \eqref{eq:targ}, and a real gain $\mathcal G$ such that the
feedback
\begin{align}
V_c(t,x)&=\varepsilon_1\int_0^L\!\!\int_0^1\chi(x-\xi)\,e^{-\theta_1(1-y)}\,V(t,\xi,y)\dd y\dd\xi,
\qquad \chi(x)\coloneqq\frac2L\sum_{k\in\mathcal B,\,k>0}\cos(2\pi kx),
\label{eq:Vcphys}\\
U_c(t,x)&=\int_0^L\!\!\int_0^1\mathcal G(x-\xi,\eta)\,V(t,\xi,\eta)\dd\eta\dd\xi
+\frac1L\int_0^L\!\!\int_0^1k_0(1,\eta)\,u(t,\xi,\eta)\dd\eta\dd\xi ,
\label{eq:Ucphys}
\end{align}
whose gain assembles the modal laws \eqref{eq:kfeed},
\begin{equation}\label{eq:synth}
\mathcal G(x,\eta)=\frac1L\sum_{k\in\mathcal B\setminus\{0\}}e^{2\pi ikx}\,
\frac{i}{2\pi k}\,\mathcal K^{(k)}_y(1,\eta),
\end{equation}
with $\mathcal K=\mathcal K^{(k)}$ transcribed, formally, by the kernel system for the
pair $(\mathcal K,\mathcal M)$, $\mathcal M$ tied to $\mathcal K$ by the mass operator,
on $(0,1)^2$:
\begin{subequations}\label{eq:kpde}
\begin{align}
&\frac1{Re}\Lape^2\mathcal M+\Lape\big(a(\eta)\mathcal M\big)+q\,\Lape\mathcal M+b\mathcal M
-\frac1{Re}\Lapy^2\mathcal K\notag\\
&\qquad
+\frac1{Re}\Lape\mathcal M(y,1)\,\mathcal K_y(1,\eta)
-\frac{\varepsilon_1}{Re}\partial_\eta\Lape\mathcal M(y,1)\,e^{-\theta_1(1-\eta)}\notag\\
&\hspace{10em}=\;\big(a(y)+q\big)\,\Lape\delta(y-\eta)+b\,\delta(y-\eta),\label{eq:kpde-a}\\[2pt]
&\Lapy\mathcal K(y,\eta)=\Lape\mathcal M(y,\eta),\label{eq:kpde-b}\\
&\mathcal K(0,\eta)=0,\qquad \mathcal K_y(0,\eta)=0,\qquad \mathcal K(1,\eta)=\varepsilon_1e^{-\theta_1(1-\eta)},\label{eq:kpde-c}\\
&\mathcal M(y,0)=\mathcal M_\eta(y,0)=\mathcal M(y,1)=\mathcal M_\eta(y,1)=0,\label{eq:kpde-d}\\
&\int_0^1\cosh(2\pi k\eta)\,\Lapy\mathcal K(y,\eta)\dd\eta=
\int_0^1\sinh(2\pi k\eta)\,\Lapy\mathcal K(y,\eta)\dd\eta=0 .\label{eq:kpde-e}
\end{align}
\end{subequations}
applied to \eqref{eq:linu}--\eqref{eq:bc} on $\Omega_L$, generates an analytic
semigroup on $X$; for every initial field in $X$ the closed-loop solution exists and is
unique in the class $C([0,\infty);X)\cap C^1((0,\infty);X)$ of $X$-valued functions
solving the closed loop, is real for real data, has pressure
$p(\cdot,t)\in L^2(\Omega_L)$ for $t>0$, and satisfies
\begin{equation}\label{eq:decay-first}
\|u(\cdot,t)\|^2_{L^2(\Omega_L)}+\|V(\cdot,t)\|^2_{L^2(\Omega_L)}
\le C\,e^{-2\omega t}\,\big(\|u(\cdot,0)\|^2_{L^2}+\|V(\cdot,0)\|^2_{L^2}\big),
\qquad t\ge0,
\end{equation}
where $\Lapy,\Lape$ denote $\partial^2-4\pi^2k^2$ in the indicated variable, $\delta$
is the Dirac distribution on the diagonal, $a,b$ are the advection coefficients of
\eqref{eq:OSk}, the unknowns are $(\mathcal K,\mathcal M)$ and the tangential gain is the wall slope
$\mathcal K_y(1,\cdot)$, the $\eta$-integrals in \eqref{eq:Ucphys}, \eqref{eq:kfeed} and
\eqref{eq:synth} denote the action of bounded linear functionals on the $k$-th Fourier
mode of $V$,
$\mathcal B\coloneqq\{k\in\mathbb Z/L:|k|<K\}$ with
$K\coloneqq\max\{8Re,\pi^{-1}\sqrt{\omega Re/2}\}$, $k_0$ is the mean-mode kernel of
\eqref{eq:kfeed0},
\begin{multline}\label{eq:Xspace}
X\coloneqq\Big\{(u,V)\in L^2(\Omega_L)^2:\ u_x+V_y=0;\ \ V(k,\cdot)(0)=0\ \ \forall k;\\
V(k,\cdot)(1)=\varepsilon_1\!\int_0^1\!e^{-\theta_1(1-\eta)}V(k,\eta)\dd\eta\ \ \forall k\in\mathcal B\setminus\{0\};\ \
V(k,\cdot)(1)=0\ \ \forall k\notin\mathcal B\Big\}
\end{multline}
with the norm $\|u\|_{L^2}^2+\|V\|_{L^2}^2$, and $C\ge1$ is an overshoot coefficient
depending on the design parameters.
\end{theorem}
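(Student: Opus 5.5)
The proof assembles per-wavenumber results, which the roadmap (Table~\ref{tab:fd-big}) announces later in the paper, into the physical statement.

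\emph{Wavenumber decomposition.} First, Fourier decompose on $\Omega_L$. By Parseval and the identity $\|u\|^2+\|V\|^2=\|V\|_k^2/(4\pi^2k^2)$ for $k\neq0$, the $X$-norm splits orthogonally into the following pieces:
\begin{itemize}
\item the mean mode $k=0$, where $V\equiv0$ and only $u(0,\cdot)$ survives;
\item the finitely many band modes $k\in\mathcal B\setminus\{0\}$;
\item the tail $|k|\ge K$.
\end{itemize}
The closed loop decouples accordingly, because both feedbacks \eqref{eq:Vcphys}--\eqref{eq:Ucphys} are convolutions in $x$. Their Fourier symbols reproduce the modal laws \eqref{eq:kfeed}--\eqref{eq:kfeed0}: $\chi$ has symbol $1$ on $\mathcal B\setminus\{0\}$ and $0$ elsewhere, and \eqref{eq:synth} is exactly the inverse transform. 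Reality of $V_c$, $U_c$ for real data follows from the symmetry $\mathcal K^{(-k)}=\overline{\mathcal K^{(k)}}$, since $a,b$ are odd in $k$ and the kernel system is conjugated under $k\mapsto-k$ (Lemma~\ref{lem:real}).

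\emph{The three pieces.}
\begin{itemize}
\item \emph{Tail $|k|\ge K$.} Here the boundary conditions are homogeneous, so the plant is the uncontrolled Orr--Sommerfeld operator. An energy estimate in $\|\cdot\|_k$ shows the skew advection contributes at most a bounded perturbation. The choice $K\ge8Re$ makes the mean-shear term $4(2y-1)V$ dominated by the viscous term $4\pi^2k^2/Re$. The choice $K\ge\pi^{-1}\sqrt{\omega Re/2}$ gives $2\pi^2k^2/Re\ge\omega$. Together these yield uniform decay at rate $\omega$ with constant $1$ (the sector Lemma~\ref{lem:sector}).
\item \emph{Mean mode.} Classical Volterra backstepping with kernel \eqref{eq:kfeed0} maps $u_t=u_{yy}/Re$ to $w_t=w_{yy}/Re-c_0w$ with Dirichlet conditions. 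This gives rate $c_0\ge\omega$, with overshoot bounded by the norms of the transformation and its inverse.
\item \emph{Band modes.} For each $k\in\mathcal B\setminus\{0\}$, take the parameters from Theorem~\ref{thm:spectral}. The design parameters $\theta_1$ (off a discrete set), $\varepsilon_1$ (small), and $c,\theta_2,\varepsilon_2$ make the pre-compensated operator have simple spectrum, a Riesz basis (Theorem~\ref{thm:riesz}), and every mode reachable from $V_y(1)$. Because $\mathcal B$ is finite, the exceptional sets are a finite union of discrete sets, so a common choice exists. Theorem~\ref{thm:main} then gives a bounded, boundedly invertible Fredholm transformation \eqref{eq:transf} onto the target \eqref{eq:targ}. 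The target decays at rate $q-\lambda_1\ge\omega$ because $\lambda_1<0$. Transferring back gives the bound with constant $\|T_k\|\,\|T_k^{-1}\|$.
\end{itemize}
The static second-stage terms do not appear in \eqref{eq:kfeed}. I expect they are absorbed into $\mathcal K_y(1,\cdot)$, since Theorem~\ref{thm:main} constructs the total tangential gain. I would check this identification against \eqref{eq:kpde-c}.

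\emph{Semigroup, uniqueness, pressure, and the main obstacle.} Taking $C$ as the maximum over finitely many band constants, together with the mean-mode constant and $1$, gives \eqref{eq:decay-first}. Analyticity of the semigroup on $X$ needs uniformity across the infinite tail. The band and the mean mode are similar to self-adjoint-type analytic generators, and the tail generators satisfy a uniform sectorial resolvent bound by the same energy estimate. The direct sum is therefore analytic, which gives existence and uniqueness in $C([0,\infty);X)\cap C^1((0,\infty);X)$. The pressure is recovered per mode from \eqref{eq:linu} as $p=(u_t-\dots)/(2\pi ki)$, with analytic smoothing giving $L^2$ for $t>0$ (Lemma~\ref{lem:pgraph}).

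I expect the main obstacle to be the uniform analytic-semigroup and pressure bounds over the infinite tail. The band is finite, so its constants are harmless; the tail requires a resolvent estimate uniform in $k$ in the nonstandard energy space. I would attack it first by the numerical-range estimate $\operatorname{Re}\langle\mathcal A_kV,V\rangle_k\le-(\text{viscous})\|V\|^2_{H^2}+C\|V\|_k\|V\|_{H^1}$, interpolated.
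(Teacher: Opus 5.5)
Your proposal is correct and is essentially the paper's proof. The paper also decouples by Parseval into the Volterra mean mode (Proposition~\ref{lem:mean}), the finitely many band blocks $\mathcal A_F(k)=T_k^{-1}\widetilde S_kT_k$ under a common-band parameter choice (Theorem~\ref{thm:main}), and the viscous tail (Lemma~\ref{lem:sector}). It gets sectoriality of the direct sum and the pressure from Lemma~\ref{lem:pgraph}, where the tail decay $e^{-2\pi^2k^2t/Re}$ is what beats the $(1+|k|)^3$ factor (Theorem~\ref{thm:phys}). The point you leave open is settled by reading $\mathcal K^{(k)}$ as the kernel of $I-T_k$: the part $I-\Xi^{-1}$ contributes $\varepsilon_1\ell(y)g_1(\eta)$ and gives \eqref{eq:kpde-c}, while the absorption of $c\varepsilon_1\ipL{\cdot}{g_1}+\varepsilon_2\ipL{\cdot}{g_2}+F_k$ into $\mathcal K_y(1,\cdot)$ comes from $\alpha_y(1)=0$ on $\mathcal D(\mathcal A_F)$, not from \eqref{eq:kpde-c}.
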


The theorem is proved at the end of Section \ref{sec:phys}, where the modal designs
are assembled. The system \eqref{eq:kpde} is the formal transcription of the operator
identity established in Section \ref{sec:cl}: its interior equation matches the plant and
target operators through the transformation, \eqref{eq:kpde-c} enforces
$\alpha(0)=\alpha_y(0)=\alpha(1)=0$, and $\alpha_y(1)=0$ is the tangential feedback
itself.

\begin{remark}\label{rem:diag}
The pair $(\mathcal K,\mathcal M)$ is one transformation seen in two variables:
$\mathcal K$ is the kernel of the integral part $I-T$ of the backstepping
transformation $\alpha=TV$ acting on the normal velocity, and $\mathcal M$ is the
kernel of the corresponding transformation of the vorticity
$\omega=-\Lap V/(2\pi ki)$. Writing $\mathbb K$ and $\mathbb M$ for the integral
operators with these kernels, \eqref{eq:kpde-b} with the clamps \eqref{eq:kpde-d} is
the kernel transcription of the conjugacy
$(I-\mathbb M)\Lap=\Lap(I-\mathbb K)$.
\end{remark}

\begin{remark}[Comparison with the Volterra design of \cite{VK07}]\label{rem:vort}
Both designs transform the streamwise derivative of a physical quantity into the same
damped reaction--diffusion equation, band-limited to the actuated wavenumbers by
$\chi(x)=\frac2L\sum_{k\in\mathcal B,\,k>0}\cos(2\pi kx)$. Here the quantity is the
streamwise derivative of the vorticity $\omega=u_y-V_x$,
\begin{gather}
\widetilde\omega_x=\int_0^L\chi(x-\zeta)\left[\omega_x(t,\zeta,y)
-\int_0^1\!\!\int_0^L\mathcal M(\zeta-\xi,y,s)\,\omega_x(t,\xi,s)\dd\xi\dd s\right]\dd\zeta,
\label{eq:vorttransf}\\
\big(\widetilde\omega_x\big)_t=\frac1{Re}\Delta\,\widetilde\omega_x
-q\,\widetilde\omega_x,\qquad
\int_0^L\!\!\int_0^1\mathcal S_j(x-\xi,\eta)\,\widetilde\omega_x(t,\xi,\eta)\dd\eta\dd\xi=0,
\quad j=0,1 .\label{eq:vorttarget}
\end{gather}
In \eqref{eq:vorttransf} the $s$-integration runs to $1$: the transformation is
Fredholm. The design of \cite{VK07}, with damping added, is the same construction on
the streamwise derivative of the streamwise velocity,
\begin{gather}
\widetilde u_x=\int_0^L\chi(x-\zeta)\left[u_x(t,\zeta,y)
-\int_0^y\!\!\int_0^L\mathcal K_{\mathrm V}(\zeta-\xi,y,s)\,u_x(t,\xi,s)\dd\xi\dd s\right]\dd\zeta,\label{eq:vktransf}\\
\big(\widetilde u_x\big)_t=\frac1{Re}\Delta\,\widetilde u_x-q\,\widetilde u_x,\qquad
\widetilde u_x(t,x,0)=\widetilde u_x(t,x,1)=0,\label{eq:vktarget}
\end{gather}
There the $s$-integration runs only to $y$: the transformation is Volterra, with
$\mathcal K_{\mathrm V}$ assembled over the band from the Volterra kernels of
\cite{VK07}. Both targets decay at a rate not less than $q$. The functions entering
\eqref{eq:vorttransf}, \eqref{eq:vorttarget} are
\begin{gather}
\mathcal M(x,y,s)=\frac1L\sum_{k\in\mathcal B,\ k\ne0}e^{2\pi ikx}
\big(\partial_y^2-4\pi^2k^2\big)\!\!\int_0^1\mathcal K^{(k)}(y,\eta)\,G_k(\eta,s)\dd\eta,\\
G_k(\eta,s)=-\frac{\sinh\big(2\pi|k|\min(\eta,s)\big)\,
\sinh\big(2\pi|k|(1-\max(\eta,s))\big)}{2\pi|k|\,\sinh(2\pi|k|)},\\
\mathcal S_0(x,\eta)=\frac1L\sum_{k\in\mathcal B,\ k\ne0}e^{2\pi ikx}\sinh\big(2\pi k(1-\eta)\big),
\qquad
\mathcal S_1(x,\eta)=\frac1L\sum_{k\in\mathcal B,\ k\ne0}e^{2\pi ikx}\sinh(2\pi k\eta).
\end{gather}
\end{remark}

\subsection{Why two wall inputs, and what they achieve}\label{sec:main-why}

Theorem \ref{thm:first} is the result of the paper; this subsection states the two
results that account for its design: why the stabilizer uses both wall velocity
components, and what the static pre-feedback on the normal wall velocity accomplishes.
Both are used in Sections \ref{sec:part-spectral}--\ref{sec:part-fredholm} and proved
there: Theorem \ref{thm:spectral} in
Section \ref{sec:part-spectral}, Theorem \ref{thm:2ctrl} and Proposition
\ref{prop:oneinput} in Section \ref{sec:necessity}.

At a geometrically double plant eigenvalue, one component of the wall velocity leaves
a mode that no feedback through it can move:

\begin{proposition}[A double eigenspace is immovable from one wall velocity component]\label{prop:oneinput}
Suppose $z_*\in\sigma(A_k)$ has geometric multiplicity $2$. Then for every linear
functional $F$ (defined on the relevant domain), the tangential-only closed loop
\[
\Lap V_t=\mathcal L_kV,\qquad V(0)=V_y(0)=V(1)=0,\qquad V_y(1)=F[V]
\]
admits the solution $e^{z_*t}\phi$ for some $\phi\ne0$: $z_*$ remains a closed-loop
eigenvalue. The same holds for the normal-only closed loop $V_y(1)=0$, $V(1)=G[V]$.
Consequently, if $\Rea z_*\ge0$, no linear state feedback through a single component
of the wall velocity stabilizes the plant. Moreover $z_*$ is uncontrollable from either
component alone: there is $\zeta\ne0$ in $\ker(A_k^*-\bar z_*)$ annihilated by the
corresponding boundary functional \eqref{eq:alphabeta}, so $z_*$ remains in the
spectrum of every closed loop built on that one component, dynamic controllers
included.
\end{proposition}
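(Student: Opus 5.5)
The plan is a dimension count at the eigenvalue, done once for each wall component, followed by the adjoint version. Here $A_k$ is the uncontrolled plant $\Lap^{-1}\mathcal L_k$ with the four no-slip conditions $V(0)=V_y(0)=V(1)=V_y(1)=0$; the proof takes the paper's definition of $A_k$ and of the boundary functionals \eqref{eq:alphabeta} as given. Let $E\coloneqq\ker(A_k-z_*)$, so $\dim E=2$. Every $\phi\in E$ satisfies $(\mathcal L_k-z_*\Lap)\phi=0$ together with all four homogeneous boundary conditions.

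\emph{Tangential-only closed loop.} We need some $\phi\ne0$ with $\phi\in E$ and $\phi_y(1)=F[\phi]$. Since $\phi_y(1)=0$ on $E$, this reduces to $F[\phi]=0$. The restriction $F|_E$ is a linear functional on a two-dimensional space, so its kernel has dimension at least one. Choose $\phi$ there. Then $V(t)=e^{z_*t}\phi$ satisfies $\Lap V_t=z_*e^{z_*t}\Lap\phi=e^{z_*t}\mathcal L_k\phi=\mathcal L_kV$ and all closed-loop boundary conditions. The functional only needs to be defined on $E\subset D(A_k)$, which is inside any reasonable domain. The normal-only loop is handled identically: require $\phi\in E$ with $G[\phi]=0$, which again is solvable because $\dim E=2$. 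If $\Rea z_*\ge0$, this solution does not decay, so no such static feedback stabilizes.

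\emph{Uncontrollability and dynamic controllers.} This is the substantive part. I would derive the boundary functionals from Green's formula for $\Lap$ and $\mathcal L_k$. For $V$ solving $\Lap V_t=\mathcal L_kV$ with inputs $V(1)=v$ and $V_y(1)=w$, and for $\zeta\in\ker(A_k^*-\bar z_*)$, the formula should give
\[
\frac{\dd}{\dd t}\langle \Lap V,\zeta\rangle=z_*\langle\Lap V,\zeta\rangle+\alpha(\zeta)\,v+\beta(\zeta)\,w .
\]
Here $\alpha,\beta$ are linear in the boundary traces of $\zeta$ at $y=1$; concretely they are $\tfrac1{Re}$ times combinations of $\Lap\zeta(1)$ and $\partial_y\Lap\zeta(1)$, since $a(1)=0$. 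The key facts are that the adjoint $A_k^*$ has the same geometric multiplicity as $A_k$, because the problem is Fredholm of index zero, so $\dim\ker(A_k^*-\bar z_*)=2$; and that $\alpha,\beta$ are linear functionals on this kernel. Hence $\ker\beta$ and $\ker\alpha$ each contain a nonzero $\zeta$.

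For that $\zeta$, the scalar $m(t)=\langle\Lap V,\zeta\rangle$ obeys $\dot m=z_*m$ no matter what the input on the chosen component is, static or dynamic. The Laplace transform then shows that $z_*$ remains an eigenvalue of any interconnection: the augmented closed-loop adjoint has eigenvector $(\zeta,0)$.

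\emph{Main obstacle.} The delicate step is justifying Green's identity in the energy space and checking the adjoint multiplicity. I would first compute the adjoint boundary problem explicitly: $\mathcal L_k^*=\tfrac1{Re}\Lap^2+\Lap(\bar a\,\cdot)+\bar b$ with clamped conditions. Then I would verify that $\zeta\ne0$ cannot have both $\alpha(\zeta)=\beta(\zeta)=0$ unless all four traces vanish. That case would contradict uniqueness for the fourth-order ODE, but we do not even need it, since only one functional must vanish.
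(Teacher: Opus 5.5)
Your proof is correct and follows the paper's route: the dimension count on the two-dimensional eigenspace $E$ for the static loops, then equality of geometric multiplicities for $A_k^*$ at $\bar z_*$ and a nonzero kernel element of the single relevant boundary functional on $\ker(A_k^*-\bar z_*)$. Your invariant scalar mode $m(t)$ just spells out the Fattorini--Hautus step that the paper cites; the only discrepancy is cosmetic: in the paper's labeling \eqref{eq:alphabeta}, $\alpha$ (from $\zeta_{yy}(1)$) multiplies the tangential trace $V_y(1)$ and $\beta$ (from $\zeta_{yyy}(1)$, entering with a minus sign) multiplies the normal trace $V(1)$, the reverse of your displayed formula, and both are conjugate-linear in $\zeta$, neither of which affects the argument.
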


The hypothesis of Proposition \ref{prop:oneinput} is met where two eigenvalues of
distinct symmetry coincide. The Poiseuille profile is symmetric about $y=1/2$, so the
Orr--Sommerfeld operator commutes with $y\mapsto1-y$ and its eigenfunctions split into
even and odd families; within one family a double eigenspace is
non-generic, but a coincidence between the two families is two real conditions on the
two real parameters $(k,Re)$. The degeneracies documented for this operator are of the
other kind: the coalescences of \cite{Shanthini89} produce a mode of order two, of
geometric multiplicity one, and pair symmetric with symmetric and antisymmetric with
antisymmetric. Whether the hypothesis of Proposition \ref{prop:oneinput} is ever met
for this plant is open. The design that follows
assumes nothing about the multiplicities, so the question does not arise for it.

The two components together give approximate controllability at every wavenumber and
Reynolds number:

\begin{theorem}[Two-input approximate controllability]\label{thm:2ctrl}
Fix $k\ne0$, $Re>0$, $T>0$. For controls
$V_c,\tau\in C_c^\infty(0,T)$ let $V(\cdot)$ solve \eqref{eq:OSk} from the zero initial
state, with wall data $V(t)(0)=V_y(t)(0)=0$, $V(t)(1)=V_c(t)$, $V_y(t)(1)=\tau(t)$.
Then the reachable set
$\{V(T)\}\subset\Hk$ is dense in $\Hk$.
\end{theorem}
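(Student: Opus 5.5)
We prove density by duality: the reachable set is dense in $\Hk$ if and only if its annihilator in $\Hk$ is trivial. We write the plant abstractly as $V'=A_kV+B_1V_c+B_2\tau$. Here $A_k=\Lap^{-1}\mathcal L_k$ is taken on $\Hk$ with the four no-slip conditions. The boundary control operators $B_1,B_2$ are defined through the lifting/transposition method, with the trace functionals of Lemma~\ref{lem:trace} (the boundary functional \eqref{eq:alphabeta}) as the adjoint observations $B_1^*\zeta$ and $B_2^*\zeta$.

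Suppose $\zeta\in\Hk$ annihilates every $V(T)$. Integrating by parts twice in $y$, in Green's formula for $\Lap\mathcal L_k$ with the energy inner product, gives an identity for every pair of test controls:
\[
\int_0^T\Big(V_c(t)\,\overline{\beta(\zeta(T-t))}+\tau(t)\,\overline{\alpha(\zeta(T-t))}\Big)\dd t=0 .
\]
Here $\zeta(s)=e^{sA_k^*}\zeta$ solves the adjoint Orr--Sommerfeld problem. The functionals $\alpha,\beta$ are the boundary traces at $y=1$ of the adjoint state: a combination of second and third derivatives of $\Lap^{-1}$-type quantities, as in \eqref{eq:alphabeta}. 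Since $V_c$ and $\tau$ range over $C_c^\infty(0,T)$ independently, both observations vanish identically for $s\in(0,T)$.

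Next we pass from time to the spectrum. $A_k^*$ generates an analytic semigroup, and its resolvent is compact: it is a relatively bounded perturbation of the Stokes operator with compact resolvent. So $s\mapsto\alpha(\zeta(s))$ and $s\mapsto\beta(\zeta(s))$ are real-analytic on $(0,\infty)$, and vanishing on $(0,T)$ forces vanishing for all $s>0$. We then take Laplace transforms, $\int_0^\infty e^{-zs}\zeta(s)\dd s=(z-A_k^*)^{-1}\zeta$, for $\Rea z$ large, and continue meromorphically. This gives $\alpha((z-A_k^*)^{-1}\zeta)=\beta((z-A_k^*)^{-1}\zeta)=0$ for all $z\notin\sigma(A_k^*)$. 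Taking the Riesz projection $P_\mu$ at each eigenvalue $\mu$ and expanding the Laurent coefficients shows that $\alpha$ and $\beta$ vanish on the whole cyclic subspace generated by $P_\mu\zeta$. In particular, if $P_\mu\zeta\ne0$, there is an eigenvector $\psi\in\ker(A_k^*-\mu)$ in that cyclic subspace, namely the top of the chain, with $\alpha(\psi)=\beta(\psi)=0$.

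Such a $\psi$ cannot exist. It solves the fourth-order adjoint Orr--Sommerfeld ODE $(\mathcal L_k^*-\bar\mu\Lap)\psi=0$ with $\psi(0)=\psi_y(0)=\psi(1)=\psi_y(1)=0$. In addition, the two observations $\alpha,\beta$ supply the remaining two Cauchy data at $y=1$, so $\psi$ has all four Cauchy data zero at $y=1$. This is exactly the injectivity content of Lemma~\ref{lem:inj}, which we invoke here. By uniqueness for the regular linear ODE, whose leading coefficient is $1/Re\ne0$, we get $\psi\equiv0$. Hence $P_\mu\zeta=0$ for every $\mu$.

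The final step, which we expect to be the main obstacle, is completeness: $P_\mu\zeta=0$ for all $\mu$ must force $\zeta=0$. For a non-normal operator this requires completeness of the root vectors of $A_k^*$. We would obtain it from the Keldysh theorem: $A_k^*$ is a perturbation of the self-adjoint Stokes operator $S_k^*$ that is subordinate of order strictly less than one, since the advection terms are of lower order by two derivatives relative to $\Lap^2$, and $S_k$ has resolvent in a Schatten class (its eigenvalues grow like $n^4$ relative to $\Lap$). Alternatively, the meromorphic function $z\mapsto(z-A_k^*)^{-1}\zeta$ has no poles once all $P_\mu\zeta$ vanish, so it is entire. Its growth bounds along rays outside a sector then force it to be zero by Phragmén--Lindelöf, which gives $\zeta=0$.
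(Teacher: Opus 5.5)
Your proposal is correct. It is the standard resolvent proof of the Fattorini--Hautus criterion, and it differs from the paper's proof in how it passes from time-domain vanishing to the root subspaces. Three steps are shared: the duality identity, the extension by analyticity of the vanishing of both wall traces to all $s>0$, and the finish at the top of a chain by Lemma~\ref{lem:inj}.

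The paper expands $\alpha[e^{sA_k^*}z_0]$ and $\beta[e^{sA_k^*}z_0]$ as exponential-polynomial series over the root subspaces of $A_k^*$. Absolute convergence needs the unconditional decomposition of Corollary~\ref{cor:decomp} for $A_k^*$ and the trace growth $\alpha[\tep{m}],\beta[\tep{m}]=O(m/Re)$ of Proposition~\ref{prop:bmc}. Each coefficient $\alpha[N_\lambda^jP^*_\lambda z_0]$ is then extracted by Ces\`aro averaging, one real part at a time, and $z_0=\sum_\lambda P^*_\lambda z_0$ is read off the same corollary.

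Your Laplace transform makes the series, its convergence estimates and the exponential-sum uniqueness step unnecessary. The function $z\mapsto\alpha[(z-A_k^*)^{-1}\zeta]$ vanishes on the connected set $\rho(A_k^*)$, and its Laurent coefficients at $\mu$ are exactly $\alpha[(A_k^*-\mu)^jP_\mu\zeta]$; this holds for any generator with compact resolvent. Non-normality then enters only through completeness. For that step the paper's Bari--Markus decomposition gives more than you need, namely an unconditional basis of root subspaces.

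Two points need repair.

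First, $\beta$ is a third-derivative trace and is undefined on $\mathcal D(A_k^*)=H^3\cap H^2_0$. So $\beta[(z-A_k^*)^{-1}\zeta]$ has no meaning for general $\zeta\in\Hk$, and $\beta[e^{sA_k^*}\zeta]$ need not be integrable at $s=0$. Apply the transform to $e^{\delta A_k^*}\zeta\in\bigcap_m\mathcal D((A_k^*)^m)$ instead, which is the paper's $\delta$-shift and also justifies the duality identity near $t=T$. Then use $P_\mu e^{\delta A_k^*}\zeta=e^{\delta A_k^*}P_\mu\zeta$, with $e^{\delta A_k^*}$ invertible on each root subspace.

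Second, $P_\mu\zeta=0$ for every Riesz projection of $A_k^*$ says that $\zeta$ is orthogonal to every root vector of $A_k$. The completeness required is therefore that of $A_k$, not of $A_k^*$. Keldysh's theorem covers both, since $S_k^{-1}$ is trace class ($|\lambda_n|\asymp n^2$, not $n^4$) and $B_k,B_k^*$ are bounded.

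Your second alternative needs no Phragm\'en--Lindel\"of. On the circles $|z|=r_j=\tfrac12(|\lambda_j|+|\lambda_{j+1}|)$ one has $\dist(z,\sigma(S_k))\ge d_j/2$. For $j$ large the Neumann series then gives $\|(z-A_k^*)^{-1}\|\le4/d_j\to0$. The maximum principle forces the entire function $(z-A_k^*)^{-1}\zeta$ to vanish, whence $\zeta=0$; this is the argument of Proposition~\ref{prop:complete2}.
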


A static pre-feedback through the normal wall velocity sharpens this into
controllability from the tangential wall velocity alone, mode by mode, and supplies the
spectral structure the Fredholm construction uses.

The state space of the plant at wavenumber $k$ is $\Hk\coloneqq\{V\in H^1(0,1):V(0)=V(1)=0\}$ with the norm $\|\cdot\|_k$, and
the pre-compensated system is the plant \eqref{eq:OSk} under the static pre-feedback
\eqref{eq:prefeed}, on the state space
\begin{equation}\label{eq:Hkp-early}
\Hkp\coloneqq\Big\{V\in H^1(0,1):\ V(0)=0,\ \ V(1)=\varepsilon_1\ipL{V}{g_1}\Big\},
\qquad g_i(y)=e^{-\theta_i(1-y)} ,
\end{equation}
with generator $A_k'$. The tangential frequency response $\Psi^{(2)}(z)$ is the state
produced by a unit tangential wall velocity at complex frequency $z$, and $b'_n$ are
its residue coordinates in the eigenbasis of $A_k'$.

\begin{theorem}[Static pre-feedback makes every plant tangentially controllable]\label{thm:spectral}
Fix $k\ne0$ and $Re>0$. There exist a finite set $\mathcal C\subset\mathbb R$, for each
$c\notin\mathcal C$ a discrete set $\Theta_1\subset(0,\infty)$, for each
$\theta_1\notin\Theta_1$ a threshold $\bar\varepsilon_1>0$, for each
$0<|\varepsilon_1|\le\bar\varepsilon_1$ a discrete set $\Theta_2\subset(0,\infty)$, and
for each $\theta_2\notin\Theta_2$ a threshold $\bar\varepsilon_2>0$, such that for every
$0<|\varepsilon_2|\le\bar\varepsilon_2$ the generator $A_k'$ of the pre-compensated
system \eqref{eq:Hkp-early} has the following properties.
\begin{enumerate}
\item[(i)] \emph{Simple spectrum.} $\sigma(A_k')$ consists of algebraically simple
eigenvalues $\{\mu'_n\}_{n\ge1}$, whatever the multiplicities and Jordan structure of
the plant.
\item[(ii)] \emph{Riesz basis.} The eigenvectors $\{\chi'_n\}$ form a Riesz basis of
$\Hkp$.
\item[(iii)] \emph{Controllability of every mode.} For every $n$, the map
$z\mapsto\Psi^{(2)}(z)$ has a simple pole at $\mu'_n$ with residue $b'_n\chi'_n$, and
$b'_n\ne0$.
\end{enumerate}
All conditions fall on the design parameters $(c,\theta_1,\varepsilon_1,\theta_2,\varepsilon_2)$;
none fall on the plant.
\end{theorem}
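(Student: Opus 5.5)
The plan is a two-stage rank-one perturbation argument, carried out in resolvent/Riesz-projection language so that no simplicity of the Orr--Sommerfeld spectrum is ever used.

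\emph{Preliminaries.} First we recast the plant as a first-order system on $\Hk$. The mass operator $\Lap$ with Dirichlet conditions is invertible. The operator $A_k=\Lap^{-1}\mathcal L_k$, with clamped conditions, therefore has compact resolvent. It is a relatively bounded perturbation of the self-adjoint Stokes operator $S_k$ (the mean-flow terms are of lower order), so its spectrum is discrete. Moreover, all but finitely many eigenvalues are simple and lie near $\lambda_n$. This follows from the Weyl-type gap $\lambda_n-\lambda_{n+1}\sim n$ against a perturbation of relative order one lower. By a Kato--Bari argument the root vectors form a Riesz basis. Thus only a finite cluster of eigenvalues $z_*$, with Riesz projections $P_*$ and possibly nontrivial Jordan chains, needs splitting.

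\emph{Boundary-coupled reformulation.} Next we write both inputs through lifting functions. Let $\Phi^{(1)}(z)$ and $\Phi^{(2)}(z)$ solve $(z\Lap-\mathcal L_k)\Phi=0$ with unit trace data $V(1)=1$ and $V_y(1)=1$ respectively. These are the frequency responses, meromorphic with poles in $\sigma(A_k)$. Their residues at $z_*$ are given by the boundary functionals \eqref{eq:alphabeta}, evaluated on $\ker(A_k^*-\bar z_*)$ and its chains.

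A static feedback $V(1)=\varepsilon\ipL{V}{g}$ then changes the spectrum by a rank-one Weinstein--Aronszajn determinant. The perturbed eigenvalues are the zeros of
\[
D(z)=1-\varepsilon\,\ipL{\Phi(z)}{g},\qquad \Phi=\Phi^{(1)}+c\,\Phi^{(2)},
\]
the combination fixed by the ratio $c$ in \eqref{eq:clbc}. Counting zeros minus poles of $D$ near $z_*$ gives the multiplicity calculus. A rank-one perturbation splits a Jordan block of size $m$ into $m$ simple eigenvalues, behaving like $\varepsilon^{1/m}$ times the $m$-th roots of unity, provided the leading Laurent coefficient of $\ipL{\Phi(z)}{g}$ at $z_*$ is nonzero. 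When the geometric multiplicity is $r$, each rank-one stage reduces the geometric multiplicity by at most one. This is why two stages are used: after stage one, with $(c,\theta_1,\varepsilon_1)$, all but at most one chain per eigenvalue are resolved; stage two, with $(\theta_2,\varepsilon_2)$, resolves the rest. The tangential response needed for (iii) is then also non-degenerate.

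\emph{Nondegeneracy and choice of parameters.} The nondegeneracy coefficients are of the form $\ipL{\text{residue}}{g_\theta}$ together with the boundary functionals. These are analytic, indeed exponential-polynomial, in $\theta$. We will show each is not identically zero: as $\theta\to\infty$, $g_\theta$ concentrates at $y=1$, and the pairing behaves like a nonzero trace divided by $\theta$ plus derivative terms. A unique-continuation argument for the fourth-order ODE then shows that not all boundary Cauchy data at $y=1$ of a nonzero eigenfunction vanish, since the data at $y=0$ already vanish. Hence only a discrete set $\Theta_i$ of $\theta$ is excluded. The finite set $\mathcal C$ removes those $c$ for which the combination $\Phi^{(1)}+c\Phi^{(2)}$ cancels the leading coefficient at one of the finitely many multiple eigenvalues.

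The smallness thresholds $\bar\varepsilon_i$ serve two purposes. They keep the perturbed eigenvalues inside disjoint small discs around the original cluster. They also preserve simplicity and the asymptotic separation of the tail, uniformly in $n$: the rank-one perturbation is relatively compact with respect to $S_k$, and its matrix elements decay in $n$. With this, (i) holds, and (ii) follows by Bari's theorem, since the normalized eigenvectors are quadratically close to a Riesz basis.

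\emph{Property (iii).} Because every eigenvalue $\mu'_n$ is simple, $\Psi^{(2)}$ has a simple pole at $\mu'_n$. Its residue is $\chi'_n$ times the tangential boundary functional of the adjoint eigenvector $\zeta'_n$. For the finitely many split eigenvalues, $b'_n\ne0$ is part of the nondegeneracy secured above, possibly after enlarging $\Theta_2$. For the tail, $b'_n$ is close to the Stokes value $\propto\partial_y^2e_n(1)$, which is nonzero and of order $\lambda_n$; the Stokes eigenfunction cannot have vanishing $V,V_y,V_{yy}$ at a wall.

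\textbf{Main obstacle.} The hardest step is the generic-splitting statement for an arbitrary Jordan configuration. For every eigenvalue with $r$ chains of lengths $m_1\ge\dots\ge m_r$, we must show that two rank-one stages with the allowed exponential profiles produce simple spectrum. This requires the relevant Laurent coefficients, which are pairings of $g_\theta$ with generalized adjoint eigenvectors, to be nondegenerate. If $r\ge3$ were possible this would fail, so we must first prove $r\le2$. This bound follows because $\ker(A_k-z)$ is a subspace of the two-dimensional solution space of the ODE satisfying the two conditions at $y=0$.
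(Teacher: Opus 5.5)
Your architecture is the paper's: plant Riesz decomposition, boundary frequency responses, a Weinstein--Aronszajn determinant with multiplicity counting, two rank-one stages, genericity in $\theta$, and Bari for the tail. There is a genuine gap, though. The step you call the main obstacle is left open, and your proposed genericity argument cannot close it.

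\textbf{The stage-two coefficient.} At a two-chain $z_*$, stage two splits the retained chain of length $\kappa_2$ only if $\ipL{\Psi^{(1)}(z)}{g_2}$ has a pole of order exactly $\kappa_2$ at $z_*$. Here $\Psi^{(1)}$ is the tangential response of the stage-one operator. The leading coefficient is not a pairing of $g_\theta$ with a single root vector. The paper shows (Lemma \ref{lem:t2}) that the principal part is that of $N/\ipL{\rho_c}{g_1}$, with $N=\ipL{\rho_c}{g_1}\ipL{\Psi}{g_2}-\ipL{\rho_c}{g_2}\ipL{\Psi}{g_1}$. The coefficient factors as $\big(\alpha[\zeta^{(1)}_0]\beta[\zeta^{(2)}_0]-\alpha[\zeta^{(2)}_0]\beta[\zeta^{(1)}_0]\big)\det[\ipL{\phi^{(r)}_0}{g_i}]/R^c(z_*)$.

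The first factor does not depend on $\theta_2$, so ``not identically zero in $\theta$'' says nothing about it. It is nonzero because $\zeta\mapsto(\alpha[\zeta],\beta[\zeta])$ is injective, hence bijective, on the two-dimensional adjoint eigenspace (Lemma \ref{lem:inj}). That is your unique-continuation fact, but it has to enter through this determinant, which you never identify. The bound $r\le2$ alone does not supply it.

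The same computation shows that the stage-two channel, which you leave unspecified, is essential. Had stage two acted through the same combination $\rho_c$ as stage one, $N$ would become $\ipL{\rho_c}{g_1}\ipL{\rho_c}{g_2}-\ipL{\rho_c}{g_2}\ipL{\rho_c}{g_1}\equiv0$. The retained chain would then be invisible: the total feedback would be rank one along the input direction $(1,c)$, and the argument of Proposition \ref{prop:oneinput} applies to $\beta^c$. This factorization is needed even when $\kappa_1=\kappa_2=1$, where stage one already gives simple spectrum. The same trace determinant still decides whether the retained eigenvalue is visible to the tangential input.

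\textbf{Property (iii) at the low eigenvalues.} Your proposed fix is misdirected: no enlargement of $\Theta_2$ can give $b'\ne0$ at a stage-one root $\mu'$. The residue of $\Psi^{(1)}$ at $\mu'$ is $b^{(1)}(\mu')\rho_c(\mu')$, with $b^{(1)}(\mu')=\varepsilon_1\ipL{\Psi(\mu')}{g_1}/D_1'(\mu')$. If $\ipL{\Psi(\mu')}{g_1}=0$, then $\Psi^{(1)}$ is regular at $\mu'$. The tangential second stage then neither sees nor moves $\mu'$, whatever $g_2$ is, and $\Psi^{(2)}=\Psi^{(1)}/D_2$ has no pole there, so $b'=0$.

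This case must be excluded at stage one, and the paper does so through $\varepsilon_1$, not $\theta$. The roots are nonconstant Puiseux branches $\mu'(\varepsilon_1)$. The function $z\mapsto\ipL{\Psi(z)}{g_1}$ is not identically zero, by the second clause of Condition \ref{cond:th1}. Hence $\ipL{\Psi(\mu'(\varepsilon_1))}{g_1}\ne0$ for all small $\varepsilon_1\ne0$. Modes that stage two does move then get $b'=1/D_2'(\mu'')\ne0$ automatically.

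\textbf{Property (iii) in the tail.} ``Close to the Stokes value'' has to be quantitative. In fact $b_m\asymp m/Re$, not of order $|\lambda_m|\asymp m^2/Re$. The plant correction is $O(1)$ (Proposition \ref{prop:bmc}). The pre-feedback correction is $O(\varepsilon(1+\log m)/Re)$, from bounds on circles that are uniform in $m$ (Lemma \ref{lem:tails}). Nonvanishing of each $e_{n,yy}(1)$ separately does not survive the perturbation uniformly in $n$.
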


What follows constructs the Fredholm transformation $T$ --- and hence $\mathcal K$,
in the pairing sense of Remark \ref{rem:diag} --- and proves Theorem
\ref{thm:first}. The paper then has four parts. Section \ref{sec:part-spectral}
proves Theorem \ref{thm:spectral}: the static pre-feedback gives every plant ---
Jordan structure included --- simple spectrum, a Riesz basis of eigenvectors, and
controllability of every mode from the tangential wall velocity; its opening paragraph maps the internal steps, and a reader
prepared to grant Theorem \ref{thm:spectral} can pass over it. Section
\ref{sec:necessity} situates the design: the two wall inputs jointly control every
plant, while one alone cannot. Section \ref{sec:part-fredholm} builds the Fredholm
transformation onto the shifted Stokes target and proves closed-loop stability.
Section \ref{sec:phys} assembles the closed loop in the physical variables and
completes the proof of Theorem \ref{thm:first}; Section \ref{sec:status} records what
remains open. Some proofs are collected in Appendix \ref{app:proofs}.

\subsection{Technical building blocks, as they would arise in finite
dimension}\label{sec:fd-blocks}

Table \ref{tab:fd-detail} is a guide for the finite-dimensional control reader through
Sections \ref{sec:part-spectral}--\ref{sec:phys} and Appendix \ref{app:proofs}.

\begin{table}[p]
\centering\small
\caption{Technical roadmap: matrix mechanisms and the extra closure required by infinitely many modes.}
\label{tab:fd-detail}
\begin{tabular}{>{\raggedright\arraybackslash}p{0.205\textwidth}
                  >{\raggedright\arraybackslash}p{0.545\textwidth}
                  >{\raggedright\arraybackslash}p{0.18\textwidth}}
\hline
\textbf{Matrix/control mechanism} &
\textbf{What the paper proves} &
\textbf{Where established}\\
\hline
\multicolumn{3}{l}{\textbf{I. Mechanisms with direct finite-dimensional counterparts}}\\[-2pt]
\hline
Similarity built from the frequency response &
The similarity is between the closed loop and the target, \(T(A+bF)T^{-1}=S-qI\), so
\(T\) and the feedback row \(F\) are determined together.  The columns of \(T^{-1}\)
are \((sI-A)^{-1}b\) evaluated at the target eigenvalues, the \(n\)th of them divided
by its own \(n\)th component \(\beta_n\); the entries of \(F\) in target coordinates
are the \(1/\beta_n\). &
Lemma~\ref{lem:expand2};
Proposition~\ref{prop:qc2};
Lemma~\ref{lem:gamman}\\
Return difference of a rank-two feedback &
Static feedback through the two inputs is a rank-two update, so its closed-loop
eigenvalues are the zeros of a \(2\times2\) return difference.  That determinant factors
into the two successive scalar stages, and a winding-number count tracks every
eigenvalue gained or lost. &
Lemma~\ref{lem:Xi};
Propositions~\ref{prop:resAp}, \ref{prop:factor};
Lemma~\ref{lem:counting}\\
Splitting a Jordan chain &
One rank-one feedback fans a chain of length \(\kappa\) into \(\kappa\) distinct
eigenvalues, a second splits the chain the first leaves behind, and the gains that fail
are isolated. &
Lemma~\ref{lem:disc};
Propositions~\ref{lem:step1}, \ref{lem:step2};
Lemma~\ref{lem:t2}\\
PBH through residues &
The input coefficients are the analogs of \(z^*b_1\) and \(z^*b_2\); for a Jordan block
the Laurent coefficients of \((sI-A)^{-1}b\) say which levels of the chain an input
reaches. &
Lemmas~\ref{lem:trace}, \ref{lem:resform},
\ref{lem:chains}, \ref{lem:inj};
Proposition~\ref{prop:bmc}\\
Normal reference plus non-normal perturbation &
Write \(A=S+B\) with \(S\) normal and explicitly diagonalized and \(B\) the non-normal
part.  Well-separated modes stay simple and close to their reference values. &
Propositions~\ref{prop:stokes-all}, \ref{prop:kato}\\[2pt]
\hline
\multicolumn{3}{l}{\textbf{II. Pathologies that appear only with infinitely many modes}}\\[-2pt]
\hline
Do the eigenvectors still give coordinates? &
For a finite square eigenvector matrix this ends at nonsingularity.  With infinitely
many modes the analog requires more: the coordinates must remain stable and the columns
must span, so that \(T\) is bounded with bounded inverse. &
Lemma~\ref{lem:bm};
Corollary~\ref{cor:decomp};
Proposition~\ref{prop:complete2};
Theorem~\ref{thm:riesz}\\
Can the tail defeat a finite-rank design? &
The estimates on the remote modes must be uniform in the mode number, or the
finite-rank argument fails at infinity. &
Lemma~\ref{lem:tails};
Propositions~\ref{prop:high}, \ref{prop:spectral-quant}\\
Can the feedback row be unbounded? &
A matrix feedback row is bounded automatically.  Here its infinitely many coefficients
must be square-summable. &
Lemma~\ref{lem:gamman}\\
Can the uncontrolled subsystems spoil the rate? &
Only finitely many are controlled.  The rest must be shown, uniformly, to decay faster
than the requested rate. &
Lemma~\ref{lem:sector}\\
\hline
\end{tabular}
\end{table}

\section{Spectral controllability by static pre-feedback}\label{sec:part-spectral}

This part proves that the static pre-feedback of Section \ref{sec:model}, acting
through both wall velocity components,
turns every plant, whatever its Jordan structure, into an operator with simple
spectrum, a Riesz basis of eigenvectors, and controllability of every mode from the
tangential wall velocity. The result is Theorem \ref{thm:spectral}, stated in Section
\ref{sec:main1}; the Fredholm construction uses only what its statement and the
asymptotics of Proposition \ref{prop:spectral-quant} provide, namely the eigenvalues
$\mu'_n$, the basis $\{\chi'_n\}$, and $b'_n\asymp n/Re$. Within this part,
Sections \ref{sec:setting}--\ref{sec:liftings} develop the spectra of the Stokes and
plant operators and the boundary functionals of the two wall velocity components,
Section \ref{sec:precomp} constructs the pre-compensated operator and the determinant
whose zeros are its eigenvalues, and Sections \ref{sec:split}--\ref{sec:spectral}
prove that every one of those eigenvalues is simple and that no others exist. Proofs
of the supporting results are collected in Appendix \ref{app:proofs}.

\subsection{Energy space, plant operators, and wall functionals}\label{sec:setting}

Fix $k\ne0$. Let $\|\cdot\|$ denote the $L^2(0,1)$ norm. Set
\begin{equation}\label{eq:Hk}
\Hk\coloneqq H^1_0(0,1),\qquad
\ip{V}{W}\coloneqq\ipL{V_y}{W_y}+4\pi^2k^2\ipL{V}{W},
\end{equation}
and use the formula \eqref{eq:Hk} for $\ip{\cdot}{\cdot}$ and the norm $\|\cdot\|_k$ on
all of $H^1(0,1)$. For any $V\in H^1$ and $W\in H^1_0$, one integration by parts gives
$\ip{V}{W}=-\ipL{\Lap V}{W}$, the right side read as the $H^{-1}\times H^1_0$ pairing.
The map $\Lap:H^1_0\to H^{-1}\coloneqq(H^1_0)^*$ is an isometry for $\|\cdot\|_k$ and its
dual norm; $\Lap^{-1}$ denotes its inverse, which restricts to the Dirichlet inverse
$L^2\to H^2\cap H^1_0$. Since $u=V_y/(-2\pi ki)$, purely algebraically,
\begin{equation}\label{eq:uV}
\|u\|^2+\|V\|^2=\frac1{4\pi^2k^2}\,\|V\|_k^2\qquad(V\in H^1,\ V(0)=0),
\end{equation}
so the $\|\cdot\|_k$ norm of the normal velocity \emph{is} the $L^2$ energy of the
velocity pair, with or without boundary conditions at $y=1$. Also
$\|V\|\le\|V\|_k/\sqrt{\pi^2+4\pi^2k^2}$ for $V\in H^1_0$, and
$\|V\|\le C_k\|V\|_k$ for $V\in H^1$ with $V(0)=0$ (Poincar\'e from the no-slip wall).

\begin{lemma}[Elliptic estimate]\label{lem:ell}
There is $C=C(k)$ such that for all $V\in H^3(0,1)$,
\begin{equation}\label{eq:ell}
\|V\|_{H^3}\le C\big(\|\Lap^2V\|_{H^{-1}}+\|V\|_{H^1}\big).
\end{equation}
\end{lemma}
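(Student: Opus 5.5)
The plan is to reduce to a one-dimensional statement about fourth derivatives and then split off a cubic polynomial, on which all norms are equivalent. No boundary conditions are imposed on $V$, so the estimate must come from interior regularity plus finite-dimensionality, not from an elliptic boundary-value problem. Throughout, $\|\cdot\|_{H^{-1}}$ is the norm dual to $H^1_0(0,1)$; for $V\in H^3$ the distribution $V''''$ lies in $H^{-1}$.

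\emph{Step 1 (lower-order terms).} Expand
\[
\Lap^2V=V''''-8\pi^2k^2V''+16\pi^4k^4V .
\]
For $\varphi\in H^1_0$ we have $|\langle V'',\varphi\rangle|=|\ipL{V'}{\varphi'}|\le\|V'\|\,\|\varphi\|_{H^1}$. Hence $\|V''\|_{H^{-1}}\le\|V\|_{H^1}$, and likewise $\|V\|_{H^{-1}}\le\|V\|$. Therefore
\[
\|V''''\|_{H^{-1}}\le\|\Lap^2V\|_{H^{-1}}+C(k)\|V\|_{H^1},
\]
and it suffices to prove $\|V\|_{H^3}\le C\big(\|V''''\|_{H^{-1}}+\|V\|_{H^1}\big)$.

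\emph{Step 2 (control of $V'''$ up to its mean).} Set $w\coloneqq V'''\in L^2$, with mean $\bar w=\int_0^1w$. Define
\[
\varphi(y)\coloneqq\int_0^y(w-\bar w).
\]
Then $\varphi\in H^1_0$, and $\|\varphi\|_{H^1}\le C\|w-\bar w\|$ by Poincar\'e applied to $\varphi$. Testing gives
\[
\|w-\bar w\|^2=\ipL{w}{\varphi'}=-\langle w',\varphi\rangle .
\]
Hence $\|w-\bar w\|\le C\|V''''\|_{H^{-1}}$. The mean $\bar w=V''(1)-V''(0)$ is not controlled this way. This is the one genuine obstacle, and it is handled in Step 3 by finite-dimensionality rather than by any further estimate.

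\emph{Step 3 (polynomial splitting).} Define $R$ by
\[
R(0)=R'(0)=R''(0)=0,\qquad R'''=w-\bar w .
\]
Repeated integration gives $\|R\|_{H^3}\le C\|w-\bar w\|\le C\|V''''\|_{H^{-1}}$. Then $P\coloneqq V-R$ satisfies $P'''=\bar w$, a constant, so $P$ is a cubic polynomial. On the four-dimensional space of cubics all norms are equivalent, so
\[
\|P\|_{H^3}\le C\|P\|_{H^1}\le C\big(\|V\|_{H^1}+\|R\|_{H^1}\big).
\]
Combining, and using Step 1,
\[
\|V\|_{H^3}\le\|P\|_{H^3}+\|R\|_{H^3}\le C\big(\|V''''\|_{H^{-1}}+\|V\|_{H^1}\big)\le C(k)\big(\|\Lap^2V\|_{H^{-1}}+\|V\|_{H^1}\big),
\]
which is \eqref{eq:ell}. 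The constant depends on $k$ only through Step 1.
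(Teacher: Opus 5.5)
Your proof is correct. It departs from the paper only in how it handles the mean $\bar w=V''(1)-V''(0)$.

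Your Steps 1 and 2 match the paper. The paper also strips the lower-order terms of $\Lap^2V$. It then uses that $\|g_y\|_{H^{-1}}$ controls $\|g-\bar g\|$, which in the unweighted dual norm is exactly your test-function computation with $\varphi=\int_0^y(w-\bar w)$.

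For the mean, the paper proceeds analytically:
\begin{itemize}
\item it bounds $\bar w$ by $2\|V''\|_{L^\infty}$;
\item it applies the one-dimensional Agmon inequality $\|V''\|_{L^\infty}\le\epsilon\|V'''\|+C_\epsilon\|V''\|$;
\item it uses the interpolation $\|V''\|\le\epsilon\|V'''\|+C_\epsilon\|V'\|$;
\item it absorbs the $\epsilon$-terms into the left side.
\end{itemize}
You instead subtract the function $R$ with $R'''=w-\bar w$ and zero Cauchy data at $y=0$. The remainder is then a cubic, and you close by equivalence of norms on the four-dimensional space of cubics.

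Your route needs no interpolation inequality and no absorption. It also shows why a lower-order norm of $V$ suffices on the right: the only part of $V$ that $\|V''''\|_{H^{-1}}$ cannot see is the kernel of $\partial_y^4$, and on that space any norm controls $\|\cdot\|_{H^3}$.

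The paper's route keeps every constant explicit; your norm-equivalence constant on cubics is finite but not computed. Both arguments give $k$-free constants in the one-dimensional part when the unweighted $H^{-1}$ norm is used. That is what Lemma \ref{lem:pgraph} later needs when it tracks the $k$-dependence, so your proof would serve there equally well.
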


\begin{definition}[Plant operators]\label{def:ops}
Let $\mathcal D\coloneqq H^3(0,1)\cap H^2_0(0,1)$. On $\mathcal D$ define
\begin{equation}\label{eq:ops}
S_kV\coloneqq\tfrac1{Re}\Lap^{-1}\Lap^2V,\qquad
B_kV\coloneqq\Lap^{-1}\big(8\pi ki\,y(y-1)\Lap V-16\pi ki\,V\big),\qquad
A_k\coloneqq S_k+B_k .
\end{equation}
\end{definition}

\begin{proposition}[Plant operators]\label{prop:ops}
\begin{enumerate}
\item[(i)] $S_k$ with domain $\mathcal D$ is self-adjoint and negative definite in
$\Hk$ with compact resolvent, and $H^4\cap H^2_0$ is a core. Its eigenvalues are real
and satisfy $\lambda<-4\pi^2k^2/Re$, its eigenfunctions are in $C^\infty[0,1]$, and the
normalized eigenfunctions $\{e_n\}_{n\ge1}$ form an orthonormal basis of $\Hk$.
\item[(ii)] $B_k$ extends to a bounded operator on $\Hk$. Hence $A_k$ with domain
$\mathcal D$ is closed, generates an analytic semigroup on $\Hk$, has compact
resolvent, and $\sigma(A_k)\subset\{\lambda:\Rea\lambda\le\|B_k\|\}$. The same holds
for $A_k^*=S_k+B_k^*$ on $\mathcal D$. Root vectors of $A_k$ and $A_k^*$ lie in
$H^4\cap H^2_0$.
\end{enumerate}
\end{proposition}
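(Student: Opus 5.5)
The statement to prove is Proposition~\ref{prop:ops}. Part (i) is variational, part (ii) is bounded perturbation theory.

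\emph{Part (i).} I work through the quadratic form. For $V,W\in\mathcal D$, the isometry $\ip{V}{W}=-\ipL{\Lap V}{W}$ gives
\[
\ip{S_kV}{W}=-\tfrac1{Re}\langle\Lap^2V,W\rangle_{H^{-1}\times H^1_0}.
\]
For $W\in H^2_0$, two further integrations by parts give $-\tfrac1{Re}\ipL{\Lap V}{\Lap W}$; no boundary terms appear because $V,V_y,W,W_y$ vanish at both walls. Hence $S_k$ is symmetric on $\mathcal D$ and
\[
\ip{S_kV}{V}=-\tfrac1{Re}\|\Lap V\|^2\le-\tfrac{4\pi^2k^2}{Re}\|V\|_k^2 .
\]
The last inequality follows from $\|\Lap V\|\,\|V\|_k\ge\ip{-\Lap V}{V}$ read suitably, or more cleanly from $\|\Lap V\|^2=\|V_{yy}\|^2+8\pi^2k^2\|V_y\|^2+16\pi^4k^4\|V\|^2\ge 4\pi^2k^2\|V\|_k^2$. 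The inequality is strict for $V\ne0$.

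Self-adjointness: I show $S_k-\lambda$ is surjective onto $\Hk$ for some $\lambda>0$. Given $f\in\Hk$, I solve
\[
\tfrac1{Re}\Lap^2V-\lambda\Lap V=\Lap f\in H^{-1}
\]
by Lax--Milgram on $H^2_0$, using the coercive form $\tfrac1{Re}\ipL{\Lap V}{\Lap W}+\lambda\ip{V}{W}$. Lemma~\ref{lem:ell} then lifts $V$ to $H^3$, so $V\in\mathcal D$. This gives self-adjointness, and the resolvent is compact because $H^2_0\hookrightarrow H^1_0$ is compact. The spectral theorem yields real eigenvalues $<-4\pi^2k^2/Re$ and an orthonormal eigenbasis. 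Eigenfunctions solve the constant-coefficient ODE $\Lap^2e=\lambda Re\,\Lap e$, so they lie in $C^\infty$. For the core property, I note that the resolvent maps $\Hk\cap H^2$, which is dense, into $H^4\cap H^2_0$ by elliptic regularity. Hence $H^4\cap H^2_0$ contains $(S_k-\lambda)^{-1}$ of a dense set, which is a core.

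\emph{Part (ii).} Boundedness of $B_k$ is where I expect the only real work. The key fact is that $V\mapsto y(y-1)\Lap V$ maps $H^1_0\to H^{-1}$ boundedly. For $W\in H^1_0$,
\[
\langle y(y-1)\Lap V,W\rangle=-\ipL{V_y}{(y(y-1)W)_y}-4\pi^2k^2\ipL{y(y-1)V}{W},
\]
and $y(y-1)W\in H^1_0$ with norm controlled by that of $W$. The term $16\pi ki\,V$ is trivially in $H^{-1}$. Composing with the isometry $\Lap^{-1}:H^{-1}\to\Hk$ gives $\|B_k\|<\infty$.

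Then $A_k=S_k+B_k$ is a bounded perturbation of a negative self-adjoint operator. It is therefore closed on $\mathcal D$ and generates an analytic semigroup (Pazy, bounded perturbation of analytic generators). Its resolvent is compact by the second resolvent identity, since $(\lambda-A_k)^{-1}=(\lambda-S_k)^{-1}\big(I-B_k(\lambda-S_k)^{-1}\big)^{-1}$ for $\Rea\lambda>\|B_k\|$. The Neumann series converges there, which gives $\sigma(A_k)\subset\{\Rea\lambda\le\|B_k\|\}$. Since $B_k$ is bounded, $A_k^*=S_k^*+B_k^*=S_k+B_k^*$ on $\mathcal D$, and the same conclusions hold for $A_k^*$.

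\emph{Regularity of root vectors.} If $(A_k-\mu)^mV=0$, I argue by induction along the chain. Each vector satisfies $S_kV=\mu V+W-B_kV$ with the right side in $\Hk$ (here $W$ is the previous chain element). This gives $\tfrac1{Re}\Lap^2V=\Lap(\text{right side})$. The term $y(y-1)\Lap V$ is in $H^1$ once $V\in H^3$, so the right side of $\Lap^2V=\dots$ lies in $L^2$, hence $V\in H^4$.

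For $A_k^*$ the argument is the same, except that $B_k^*$ must be identified. I would compute it through the pairing as $\Lap^{-1}$ applied to a second-order differential expression with polynomial coefficients, and then bootstrap in the same way. Identifying $B_k^*$ explicitly is the one step I expect to need care.
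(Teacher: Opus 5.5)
Your proposal is correct and follows essentially the paper's route. Both treat $S_k$ variationally: the paper invokes the representation theorem for the closed form $-\tfrac1{Re}\ipL{\Lap V}{\Lap W}$ on $H^2_0$, while you check symmetry and Lax--Milgram surjectivity of $S_k-\lambda$, which is the same mechanism made explicit, and your resolvent-image argument for the core is if anything cleaner. Both obtain (ii) by bounded perturbation (the paper uses the numerical range for the half-plane, you use a Neumann series) and bootstrap root vectors through the ODE.

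The step you defer is supplied in the paper by the explicit formula $B_k^*\zeta=-8\pi ki\,y(y-1)\zeta+\Lap^{-1}(16\pi ki\,\zeta)$. Since $y(y-1)\zeta\in H^1_0$, this is exactly of the form you predict. It maps $H^3\cap H^2_0$ into $H^3$, so $\Lap^2\zeta_j=Re\,\Lap(\bar z_*\zeta_j+\zeta_{j-1}-B_k^*\zeta_j)\in L^2$, and the bootstrap for $A_k^*$ closes as you expect.
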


The following identity identifies the boundary functionals through which the two wall
velocity components \eqref{eq:channels} act; it is used throughout the paper.

\begin{lemma}[Trace identity]\label{lem:trace}
Let $W\in H^3(0,1)$ with $W(0)=W_y(0)=0$, and let $\zeta\in H^4\cap H^2_0$. Then, with
$\ipL{\mathcal L_kW}{\zeta}$ read as the $H^{-1}\times H^1_0$ pairing,
\begin{equation}\label{eq:trace}
\ipL{\mathcal L_kW}{\zeta}
=\frac1{Re}\,W_y(1)\,\overline{\zeta_{yy}(1)}
-\frac1{Re}\,W(1)\,\overline{\zeta_{yyy}(1)}
+\ipL{W}{\mathcal L_k^*\zeta},
\end{equation}
where
\begin{equation}\label{eq:Lkstar}
\mathcal L_k^*\zeta\coloneqq\tfrac1{Re}\Lap^2\zeta-8\pi ki\,\Lap\big(y(y-1)\zeta\big)+16\pi ki\,\zeta .
\end{equation}
Consequently the adjoint boundary functionals of the tangential and normal wall velocities are
\begin{equation}\label{eq:alphabeta}
\alpha[\zeta]\coloneqq\tfrac1{Re}\,\overline{\zeta_{yy}(1)}\qquad\text{and}\qquad
\beta[\zeta]\coloneqq\tfrac1{Re}\,\overline{\zeta_{yyy}(1)} ,
\end{equation}
and $A_k^*=\Lap^{-1}\mathcal L_k^*$ on $H^4\cap H^2_0$, a core for $A_k^*$.
\end{lemma}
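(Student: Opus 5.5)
Lemma \ref{lem:trace} is proved by repeated integration by parts on $(0,1)$. Work first with smooth $W$, then pass to $W\in H^3$ by density. Both sides of \eqref{eq:trace} are continuous in $W$ for the $H^3$ norm. On the left, $\mathcal L_kW\in H^{-1}$ is paired with $\zeta\in H^1_0$. On the right, the traces $W(1)$, $W_y(1)$ are continuous on $H^3$, and $\mathcal L_k^*\zeta\in L^2$ because $\zeta\in H^4$.

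Split $\mathcal L_k$ into three pieces.

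\textbf{Zeroth-order term.} The term $-16\pi ki\,W$ moves across the pairing directly. It becomes $\ipL{W}{\overline{-16\pi ki}\,\zeta}=\ipL{W}{16\pi ki\,\zeta}$, which is the last term of \eqref{eq:Lkstar}.

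\textbf{Advection term.} For $8\pi ki\,y(y-1)\Lap W$, write it as $\ipL{\Lap W}{\overline{8\pi ki}\,y(y-1)\zeta}$. Move $\Lap$ onto $m\coloneqq y(y-1)\zeta$ by two integrations by parts. The boundary terms are $[W_y\bar m-W\bar m_y]_0^1$. They vanish: $m$ vanishes at both ends, $W(0)=0$, and $m_y(1)=\zeta(1)+0\cdot\zeta_y(1)=0$. This gives $\ipL{W}{-8\pi ki\,\Lap(y(y-1)\zeta)}$, as in \eqref{eq:Lkstar}.

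\textbf{Biharmonic term.} Compute $\frac1{Re}\ipL{\Lap^2W}{\zeta}$ by four integrations by parts, keeping every boundary contribution. The $k^2$ pieces of $\Lap$ produce boundary terms only through $\zeta$ and $\zeta_y$, and these vanish at both ends. What remains is the standard formula
\[
\int_0^1 W''''\bar\zeta=\big[W'''\bar\zeta-W''\bar\zeta'+W'\bar\zeta''-W\bar\zeta'''\big]_0^1+\int_0^1W\bar\zeta'''' ,
\]
with the $-8\pi^2k^2\ipL{W''}{\zeta}$ term handled the same way. The first two brackets vanish because $\zeta\in H^2_0$. At $y=0$ the last two vanish because $W(0)=W_y(0)=0$. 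At $y=1$ they leave exactly $W_y(1)\overline{\zeta_{yy}(1)}-W(1)\overline{\zeta_{yyy}(1)}$, and with the factor $1/Re$ this is \eqref{eq:trace}.

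\textbf{Meaning of the pairing.} One technicality deserves care. For $W\in H^3$ not in $H^1_0$, $\Lap^2W$ lies in $H^{-1}$ only as a distribution. The pairing with $\zeta\in H^1_0$ must therefore be read as $\int W'''\bar\zeta'$-type expressions, that is, after the first integration by parts. I would define it that way and check consistency with the smooth case.

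\textbf{Consequences.} The functionals \eqref{eq:alphabeta} are read off from the coefficients of $W_y(1)$ and $W(1)$ in \eqref{eq:trace}. For the adjoint, take $W\in\mathcal D$, so the boundary terms vanish. Then $\ip{A_kW}{\zeta}=-\ipL{\mathcal L_kW}{\zeta}=-\ipL{W}{\mathcal L_k^*\zeta}=\ip{W}{\Lap^{-1}\mathcal L_k^*\zeta}$, using $\ip{V}{\phi}=-\ipL{\Lap V}{\phi}$ with $\phi=\Lap^{-1}\mathcal L_k^*\zeta\in H^1_0$. Hence $A_k^*\zeta=\Lap^{-1}\mathcal L_k^*\zeta$ on $H^4\cap H^2_0$. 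This set is a core for $A_k^*=S_k+B_k^*$ by Proposition \ref{prop:ops}(i)–(ii), since $B_k^*$ is bounded and $H^4\cap H^2_0$ is a core for $S_k$.

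The main obstacle is the bookkeeping of the $H^{-1}\times H^1_0$ pairing. The bracket terms themselves are routine.
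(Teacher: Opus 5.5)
Your proposal is correct and follows the paper's proof essentially step by step. You integrate by parts for smooth $W$, kill the advection bracket using the vanishing of $y(y-1)\zeta$ and its derivative at the walls, extend to $H^3$ by density, and identify $A_k^*$ by pairing through $\Lap^{-1}$. There are two cosmetic differences, both fine: the paper handles the biharmonic term by passing through $\ipL{\Lap W}{\Lap\zeta}$ rather than the four-term bracket, and it obtains the core property from the elliptic estimate \eqref{eq:ell} rather than from the core of $S_k$ together with the boundedness of $B_k^*$.
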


\begin{lemma}[Pairing against clamped functions]\label{lem:pairconv}
Let $\zeta\in H^2_0(0,1)$ and $W\in H^1(0,1)$. Then
$\ip{W}{\zeta}=-\ipL{W}{\Lap\zeta}$.
\end{lemma}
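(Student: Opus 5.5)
Lemma~\ref{lem:pairconv} is a single integration by parts in which the boundary terms vanish because $\zeta$ is clamped; the only care needed is that $W$ has just one derivative. Write
\[
\ip{W}{\zeta}=\ipL{W_y}{\zeta_y}+4\pi^2k^2\ipL{W}{\zeta}.
\]
Since $\zeta\in H^2_0(0,1)$, we have $\zeta_y\in H^1(0,1)$ with $\zeta_y(0)=\zeta_y(1)=0$. Since $W\in H^1(0,1)$, the product $W\,\overline{\zeta_y}$ lies in $W^{1,1}(0,1)$ and is absolutely continuous. Integrating its derivative over $(0,1)$ gives
\[
\ipL{W_y}{\zeta_y}=\Big[W\,\overline{\zeta_y}\Big]_0^1-\ipL{W}{\zeta_{yy}}=-\ipL{W}{\zeta_{yy}}.
\]
The boundary term drops out because $\zeta_y$ vanishes at both ends. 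No condition on $W$ at $y=0$ or $y=1$ is needed, which is exactly why the lemma applies on $\Hkp$, where $W(1)\ne0$.

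Adding the zeroth-order term gives
\[
\ip{W}{\zeta}=-\ipL{W}{\zeta_{yy}-4\pi^2k^2\zeta}=-\ipL{W}{\Lap\zeta}.
\]
Here $\Lap\zeta\in L^2$, so the right side is a genuine $L^2$ pairing, with no duality interpretation required. The coefficient $4\pi^2k^2$ is real, so conjugation in the second slot causes no sign issue.

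The step most in need of justification is the product rule for $W\overline{\zeta_y}$ with $W,\zeta_y\in H^1$. If one prefers to avoid it, the alternative is a density argument. Approximate $W$ in $H^1$ by smooth functions $W_m$ (for example, restrictions of mollified extensions) and apply the classical identity to each $W_m$. Both sides of the identity are continuous in $W$ for the $H^1$ norm, because $\zeta_y\in L^2$ and $\Lap\zeta\in L^2$, so the identity passes to the limit.
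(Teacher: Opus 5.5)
Your proof is correct and is the paper's argument: one integration by parts of $\ipL{W_y}{\zeta_y}$, with the boundary bracket $\big[W\overline{\zeta_y}\big]_0^1$ removed by the clamped conditions $\zeta_y(0)=\zeta_y(1)=0$. You also justify the product rule for $W\overline{\zeta_y}$ with both factors in $H^1$, or alternatively pass through a density argument, which fills in a step the paper leaves implicit.
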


\subsection{Stokes spectrum}\label{sec:stokes}

\begin{proposition}[Stokes eigenproblem]\label{prop:stokes-all}
Let $k\ne0$.
\begin{enumerate}
\item[(i)] \emph{Characteristic equation.} $\lambda<-4\pi^2k^2/Re$ is an eigenvalue of
$S_k$ if and only if $\sigma\coloneqq\sqrt{-4\pi^2k^2-Re\,\lambda}>0$ satisfies
\begin{equation}\label{eq:char}
4\pi k\big(1-\cosh(2\pi k)\cos\sigma\big)=\frac{\sigma^2-4\pi^2k^2}{\sigma}\,\sinh(2\pi k)\,\sin\sigma ,
\end{equation}
the eigenfunction being, up to a constant,
\begin{equation}\label{eq:efn}
e=a\,f+b\,g,\qquad
f(y)=\cosh(2\pi ky)-\cos(\sigma y),\qquad
g(y)=\sinh(2\pi ky)-\frac{2\pi k}{\sigma}\sin(\sigma y),
\end{equation}
with $a\,f(1)+b\,g(1)=0$.
\item[(ii)] \emph{Enumeration, asymptotics and gaps.} Equation \eqref{eq:char} has no
root in $(0,\pi)$ and exactly one root $\sigma_n$ in each interval
$\big(n\pi,(n+1)\pi\big)$, $n\ge1$, and it has no others; the eigenvalues
$\lambda_n\coloneqq-(\sigma_n^2+4\pi^2k^2)/Re$ are therefore enumerated in decreasing
order, $\lambda_1>\lambda_2>\cdots\to-\infty$. There is $n_0=n_0(k)$ such that for
$n\ge n_0$ the root $\sigma_n$ is simple and
\begin{equation}\label{eq:sigman}
\sigma_n=(n+1)\pi+\epsilon_n,\qquad
\epsilon_n=\frac{4\pi k\big((-1)^{n+1}-\cosh2\pi k\big)}{(n+1)\pi\sinh2\pi k}+O(n^{-2}) .
\end{equation}
Moreover,
\begin{equation}\label{eq:gap}
d_n\coloneqq\dist\big(\lambda_n,\sigma(S_k)\setminus\{\lambda_n\}\big)\ge\frac{c_0\,n}{Re}\quad(n\ge n_0),\qquad
\sum_{n\ge n_0}d_n^{-2}<\infty ,
\end{equation}
and for $n\ge n_0$ and all $j\ne n$,
\begin{equation}\label{eq:gapjm}
|\lambda_j-\lambda_n|\ge\frac{c_0'}{Re}\,|j^2-n^2| ,
\end{equation}
with $c_0,c_0'>0$ depending on $k$ only, and $|\lambda_j|\le C(k)\,j^2/Re$ for all $j$.
\item[(iii)] \emph{Eigenfunction and trace estimates.} Normalize $e_n$ in
\eqref{eq:efn} by $\|e_n\|_k=1$ and $a>0$. For $n\ge n_0$,
\begin{equation}\label{eq:efest}
a_n=\frac{\sqrt2}{\sigma_n}\big(1+O(n^{-1})\big),\qquad
\|e_n\|=O(n^{-1}),\qquad
e_{n,yy}(1)=(-1)^{n+1}\sqrt2\,\sigma_n\big(1+O(n^{-1})\big),
\end{equation}
\begin{equation}\label{eq:efest3}
e_{n,yyy}(1)=(-1)^{n+1}\sqrt2\,\sigma_n\,\frac{2\pi k}{\sinh2\pi k}\Big(\cosh2\pi k-(-1)^{n+1}\Big)\big(1+O(n^{-1})\big),
\end{equation}
and $\ipL{h}{e_n}=O(n^{-1})$ for every fixed $h\in C^1([0,1])$.
\item[(iv)] \emph{Trace nondegeneracy and simplicity.} No eigenfunction of $S_k$
satisfies $e_{yy}(1)=0$, and every eigenvalue of $S_k$ is simple.
\end{enumerate}
\end{proposition}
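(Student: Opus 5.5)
The plan is to reduce the eigenproblem for $S_k$ to the clamped fourth-order ODE. By Lemma~\ref{lem:pairconv}, $S_ke=\lambda e$ with $e\in\mathcal D$ is equivalent to $\tfrac1{Re}\Lap^2e=\lambda\Lap e$ with $e(0)=e_y(0)=e(1)=e_y(1)=0$. The right side $\lambda\Lap e$ lies in $H^{-1}$, so the equation is first read distributionally and then bootstrapped to smoothness using Lemma~\ref{lem:ell}. Writing $\lambda=-(\sigma^2+4\pi^2k^2)/Re$, the characteristic roots are $\pm2\pi k$ and $\pm i\sigma$. Negativity from Proposition~\ref{prop:ops}(i) rules out $\sigma^2\le0$: in that case the ODE $(\Lap+\sigma^2)\Lap e=0$ with the four clamped conditions has only the trivial solution, which I would check directly for $\sigma=0$ and for $\sigma$ imaginary. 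The general solution vanishing with its slope at $y=0$ is $af+bg$ with $f,g$ as in \eqref{eq:efn}. Imposing the two conditions at $y=1$ gives the $2\times2$ determinant $f(1)g'(1)-f'(1)g(1)=0$. Expanding with $\cosh^2-\sinh^2=1$ and $\cos^2+\sin^2=1$ and dividing by a common factor should produce \eqref{eq:char}. That settles (i).

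For (ii) I would write \eqref{eq:char} as $\Phi(\sigma)\coloneqq4\pi k\sigma(1-\cosh2\pi k\cos\sigma)-(\sigma^2-4\pi^2k^2)\sinh2\pi k\sin\sigma=0$.
\begin{itemize}
\item \emph{Large $\sigma$.} The term $\sigma^2\sinh(2\pi k)\sin\sigma$ dominates, so the roots sit near $m\pi$.
\item \emph{Asymptotics \eqref{eq:sigman}.} Setting $\sigma=(n+1)\pi+\epsilon$ and using $\sin\sigma\approx(-1)^{n+1}\epsilon$, $\cos\sigma\approx(-1)^{n+1}$ gives $\epsilon\approx4\pi k(1-(-1)^{n+1}\cosh2\pi k)/((-1)^{n+1}(n+1)\pi\sinh2\pi k)$, which matches \eqref{eq:sigman}.
\item \emph{Counting, step 1.} For large $N$, compare sign changes of $\Phi$ at the points $(m+\tfrac12)\pi$, where the $\sin$ term dominates and alternates in sign.
\item \emph{Counting, step 2.} For a global count, use the argument principle on a large contour, or a monotonicity argument on $\tan$-type rewritings. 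Combined with step 1, this gives exactly one root per interval $(n\pi,(n+1)\pi)$ and none in $(0,\pi)$.
\item \emph{Simplicity for large $n$.} $\Phi'\ne0$ near the asymptotic roots.
\item \emph{Gaps.} \eqref{eq:gap} and \eqref{eq:gapjm} follow from $\sigma_n\sim(n+1)\pi$, since $|\lambda_j-\lambda_n|=|\sigma_j^2-\sigma_n^2|/Re$ and $\sigma_j\in(j\pi,(j+1)\pi)$.
\end{itemize}

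For (iii), solve $af(1)+bg(1)=0$ asymptotically. Since $\sin\sigma_n\approx(-1)^{n+1}\epsilon_n$, we get $g(1)=\sinh2\pi k+O(1/n)$ and $f(1)=\cosh2\pi k-(-1)^{n+1}+O(1/n)$, so $b/a=O(1)$. The $H^1$ norm is then dominated by the oscillatory derivative $a\sigma\sin(\sigma y)$, whose squared $L^2$ norm is $\approx a^2\sigma^2/2$; this gives $a_n\approx\sqrt2/\sigma_n$, while the $\|\cdot\|$ norm is $O(a)=O(1/n)$. The traces are direct computations:
\begin{itemize}
\item $e_{yy}(1)\approx a\sigma^2\cos\sigma_n$, which gives the stated formula since the $g$-contribution is $O(a\sigma)$ and hence lower order;
\item $e_{yyy}(1)$ involves $a(-\sigma^3\sin\sigma)+b(2\pi k\sigma^2\cos\sigma)$, both of order $\sigma$, and substituting $\epsilon_n$ and $b/a$ should give \eqref{eq:efest3}.
\end{itemize}
The bound $\ipL{h}{e_n}=O(1/n)$ follows from one integration by parts against $\cos(\sigma y)$, $\sin(\sigma y)$, together with $a=O(1/n)$.

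For (iv), simplicity of every eigenvalue follows from self-adjointness together with geometric simplicity. Geometric simplicity holds because the solution space satisfying the conditions at $0$ is two-dimensional and the $2\times2$ system has rank at least one, so the kernel is at most one-dimensional. For $e_{yy}(1)\ne0$, suppose $e(1)=e_y(1)=e_{yy}(1)=0$. Pairing $(\Lap+\sigma^2)\Lap e=0$ against a suitable multiplier, such as $y\,\Lap e'$ or $e'$ in a Rellich/Pohozaev identity, yields a boundary term proportional to $|e_{yy}(1)|^2$ and interior terms of a definite sign, forcing $e\equiv0$. Alternatively, $e_{yy}(1)=0$ adds a third linear condition on $(a,b)$, and combining it with the $2\times2$ system yields an incompatible transcendental identity. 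I would try the Rellich multiplier first; one could also argue through the adjoint with $\zeta=\Lap e$ solving a second-order problem.

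The main obstacle is the exact global count in (ii), namely exactly one root in every interval including small $n$ and none in $(0,\pi)$, rather than only asymptotically. If the sign-change argument does not close at low $n$, I would fall back on a continuity/homotopy argument in $k$ from $k\to0$, where the count is classical, tracking that roots cannot cross the endpoints $m\pi$ (at $\sigma=m\pi$, $\Phi=4\pi k\,m\pi(1\mp\cosh2\pi k)\ne0$).
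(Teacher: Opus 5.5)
Parts (i) and (iii) follow the paper's own computation: the $2\times2$ determinant at $y=1$, simplified by $g_y=2\pi k f$, and the explicit trace expansions. There is one slip: the eigen-ODE is $(\partial_{yy}+\sigma^2)\Lap e=0$, not $(\Lap+\sigma^2)\Lap e=0$.

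The genuine gap is the one you flag yourself. The global enumeration in (ii) is never established: exactly one root in each $(n\pi,(n+1)\pi)$, $n\ge1$, and none in $(0,\pi)$. The decreasing labelling, \eqref{eq:gapjm} and $|\lambda_j|\le C(k)j^2/Re$ all rest on it, and none of your routes closes it as written:
\begin{itemize}
\item Sign changes at $(m+\tfrac12)\pi$ are available only for large $m$. They bracket a root around $m\pi$, not inside $(n\pi,(n+1)\pi)$, and the roots sit just below $(n+1)\pi$.
\item $\Phi'\ne0$ near the asymptotic roots does not exclude further roots in the same interval.
\item The argument-principle and $\tan$ options are only named, not carried out.
\item The homotopy fallback fails as stated. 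Your $\Phi$ vanishes at $\sigma=0$ for every $k$, so the no-crossing argument says nothing at the endpoint $0$ of $(0,\pi)$. At $k=0$, where $\Phi\equiv0$ and you must pass to $\Phi/(2\pi k)$, the limit is $2\sigma\sin(\sigma/2)\bigl(2\sin(\sigma/2)-\sigma\cos(\sigma/2)\bigr)$. Its zeros include $\sigma=2r\pi$, exactly on the interval endpoints, and a fivefold zero at $\sigma=0$. So the classical $k=0$ count does not tell you which interval these roots enter once $k\ne0$; that would need a separate local analysis.
\end{itemize}

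The missing idea, which is how the paper does it, is the reflection symmetry $y\mapsto1-y$ of the clamped Stokes problem. In $x=y-\tfrac12$, with $a=2\pi k$, eigenfunctions are even or odd. The even ones satisfy $\sigma\tan(\sigma/2)=-a\tanh(a/2)$ and the odd ones $\sigma\cot(\sigma/2)=a\coth(a/2)$. Both left sides are strictly monotone between consecutive multiples of $\pi$. This gives one even root in each $((2r-1)\pi,2r\pi)$ and one odd root in each $(2r\pi,(2r+1)\pi)$. There is none in $(0,\pi)$: the even sides have opposite signs there, and $\sigma\cot(\sigma/2)<2<a\coth(a/2)$. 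This is the $\tan$-rewriting you were reaching for: with $t=\tan(\sigma/2)$, $\tau=\tanh(a/2)$, the half-angle substitution turns \eqref{eq:char} into $(\sigma t+a\tau)(\sigma\tau-at)=0$. For large $n$ the paper uses Rouch\'e on $|\sigma-n\pi|=\pi/2$ against $\sigma\sin\sigma$, which gives existence, uniqueness, reality and simplicity at once.

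Your argument-principle line can also be completed, but only with bookkeeping absent from the proposal:
\begin{itemize}
\item $\Phi(m\pi)=4\pi^2km\bigl(1-(-1)^m\cosh2\pi k\bigr)$ alternates in sign for $m\ge1$, so each interval carries an odd number of roots.
\item Rouch\'e on $|\sigma|=(N+\tfrac12)\pi$ against $-\sigma^2\sinh(2\pi k)\sin\sigma$ gives $2N+3$ zeros inside.
\item Five of these are spurious: simple at $0$, and double at $\pm2\pi ki$, where $f\equiv0$.
\item The rest are real, by self-adjointness and $\lambda<-4\pi^2k^2/Re$.
\end{itemize}
That leaves exactly $N-1$ positive roots, hence one per interval.

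For (iv), your Rellich idea does work, but with the multiplier $y\,e_y$ rather than $e_y$; the latter only gives $e_{yy}(1)^2=e_{yy}(0)^2$. Combined with the energy identity, $y\,e_y$ yields $e_{yy}(1)^2=2\|e_{yy}\|^2+8\pi^2k^2\sigma^2\|e\|^2$ for real eigenfunctions. This is a quantitative alternative to the paper's expansion at $y=1$. Finally, your rank-one claim for geometric simplicity needs a reason, e.g.\ $f(1)=\cosh2\pi k-\cos\sigma>0$.
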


\subsection{Plant spectrum}\label{sec:plant}

Two facts about the resolvent are used repeatedly: $\sigma(A_k)\subset\{z\in\mathbb C:\ \dist(z,\sigma(S_k))\le\|B_k\|\}$, and if $\dist(z,\sigma(S_k))\ge2\|B_k\|$ then $\|(z-A_k)^{-1}\|\le2/\dist(z,\sigma(S_k))$. Indeed, if $\dist(z,\sigma(S_k))>\|B_k\|$ then $\|(z-S_k)^{-1}\|=\dist(z,\sigma(S_k))^{-1}<\|B_k\|^{-1}$ by self-adjointness, so $z-A_k=(I-B_k(z-S_k)^{-1})(z-S_k)$ is invertible; the Neumann series gives the bound.

For an isolated eigenvalue $z_0$ of a closed operator $A$ with compact resolvent,
$\nu(z_0;A)$ denotes its algebraic multiplicity, and a Jordan chain of $A$ at $z_0$ is
a finite sequence $\zeta_0,\dots,\zeta_{\kappa-1}$ of nonzero vectors with
\begin{equation}\label{eq:jordan}
A\zeta_0=z_0\zeta_0,\qquad A\zeta_p=z_0\zeta_p+\zeta_{p-1}\quad(1\le p\le\kappa-1);
\end{equation}
$\kappa$ is the length of the chain, $\nu(z_0;A)$ is the sum of the lengths of the
chains in a basis of the root subspace, and $z_0$ is semisimple exactly when every
chain has length one.

The spectrum is treated in two parts: the finitely many eigenvalues of $A_k$ in a
bounded region, called low, and the remaining ones, called high, which lie one in each
disk $\{|\lambda-\lambda_m|<d_m/2\}$, $m\ge n_1$, of Proposition \ref{prop:kato}
below.

\begin{proposition}[Perturbed spectrum]\label{prop:kato}
Let $n_1\ge n_0$ be such that $d_n>4\|B_k\|$ for $n\ge n_1$. For each $n\ge n_1$ the
operator $A_k$ has exactly one eigenvalue $\lp n$ in $\{|\lambda-\lambda_n|<d_n/2\}$, it
is algebraically simple, and its normalized eigenvector $\ep n$ and the eigenvector
$\tep n$ of $A_k^*$ at $\overline{\lp n}$, normalized by $\ip{\ep n}{\tep n}=1$, satisfy
\begin{equation}\label{eq:kato}
\|\ep n-e_n\|_k\le\frac{C}{d_n},\qquad \|\tep n-e_n\|_k\le\frac{C}{d_n},\qquad
|\lp n-\lambda_n|\le\|B_k\| ,
\end{equation}
with $C$ depending on $\|B_k\|$ only. The remaining spectrum of $A_k$ is contained in
$\{|z|\le r_{\mathrm{low}}\}$, $r_{\mathrm{low}}\coloneqq\tfrac12(|\lambda_{n_1-1}|+|\lambda_{n_1}|)$,
and consists of finitely many eigenvalues of total algebraic multiplicity $n_1-1$,
carried by the Riesz projection $P_{\mathrm{low}}$ over
$\Gamma_{\mathrm{low}}=\{|z|=r_{\mathrm{low}}\}$. Moreover, on the circles
$\Gamma_n=\{|z-\lambda_n|=d_n/2\}$, $\|(z-A_k)^{-1}\|\le4/d_n$.
\end{proposition}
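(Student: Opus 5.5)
The plan is a standard Kato perturbation argument. $A_k=S_k+B_k$, with $S_k$ self-adjoint with simple eigenvalues $\lambda_n$ and $B_k$ bounded (Proposition \ref{prop:ops}). We work on the disks $D_n=\{|z-\lambda_n|<d_n/2\}$ for $n\ge n_1$. The facts stated just before the proposition give two things. First, for $\dist(z,\sigma(S_k))\ge2\|B_k\|$ we have $\|(z-A_k)^{-1}\|\le2/\dist(z,\sigma(S_k))$. On $\Gamma_n$ the distance to $\sigma(S_k)$ is exactly $d_n/2>2\|B_k\|$, which yields $\|(z-A_k)^{-1}\|\le4/d_n$. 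Second, $\sigma(A_k)$ lies in the $\|B_k\|$-neighbourhood of $\sigma(S_k)$.

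Step one is to count eigenvalues in each disk by homotopy. Set $A(t)=S_k+tB_k$ for $t\in[0,1]$. The same resolvent bound holds on $\Gamma_n$ uniformly in $t$. The Riesz projections $P_n(t)=\frac1{2\pi i}\oint_{\Gamma_n}(z-A(t))^{-1}\dd z$ are therefore norm-continuous in $t$, and their rank is constant. At $t=0$ the rank is $1$, because $\lambda_n$ is simple (Proposition \ref{prop:stokes-all}(iv)). Hence $A_k$ has exactly one eigenvalue $\lp n$ in $D_n$, and it is algebraically simple. The eigenvalue lies within $\|B_k\|$ of some $\lambda_j$. Inside $D_n$ the only such $\lambda_j$ is $\lambda_n$, since $d_n/2>2\|B_k\|$. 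This gives $|\lp n-\lambda_n|\le\|B_k\|$.

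Step two is the eigenvector estimate. Write $P_n-E_n$ as a contour integral of $(z-A_k)^{-1}B_k(z-S_k)^{-1}$ over $\Gamma_n$, where $E_n=\ip{\cdot}{e_n}e_n$. The integrand is bounded by $(4/d_n)\|B_k\|(2/d_n)$ and the contour has length $\pi d_n$. This gives $\|P_n-E_n\|\le8\pi\|B_k\|/d_n$. Set $\ep n=P_ne_n/\|P_ne_n\|_k$; it is well defined, since $\|P_ne_n-e_n\|_k\le C/d_n<1$ once $d_n>4\|B_k\|$ is used with a suitable constant. Enlarging $n_1$ if necessary, we get $\|\ep n-e_n\|_k\le C/d_n$. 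The same argument for $A_k^*=S_k+B_k^*$, which has the same norm bound, gives the adjoint projection $P_n^*$ and an eigenvector at $\overline{\lp n}$. We normalize it by $\ip{\ep n}{\tep n}=1$. The pairing $\ip{e_n}{e_n}=1$ is perturbed by $O(1/d_n)$, so this normalization only costs another $O(1/d_n)$. It yields $\|\tep n-e_n\|_k\le C/d_n$. Here $C$ depends only on $\|B_k\|$, once the finitely many $n$ near $n_1$ are absorbed by the constraint $d_n>4\|B_k\|$.

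Step three treats the low part. Every point of $\sigma(A_k)$ outside $\bigcup_{n\ge n_1}D_n$ lies within $\|B_k\|$ of some $\lambda_j$ with $j<n_1$. Since $|\lambda_j|\le|\lambda_{n_1-1}|$, such points satisfy $|z|\le|\lambda_{n_1-1}|+\|B_k\|<r_{\mathrm{low}}$. This inequality needs $\|B_k\|<\tfrac12(|\lambda_{n_1}|-|\lambda_{n_1-1}|)$, which follows from $d_{n_1}>4\|B_k\|$ provided $d_{n_1}\le|\lambda_{n_1}-\lambda_{n_1-1}|$. The same homotopy argument applies on $\Gamma_{\mathrm{low}}$, whose distance to $\sigma(S_k)$ is at least $d_{n_1}/2>2\|B_k\|$. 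It shows that the rank of $P_{\mathrm{low}}$ equals that of the spectral projection of $S_k$ onto $\lambda_1,\dots,\lambda_{n_1-1}$, namely $n_1-1$. Compactness of the resolvent makes these finitely many eigenvalues.

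The main obstacle is bookkeeping rather than depth. The delicate point is checking that the disks $D_n$ and the circle $\Gamma_{\mathrm{low}}$ stay at distance $\ge2\|B_k\|$ from $\sigma(S_k)$ uniformly along the homotopy. The other care point is making sure the adjoint normalization constant stays bounded; this holds because $\ip{\ep n}{\tep n}$ is close to $1$.
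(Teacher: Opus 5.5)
Your proof is correct. For the high disks it is the paper's argument: the Neumann series on $\Gamma_n$, the resolvent-difference integral for $P_n-P_n^0$, $\ep{n}=P_ne_n/\|P_ne_n\|_k$, and the biorthogonal rescaling of the adjoint eigenvector. The real difference is how you count ranks.

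The paper uses no homotopy. It proves $\|P-P^0\|<1$ directly, and on $\Gamma_n$ that bound already gives rank one, so your deformation $S_k+tB_k$ is redundant there.

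On the circle $\Gamma_{\mathrm{low}}$ the direct estimate fails. With $g=\lambda_{n_1-1}-\lambda_{n_1}$ the integrand is only $\le8\|B_k\|/g^2$, the contour has length $2\pi r_{\mathrm{low}}$, and $r_{\mathrm{low}}\|B_k\|/g^2$ need not be small. The paper therefore integrates the $O(|z|^{-2})$ resolvent difference along the vertical line $\Rea z=\frac12(\lambda_{n_1-1}+\lambda_{n_1})$ closed to the right. There $\int_{\mathbb R}((g/2)^2+t^2)^{-1}\dd t=2\pi/g$, which gives $\|P_{\mathrm{low}}-P^0_{\mathrm{low}}\|\le2\|B_k\|/g<1$.

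Your homotopy on the compact circle, where the resolvent is $\le4/g$ uniformly in $t$, avoids that detour. It is the simpler proof of the statement as written. What it does not deliver is the closeness $\|P_{\mathrm{low}}-P^0_{\mathrm{low}}\|<1$ itself. Lemma \ref{lem:bm} requires this of every member of the family, the low block included, when it is applied in Corollary \ref{cor:decomp}: it is what makes $P_{\mathrm{low}}$ injective on $\operatorname{ran}P^0_{\mathrm{low}}$, and equal ranks do not give that. So the paper's estimate does double duty downstream.

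There is one slip to fix. You dropped the factor $1/(2\pi)$ in $P_n-P_n^0=\frac1{2\pi i}\oint_{\Gamma_n}\cdots$. The correct bound is $\frac1{2\pi}\cdot\pi d_n\cdot 8\|B_k\|/d_n^2=4\|B_k\|/d_n<1$. Hence $P_ne_n\ne0$ and $\|\ep{n}-e_n\|_k\le8\|B_k\|/d_n$ follow from the hypothesis as stated. ``Enlarging $n_1$'' is not available in any case, because the proposition is asserted for every $n_1\ge n_0$ with $d_n>4\|B_k\|$ beyond it.

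Two smaller points:
\begin{itemize}
\item For the adjoint, take $\tep{n}$ proportional to $P_n^*e_n$ and use $\ip{P_ne_n}{P_n^*e_n}=\ip{P_ne_n}{e_n}=1+O(\|B_k\|/d_n)$, as the paper does. This is cleaner than comparing two separately normalized vectors.
\item State explicitly that $|\lp{n}|\ge|\lambda_{n_1}|-\|B_k\|>r_{\mathrm{low}}$ for $n\ge n_1$, so that $\Gamma_{\mathrm{low}}$ encloses exactly the remaining spectrum.
\end{itemize}
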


\begin{proof}
For $\lambda\in\Gamma_n$, $\|(\lambda-S_k)^{-1}\|=2/d_n$ by self-adjointness, so
$\|B_k(\lambda-S_k)^{-1}\|\le2\|B_k\|/d_n<\tfrac12$ and
$(\lambda-A_k)^{-1}=(\lambda-S_k)^{-1}\sum_{j\ge0}[B_k(\lambda-S_k)^{-1}]^j$, giving
$\Gamma_n\subset\rho(A_k)$ and the resolvent bound. The Riesz projections
$P_n=\frac1{2\pi i}\oint_{\Gamma_n}(\lambda-A_k)^{-1}\dd\lambda$ and
$P_n^0=\ip{\cdot}{e_n}e_n$ satisfy $\|P_n-P_n^0\|\le\frac{4\|B_k\|}{d_n}<1$, so
$\operatorname{rank}P_n=1$, and the single eigenvalue inside is simple. Set
$\ep n\coloneqq P_ne_n/\|P_ne_n\|_k$; then $\|\ep n-e_n\|_k\le2\|P_n-P_n^0\|$. Repeat for
$A_k^*=S_k+B_k^*$; since $\ip{P_ne_n}{e_n}=1+O(1/d_n)$, the biorthogonal normalization
changes $\tep n$ by a factor $1+O(1/d_n)$. Eigenvalue bound: by the resolvent bound above,
$\lp n$ lies within $\|B_k\|$ of some $\lambda_j$; since $|\lp n-\lambda_n|<d_n/2$ and
$d_n/2>2\|B_k\|$, that $\lambda_j$ is $\lambda_n$. For the low block, take
$P_{\mathrm{low}}$ over the vertical line $\Rea z=h\coloneqq\tfrac12(\lambda_{n_1-1}+\lambda_{n_1})$
closed to the right, where the resolvent difference is $O(|z|^{-2})$ and the closing arc
contributes nothing. With $g\coloneqq\lambda_{n_1-1}-\lambda_{n_1}\ge d_{n_1}>4\|B_k\|$,
$\dist(h+it,\sigma(S_k))\ge\big((g/2)^2+t^2\big)^{1/2}\eqqcolon\delta(t)\ge2\|B_k\|$, so
$\|(z-A_k)^{-1}\|\le2/\delta(t)$ and
$\|(z-A_k)^{-1}-(z-S_k)^{-1}\|=\|(z-A_k)^{-1}B_k(z-S_k)^{-1}\|\le2\|B_k\|/\delta(t)^2$;
integrating, $\|P_{\mathrm{low}}-P^0_{\mathrm{low}}\|\le2\|B_k\|/g<1$ with
$P^0_{\mathrm{low}}=\sum_{n<n_1}P^0_n$, which gives
$\operatorname{rank}P_{\mathrm{low}}=n_1-1$. Every $z$ with $|z|>r_{\mathrm{low}}$
outside $\bigcup_{n\ge n_1}\{|z-\lambda_n|<d_n/2\}$ has
$\dist(z,\sigma(S_k))\ge d_{n_1}/2>\|B_k\|$, hence lies in $\rho(A_k)$.
\end{proof}

\begin{lemma}[Bari--Markus]\label{lem:bm}
Let $\{Q_n\}_{n\ge1}$ be mutually disjoint projections on a Hilbert space,
$Q_mQ_n=\delta_{mn}Q_n$, and $\{Q_n^0\}$ mutually orthogonal projections with
$\sum_nQ_n^0=I$, $\operatorname{rank}Q_n^0<\infty$, $\|Q_n-Q_n^0\|<1$ for every $n$, and
$\sum_n\operatorname{rank}Q_n^0\,\|Q_n-Q_n^0\|^2<\infty$. Then $I+K$, defined by $Ke_{n,i}\coloneqq(Q_n-Q_n^0)e_{n,i}$ on orthonormal bases
$\{e_{n,i}\}_i$ of $\operatorname{ran}Q_n^0$, is bounded, boundedly invertible, and
Hilbert--Schmidt away from the identity, and it carries $\{e_{n,i}\}$ to
$\{Q_ne_{n,i}\}$; consequently $\{Q_ne_{n,i}\}_{n,i}$ is a Riesz basis and
$\sum_nQ_nx=x$ for every $x$, with unconditional convergence.
\end{lemma}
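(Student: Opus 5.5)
The plan is the standard Bari--Markus route: build the operator $I+K$ by hand, prove it is a Hilbert--Schmidt perturbation of the identity, and get bounded invertibility from Fredholm theory plus injectivity.

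\textbf{Step 1: $K$ is Hilbert--Schmidt.} Let $\{e_{n,i}\}_{n,i}$ be the union of the orthonormal bases of $\operatorname{ran}Q_n^0$. Since $\sum_nQ_n^0=I$ and the $Q_n^0$ are mutually orthogonal, this union is an orthonormal basis of the whole space. Define $K$ on this basis by $Ke_{n,i}=(Q_n-Q_n^0)e_{n,i}$. Then
\[
\sum_{n,i}\|Ke_{n,i}\|^2\le\sum_n\operatorname{rank}Q_n^0\,\|Q_n-Q_n^0\|^2<\infty ,
\]
so $K$ extends to a Hilbert--Schmidt operator, in particular compact. Hence $I+K$ is bounded. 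It also carries $e_{n,i}$ to $e_{n,i}+Q_ne_{n,i}-e_{n,i}=Q_ne_{n,i}$, because $Q_n^0e_{n,i}=e_{n,i}$.

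\textbf{Step 2: $I+K$ is injective.} Since $I+K$ is Fredholm of index zero (identity plus compact), injectivity gives bounded invertibility. Suppose $(I+K)x=0$ and expand $x=\sum x_{n,i}e_{n,i}$, so that $\sum_{n,i}x_{n,i}Q_ne_{n,i}=0$, the series converging in norm. Apply $Q_m$, which is bounded, and use $Q_mQ_n=\delta_{mn}Q_n$ to get $Q_m\big(\sum_ix_{m,i}e_{m,i}\big)=0$. Write $y_m\coloneqq\sum_ix_{m,i}e_{m,i}\in\operatorname{ran}Q_m^0$. Then
\[
\|y_m\|=\|(Q_m-Q_m^0)y_m\|\le\|Q_m-Q_m^0\|\,\|y_m\| ,
\]
and $\|Q_m-Q_m^0\|<1$ forces $y_m=0$. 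So $x=0$, and the Fredholm alternative gives that $(I+K)^{-1}$ is bounded.

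\textbf{Step 3: the Riesz-basis conclusions.} The image of an orthonormal basis under a bounded, boundedly invertible operator is a Riesz basis, so $\{Q_ne_{n,i}\}$ is a Riesz basis. For the expansion of an arbitrary $x$, write $x=(I+K)w$ with $w=\sum w_{n,i}e_{n,i}$. Then $x=\sum_{n,i}w_{n,i}Q_ne_{n,i}$, converging unconditionally. Applying $Q_m$ shows $Q_mx=\sum_iw_{m,i}Q_me_{m,i}$, so $\sum_nQ_nx=x$ unconditionally, by grouping the finite blocks of a Riesz expansion.

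The main obstacle is Step 2. One has to justify applying $Q_m$ termwise, which needs norm convergence of the series from Step 1. One also needs $Q_m$ to annihilate $Q_ne_{n,i}$ for $n\ne m$ even though the $Q_n$ are not orthogonal; this is exactly the disjointness hypothesis. The rest is routine.
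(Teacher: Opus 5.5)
Your proposal is correct and is essentially the paper's proof: a Hilbert--Schmidt bound on $K$, Fredholm index zero, and injectivity obtained by applying the bounded $Q_m$ termwise and using $\|Q_m-Q_m^0\|<1$ on $\operatorname{ran}Q_m^0$. Your Step~3 also derives the unconditional expansion $\sum_nQ_nx=x$ by grouping the finite blocks of the Riesz expansion, a step the paper leaves implicit.
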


\begin{corollary}[Unconditional spectral decomposition of the plant]\label{cor:decomp}
Without any hypothesis on the low spectrum,
\begin{equation}\label{eq:decomp}
P_{\mathrm{low}}V+\sum_{n\ge n_1}P_nV=V\qquad\text{for every }V\in\Hk,
\end{equation}
the series converging unconditionally in $\Hk$,
and the family $\{v_1,\dots,v_{n_1-1}\}\cup\{\ep n\}_{n\ge n_1}$ is a Riesz basis of
$\Hk$, where $\{v_j\}$ is the image under $P_{\mathrm{low}}$ of an orthonormal basis of
$\operatorname{ran}P^0_{\mathrm{low}}$. The coordinate functionals of the high modes are
$V\mapsto\ip{V}{\tep n}$.
\end{corollary}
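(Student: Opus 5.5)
The plan is to obtain Corollary~\ref{cor:decomp} by applying Lemma~\ref{lem:bm} to the family of Riesz projections from Proposition~\ref{prop:kato}, grouping the low block into a single projection.

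\textbf{Choice of projections.} Take $Q_1\coloneqq P_{\mathrm{low}}$ and $Q_1^0\coloneqq P^0_{\mathrm{low}}=\sum_{n<n_1}P_n^0$, which has rank $n_1-1$. For $n\ge n_1$ take $Q_n\coloneqq P_n$ and $Q_n^0\coloneqq P_n^0=\ip{\cdot}{e_n}e_n$.

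\textbf{Hypotheses of Lemma~\ref{lem:bm}.}
\begin{enumerate}
\item The $Q_n^0$ are mutually orthogonal and sum to $I$. This holds because $\{e_n\}$ is an orthonormal basis of $\Hk$ (Proposition~\ref{prop:ops}(i)).
\item The $Q_n$ are mutually disjoint, $Q_mQ_n=\delta_{mn}Q_n$. These are Riesz projections of one operator $A_k$ over contours enclosing disjoint parts of its spectrum. The circles $\Gamma_n$ are pairwise disjoint since their radii are $d_n/2$, and they lie outside $\{|z|\le r_{\mathrm{low}}\}$. The standard resolvent-identity argument then gives disjointness.
\item Each norm defect is below $1$. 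The proof of Proposition~\ref{prop:kato} gives $\|P_{\mathrm{low}}-P^0_{\mathrm{low}}\|\le 2\|B_k\|/g<1$ and $\|P_n-P_n^0\|\le 4\|B_k\|/d_n<1$.
\item The weighted square-sum is finite. We need $\sum_{n\ge n_1}\|P_n-P_n^0\|^2\le16\|B_k\|^2\sum_{n\ge n_1} d_n^{-2}<\infty$, which follows from \eqref{eq:gap}. The low block contributes one finite term, weighted by rank $n_1-1$.
\end{enumerate}

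\textbf{Conclusion.} Lemma~\ref{lem:bm} then yields $\sum_n Q_nV=V$ with unconditional convergence, which is \eqref{eq:decomp}. It also shows that $\{Q_ne_{n,i}\}$ is a Riesz basis. Choose the orthonormal basis of $\operatorname{ran}P^0_{\mathrm{low}}$ to be $\{e_1,\dots,e_{n_1-1}\}$, or any other. Its images are the $v_j$, and the high images are $P_ne_n$. Since $P_ne_n$ is a nonzero multiple of $\ep n$, and the multiplier $\|P_ne_n\|_k=1+O(1/d_n)$ is bounded above and below, rescaling preserves the Riesz basis property.

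\textbf{Coordinate functionals.} The coordinate of $\ep n$ is the functional $V\mapsto c_n$ with $P_nV=c_n\ep n$. Since $\operatorname{rank}P_n=1$, we have $P_n=\ip{\cdot}{w}\ep n$ for some $w$. Then $P_n^*=\ip{\cdot}{\ep n}w$ is the Riesz projection of $A_k^*$ at $\overline{\lp n}$, so $w$ spans $\ker(A_k^*-\overline{\lp n})$ and $w$ is a multiple of $\tep n$. The normalization $\ip{P_n\ep n}{\cdot}$ from $P_n\ep n=\ep n$ forces $\ip{\ep n}{w}=1$, so $w=\tep n$ by the biorthogonal normalization.

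\textbf{Main obstacle.} The only delicate points are the disjointness of $P_{\mathrm{low}}$ from the $P_n$ and the claim that the low contour is legitimate. Both are settled by Proposition~\ref{prop:kato}: it localizes the entire remaining spectrum inside $|z|\le r_{\mathrm{low}}$ and the high eigenvalues inside the disjoint disks. No hypothesis on the low spectrum is needed, because that block enters only as one finite-rank projection.
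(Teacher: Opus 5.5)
Your proposal is correct and follows the paper's proof: both apply Lemma~\ref{lem:bm} with $Q_{\mathrm{low}}=P_{\mathrm{low}}$ and $Q_n=P_n$, compared against the Stokes projections, using $\|P_{\mathrm{low}}-P^0_{\mathrm{low}}\|<1$, $\|P_n-P^0_n\|\le4\|B_k\|/d_n$ and \eqref{eq:gap}. Your rescaling from $P_ne_n$ to $\ep n$ and your identification of $P_n$ as $\ip{\cdot}{\tep n}\ep n$ spell out what the paper compresses into ``by rank one and biorthogonality.''
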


\begin{proof}
Apply Lemma \ref{lem:bm} with $Q_{\mathrm{low}}=P_{\mathrm{low}}$,
$Q^0_{\mathrm{low}}=P^0_{\mathrm{low}}$ and $Q_n=P_n$, $Q_n^0=P^0_n$ ($n\ge n_1$): the
projections are mutually disjoint as Riesz projections of disjoint spectral sets,
$\|P_n-P^0_n\|\le4\|B_k\|/d_n<1$, and
$\sum_{n\ge n_1}\|P_n-P^0_n\|^2\le16\|B_k\|^2\sum_nd_n^{-2}<\infty$ by \eqref{eq:gap}.
The lemma's conclusion already contains completeness --- the family is the image of an
orthonormal basis under a bounded invertible map --- so \eqref{eq:decomp} follows. For $n\ge n_1$, $P_n=\ip{\cdot}{\tep n}\ep n$ by rank one and
biorthogonality.
\end{proof}

No simplicity of the finitely many low eigenvalues of $A_k$ is assumed anywhere in this
paper; \eqref{eq:decomp} is the only completeness statement used, and it is
unconditional.

\subsection{Frequency responses to wall velocity and their residue coefficients}\label{sec:liftings}

Throughout, $\ell$ is defined by
\begin{equation}\label{eq:ell2}
\ell(y)\coloneqq\frac{\sinh(2\pi ky)}{\sinh(2\pi k)},\qquad \Lap\ell=0,\quad\ell(0)=0,\quad\ell(1)=1 .
\end{equation}

\begin{definition}[Frequency responses to wall velocity]\label{def:Psi}
For $z\in\rho(A_k)$ let $\Psi(z),\rho(z)\in H^4(0,1)$, the states produced by a unit
tangential, respectively normal, wall velocity at complex frequency $z$, be the unique
solutions of
\begin{align}
\mathcal L_k\Psi=z\Lap\Psi,&\qquad \Psi(0)=\Psi_y(0)=\Psi(1)=0,\quad \Psi_y(1)=1,\label{eq:Psi}\\
\mathcal L_k\rho=z\Lap\rho,&\qquad \rho(0)=\rho_y(0)=\rho_y(1)=0,\quad \rho(1)=1,\label{eq:rho}
\end{align}
and for the ratio $c\in\mathbb R$ of \eqref{eq:params} set $\rho_c(z)\coloneqq\rho(z)+c\,\Psi(z)$,
so that $\rho_c(1)=1$, $\rho_{c,y}(1)=c$.
\end{definition}

These problems have unique solutions in $H^4(0,1)$: Uniqueness: the difference of two solutions lies in $H^4\cap H^2_0\subset\mathcal D$ and in $\ker(z-A_k)=\{0\}$. Existence: with $W_\Psi\coloneqq-y^2(1-y)$, respectively $W_\rho\coloneqq y^2(3-2y)$, which carry the four boundary data, seek $\Psi=W_\Psi+w$ with $w\in\mathcal D$: the requirement is $(z-A_k)w=\Lap^{-1}(\mathcal L_k-z\Lap)W_\Psi$, solvable since $z\in\rho(A_k)$, and the eigen-ODE bootstrap of Proposition \ref{prop:ops}(ii) gives $w\in H^4$; likewise for $\rho$. Analyticity in $z$ follows from Lemma \ref{lem:resform}.

\begin{lemma}[Resolvent formulas]\label{lem:resform}
For $z,z_0\in\rho(A_k)$,
\begin{align}
\Psi(z)&=\Psi(z_0)-(z-z_0)(z-A_k)^{-1}\Psi(z_0),\label{eq:resPsi}\\
\rho_c(z)&=\rho_c(z_0)-(z-z_0)(z-A_k)^{-1}\big(\rho_c(z_0)-\ell\big).\label{eq:resrho}
\end{align}
In particular $\dot\Psi(z)=-(z-A_k)^{-1}\Psi(z)$ and
$\dot\rho_c(z)=-(z-A_k)^{-1}(\rho_c(z)-\ell)$, and both responses extend to $H^1$-valued
functions on $\mathbb C$ that are analytic except for poles, the poles lying in
$\sigma(A_k)$ with order at $z_*$ at most the maximal Jordan chain length of $A_k$ at
$z_*$.
\end{lemma}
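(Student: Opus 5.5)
The plan is to subtract two responses with the same wall data. The difference then lies in the plant domain, and the defining ODEs become a resolvent equation for it. The one point to watch is how $\Lap^{-1}$ (the Dirichlet inverse) acts on $\Lap$ of a function that does not vanish at $y=1$; this is exactly where $\ell$ enters the formula for $\rho_c$.

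\emph{Step 1 (formula for $\Psi$).} Fix $z,z_0\in\rho(A_k)$ and set $w\coloneqq\Psi(z)-\Psi(z_0)$. By \eqref{eq:Psi} both responses lie in $H^4$ and share the four wall data, so $w\in H^4\cap H^2_0\subset\mathcal D$. On such functions Definition~\ref{def:ops} gives $A_kw=\Lap^{-1}\mathcal L_kw$. The ODEs in \eqref{eq:Psi} give
\[
\mathcal L_kw=z\Lap\Psi(z)-z_0\Lap\Psi(z_0).
\]
Since $\Psi(\cdot)(0)=\Psi(\cdot)(1)=0$, each $\Psi$ lies in $H^2\cap H^1_0$, so $\Lap^{-1}\Lap\Psi=\Psi$. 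Hence
\[
A_kw=z\Psi(z)-z_0\Psi(z_0)=zw+(z-z_0)\Psi(z_0),
\]
that is, $(z-A_k)w=-(z-z_0)\Psi(z_0)$. Inverting $z-A_k$ gives \eqref{eq:resPsi}. This is legitimate because $\Psi(z_0)\in H^1_0=\Hk$.

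\emph{Step 2 (formula for $\rho_c$).} Set $w\coloneqq\rho_c(z)-\rho_c(z_0)$. Again $w\in H^4\cap H^2_0$, since $\rho_c(1)=1$ and $\rho_{c,y}(1)=c$ for both. Now $\rho_c$ does not vanish at $y=1$. Because $\Lap\ell=0$ and $\ell(0)=0$, $\ell(1)=1$, the function $\rho_c-\ell$ lies in $H^2\cap H^1_0$, and therefore $\Lap^{-1}\Lap\rho_c=\Lap^{-1}\Lap(\rho_c-\ell)=\rho_c-\ell$. The same computation as in Step 1 gives
\[
A_kw=z(\rho_c(z)-\ell)-z_0(\rho_c(z_0)-\ell)=zw+(z-z_0)\big(\rho_c(z_0)-\ell\big),
\]
and inverting $z-A_k$ yields \eqref{eq:resrho}, with $\rho_c(z_0)-\ell\in\Hk$.

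\emph{Step 3 (derivatives and pole structure).} Divide \eqref{eq:resPsi} and \eqref{eq:resrho} by $z-z_0$ and let $z\to z_0$. Continuity of the resolvent in operator norm gives $\dot\Psi(z_0)=-(z_0-A_k)^{-1}\Psi(z_0)$, and likewise $\dot\rho_c(z_0)=-(z_0-A_k)^{-1}(\rho_c(z_0)-\ell)$. In particular both responses are analytic on $\rho(A_k)$.

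For the extension, freeze $z_0$. The right-hand sides of \eqref{eq:resPsi} and \eqref{eq:resrho} are then $H^1$-valued functions meromorphic on $\mathbb C$, because $A_k$ has compact resolvent (Proposition~\ref{prop:ops}(ii)), so its resolvent is meromorphic with poles exactly at $\sigma(A_k)$. At an eigenvalue $z_*$ the Laurent expansion of the resolvent has principal part
\[
\sum_{j=1}^{m}\frac{(A_k-z_*)^{j-1}P_{z_*}}{(z-z_*)^{j}},
\]
where $P_{z_*}$ is the Riesz projection at $z_*$. Here $(A_k-z_*)P_{z_*}$ is nilpotent on the finite-dimensional root subspace, and its nilpotency index $m$ equals the maximal Jordan chain length at $z_*$. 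The factor $(z-z_0)$ is entire and cannot raise the pole order, so the order at $z_*$ is at most that length.

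The only delicate point, which I expect to be the main obstacle, is Step 2's identification $\Lap^{-1}\Lap\rho_c=\rho_c-\ell$. It must be checked that the $\Lap^{-1}$ in the definition of $A_k$ is the Dirichlet inverse (as stated in Section~\ref{sec:setting}), and that $\mathcal L_k$ on $H^4$ functions agrees with the pairing used there. Once this is confirmed, the remaining steps are routine resolvent algebra.
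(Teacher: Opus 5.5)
Your proof is correct and follows essentially the same route as the paper. You subtract two responses with equal wall data so that the difference lands in $\mathcal D$, apply the Dirichlet inverse with $\Lap^{-1}\Lap\rho_c=\rho_c-\ell$ absorbing the unit trace at $y=1$, and read off the pole bound from the principal part $\sum_j(z-z_*)^{-j}(A_k-z_*)^{j-1}P_{z_*}$ of the resolvent applied to a fixed vector of $\Hk$; the point you flagged as the main obstacle is settled exactly as you settled it, by the paper's convention that $\Lap^{-1}$ restricts to the Dirichlet inverse $L^2\to H^2\cap H^1_0$.
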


\begin{lemma}[Rational chain pairings]\label{lem:chains}
Let $z_*\in\sigma(A_k)$ and let $\zeta_0,\dots,\zeta_{\kappa-1}\in H^4\cap H^2_0$ be a
Jordan chain \eqref{eq:jordan} of $A_k^*$ at $\bar z_*$. Define the channel
traces $\alpha_p\coloneqq\alpha[\zeta_p]$, $\beta_p\coloneqq\beta[\zeta_p]$ from
\eqref{eq:alphabeta}, and the first-stage functional
$\beta^c_p\coloneqq\beta_p-c\,\alpha_p$. Then, for all $z\in\rho(A_k)$,
\begin{equation}\label{eq:xy}
-\ipL{\rho_c(z)}{\Lap\zeta_p}=\sum_{i=0}^p\frac{\beta^c_{p-i}}{(z-z_*)^{i+1}},\qquad
-\ipL{\Psi(z)}{\Lap\zeta_p}=\sum_{i=0}^p\frac{-\alpha_{p-i}}{(z-z_*)^{i+1}} .
\end{equation}
\end{lemma}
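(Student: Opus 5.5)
The plan is to pair the defining equations \eqref{eq:Psi} and \eqref{eq:rho} against the adjoint Jordan chain using the trace identity of Lemma \ref{lem:trace}. This yields a first-order recursion in $p$ for the pairings, and unrolling the recursion gives \eqref{eq:xy}. Fix $z\in\rho(A_k)$ and write $x_p\coloneqq-\ipL{\Psi(z)}{\Lap\zeta_p}$ and $y_p\coloneqq-\ipL{\rho_c(z)}{\Lap\zeta_p}$, with the conventions $x_{-1}=y_{-1}=0$ and $\zeta_{-1}=0$.

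\emph{Step 1: hypotheses of the trace identity.} The responses $\Psi(z)$ and $\rho_c(z)$ lie in $H^4(0,1)\subset H^3(0,1)$ and vanish with their first derivative at $y=0$. Each $\zeta_p$ lies in $H^4\cap H^2_0$. Hence \eqref{eq:trace} applies with $W=\Psi(z)$ or $W=\rho_c(z)$ and $\zeta=\zeta_p$.

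\emph{Step 2: the left-hand side.} Since $\mathcal L_kW=z\Lap W$, the left-hand side of \eqref{eq:trace} is $z\ipL{\Lap W}{\zeta_p}$. Integrating by parts twice, the boundary terms $[W_y\bar\zeta_p-W\bar\zeta_{p,y}]_0^1$ vanish because $\zeta_p\in H^2_0$. So the left-hand side equals $z\ipL{W}{\Lap\zeta_p}$, which is $-z x_p$ for $W=\Psi(z)$ and $-z y_p$ for $W=\rho_c(z)$.

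\emph{Step 3: the adjoint term.} By Lemma \ref{lem:trace}, $A_k^*=\Lap^{-1}\mathcal L_k^*$ on $H^4\cap H^2_0$. Hence $\mathcal L_k^*\zeta_p=\Lap A_k^*\zeta_p=\bar z_*\Lap\zeta_p+\Lap\zeta_{p-1}$ by \eqref{eq:jordan}. Because the inner product is conjugate-linear in its second slot,
\[
\ipL{W}{\mathcal L_k^*\zeta_p}=z_*\ipL{W}{\Lap\zeta_p}+\ipL{W}{\Lap\zeta_{p-1}} .
\]

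\emph{Step 4: the boundary terms.} By \eqref{eq:alphabeta} they equal $\alpha_pW_y(1)-\beta_pW(1)$. For $W=\Psi$ the wall data are $\Psi(1)=0$ and $\Psi_y(1)=1$, so the boundary terms reduce to $\alpha_p$. For $W=\rho_c$ the data are $\rho_c(1)=1$ and $\rho_{c,y}(1)=c$, so they reduce to $c\alpha_p-\beta_p=-\beta^c_p$.

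\emph{Step 5: the recursion and its solution.} Collecting Steps 2--4 gives
\[
(z-z_*)\,x_p=-\alpha_p+x_{p-1},\qquad (z-z_*)\,y_p=\beta^c_p+y_{p-1},\qquad p=0,\dots,\kappa-1 .
\]
Induction on $p$, starting from $x_{-1}=y_{-1}=0$, yields exactly the finite Laurent sums in \eqref{eq:xy}.

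The only step that needs care is Step 3: replacing $\mathcal L_k^*\zeta_p$ by $\Lap(\bar z_*\zeta_p+\zeta_{p-1})$ and keeping the complex conjugations straight. This relies on the chain vectors lying in $H^4\cap H^2_0$, which is guaranteed by Proposition \ref{prop:ops}(ii) and is assumed in the statement, and on the identification of $A_k^*$ in Lemma \ref{lem:trace}. Everything else is bookkeeping of signs and boundary values.
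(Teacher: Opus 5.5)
Your proposal is correct and follows the paper's proof essentially line for line. Both apply the trace identity \eqref{eq:trace} with $W=\rho_c(z)$ or $W=\Psi(z)$, rewrite the chain relation as $\mathcal L_k^*\zeta_p=\Lap(\bar z_*\zeta_p+\zeta_{p-1})$, and solve the resulting first-order recursions by induction from $x_{-1}=y_{-1}=0$; the only difference is that your labels $x_p$ and $y_p$ are swapped relative to the paper's.
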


For a vector $v$ in the root subspace $\operatorname{ran}P_{z_*}$ (which lies in
$\mathcal D\subset H^1_0$), the pairing $-\ipL{v}{\Lap\zeta}$ equals $\ip{v}{\zeta}$;
choosing chain bases $\{\phi^{(r)}_a\}$ of $A_k$ at $z_*$ and $\{\zeta^{(r)}_b\}$ of
$A_k^*$ at $\bar z_*$ in the standard biorthogonal normalization
\begin{equation}\label{eq:biorth}
\ip{\phi^{(r)}_a}{\zeta^{(s)}_b}=\delta_{rs}\,\delta_{a+b,\kappa_r-1}
\qquad(0\le a\le\kappa_r-1,\ 0\le b\le\kappa_s-1).
\end{equation}
Such bases exist: the root subspace of $A_k$ at $z_*$ pairs nondegenerately under
$\ip{\cdot}{\cdot}$ with that of $A_k^*$ at $\bar z_*$ (the pairing annihilates every
other root subspace, and the decomposition \eqref{eq:decomp} leaves no residual
annihilator), the dual Jordan structures have equal block sizes, and dual chains
chosen recursively, block by block, make each Gram matrix the anti-diagonal identity.
With this normalization, the coefficient of $\phi^{(r)}_a$ in the principal part of the meromorphic function
$z\mapsto\rho_c(z)$ at $z_*$ is the principal part of
$x^{(r)}_{\kappa_r-1-a}(z)$, and likewise for $\Psi$. In particular, the leading
$(z-z_*)^{-\kappa_{\max}}$ coefficient of $\rho_c$ is
\begin{equation}\label{eq:topcoeff}
\sum_{r:\ \kappa_r=\kappa_{\max}}\beta^c[\zeta^{(r)}_0]\;\phi^{(r)}_0 .
\end{equation}

For the high modes, Lemma \ref{lem:chains} with $\kappa=1$ and $\zeta=\tep m$ gives the
scalar identities, valid for all $z\in\rho(A_k)$ and $m\ge n_1$,
\begin{equation}\label{eq:tailpair}
\ip{\Psi(z)}{\tep m}=\frac{b_m}{z-\lp m},\qquad
-\ipL{\rho_c(z)}{\Lap\tep m}=\frac{\beta^c[\tep m]}{z-\lp m},
\end{equation}
where
\begin{equation}\label{eq:bmdef}
b_m\coloneqq-\alpha[\tep m]=-\frac1{Re}\overline{\tep m{}_{,yy}(1)},\qquad
\beta_m\coloneqq\beta[\tep m]=\frac1{Re}\overline{\tep m{}_{,yyy}(1)} .
\end{equation}

\begin{proposition}[Wall coefficients of the high modes]\label{prop:bmc}
For $m\ge n_1$,
\begin{equation}\label{eq:bm}
b_m=-\frac{(-1)^m\sqrt2\,\sigma_m}{Re}+O(1) ,
\end{equation}
\begin{equation}\label{eq:betam}
\beta_m=(-1)^m\,\frac{\sqrt2\,\sigma_m}{Re}\cdot\frac{2\pi k}{\sinh2\pi k}\Big(\cosh2\pi k-(-1)^m\Big)+O(1),
\end{equation}
so there are $m_0\ge n_1$ and $0<c_1<c_2$, depending on $(k,Re)$, with
$c_1m/Re\le|b_m|,|\beta_m|\le c_2m/Re$ and $b_m,\beta_m\ne0$ for $m\ge m_0$. The two
wall velocity components thus carry coefficients of the same order $m/Re$, and the
first-stage coefficient is $\beta^c[\tep m]=\beta_m-c\,\alpha[\tep m]=\beta_m+c\,b_m$, so
$|\beta^c[\tep m]|\le C(1+|c|)m/Re$.
\end{proposition}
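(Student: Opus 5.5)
The plan is to write $b_m$ and $\beta_m$ as traces of the Stokes eigenfunction $e_m$ plus a remainder controlled by Proposition~\ref{prop:kato}. By \eqref{eq:bmdef},
\[
b_m=-\tfrac1{Re}\overline{e_{m,yy}(1)}-\tfrac1{Re}\overline{(\tep m-e_m)_{yy}(1)},
\]
and similarly for $\beta_m$ with third derivatives. The leading terms come from \eqref{eq:efest} and \eqref{eq:efest3} at $n=m$. The eigenfunctions are real, and the sign $(-1)^{m+1}$ with the leading minus sign gives $-(-1)^m\sqrt2\sigma_m/Re$ in \eqref{eq:bm}. The analogous computation gives \eqref{eq:betam}, where the relative $O(m^{-1})$ errors become $O(1)$ because $\sigma_m\asymp m$.

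The main obstacle is bounding the traces $(\tep m-e_m)_{yy}(1)$ and $(\tep m-e_m)_{yyy}(1)$ by $O(Re)$, i.e.\ $O(1)$ after dividing by $Re$. The energy-norm estimate \eqref{eq:kato} only gives $\|\tep m-e_m\|_k\le C/d_m=O(Re/m)$, which does not control second or third derivatives at the wall. I would upgrade it with the eigen-ODE. Set $w\coloneqq\tep m-e_m\in H^4\cap H^2_0$. From $A_k^*\tep m=\overline{\lp m}\tep m$ and $S_ke_m=\lambda_me_m$ we get
\[
(\lambda_m-S_k)w=B_k^*\tep m+(\lambda_m-\overline{\lp m})\tep m .
\]
The right side is bounded in $\Hk$ by $\|B_k\|(1+\|\tep m\|_k)=O(1)$, using $|\lp m-\lambda_m|\le\|B_k\|$. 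I would expand $w$ in the Stokes basis, $w=\sum_j w_je_j$. The $j=m$ component is fixed by the biorthogonal normalization and is $O(1/d_m)$. For $j\ne m$,
\[
w_j=\frac{r_j}{\lambda_m-\lambda_j},\qquad r\coloneqq B_k^*\tep m+(\lambda_m-\overline{\lp m})\tep m .
\]
The traces are then $\sum_j w_je_{j,yy}(1)$. By \eqref{eq:efest}, $|e_{j,yy}(1)|\lesssim j$, and by \eqref{eq:gapjm}, $|\lambda_m-\lambda_j|\ge c|j^2-m^2|/Re$. Cauchy--Schwarz gives
\[
\Big|\sum_{j\ne m}w_je_{j,yy}(1)\Big|\le Re\,\|r\|_k\Big(\sum_{j\ne m}\frac{j^2}{(j^2-m^2)^2}\Big)^{1/2}.
\]
The sum is $O(1)$ uniformly in $m$, with the near-diagonal terms $j=m\pm1$ contributing $O(1)$. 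So $|w_{yy}(1)|=O(Re)$, and dividing by $Re$ gives the $O(1)$ remainder in \eqref{eq:bm}.

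The third-derivative traces grow like $j$ as well, by \eqref{eq:efest3}, so the same estimate serves for $\beta_m$. Convergence of the trace series requires justification. I would obtain it from $H^4$ regularity of $w$ together with the elliptic estimate, Lemma~\ref{lem:ell}. If the series argument proves delicate, the fallback is direct ODE control: write $w$ via variation of constants for the fourth-order equation $(\tfrac1{Re}\Lap^2-\lambda_m\Lap)w=\Lap r$ with clamped data at $y=0$. The fundamental solutions have frequencies $2\pi k$ and $\sigma_m$, and they give $\|w_{yy}\|_\infty\lesssim\sigma_m\|w\|_k+Re\|r\|$.

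The remaining claims are then routine. Since $\sigma_m\asymp m$ with $\sigma_m\in(m\pi,(m+1)\pi)$, both leading terms have modulus comparable to $m/Re$. For $\beta_m$ the factor $\cosh2\pi k-(-1)^m\ge\cosh2\pi k-1>0$ because $k\ne0$. So for $m\ge m_0$ the $O(1)$ remainders are dominated, giving the two-sided bounds $c_1m/Re\le|b_m|,|\beta_m|\le c_2m/Re$ and nonvanishing. The first-stage identity is the linearity of \eqref{eq:alphabeta}:
\[
\beta^c[\tep m]=\beta_m-c\,\alpha[\tep m]=\beta_m+c\,b_m ,
\]
and the triangle inequality gives $|\beta^c[\tep m]|\le C(1+|c|)m/Re$.
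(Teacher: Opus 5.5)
Your $b_m$ argument is correct. Since $w=\tep m-e_m\in\mathcal D$, the Stokes partial sums $w^J$ converge in the graph norm of $S_k$, which Lemma~\ref{lem:ell} makes equivalent to the $H^3$ norm. Because $v\mapsto v_{yy}(1)$ is $H^3$-continuous, evaluating $w_{yy}(1)$ termwise is legitimate.

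The gap is the claim that ``the same estimate serves for $\beta_m$''. The identity $w_{yyy}(1)=\sum_jw_je_{j,yyy}(1)$ is false in general. $H^4$ regularity of $w$ cannot rescue it: what is needed is convergence of the expansion in a norm controlling $v_{yyy}(1)$, not regularity of the limit. Take $v\in H^4\cap H^2_0$ and a real $\phi$ with $\phi(0)=\phi_y(0)=0$, $\phi(1)=1$, $\phi_y(1)=0$. Two integrations by parts and $\Lap^2v=Re\,\Lap(S_kv)$ give
\[
v_{yyy}(1)=Re\,(S_kv)_y(1)+Re\,\ipL{S_kv}{\Lap\phi}-\ipL{v}{\Lap^2\phi}.
\]
For $v=w^J$ the first term vanishes, since $S_kw^J$ is clamped, while the other two converge. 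So your series does converge, absolutely as your Cauchy--Schwarz shows, but to $w_{yyy}(1)-Re\,(S_kw)_y(1)=w_{yyy}(1)+Re\,r_y(1)$. By \eqref{eq:Bstar}, $r_y(1)=(B_k^*\tep m)_y(1)=16\pi ki\,\ipL{\tep m}{\ell}$, which does not vanish in general. Your ODE fallback, as stated, bounds only $w_{yy}$ and does not reach this trace either.

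The repair is cheap: keep the correction term. Since $|r_y(1)|\le16\pi|k|\,\|\tep m\|\,\|\ell\|\le C$, it changes $w_{yyy}(1)$ by $O(Re)$, hence $\beta_m$ by $O(1)$, and your estimate survives.

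The paper never sums traces termwise. It applies Lemma~\ref{lem:trace} with a fixed lift, $W=-y^2(1-y)$ for $b_m$ and $W=y^2(3-2y)$ for $\beta_m$. This expresses the wall trace of $\tep m$ through $\ipL{\mathcal L_kW}{\tep m}$ and $\lp m\ipL{\Lap W}{\tep m}$. The Stokes expansion of $\tep m-e_m$ then enters only through the $L^2$ pairing $\ipL{\Lap W}{\tep m-e_m}$, which is continuous on $\Hk$. The only extra work is an $O(m^{-2})$ bound on that pairing, needed against the factor $\lp m$, and it comes from the same coefficient formula $w_j=r_j/(\lambda_m-\lambda_j)$ you use. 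That route treats both coefficients uniformly and retains the boundary contribution your series loses.

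Separately, your sign bookkeeping is off: $-(-1)^{m+1}=(-1)^m$. Since \eqref{eq:efest} and \eqref{eq:efest3} are written for $\sigma_m\approx(m+1)\pi$ as in \eqref{eq:sigman}, they give $b_m=(-1)^m\sqrt2\,\sigma_m/Re+O(1)$ and $\beta_m=(-1)^{m+1}\frac{\sqrt2\,\sigma_m}{Re}\frac{2\pi k}{\sinh2\pi k}\big(\cosh2\pi k-(-1)^{m+1}\big)+O(1)$. These have the opposite parity to \eqref{eq:bm}--\eqref{eq:betam}, and the paper's proof passes over the same mismatch. It is immaterial for what is used later: the two-sided bounds, nonvanishing (as $\cosh2\pi k\pm1>0$), and the bound on $\beta^c[\tep m]$. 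But you should not claim the computation reproduces the displayed signs.
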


\begin{proof}[Proof of Proposition \ref{prop:bmc}, \eqref{eq:bm}]
Fix $W(y)\coloneqq-y^2(1-y)$: $W(0)=W_y(0)=W(1)=0$, $W_y(1)=1$. Apply \eqref{eq:trace}
with $\zeta=\tep m$ and with $\zeta=e_m$, using
$\mathcal L_k^*\tep m=\overline{\lp m}\Lap\tep m$, $\Lap^2e_m=Re\,\lambda_m\Lap e_m$, and
$\ipL{W}{\Lap\zeta}=\ipL{\Lap W}{\zeta}$:
\begin{align}
\frac1{Re}\overline{\tep m{}_{,yy}(1)}&=\ipL{\mathcal L_kW}{\tep m}-\lp m\ipL{\Lap W}{\tep m},\label{eq:twotraces}\\
\frac1{Re}\overline{e_{m,yy}(1)}&=\tfrac1{Re}\ipL{\Lap^2W}{e_m}-\lambda_m\ipL{\Lap W}{e_m}.\label{eq:twotracesb}
\end{align}
Subtract, with $\delta_m\coloneqq\tep m-e_m$, $\|\delta_m\|_k=O(d_m^{-1})$ by
\eqref{eq:kato}:
\begin{multline}\label{eq:diffbm}
\frac1{Re}\overline{\big(\tep m{}_{,yy}-e_{m,yy}\big)(1)}
=\ipL{\mathcal L_kW-\tfrac1{Re}\Lap^2W}{\tep m}+\tfrac1{Re}\ipL{\Lap^2W}{\delta_m}\\
-(\lp m-\lambda_m)\ipL{\Lap W}{e_m}-\lp m\ipL{\Lap W}{\delta_m}.
\end{multline}
The first term is $O(1)$ ($\mathcal L_kW-\tfrac1{Re}\Lap^2W$ is a fixed $L^2$ function
and $\|\tep m\|\le C$), the second is $O(m^{-1})$ (as $\|\delta_m\|\le C\|\delta_m\|_k$),
the third is $O(1)\cdot O(m^{-1})$ by \eqref{eq:kato} and Proposition \ref{prop:stokes-all}(iii).
For the fourth, from $A_k^*\tep m=\overline{\lp m}\tep m$ and $S_ke_m=\lambda_me_m$,
\[
(S_k-\lambda_m)\delta_m=r_m,\qquad
r_m\coloneqq(\overline{\lp m}-\lambda_m)\tep m-B_k^*\tep m,\qquad \|r_m\|_k\le C .
\]
Taking the $\Hk$ inner product with $e_j$, $j\ne m$, and using self-adjointness,
$\ip{\delta_m}{e_j}=\ip{r_m}{e_j}/(\lambda_j-\lambda_m)$. Hence, with
$\gamma_m\coloneqq\ip{\delta_m}{e_m}=O(d_m^{-1})$,
\[
\ipL{\Lap W}{\delta_m}=\overline{\gamma_m}\,\ipL{\Lap W}{e_m}+\sum_{j\ne m}\frac{\overline{\ip{r_m}{e_j}}}{\lambda_j-\lambda_m}\ipL{\Lap W}{e_j},
\]
and by Cauchy--Schwarz, $\ipL{\Lap W}{e_j}=O(j^{-1})$ and \eqref{eq:gapjm},
\[
\Big|\sum_{j\ne m}\cdots\Big|\le\|r_m\|_k\Big(\sum_{j\ne m}\frac{C\,Re^2}{j^2(j^2-m^2)^2}\Big)^{1/2}=O(m^{-2}),
\]
since with $j=m+i$ the terms are $O(m^{-4}i^{-2})$, those with $j\le m/2$ sum to
$O(m^{-4})$, and those with $j\ge2m$ to $O(m^{-5})$. Thus
$\lp m\ipL{\Lap W}{\delta_m}=O(m^2/Re)\big(O(d_m^{-1}m^{-1})+O(m^{-2})\big)=O(1)$. All
terms of \eqref{eq:diffbm} are $O(1)$, and \eqref{eq:efest} gives \eqref{eq:bm}.
\end{proof}

\begin{proof}[Proof of Proposition \ref{prop:bmc}, \eqref{eq:betam}]
Identical to the proof of \eqref{eq:bm} with the test function
$W(y)\coloneqq y^2(3-2y)$, which satisfies $W(0)=W_y(0)=0$, $W(1)=1$, $W_y(1)=0$, so that
\eqref{eq:trace} gives
$-\frac1{Re}\overline{\zeta_{yyy}(1)}=\ipL{(\mathcal L_k-\lambda\Lap)W}{\zeta}$ for the
respective eigenfunctions; every error estimate goes through verbatim since only
$W\in C^\infty$, $W(0)=W_y(0)=0$ entered them. The main term is
$\tfrac1{Re}e_{m,yyy}(1)$, evaluated in \eqref{eq:efest3}.
\end{proof}

\begin{lemma}[Trace injectivity]\label{lem:inj}
Let $z_*\in\sigma(A_k)$ and $\zeta\in\ker(A_k^*-\bar z_*)$, $\zeta\ne0$. Then
$(\alpha[\zeta],\beta[\zeta])\ne(0,0)$. Consequently the map
$\zeta\mapsto(\alpha[\zeta],\beta[\zeta])$ is injective on each eigenspace of $A_k^*$,
and every eigenvalue of $A_k^*$ (hence of $A_k$) has geometric multiplicity at most $2$.
The same holds for $A_k$ with the traces $\phi\mapsto(\phi_{yy}(1),\phi_{yyy}(1))$.
\end{lemma}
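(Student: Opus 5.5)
The proof is an ODE uniqueness argument for the fourth-order eigen-equation. Let $\zeta\ne0$ lie in $\ker(A_k^*-\bar z_*)$. By Proposition \ref{prop:ops}(ii), $\zeta\in H^4\cap H^2_0$, and by Lemma \ref{lem:trace}, $A_k^*=\Lap^{-1}\mathcal L_k^*$ on this core. The eigen-relation therefore reads $\Lap^{-1}\mathcal L_k^*\zeta=\bar z_*\zeta$. Applying $\Lap$, which is an isometry $H^1_0\to H^{-1}$ and hence injective, gives the pointwise equation
\[
\mathcal L_k^*\zeta-\bar z_*\Lap\zeta=0\qquad\text{in }L^2(0,1).
\]
By \eqref{eq:Lkstar}, expanding $\Lap(y(y-1)\zeta)$, this is a linear fourth-order ODE whose leading coefficient is the constant $1/Re\ne0$. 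Its remaining coefficients are polynomials in $y$, hence smooth on $[0,1]$. Bootstrapping in the equation gives $\zeta\in C^\infty[0,1]$, and the Cauchy problem at $y=1$ has a unique solution determined by $(\zeta,\zeta_y,\zeta_{yy},\zeta_{yyy})(1)$.

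Now suppose, for contradiction, that $\alpha[\zeta]=\beta[\zeta]=0$. By \eqref{eq:alphabeta} this means $\zeta_{yy}(1)=\zeta_{yyy}(1)=0$. Since $\zeta\in H^2_0$, we also have $\zeta(1)=\zeta_y(1)=0$. All four Cauchy data at $y=1$ therefore vanish, so ODE uniqueness forces $\zeta\equiv0$, a contradiction. This proves $(\alpha[\zeta],\beta[\zeta])\ne(0,0)$.

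Injectivity on each eigenspace follows by linearity. The map $\zeta\mapsto(\alpha[\zeta],\beta[\zeta])$ is conjugate-linear into $\mathbb C^2$, and its kernel on $\ker(A_k^*-\bar z_*)$ is $\{0\}$. The eigenspace therefore has dimension at most $2$.

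To transfer the bound to $A_k$, note that $A_k$ has compact resolvent. Its eigenvalues are thus isolated of finite type, and $\dim\ker(A_k-z_*)=\dim\ker(A_k^*-\bar z_*)$ for the Fredholm operator $A_k-z_*$ of index zero. The geometric multiplicity of $z_*$ for $A_k$ is therefore also at most $2$. The same argument applies directly to $A_k$. If $A_k\phi=z_*\phi$, then $\phi\in H^4\cap H^2_0$ by Proposition \ref{prop:ops}(ii), and $\Lap^{-1}\mathcal L_k\phi=z_*\phi$ yields $\mathcal L_k\phi=z_*\Lap\phi$, again a fourth-order ODE with leading coefficient $1/Re$. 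Vanishing of $\phi_{yy}(1)$ and $\phi_{yyy}(1)$, together with the clamped data at $y=1$, then forces $\phi\equiv0$.

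The only step needing care is the passage from the operator equation to the classical ODE. This requires that $\Lap^{-1}$ be inverted legitimately, which holds because $\mathcal L_k^*\zeta\in L^2\subset H^{-1}$ and $\Lap$ is injective. It also requires enough regularity to evaluate third derivatives at $y=1$, which the $H^4$ regularity of root vectors from Proposition \ref{prop:ops}(ii) provides. Everything else is standard uniqueness for linear ODEs with smooth coefficients.
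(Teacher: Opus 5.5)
Your proof is correct and follows the paper's argument. The eigenvector lies in $H^4\cap H^2_0$ and solves a fourth-order ODE with leading coefficient $1/Re$; vanishing of $\zeta_{yy}(1)$ and $\zeta_{yyy}(1)$, together with the clamped data, then gives four zero Cauchy data at $y=1$, and the resulting injection into $\mathbb C^2$ bounds the eigenspace dimension. Your Fredholm index-zero transfer from $A_k^*$ to $A_k$ spells out what the paper leaves implicit in ``(hence of $A_k$)'', and your direct argument for $A_k$ is the same as the paper's.
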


For the tangential trace alone the nonvanishing is Munteanu's lemma
\cite[Lemma 2.2]{Munteanu2011}, established there by the same route.

\subsection{Pre-feedback and pre-compensated operator}\label{sec:precomp}

\subsubsection{State space and mass operator}

\begin{condition}\label{cond:S}
$|\varepsilon_1|\le1$.
\end{condition}

Since $0<\ipL{\ell}{g_1}<1$, Condition \ref{cond:S} gives
$1-\varepsilon_1\ipL{\ell}{g_1}\ge1-\ipL{\ell}{g_1}>0$, a margin depending on $k$ and
$\theta_1$ only; it is in force from here on.

\begin{definition}\label{def:Hkp}
Let $H^1_{(0}\coloneqq\{V\in H^1(0,1):V(0)=0\}$ and
\begin{equation}\label{eq:Hkp}
\Hkp\coloneqq\big\{V\in H^1_{(0}:\ V(1)=\varepsilon_1\ipL{V}{g_1}\big\},
\end{equation}
equipped with $\ip{\cdot}{\cdot}$ from \eqref{eq:Hk}.
\end{definition}

\begin{lemma}[Boundary isomorphism]\label{lem:Xi}
Under Condition \ref{cond:S}, the map
\begin{equation}\label{eq:Xi}
\Xi:\Hk\to\Hkp,\qquad
\Xi V\coloneqq V+\frac{\varepsilon_1\ipL{V}{g_1}}{1-\varepsilon_1\ipL{\ell}{g_1}}\,\ell ,
\end{equation}
is an isomorphism with inverse $\Xi^{-1}W=W-W(1)\ell$, and
$\|\Xi V-V\|_k\le C|\varepsilon_1|\,\|V\|_k$ with $C=C(k,\theta_1)$.
\end{lemma}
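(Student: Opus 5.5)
We prove Lemma \ref{lem:Xi} by direct verification, using only that $\ell\in H^1$ with $\ell(0)=0$, $\ell(1)=1$, that $g_1\in L^2$, and the margin $1-\varepsilon_1\ipL{\ell}{g_1}\ge1-\ipL{\ell}{g_1}>0$ supplied by Condition \ref{cond:S}. Write $\delta\coloneqq1-\varepsilon_1\ipL{\ell}{g_1}$ and $\kappa\coloneqq\varepsilon_1/\delta$, so that $\Xi V=V+\kappa\ipL{V}{g_1}\ell$.

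\emph{Step 1: $\Xi$ maps $\Hk$ into $\Hkp$.} For $V\in\Hk$, $\Xi V\in H^1$ and $(\Xi V)(0)=0$ because $V(0)=\ell(0)=0$. At the upper wall, $(\Xi V)(1)=\kappa\ipL{V}{g_1}$, since $V(1)=0$ and $\ell(1)=1$. On the other hand,
\[
\varepsilon_1\ipL{\Xi V}{g_1}=\varepsilon_1\ipL{V}{g_1}\big(1+\kappa\ipL{\ell}{g_1}\big)=\varepsilon_1\ipL{V}{g_1}\cdot\frac{\delta+\varepsilon_1\ipL{\ell}{g_1}}{\delta}=\kappa\ipL{V}{g_1}.
\]
The two expressions agree, so $\Xi V$ satisfies the defining condition \eqref{eq:Hkp}.

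\emph{Step 2: the candidate inverse.} Set $\Upsilon W\coloneqq W-W(1)\ell$. For $W\in\Hkp$ we have $\Upsilon W\in H^1$, $(\Upsilon W)(0)=0$ and $(\Upsilon W)(1)=0$, so $\Upsilon W\in\Hk$.
\begin{itemize}
\item For $V\in\Hk$: $\Upsilon\Xi V=\Xi V-(\Xi V)(1)\ell=V$, since $(\Xi V)(1)\ell=\kappa\ipL{V}{g_1}\ell$ is exactly the term added by $\Xi$.
\item For $W\in\Hkp$, put $V=\Upsilon W$. Then $\ipL{V}{g_1}=\ipL{W}{g_1}-W(1)\ipL{\ell}{g_1}$. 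Substituting $W(1)=\varepsilon_1\ipL{W}{g_1}$ gives $\ipL{V}{g_1}=\delta\ipL{W}{g_1}$. Hence $\kappa\ipL{V}{g_1}=\varepsilon_1\ipL{W}{g_1}=W(1)$, and therefore $\Xi V=V+W(1)\ell=W$.
\end{itemize}
Thus $\Xi$ is a bijection with $\Xi^{-1}=\Upsilon$. Both maps are linear, and this is the only place where the nonvanishing of $\delta$ is used.

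\emph{Step 3: bounds.} By Cauchy--Schwarz and the Poincar\'e bound $\|V\|\le C\|V\|_k$ on $H^1_0$,
\[
\|\Xi V-V\|_k=|\kappa|\,|\ipL{V}{g_1}|\,\|\ell\|_k\le\frac{|\varepsilon_1|}{1-\ipL{\ell}{g_1}}\,\|g_1\|\,\|\ell\|_k\,C\,\|V\|_k .
\]
Here $\|g_1\|$, $\|\ell\|_k$ and $1-\ipL{\ell}{g_1}$ depend only on $(k,\theta_1)$, which gives the stated estimate and in particular boundedness of $\Xi$. For $\Upsilon$, the trace inequality $|W(1)|\le C\|W\|_k$ holds on $H^1_{(0}$, since $\|\cdot\|_k$ is a norm equivalent to $H^1$ there. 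Hence $\|\Upsilon W\|_k\le(1+C\|\ell\|_k)\|W\|_k$.

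\emph{Caveat.} The stated bound rests on the facts $0<\ipL{\ell}{g_1}<1$ and $\delta>0$ recorded after Condition \ref{cond:S}. Both hold because $0<\ell<1$ on $(0,1)$ and $0<g_1\le1$, provided $k>0$; for $k<0$ one has $\ell(y)=\sinh(2\pi|k|y)/\sinh(2\pi|k|)$, so the same facts hold.
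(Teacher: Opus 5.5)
Your proof is correct and follows the paper's argument. The paper likewise checks directly that $(\Xi V)(1)=\varepsilon_1\ipL{\Xi V}{g_1}$, computes $\varepsilon_1\ipL{W-W(1)\ell}{g_1}=W(1)\big(1-\varepsilon_1\ipL{\ell}{g_1}\big)$ to get $\Xi(W-W(1)\ell)=W$, and bounds $|\ipL{V}{g_1}|\le\|g_1\|\,\|V\|\le C\|V\|_k$; you also write out the left-inverse identity and the trace bound for $\Xi^{-1}$, which the paper calls immediate.
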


\begin{definition}[Modified mass inverse]\label{def:mass}
For $F\in H^{-1}$ let $\Lapm^{-1}F$ be the unique $U\in\Hkp$ with
$\Lap U=F$ in $H^{-1}$; explicitly
\begin{equation}\label{eq:massinv}
\Lapm^{-1}F=\Lap^{-1}F+\frac{\varepsilon_1\ipL{\Lap^{-1}F}{g_1}}{1-\varepsilon_1\ipL{\ell}{g_1}}\,\ell
=\Xi\,\Lap^{-1}F .
\end{equation}
\end{definition}

\subsubsection{Operator, resolvent, and secular matrix}

\begin{definition}[Pre-compensated operator]\label{def:Ap}
On $\Hkp$ define $A_k'$ with domain
\begin{equation}\label{eq:domAp}
\mathcal D(A_k')\coloneqq\Big\{V\in H^3:\ V(0)=V_y(0)=0,\ \
V(1)=\varepsilon_1\ipL{V}{g_1},\ \
V_y(1)=c\,\varepsilon_1\ipL{V}{g_1}+\varepsilon_2\ipL{V}{g_2}\Big\}
\end{equation}
by $A_k'V\coloneqq\Lapm^{-1}\mathcal L_kV$, i.e. $A_k'V$ is the unique
element of $\Hkp$ with $\Lap(A_k'V)=\mathcal L_kV$ in $H^{-1}$.
\end{definition}

\begin{remark}
The definition encodes the pre-feedback consistently with the mass operator of
\eqref{eq:OSk}: along a trajectory of \eqref{eq:OSk} under the static laws
\eqref{eq:prefeed} with $F_k=0$, the time derivative of the constraint
$V(1)=\varepsilon_1\ipL{V}{g_1}$ shows that $V_t$ again lies in $\Hkp$, and
$\Lap V_t=\mathcal L_kV$ then determines $V_t=A_k'V$.
\end{remark}

The eigenvalues of $A_k'$ are located through the determinant of a $2\times2$ matrix
built from the two feedback functionals: the secular determinant, so called after the
classical usage in perturbation theory, where the zeros of such a determinant are the
perturbed eigenvalues.

\begin{definition}[Secular matrix]\label{def:secular}
For $z\in\rho(A_k)$, with the responses of Definition \ref{def:Psi}, set
\begin{equation}\label{eq:secular}
\mathfrak D(z)\coloneqq
\begin{pmatrix}
1-\varepsilon_1\ipL{\rho_c(z)}{g_1} & -\varepsilon_1\ipL{\Psi(z)}{g_1}\\[2pt]
-\varepsilon_2\ipL{\rho_c(z)}{g_2} & 1-\varepsilon_2\ipL{\Psi(z)}{g_2}
\end{pmatrix}.
\end{equation}
\end{definition}

\begin{proposition}[Resolvent of $A_k'$]\label{prop:resAp}
Let $z\in\rho(A_k)$ with $\det\mathfrak D(z)\ne0$. Then $z\in\rho(A_k')$ and, for
$F\in\Hkp$,
\begin{equation}\label{eq:resAp}
(z-A_k')^{-1}F=W_{\mathrm p}+p\,\rho_c(z)+\tilde q\,\Psi(z),\qquad
W_{\mathrm p}\coloneqq(z-A_k)^{-1}\big(F-F(1)\ell\big),
\end{equation}
\begin{equation}\label{eq:pq}
\begin{pmatrix}p\\ \tilde q\end{pmatrix}
=\mathfrak D(z)^{-1}
\begin{pmatrix}\varepsilon_1\ipL{W_{\mathrm p}}{g_1}\\ \varepsilon_2\ipL{W_{\mathrm p}}{g_2}\end{pmatrix}.
\end{equation}
Conversely, if $z\in\rho(A_k)$ and $\det\mathfrak D(z)=0$ then $z\in\sigma(A_k')$, with
eigenvector $p\,\rho_c(z)+\tilde q\,\Psi(z)$ for any nonzero
$(p,\tilde q)\in\ker\mathfrak D(z)$. If, in addition, $\det\mathfrak D(z_0)\ne0$ at some $z_0\in\rho(A_k)$, then $A_k'$ is
closed with compact resolvent and discrete spectrum.
\end{proposition}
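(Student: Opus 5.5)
The plan is to solve the resolvent equation $(z-A_k')U=F$ directly, by splitting $U$ into a part with clamped wall data and a part carrying the two wall traces. The traces are absorbed by the frequency responses $\rho_c(z)$ and $\Psi(z)$ of Definition \ref{def:Psi}.

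\emph{Step 1: rewrite the equation in $H^{-1}$.} By Definition \ref{def:Ap}, $A_k'U$ is the unique element of $\Hkp$ with $\Lap(A_k'U)=\mathcal L_kU$. So for $U\in\mathcal D(A_k')$ and $F\in\Hkp$, the equation $(z-A_k')U=F$ is equivalent to
\[
z\Lap U-\mathcal L_kU=\Lap F\quad\text{in }H^{-1}.
\]
The left side lies in $\Hkp$, and $\Lap$ is injective on $\Hkp$ (Definition \ref{def:mass}), which gives the equivalence. Since $\Lap\ell=0$, we have $\Lap F=\Lap(F-F(1)\ell)$, and $F-F(1)\ell\in\Hk$.

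\emph{Step 2: the ansatz.} Look for $U=W+p\,\rho_c(z)+\tilde q\,\Psi(z)$ with $W\in\mathcal D$. The responses satisfy $(\mathcal L_k-z\Lap)\rho_c=(\mathcal L_k-z\Lap)\Psi=0$, so the interior equation reduces to $z\Lap W-\mathcal L_kW=\Lap(F-F(1)\ell)$. Applying $\Lap^{-1}$, this reads $(z-A_k)W=F-F(1)\ell$, so $W=W_{\mathrm p}$.

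The wall data of $U$ are $U(0)=U_y(0)=0$, $U(1)=p$ and $U_y(1)=cp+\tilde q$. The domain conditions \eqref{eq:domAp} therefore hold exactly when $p=\varepsilon_1\ipL{U}{g_1}$ and $\tilde q=\varepsilon_2\ipL{U}{g_2}$. Substituting the ansatz into these two conditions gives
\[
\mathfrak D(z)\,(p,\tilde q)^{\top}=\big(\varepsilon_1\ipL{W_{\mathrm p}}{g_1},\ \varepsilon_2\ipL{W_{\mathrm p}}{g_2}\big)^{\top},
\]
which is \eqref{eq:pq}.

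Regularity follows because $W_{\mathrm p}\in\mathcal D$ and $\rho_c,\Psi\in H^4$, so $U\in H^3$; hence $U\in\mathcal D(A_k')$ and $U\in\Hkp$.

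\emph{Step 3: uniqueness and the resolvent.} Any solution $U$ decomposes this way. Indeed, given $U$, define $p\coloneqq U(1)$ and $\tilde q\coloneqq U_y(1)-cU(1)$. Then $U-p\rho_c-\tilde q\Psi$ lies in $\mathcal D$ and solves the $W$-equation, so it equals $W_{\mathrm p}$. In particular, if $F=0$ then $W_{\mathrm p}=0$, and $\mathfrak D(z)(p,\tilde q)^{\top}=0$ forces $p=\tilde q=0$. Hence $z-A_k'$ is bijective from $\mathcal D(A_k')$ onto $\Hkp$. The formula \eqref{eq:resAp} defines a bounded operator on $\Hkp$: it is $(z-A_k)^{-1}$ composed with the bounded map $F\mapsto F-F(1)\ell$, plus a rank-two term with bounded coefficients. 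So $z\in\rho(A_k')$.

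\emph{Step 4: the converse and compactness.} Let $\det\mathfrak D(z)=0$ and take $(p,\tilde q)\ne0$ in $\ker\mathfrak D(z)$. Then $U=p\rho_c+\tilde q\Psi$ satisfies both domain constraints, since that is exactly what $\mathfrak D(z)(p,\tilde q)^\top=0$ encodes, and $z\Lap U=\mathcal L_kU$. Its wall traces $(p,cp+\tilde q)$ are not both zero, so $U\ne0$ and $U$ is an eigenvector.

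If $\det\mathfrak D(z_0)\ne0$ at some $z_0\in\rho(A_k)$, then $\rho(A_k')$ is nonempty, so $A_k'$ is closed. Its resolvent at $z_0$ is a compact operator, namely $(z_0-A_k)^{-1}$ by Proposition \ref{prop:ops}(ii), composed with a bounded map and plus a finite-rank term. Hence it is compact, and the spectrum is discrete.

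The main point requiring care is Step 1. There one must check that the mass-operator definition of $A_k'$ through $\Lapm^{-1}$ really is equivalent to the $H^{-1}$ identity, and that replacing $F$ by $F-F(1)\ell$ lands in $\Hk$, where $(z-A_k)^{-1}$ acts. The rest is bookkeeping of the boundary traces.
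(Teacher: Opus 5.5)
Your proposal is correct and follows essentially the paper's proof: the same ansatz $W_{\mathrm p}+p\,\rho_c(z)+\tilde q\,\Psi(z)$, the same trace bookkeeping that reduces the domain conditions to the secular system $\mathfrak D(z)(p,\tilde q)^{\mathsf T}=(\varepsilon_1\ipL{W_{\mathrm p}}{g_1},\varepsilon_2\ipL{W_{\mathrm p}}{g_2})^{\mathsf T}$, and the same decomposition $W_0+W(1)\rho_c+(W_y(1)-cW(1))\Psi$ for injectivity and for the converse. The only variation is the compactness step: you use compactness of $(z_0-A_k)^{-1}$ plus a finite-rank term, whereas the paper argues that the resolvent has range in $H^3$ via Lemma \ref{lem:ell}; both arguments are valid.
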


\begin{proof}
For $W$ of the form \eqref{eq:resAp}, $W\in H^3$, $W(0)=W_y(0)=0$, and the traces are
$W(1)=p$, $W_y(1)=cp+\tilde q$; thus $W\in\mathcal D(A_k')$ iff
$p=\varepsilon_1\ipL{W}{g_1}$ and $\tilde q=\varepsilon_2\ipL{W}{g_2}$, which is exactly
the linear system $\mathfrak D(z)(p,\tilde q)^{\mathsf T}
=(\varepsilon_1\ipL{W_{\mathrm p}}{g_1},\varepsilon_2\ipL{W_{\mathrm p}}{g_2})^{\mathsf T}$.
For such $W$: $\Lap(zW-A_k'W)=z\Lap W-\mathcal L_kW=z\Lap W_{\mathrm p}-\mathcal L_kW_{\mathrm p}
=\Lap\big((z-A_k)W_{\mathrm p}\big)=\Lap\big(F-F(1)\ell\big)=\Lap F$, using
$\mathcal L_k(\rho_c,\Psi)=z\Lap(\rho_c,\Psi)$ and $\Lap\ell=0$; since
$zW-A_k'W\in\Hkp$ and $F\in\Hkp$, uniqueness in Definition \ref{def:mass} gives
$(z-A_k')W=F$. Injectivity of $z-A_k'$ under $\det\mathfrak D\ne0$: if
$(z-A_k')W=0$ with $W\in\mathcal D(A_k')$, decompose
$W=W_0+W(1)\rho_c(z)+\big(W_y(1)-cW(1)\big)\Psi(z)$ with $W_0\in H^3\cap H^2_0$; then
$\Lap\big((z-A_k)W_0\big)=\Lap(zW-A_k'W)=0$ tested against $H^1_0$, so
$(z-A_k)W_0=0$ and $W_0=0$; the domain constraints then read
$\mathfrak D(z)(W(1),W_y(1)-cW(1))^{\mathsf T}=0$, forcing $W=0$. The same computation
with $\det\mathfrak D(z)=0$ produces the stated eigenvector (it is nonzero because
$\rho_c,\Psi$ have independent traces at $y=1$). For the last statement: at such
$z_0$, \eqref{eq:resAp} exhibits $(z_0-A_k')^{-1}$ as a bounded operator on $\Hkp$
whose range lies in $H^3$ (each component does, with norm controlled through Lemma
\ref{lem:ell}); hence the resolvent is compact into $\Hkp$, $A_k'$ is closed with
compact resolvent, and its spectrum is discrete.
\end{proof}

\begin{proposition}[Factorization of the secular determinant]\label{prop:factor}
Let $D_1(z)\coloneqq1-\varepsilon_1\ipL{\rho_c(z)}{g_1}$ and, on the open set where
$D_1\ne0$,
\begin{equation}\label{eq:Psi1}
\Psi^{(1)}(z)\coloneqq\Psi(z)+\frac{\varepsilon_1\ipL{\Psi(z)}{g_1}}{D_1(z)}\,\rho_c(z),\qquad
D_2(z)\coloneqq1-\varepsilon_2\ipL{\Psi^{(1)}(z)}{g_2}.
\end{equation}
Then
\begin{equation}\label{eq:factor}
\det\mathfrak D(z)=D_1(z)\,D_2(z).
\end{equation}
Moreover $\Psi^{(1)}$ is the tangential frequency response of the intermediate operator
$A_k^{(1)}$ (the operator of Definition \ref{def:Ap} with $\varepsilon_2=0$): it solves
$\mathcal L_k\Psi^{(1)}=z\Lap\Psi^{(1)}$ with
\begin{equation}\label{eq:Psi1bc}
\Psi^{(1)}(1)=\varepsilon_1\ipL{\Psi^{(1)}}{g_1},\qquad
\Psi^{(1)}_y(1)-c\,\varepsilon_1\ipL{\Psi^{(1)}}{g_1}=1 ,
\end{equation}
and $D_2$ is the scalar secular function of the second (tangential) closure relative to
$A_k^{(1)}$.
\end{proposition}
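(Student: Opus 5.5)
The factorization \eqref{eq:factor} is the Schur-complement identity for a $2\times2$ matrix, and the claims about $\Psi^{(1)}$ follow by linearity from Definition \ref{def:Psi}. I will proceed in three steps.

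\emph{Determinant.} On the set where $D_1(z)\ne0$, expand
\[
\det\mathfrak D(z)=D_1\big(1-\varepsilon_2\ipL{\Psi}{g_2}\big)-\varepsilon_1\varepsilon_2\ipL{\Psi}{g_1}\ipL{\rho_c}{g_2}.
\]
Factor out $D_1$ to obtain
\[
D_1\Big[1-\varepsilon_2\Big(\ipL{\Psi}{g_2}+\tfrac{\varepsilon_1\ipL{\Psi}{g_1}}{D_1}\ipL{\rho_c}{g_2}\Big)\Big].
\]
Since $g_2$ is real, the pairing $\ipL{\cdot}{g_2}$ is linear in its first argument. The bracket is therefore exactly $1-\varepsilon_2\ipL{\Psi^{(1)}}{g_2}=D_2$, which proves \eqref{eq:factor}.

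\emph{ODE and boundary conditions for $\Psi^{(1)}$.} Both $\Psi$ and $\rho_c=\rho+c\Psi$ solve $\mathcal L_k(\cdot)=z\Lap(\cdot)$ with vanishing Cauchy data at $y=0$. Their combination $\Psi^{(1)}$ does too, since its coefficient is a scalar depending only on $z$.

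At $y=1$ I use $\Psi(1)=0$, $\Psi_y(1)=1$, $\rho_c(1)=1$, $\rho_{c,y}(1)=c$. Set $s\coloneqq\varepsilon_1\ipL{\Psi}{g_1}/D_1$. Then $\Psi^{(1)}(1)=s$ and $\Psi^{(1)}_y(1)=1+cs$.

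It remains to check $s=\varepsilon_1\ipL{\Psi^{(1)}}{g_1}$. Compute
\[
\varepsilon_1\ipL{\Psi^{(1)}}{g_1}=\varepsilon_1\ipL{\Psi}{g_1}+s\,\varepsilon_1\ipL{\rho_c}{g_1}=\varepsilon_1\ipL{\Psi}{g_1}+s(1-D_1).
\]
The right side equals $sD_1+s-sD_1=s$. This gives \eqref{eq:Psi1bc}, and the second condition reads $\Psi^{(1)}_y(1)-cs=1$.

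\emph{Interpretation.} With $\varepsilon_2=0$, Definition \ref{def:Ap} gives $A_k^{(1)}$ the boundary conditions $V(1)=\varepsilon_1\ipL{V}{g_1}$ and $V_y(1)-c\varepsilon_1\ipL{V}{g_1}=0$. The \emph{tangential frequency response} is the solution whose second condition is inhomogeneous with value $1$, and this is exactly \eqref{eq:Psi1bc}. Uniqueness at $z\in\rho(A_k^{(1)})$ holds because the difference of two such solutions is an eigenvector of $A_k^{(1)}$. By Proposition \ref{prop:resAp} with $\varepsilon_2=0$, $\rho(A_k^{(1)})$ contains $\{z\in\rho(A_k):D_1(z)\ne0\}$, since the secular matrix becomes triangular with determinant $D_1$.

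Rerun the argument of Proposition \ref{prop:resAp} with $A_k^{(1)}$ in place of $A_k$, $\Psi^{(1)}$ in place of $\Psi$, and the single functional $\varepsilon_2\ipL{\cdot}{g_2}$. An eigenvector of $A_k'$ decomposes as $W=W_0+\tilde q\,\Psi^{(1)}$ with $W_0\in\mathcal D(A_k^{(1)})$. The constraint $\tilde q=\varepsilon_2\ipL{W}{g_2}$ then reduces to $D_2(z)\tilde q=0$, so $D_2$ is the scalar secular function of the second closure.

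There is no serious obstacle; the only care needed is the domain issue. The identity holds on the open set where $D_1\ne0$, and it extends meromorphically across that set by analyticity of $\Psi$ and $\rho_c$ (Lemma \ref{lem:resform}).
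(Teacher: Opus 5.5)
Your proposal is correct and follows the paper's proof: the same $2\times2$ determinant expansion factored through $D_1$, and the same trace computation at $y=1$ using $\varepsilon_1\ipL{\rho_c}{g_1}=1-D_1$ to verify \eqref{eq:Psi1bc}. Your added paragraph makes explicit what the paper leaves implicit, namely that $D_2$ is the secular function of the second closure (decompose an eigenvector of $A_k'$ as $W_0+\tilde q\,\Psi^{(1)}$, with $W_0=0$ because $z\in\rho(A_k^{(1)})$), and that argument is sound.
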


\begin{proof}
Expanding the determinant of \eqref{eq:secular}:
$\det\mathfrak D=D_1\big(1-\varepsilon_2\ipL{\Psi}{g_2}\big)-\varepsilon_1\varepsilon_2\ipL{\Psi}{g_1}\ipL{\rho_c}{g_2}
=D_1-\varepsilon_2\big[D_1\ipL{\Psi}{g_2}+\varepsilon_1\ipL{\Psi}{g_1}\ipL{\rho_c}{g_2}\big]
=D_1\big[1-\varepsilon_2\ipL{\Psi^{(1)}}{g_2}\big]$. For \eqref{eq:Psi1bc}:
$\Psi^{(1)}(1)=\varepsilon_1\ipL{\Psi}{g_1}/D_1$ since $\Psi(1)=0$, $\rho_c(1)=1$, while
$\varepsilon_1\ipL{\Psi^{(1)}}{g_1}=\varepsilon_1\ipL{\Psi}{g_1}\big[1+\varepsilon_1\ipL{\rho_c}{g_1}/D_1\big]
=\varepsilon_1\ipL{\Psi}{g_1}/D_1$; and
$\Psi^{(1)}_y(1)=1+c\,\varepsilon_1\ipL{\Psi}{g_1}/D_1$.
\end{proof}

\begin{lemma}[Multiplicity counting]\label{lem:counting}
Let $\Gamma\subset\rho(A_k)$ be a simple closed contour with
$\det\mathfrak D\ne0$ on $\Gamma$. Then the total algebraic multiplicities of the
spectra enclosed by $\Gamma$ satisfy
\begin{equation}\label{eq:counting}
\nu(\Gamma;A_k')=\nu(\Gamma;A_k)+\frac1{2\pi i}\oint_\Gamma\frac{(\det\mathfrak D)'(z)\,\dd z}{\det\mathfrak D(z)} .
\end{equation}
In particular, for an isolated $z_0$, $\nu(z_0;A_k')=\nu(z_0;A_k)+\ord_{z_0}\det\mathfrak D$,
where $\ord$ counts zeros minus pole order of the meromorphic function
$\det\mathfrak D$.
\end{lemma}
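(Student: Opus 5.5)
The plan is to compute both multiplicities as traces of Riesz projections and to compare the two resolvents through the explicit formula of Proposition \ref{prop:resAp}. Since $\Gamma\subset\rho(A_k)$ and $\det\mathfrak D\ne0$ on $\Gamma$, Proposition \ref{prop:resAp} gives $\Gamma\subset\rho(A_k')$, so both Riesz projections over $\Gamma$ are defined. Both have finite rank, because $A_k$ and $A_k'$ have compact resolvent. Hence
\[
\nu(\Gamma;A_k')-\nu(\Gamma;A_k)=\tr\frac1{2\pi i}\oint_\Gamma\Big[\Xi^{-1}(z-A_k')^{-1}\Xi-(z-A_k)^{-1}\Big]\dd z .
\]
Here $A_k'$ has been transported to $\Hk$ by the isomorphism $\Xi$ of Lemma \ref{lem:Xi}, so that both operators act on one space. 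The first step is to show that the bracket is a finite-rank operator-valued function, meromorphic in $z$ and analytic near $\Gamma$. This makes the trace legitimate, and it can be interchanged with the contour integral.

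The second step writes the bracket out from \eqref{eq:resAp}. Take $V\in\Hk$ and put $F=\Xi V$. Then $F-F(1)\ell=\Xi^{-1}\Xi V=V$, using $\Xi^{-1}W=W-W(1)\ell$. Hence $W_{\mathrm p}=(z-A_k)^{-1}V$ exactly, and
\[
\Xi^{-1}(z-A_k')^{-1}\Xi V=(z-A_k)^{-1}V-\big[(z-A_k')^{-1}\Xi V\big](1)\,\ell+p\,\rho_c(z)+\tilde q\,\Psi(z),
\]
where I use that $\Xi^{-1}$ subtracts the trace at $y=1$ times $\ell$, and $W_{\mathrm p}(1)=0$. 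The trace at $y=1$ of $(z-A_k')^{-1}\Xi V$ equals $p$, because $\rho_c(1)=1$ and $\Psi(1)=0$. Therefore the difference is
\[
V\mapsto p\,(\rho_c(z)-\ell)+\tilde q\,\Psi(z),
\]
with $(p,\tilde q)^{\mathsf T}=\mathfrak D(z)^{-1}\big(\varepsilon_1\ipL{(z-A_k)^{-1}V}{g_1},\varepsilon_2\ipL{(z-A_k)^{-1}V}{g_2}\big)^{\mathsf T}$. This operator has rank at most two. Because the $\ell$-correction is absorbed into $\rho_c-\ell$, the domain change costs nothing.

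The third step computes the trace of this rank-two operator. For an operator of the form $V\mapsto\sum_{j}u_j\,(\mathfrak D^{-1}\ell(V))_j$, the trace equals $\tr\big(\mathfrak D(z)^{-1}M(z)\big)$, where $M_{ij}=\varepsilon_i\ipL{(z-A_k)^{-1}u_j}{g_i}$, $u_1=\rho_c-\ell$ and $u_2=\Psi$. By Lemma \ref{lem:resform}, $(z-A_k)^{-1}(\rho_c-\ell)=-\dot\rho_c$ and $(z-A_k)^{-1}\Psi=-\dot\Psi$. Hence $M_{ij}=-\varepsilon_i\ipL{\dot u_j}{g_i}$ with $\dot u_1=\dot\rho_c$, and this is exactly the entry $\mathfrak D'(z)_{ij}$ of \eqref{eq:secular}. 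Jacobi's formula then gives
\[
\tr(\mathfrak D^{-1}\mathfrak D')=(\det\mathfrak D)'/\det\mathfrak D,
\]
and integrating over $\Gamma$ yields \eqref{eq:counting}. For the pointwise statement I take $\Gamma$ to be a small circle about an isolated $z_0$. There $\det\mathfrak D$ is meromorphic (Lemma \ref{lem:resform}), with finitely many zeros and poles near $z_0$ and no others. The integral then equals the order of $\det\mathfrak D$ at $z_0$, with poles counted negatively, so $\nu(z_0;A_k')=\nu(z_0;A_k)+\ord_{z_0}\det\mathfrak D$.

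The main obstacle is the bookkeeping in the second step. One must check carefully that the conjugation by $\Xi$ cancels the $F(1)\ell$ term exactly, so that no non-analytic rank-one remainder survives whose contour integral would contribute a spurious trace. One must also check the sign convention linking $\dot\rho_c$ to $\rho_c-\ell$, and confirm that the identity $\tr\oint=\oint\tr$ holds for the finite-rank integrand. If the direct cancellation proves delicate, the fallback is the Gohberg--Sigal logarithmic residue theorem applied to the operator function $I+(\text{rank two})$ that factors $z-A_k'$ against $z-A_k$.
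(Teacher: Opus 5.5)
Your proposal is correct and follows the paper's proof essentially step for step. Both conjugate by $\Xi$ and observe that the $F(1)\ell$ term cancels. This leaves the rank-two resolvent difference $(\rho_c-\ell)\otimes\mathfrak a+\Psi\otimes\mathfrak b$ of \eqref{eq:resdiff}, whose trace is identified with $\tr(\mathfrak D^{-1}\mathfrak D')=(\det\mathfrak D)'/\det\mathfrak D$ through Lemma \ref{lem:resform} and Jacobi's formula. Integrating over $\Gamma$, with the trace of a Riesz projection equal to the enclosed algebraic multiplicity, gives \eqref{eq:counting}.
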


\subsection{Simplicity of the low eigenvalues}\label{sec:split}

Call \emph{low} the finitely many distinct eigenvalues $z_{*1},\dots,z_{*L}$ of $A_k$
other than $\{\lp m\}_{m\ge m_0}$. By Lemma \ref{lem:inj} the adjoint $A_k^*$ has at
most two Jordan chains \eqref{eq:jordan} at each such $\bar z_*$; call $z_*$ a one- or two-chain eigenvalue accordingly.
At a low eigenvalue $z_*$ fix chain bases as in \eqref{eq:biorth}, with chain lengths
$\kappa_1\ge\kappa_2\ge1$ (the second present only in the two-chain case,
$\nu_*=\kappa_1+\kappa_2$), $\zeta^{(r)}_0$ the eigenvector heading the $r$-th adjoint
chain, and $\phi^{(r)}_0$ the corresponding eigenvector of $A_k$. Write $\alpha^{(r)}_p=\alpha[\zeta^{(r)}_p]$,
$\beta^{c,(r)}_p=\beta^c[\zeta^{(r)}_p]$, and
\begin{equation}\label{eq:Rc}
R^c(z_*)\coloneqq\sum_{r:\ \kappa_r=\kappa_1}\beta^c[\zeta^{(r)}_0]\,\ipL{\phi^{(r)}_0}{g_1},
\end{equation}
the leading Laurent coefficient of $\ipL{\rho_c(z)}{g_1}$ at $z_*$ by
\eqref{eq:topcoeff}.

\begin{condition}[Ratio $c$]\label{cond:c}
$c\notin\mathcal C$, where $\mathcal C$ is the finite set of real numbers for which
$\beta^c[\zeta^{(r)}_0]=\beta[\zeta^{(r)}_0]-c\,\alpha[\zeta^{(r)}_0]=0$ for some low
$z_*$ and some $r$.
\end{condition}

Each chain head excludes at most one value of $c$ (if $\alpha[\zeta^{(r)}_0]=0$ then
$\beta[\zeta^{(r)}_0]\ne0$ by Lemma \ref{lem:inj} and no value is excluded), so
$\#\mathcal C\le2L$.

\begin{condition}[First gain shape]\label{cond:th1}
$\theta_1\notin\Theta_1$, where $\Theta_1$ is the union over low $z_*$ of the zero sets
of $\theta\mapsto R^c(z_*)=\ipL{v_{z_*}}{g_\theta}$,
$v_{z_*}\coloneqq\sum_{\kappa_r=\kappa_1}\beta^c[\zeta^{(r)}_0]\phi^{(r)}_0\ne0$, together
with the zero set of $\theta\mapsto\ipL{\Psi(z_0)}{g_\theta}$ for one fixed reference
$z_0\in\rho(A_k)$.
\end{condition}

\begin{lemma}[Discreteness of the exclusion sets]\label{lem:disc}
For $v\in L^2(0,1)$, $v\ne0$, the function
$\theta\mapsto\ipL{v}{g_\theta}=\int_0^1v(y)e^{-\theta(1-y)}\dd y$ is entire and not
identically zero; its zero set is discrete. Moreover
$\ipL{\Psi(z_0)}{g_\theta}=-\theta^{-2}\big(1+O(\theta^{-1})\big)$ as
$\theta\to+\infty$, so the last member of $\Theta_1$ is discrete as well. Hence
$\Theta_1$, and likewise $\Theta_2$ below, are discrete subsets of $(0,\infty)$.
\end{lemma}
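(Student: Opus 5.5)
The statement has three parts: the entire-ness and nontriviality of $\theta\mapsto\ipL{v}{g_\theta}$, the asymptotics of $\ipL{\Psi(z_0)}{g_\theta}$, and the discreteness of $\Theta_1$ and $\Theta_2$. I will prove them in that order.

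\emph{Entire and not identically zero.} Since $|e^{-\theta(1-y)}|\le e^{|\theta|}$ on $[0,1]$ and $v\in L^1(0,1)$, differentiation under the integral (or Morera) shows that $h(\theta)\coloneqq\int_0^1v(y)e^{-\theta(1-y)}\dd y$ is entire. Substituting $s=1-y$ gives $h(\theta)=\int_0^1v(1-s)e^{-\theta s}\dd s$, the Laplace transform of the compactly supported function $s\mapsto v(1-s)\mathbf 1_{[0,1]}(s)$.

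Suppose $h\equiv0$. Differentiating at $\theta=0$ gives $\int_0^1 s^n v(1-s)\dd s=0$ for all $n\ge0$. Polynomials are dense in $L^2(0,1)$ (Weierstrass), so $v=0$, a contradiction. A nonzero entire function has isolated zeros, so the zero set of $h$ on $(0,\infty)$ is discrete. This already covers the members of $\Theta_1$ indexed by low $z_*$, since each $v_{z_*}\ne0$ by Condition~\ref{cond:c} and linear independence of the chain heads $\phi^{(r)}_0$. I would also confirm that $v_{z_*}$ is nonzero in $L^2$, which holds because it is a nonzero element of $H^1$.

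\emph{Asymptotics of $\ipL{\Psi(z_0)}{g_\theta}$.} Here $\Psi(z_0)\in H^4(0,1)$ with $\Psi(1)=0$ and $\Psi_y(1)=1$. Integrate by parts in $s=1-y$, writing $w(s)=\Psi(z_0)(1-s)$, so that $w(0)=0$ and $w'(0)=-\Psi_y(1)=-1$. Watson's lemma, or two integrations by parts, gives
\[
\int_0^1 w(s)e^{-\theta s}\dd s=\frac{w(0)}{\theta}+\frac{w'(0)}{\theta^2}+O(\theta^{-3})+O(e^{-\theta})=-\theta^{-2}\big(1+O(\theta^{-1})\big).
\]
The remainder is controlled by $\|w''\|_{L^\infty}$, which is finite because $H^4\subset C^2$, plus boundary terms at $s=1$ of size $e^{-\theta}$.

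In particular $\ipL{\Psi(z_0)}{g_\theta}$ is not identically zero, so its zero set is discrete by the first part; alternatively one can apply the first part directly with $v=\Psi(z_0)\ne0$. The asymptotics add the information that there are no zeros for large $\theta$.

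\emph{Discreteness of $\Theta_1$ and $\Theta_2$.} The set $\Theta_1$ is a finite union of discrete sets: there are finitely many low $z_*$, plus one reference set. A finite union of closed discrete subsets of $(0,\infty)$ is discrete. For $\Theta_2$, defined later by the same kind of conditions, each member is the zero set of $\theta\mapsto\ipL{v}{g_\theta}$ for some nonzero $v$, again finitely many, so the same argument applies.

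The only subtle point is the nonvanishing of the vectors $v$ fed into the first part. For $\Theta_1$ this comes from Condition~\ref{cond:c} together with the independence of distinct chain heads, and I expect that to be the only step needing care.
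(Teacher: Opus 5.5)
Your proof is correct and follows the paper's argument: vanishing of all $\theta$-derivatives at $\theta=0$ forces $v$ to be orthogonal to all polynomials, and Watson's lemma in $s=1-y$, using $\Psi(1)=0$ and $\Psi_y(1)=1$, gives the $-\theta^{-2}$ asymptotics. Your side remark is also correct: $\Psi(z_0)\ne0$ already makes that last zero set discrete by the first part, and the asymptotics additionally rule out zeros for large $\theta$.
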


\begin{proposition}[First stage: splitting by the $\varepsilon_1$ feedback]\label{lem:step1}
Assume Conditions \ref{cond:S}, \ref{cond:c}, \ref{cond:th1} and set $\varepsilon_2=0$.
Fix disjoint closed disks $\mathbb D_*$ of radius $r_*$ around the low $z_*$, contained
in the region of Proposition \ref{prop:kato} carried by $P_{\mathrm{low}}$ and the disks
$\{|z-\lambda_m|\le d_m/4\}$, $n_1\le m<m_0$. There is $\bar\varepsilon_1>0$ such that
for $0<|\varepsilon_1|\le\bar\varepsilon_1$, at each low $z_*$:
\begin{enumerate}
\item[(i)] $D_1$ has a pole of order exactly $\kappa_1$ at $z_*$, and exactly
$\kappa_1$ zeros in $\mathbb D_*$, all simple, at the $\kappa_1$ roots of
\begin{equation}\label{eq:fan}
(\mu'-z_*)^{\kappa_1}=\varepsilon_1R^c(z_*)\big(1+O(|\mu'-z_*|)\big),
\end{equation}
which are distinct for small $\varepsilon_1$ and spaced by
$|\varepsilon_1R^c|^{1/\kappa_1}$ around $z_*$; each is an algebraically simple eigenvalue of
$A_k^{(1)}$ with eigenvector $\rho_c(\mu')$, and
$D_1'(\mu')=\kappa_1(\mu'-z_*)^{-1}\big(1+O(|\mu'-z_*|)\big)\ne0$.
\item[(ii)] In the single-chain case, $\sigma(A_k^{(1)})\cap\mathbb D_*$ consists
exactly of these $\kappa_1$ roots. In the two-chain case it consists of those
roots together with $z_*$ itself, which is an eigenvalue of $A_k^{(1)}$ of algebraic
multiplicity $\kappa_2$ and geometric multiplicity one: a single Jordan chain.
\item[(iii)] The tangential coefficient of each such root, i.e. the residue coordinate
of the response $\Psi^{(1)}$ at $\mu'$, is
$b^{(1)}(\mu')=\varepsilon_1\ipL{\Psi(\mu')}{g_1}/D_1'(\mu')$, and
$b^{(1)}(\mu')\ne0$ for $0<|\varepsilon_1|\le\bar\varepsilon_1$.
\end{enumerate}
\end{proposition}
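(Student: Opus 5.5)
The plan is to work locally at each low eigenvalue $z_*$, using the Laurent expansion of $D_1(z)=1-\varepsilon_1\ipL{\rho_c(z)}{g_1}$ supplied by Lemma \ref{lem:chains} and \eqref{eq:topcoeff}. Write $\ipL{\rho_c(z)}{g_1}=R^c(z_*)(z-z_*)^{-\kappa_1}\big(1+h(z)\big)$ on $\mathbb D_*\setminus\{z_*\}$. By Condition \ref{cond:th1}, $R^c(z_*)\ne0$, and $h$ is analytic in $z-z_*$ with $h(z_*)=0$; this is a genuine analytic function, not merely meromorphic, because the lower-order Laurent terms and the regular part are analytic after multiplication by $(z-z_*)^{\kappa_1}$. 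Hence $D_1$ has a pole of order exactly $\kappa_1$. Its zeros in $\mathbb D_*$ are the zeros of
\[
(z-z_*)^{\kappa_1}-\varepsilon_1R^c(z_*)\big(1+h(z)\big).
\]
Rouché's theorem on a circle of radius comparable to $|\varepsilon_1R^c|^{1/\kappa_1}$, compared against $(z-z_*)^{\kappa_1}-\varepsilon_1R^c$, gives exactly $\kappa_1$ zeros near the $\kappa_1$ roots of $\varepsilon_1R^c$. A second application of Rouché on small disks of radius $\tfrac13|\varepsilon_1R^c|^{1/\kappa_1}\sin(\pi/\kappa_1)$ around each root shows they are separated. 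This yields \eqref{eq:fan}. The derivative formula follows by differentiating $D_1=1-(1+h)\,\varepsilon_1R^c(z-z_*)^{-\kappa_1}$ at a zero, where $\varepsilon_1R^c(1+h)(\mu'-z_*)^{-\kappa_1}=1$; this gives $D_1'(\mu')=\kappa_1(\mu'-z_*)^{-1}+O(1)$. Finally, $\bar\varepsilon_1$ must be chosen so that $D_1$ has no zeros on $\partial\mathbb D_*$ or in the rest of $\mathbb D_*$; this holds because $|\ipL{\rho_c}{g_1}|$ is bounded on the compact set away from a shrinking neighbourhood of $z_*$.

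For the multiplicities I will use Lemma \ref{lem:counting} with $\varepsilon_2=0$, so that $\det\mathfrak D=D_1$. On $\partial\mathbb D_*$ the winding number of $D_1$ is (number of zeros) minus (pole order), which is $\kappa_1-\kappa_1=0$. Hence $\nu(\mathbb D_*;A_k^{(1)})=\nu_*$. At each simple zero $\mu'\ne z_*$, which lies in $\rho(A_k)$, the formula gives $\nu(\mu';A_k^{(1)})=0+1=1$. By Proposition \ref{prop:resAp}, the eigenvector is $\rho_c(\mu')$, since $\ker\mathfrak D$ is spanned by $(1,\cdot)$ and $\Psi$ does not enter when $\varepsilon_2=0$. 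At $z_*$ itself, $\nu(z_*;A_k^{(1)})=\nu_*-\kappa_1$, which is $0$ in the single-chain case and $\kappa_2$ in the two-chain case. This proves the counting parts of (i) and (ii). The reference point $z_0$ in Condition \ref{cond:th1} ensures $\det\mathfrak D(z_0)\ne0$ for small $\varepsilon_1$, so $A_k^{(1)}$ has compact resolvent.

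The remaining claim in (ii) is that $z_*$ has geometric multiplicity one in the two-chain case. I will take an eigenvector $W$ of $A_k^{(1)}$ at $z_*$ and decompose it as in the injectivity part of the proof of Proposition \ref{prop:resAp}. The eigenvectors of $A_k$ at $z_*$ form a two-dimensional space, and $W$ must combine a clamped eigenfunction part with boundary-data parts satisfying the closure constraint. Pairing against the adjoint chain heads through the trace identity \eqref{eq:trace}, the constraint $W(1)=\varepsilon_1\ipL{W}{g_1}$ together with $W_y(1)=cW(1)$ becomes a solvability condition $\beta^c[\zeta]W(1)=0$ on a vector $\zeta$ in the two-dimensional adjoint eigenspace. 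I expect this to force $W(1)=0$ whenever a combination with $\beta^c\ne0$ exists, which Condition \ref{cond:c} guarantees. Then $W$ is an eigenvector of $A_k$ satisfying $\ipL{W}{g_1}=0$, and this selects a one-dimensional subspace because $\ipL{v_{z_*}}{g_1}\ne0$. This step is the main obstacle. When $\kappa_1>\kappa_2$, the relevant eigenvector is one combined with the top chain and may not be $v_{z_*}$ itself. If the direct argument stalls, I would instead argue via the residue: the Riesz projection of $A_k^{(1)}$ at $z_*$ is a perturbation of $\varepsilon_1$-size of the projection onto the second chain, and for small $\varepsilon_1$ the geometric multiplicity is at most that of the residual structure. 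The spectral data of this perturbation can then be read off the Laurent expansion of the resolvent formula \eqref{eq:resAp}.

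For (iii), I will expand $\Psi^{(1)}$ from \eqref{eq:Psi1} near a simple zero $\mu'$ of $D_1$. Since $\Psi$ and $\rho_c$ are analytic at $\mu'\in\rho(A_k)$, the only pole comes from $1/D_1$, and its residue is $\varepsilon_1\ipL{\Psi(\mu')}{g_1}\rho_c(\mu')/D_1'(\mu')$. The residue coordinate along the eigenvector $\rho_c(\mu')$ is therefore $b^{(1)}$. It remains to show $\ipL{\Psi(\mu')}{g_1}\ne0$. Here $\Psi$ itself may have a pole at $z_*$ and $\mu'\to z_*$, so I will use the Laurent expansion of $\ipL{\Psi(z)}{g_1}$ from Lemma \ref{lem:chains}. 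Its leading term is $-\sum\alpha[\zeta^{(r)}_0]\ipL{\phi^{(r)}_0}{g_1}(z-z_*)^{-\kappa_1}$, or some lower term if that sum vanishes. If the whole Laurent function is not identically zero, its value at $\mu'$ is nonzero for small $\varepsilon_1$ by isolation of zeros, since $\mu'\to z_*$ and $\mu'\ne z_*$. The function is not identically zero because $\theta_1$ avoids the zero set of $\ipL{\Psi(z_0)}{g_\theta}$ (Condition \ref{cond:th1}) and the function is meromorphic, hence nonzero near every point. Shrinking $\bar\varepsilon_1$ then finishes the proof.
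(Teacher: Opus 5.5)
Your argument for (i), for the multiplicity count in (ii), and for (iii) is the paper's. It uses the Laurent form $\ipL{\rho_c(z)}{g_1}=R^c(z-z_*)^{-\kappa_1}(1+O(z-z_*))$ with Rouch\'e, Lemma \ref{lem:counting} with $\det\mathfrak D=D_1$, the residue of $\Psi^{(1)}$ at a simple zero of $D_1$, and isolation of the zeros of the not-identically-zero meromorphic function $\ipL{\Psi(z)}{g_1}$ in a punctured disk about $z_*$. The paper phrases that last point through the Puiseux branch $\mu'(\varepsilon_1)$; it is the same fact.

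The one step where you depart is the geometric multiplicity at a two-chain $z_*$. The paper uses only a dimension count. An eigenvector of $A_k^{(1)}$ at $z_*$ solves $\mathcal L_kV=z_*\Lap V$ with $V(0)=V_y(0)=0$. That solution space is two-dimensional, and in the two-chain case it is all of $\spn\{\phi^{(1)}_0,\phi^{(2)}_0\}\subset H^2_0$. So $V(1)=V_y(1)=0$ automatically, and the closure reduces to $\ipL{V}{g_1}=0$.

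Your route is also correct. Pairing with an adjoint chain head through \eqref{eq:trace} gives $W_y(1)\alpha[\zeta]-W(1)\beta[\zeta]=-W(1)\beta^c[\zeta]=0$, and Condition \ref{cond:c} then forces $W(1)=0$. It spends Condition \ref{cond:c} where the paper needs nothing. In exchange, it yields $\ker(A_k^{(1)}-z_*)=\ker(A_k-z_*)\cap\ker\ipL{\cdot}{g_1}$ at every low $z_*$, whatever the number of chains.

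The step you flag as the main obstacle is not one. The vector $v_{z_*}$ lies in $\ker(A_k-z_*)$ for any $\kappa_1,\kappa_2$, and $\ipL{v_{z_*}}{g_1}=R^c\ne0$. So $\ipL{\cdot}{g_1}$ does not vanish identically on this two-dimensional space, and its kernel there is exactly a line. Together with algebraic multiplicity $\kappa_2\ge1$, that is geometric multiplicity one, and your Riesz-projection fallback is unnecessary.

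Two small points to tighten. First, excluding zeros from the rest of $\mathbb D_*$ needs the Laurent bound $|\varepsilon_1\ipL{\rho_c}{g_1}|<1$ on a fixed annulus $(2|\varepsilon_1R^c|)^{1/\kappa_1}\le|z-z_*|\le\delta$. Boundedness on $|z-z_*|\ge\delta$ alone does not cover it. Second, $D_1(z_0)\ne0$ for small $\varepsilon_1$ holds at any $z_0\in\rho(A_k)$. The reference point of Condition \ref{cond:th1} concerns $\Psi$, not $\rho_c$, and plays no role in the compactness of the resolvent.
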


\begin{proof}
(i) Write $u=z-z_*$, $P(z)\coloneqq\ipL{\rho_c(z)}{g_1}=R^cu^{-\kappa_1}(1+O(u))$ by
\eqref{eq:topcoeff} and Condition \ref{cond:th1} ($R^c\ne0$; the pole order cannot
exceed $\kappa_1$ by Lemma \ref{lem:resform}). The zeros of $D_1=1-\varepsilon_1P$ in
$\mathbb D_*$ are the solutions of $u^{\kappa_1}=\varepsilon_1R^c(1+\varphi(u))$,
$\varphi(0)=0$ analytic: on $|u|=(2|\varepsilon_1R^c|)^{1/\kappa_1}$ Rouch\'e against
$u^{\kappa_1}-\varepsilon_1R^c$ gives $\kappa_1$ zeros, which for small $\varepsilon_1$
are the $\kappa_1$ analytic-in-$\varepsilon_1^{1/\kappa_1}$ branches
$u_j=\omega^j(\varepsilon_1R^c)^{1/\kappa_1}(1+O(|\varepsilon_1|^{1/\kappa_1}))$,
$\omega=e^{2\pi i/\kappa_1}$, pairwise distinct; on the rest of $\mathbb D_*$,
$|\varepsilon_1P|<1$, so there are no other zeros. Differentiating
$P=u^{-\kappa_1}/\varepsilon_1\cdot(1+O(u))$ at a zero gives the stated $D_1'$. Each
zero $\mu'\in\rho(A_k)$ is an eigenvalue of $A_k^{(1)}$ with eigenvector $\rho_c(\mu')$
by Proposition \ref{prop:resAp} (with $\varepsilon_2=0$ the secular matrix is
triangular, with kernel $(1,0)^{\mathsf T}$); its algebraic multiplicity is
$\ord_{\mu'}D_1=1$ by Lemma \ref{lem:counting}.

(ii) By Lemma \ref{lem:counting} on $\partial\mathbb D_*$ and on a small circle around
$z_*$: $\nu(\mathbb D_*;A^{(1)}_k)=\nu_*+(\kappa_1-\kappa_1)=\nu_*$ and
$\nu(z_*;A^{(1)}_k)=\nu_*-\kappa_1$, which is $0$ (single chain) or $\kappa_2$ (two
chains). Geometric multiplicity in the two-chain case: an eigenfunction of $A^{(1)}_k$
at $z_*$ is a solution of $\mathcal L_kV=z_*\Lap V$, $V(0)=V_y(0)=0$ --- a
two-dimensional space, which in the two-chain case is exactly
$\spn\{\phi^{(1)}_0,\phi^{(2)}_0\}\subset H^2_0$ --- subject to
$V(1)=\varepsilon_1\ipL{V}{g_1}$ and $V_y(1)=cV(1)$. Since $V(1)=0$ on this span, a
two-dimensional kernel would force $\ipL{\phi^{(r)}_0}{g_1}=0$ for $r=1,2$,
contradicting $R^c\ne0$ (in the case $\kappa_1>\kappa_2$,
$R^c=\beta^c[\zeta^{(1)}_0]\ipL{\phi^{(1)}_0}{g_1}$, so $\ipL{\phi^{(1)}_0}{g_1}\ne0$
and the same argument applies with $\phi^{(1)}_0$). Hence one chain of length
$\kappa_2$.

(iii) $\Psi^{(1)}=\Psi+[\varepsilon_1\ipL{\Psi}{g_1}/D_1]\rho_c$ has a simple pole at
the simple zero $\mu'$ of $D_1$ with residue
$\varepsilon_1\ipL{\Psi(\mu')}{g_1}\rho_c(\mu')/D_1'(\mu')$, and the residue coordinate
along the eigenvector $\rho_c(\mu')$ is the stated $b^{(1)}$. Nonvanishing: the branch
$\mu'(\varepsilon_1)$ is a nonconstant Puiseux series; the meromorphic function
$z\mapsto\ipL{\Psi(z)}{g_1}$ is not identically zero (Condition \ref{cond:th1} at
$z_0$), so its zeros are isolated and the composition
$\varepsilon_1\mapsto\ipL{\Psi(\mu'(\varepsilon_1))}{g_1}$ is a not-identically-zero
Puiseux series; hence it is nonzero for all sufficiently small $\varepsilon_1\ne0$, and
so is $b^{(1)}$.
\end{proof}

At a two-chain eigenvalue the first stage splits one chain and leaves the other, which
we call retained. The next lemma identifies the principal part of the second-step
secular function there, \emph{exactly} and independently
of $\varepsilon_1$.

\begin{lemma}[Leading coefficient at the retained chain]\label{lem:t2}
Let $z_*$ be a two-chain low eigenvalue, Conditions
\ref{cond:S}--\ref{cond:th1} in force, $0<|\varepsilon_1|\le\bar\varepsilon_1$. Then the
principal part of $\ipL{\Psi^{(1)}(z)}{g_2}$ at $z_*$ does not depend on
$\varepsilon_1$: it equals the principal part of the meromorphic function
\begin{equation}\label{eq:Nfun}
\frac{N(z)}{\ipL{\rho_c(z)}{g_1}},\qquad
N\coloneqq\ipL{\rho_c}{g_1}\ipL{\Psi}{g_2}-\ipL{\rho_c}{g_2}\ipL{\Psi}{g_1},
\end{equation}
its pole order is at most $\kappa_2$, and its leading coefficient is
\begin{equation}\label{eq:t2}
t_2(z_*)\coloneqq\lim_{z\to z_*}(z-z_*)^{\kappa_2}\ipL{\Psi^{(1)}(z)}{g_2}
=\frac{\Big(\alpha[\zeta^{(1)}_0]\beta[\zeta^{(2)}_0]-\alpha[\zeta^{(2)}_0]\beta[\zeta^{(1)}_0]\Big)\,
\det\big[\ipL{\phi^{(r)}_0}{g_i}\big]_{r,i=1,2}}{R^c(z_*)} .
\end{equation}
The first factor in the numerator is nonzero by Lemma \ref{lem:inj} (injectivity of the
trace pair on the two-dimensional adjoint eigenspace), and it is independent of $c$.
\end{lemma}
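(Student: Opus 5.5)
The plan has three steps: an exact algebraic identity for $\ipL{\Psi^{(1)}}{g_2}$, a column reduction of $N$ that kills the first chain, and a read-off of the leading Laurent coefficient. Throughout write $u\coloneqq z-z_*$, $P\coloneqq\ipL{\rho_c}{g_1}$, $Q\coloneqq\ipL{\rho_c}{g_2}$, $\Psi_i\coloneqq\ipL{\Psi}{g_i}$.

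\textbf{Step 1: independence of $\varepsilon_1$.} From \eqref{eq:Psi1} one gets $\ipL{\Psi^{(1)}}{g_2}=\big(\Psi_2-\varepsilon_1N\big)/(1-\varepsilon_1P)$, and this rearranges exactly to
\[
\ipL{\Psi^{(1)}}{g_2}=\frac{N}{P}+\frac{Q\,\Psi_1}{P\,(1-\varepsilon_1P)} .
\]
Under Conditions \ref{cond:c}--\ref{cond:th1}, $P$ has a pole of exact order $\kappa_1$ with leading coefficient $R^c(z_*)\ne0$. Hence $P(1-\varepsilon_1P)$ has a pole of exact order $2\kappa_1$, with leading coefficient $-\varepsilon_1(R^c)^2$.

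By Lemma \ref{lem:resform}, $Q$ and $\Psi_1$ have poles of order at most $\kappa_1$. So the second term is holomorphic at $z_*$. Its only singularities nearby are the zeros $\mu'$ of $D_1$, which lie off $z_*$ by Proposition \ref{lem:step1}. It therefore contributes nothing to the principal part at $z_*$, and the principal part of $\ipL{\Psi^{(1)}}{g_2}$ is that of $N/P$. That function involves no $\varepsilon_1$.

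\textbf{Step 2: pole order of $N$.} The aim is $\ord_{z_*}N\ge-(\kappa_1+\kappa_2)$. Bounding each product in $N$ separately gives only $-2\kappa_1$, so I will exploit the determinant structure with a meromorphic column operation.

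By Lemma \ref{lem:chains} and the biorthogonal normalization \eqref{eq:biorth}, write the principal parts of $\rho_c$ and $\Psi$ in the chain basis. The chain-$r$ part of $\rho_c$ is generated by the Laurent series $B_r(u)\coloneqq\sum_p\beta^{c,(r)}_pu^p$ acting through the lower-triangular Toeplitz pattern of \eqref{eq:xy}; that of $\Psi$ is generated likewise by $-A_r(u)\coloneqq-\sum_p\alpha^{(r)}_pu^p$.

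Condition \ref{cond:c} gives $\beta^{c,(1)}_0\ne0$, so $h(z)\coloneqq-A_1(u)/B_1(u)$ is holomorphic near $z_*$. Set $\widetilde\Psi\coloneqq\Psi-h\rho_c$. Because the Toeplitz action of a product of series factors, the chain-$1$ principal part of $\widetilde\Psi$ cancels up to holomorphic terms. Thus $\widetilde\Psi$ has pole order at most $\kappa_2$. Its leading coefficient is
\[
\Big(-\alpha^{(2)}_0+\frac{\alpha^{(1)}_0}{\beta^{c,(1)}_0}\beta^{c,(2)}_0\Big)\phi^{(2)}_0
=\frac{\alpha^{(1)}_0\beta^{(2)}_0-\alpha^{(2)}_0\beta^{(1)}_0}{\beta^{c,(1)}_0}\,\phi^{(2)}_0 ,
\]
and the $c$-terms cancel in the numerator. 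The determinant $N$ is unchanged when $\Psi$ is replaced by $\widetilde\Psi$. Hence $N=P\,\ipL{\widetilde\Psi}{g_2}-Q\,\ipL{\widetilde\Psi}{g_1}$ has pole order at most $\kappa_1+\kappa_2$, and $N/P$ has pole order at most $\kappa_2$.

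\textbf{Step 3: leading coefficient.} Multiply $N/P$ by $u^{\kappa_2}$ and let $u\to0$. The leading parts are
\[
P\sim R^cu^{-\kappa_1},\quad Q\sim\sum_{\kappa_r=\kappa_1}\beta^{c,(r)}_0\ipL{\phi^{(r)}_0}{g_2}\,u^{-\kappa_1},\quad \ipL{\widetilde\Psi}{g_i}\sim\frac{\alpha^{(1)}_0\beta^{(2)}_0-\alpha^{(2)}_0\beta^{(1)}_0}{\beta^{c,(1)}_0}\ipL{\phi^{(2)}_0}{g_i}\,u^{-\kappa_2}.
\]
Substituting these, the limit should collapse to \eqref{eq:t2}. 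The $\beta^{c,(1)}_0$ factors cancel against the chain-$1$ part of $P$ and $Q$, which leaves $\det\big[\ipL{\phi^{(r)}_0}{g_i}\big]$ times the trace minor, divided by $R^c$.

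There is a subtlety when $\kappa_1=\kappa_2$. Then $R^c$ and $Q$ contain both chain heads, and the chain-$2$ contributions must cancel inside the determinant. I would check this by substituting $\widetilde\Psi$ and expanding.

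Finally, the nonvanishing of the trace minor is Lemma \ref{lem:inj}: injectivity of $(\alpha,\beta)$ on the two-dimensional adjoint eigenspace spanned by $\zeta^{(1)}_0,\zeta^{(2)}_0$. Its independence of $c$ was seen in Step 2.

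\textbf{Main obstacle.} I expect the hard step to be Step 2: verifying that the Toeplitz structure of \eqref{eq:xy}, in the biorthogonal basis \eqref{eq:biorth}, really lets a single holomorphic multiplier $h$ remove the entire chain-$1$ principal part, including sub-leading levels. I would first check this on the generating-function form $x_p=\sum_i\beta_{p-i}u^{-i-1}$. That form is a truncated product of $B_1(u)$ with $u^{-1}(1-u^{\cdot})^{-1}$, so I expect the cancellation modulo holomorphic terms to reduce to the identity $(A_1/B_1)\cdot B_1=A_1$ for the corresponding series.
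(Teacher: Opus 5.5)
Your proposal is correct, but it reaches the lemma by a different route from the paper's in the central step.

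\emph{The $\varepsilon_1$-independence.} The paper argues at the vector level. With $F=\varepsilon_1\Psi_1/(1-\varepsilon_1P)=-(\Psi_1/P)\big[1-(\varepsilon_1P)^{-1}\big]^{-1}$, the first $\kappa_1$ Taylor coefficients of $F$ at $z_*$ agree with those of $-\Psi_1/P$, and these are all that the principal part of $F\rho_c$ sees. Your exact identity $\ipL{\Psi^{(1)}}{g_2}=N/P+Q\Psi_1/\big(P(1-\varepsilon_1P)\big)$, with the remainder holomorphic by pole counting, is the same fact for the $g_2$ pairing, stated more economically.

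\emph{Pole order and leading coefficient.} The paper does Laurent bookkeeping on $N$ itself, using the chain polynomials $A^{(r)},B^{(r)},G^{(r)}_i$. The untruncated same-chain products cancel, so each same-chain term is $O(u^{-\kappa_r})$. Only the constant coefficients of the cross-chain products reach order $-\kappa_1-\kappa_2$, which yields $N_{-\kappa_1-\kappa_2}=\big(\alpha^{(1)}_0\beta^{c,(2)}_0-\alpha^{(2)}_0\beta^{c,(1)}_0\big)\det\big[\ipL{\phi^{(r)}_0}{g_i}\big]$ before dividing by $P$. You instead spend Condition~\ref{cond:c} ($\beta^{c,(1)}_0\ne0$) on the holomorphic multiplier $h=-A_1/B_1$. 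The reduced response $\widetilde\Psi=\Psi-h\rho_c$ leaves $N$ invariant and has no chain-$1$ principal part.

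\emph{Your flagged obstacle.} This step is in fact immediate. The coefficient of $\phi^{(1)}_a$ in the principal part of $\rho_c$ (resp.\ $\Psi$) is the principal part of $u^{-(\kappa_1-a)}B_1(u)$ (resp.\ $-u^{-(\kappa_1-a)}A_1(u)$). Multiplication by a holomorphic function commutes with taking principal parts modulo holomorphic terms, and $hB_1=-A_1$ identically.

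\emph{The equal-length case.} This also closes in one line. Write $N/P=\ipL{\widetilde\Psi}{g_2}-(Q/P)\ipL{\widetilde\Psi}{g_1}$ with $Q/P$ holomorphic at $z_*$, and set $G^{(r)}_i\coloneqq\ipL{\phi^{(r)}_0}{g_i}$. Then the limit is $\tau\big(R^cG^{(2)}_2-Q_{-\kappa_1}G^{(2)}_1\big)/R^c$. Here $\tau\coloneqq(\alpha^{(1)}_0\beta^{(2)}_0-\alpha^{(2)}_0\beta^{(1)}_0)/\beta^{c,(1)}_0$ is the leading factor of $\widetilde\Psi$ from your Step~2, and $Q_{-\kappa_1}$ is the $u^{-\kappa_1}$ coefficient of $Q$. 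When $\kappa_1=\kappa_2$, the $\beta^{c,(2)}_0$ contributions to $R^c$ and $Q_{-\kappa_1}$ cancel. This leaves $\beta^{c,(1)}_0\det\big[G^{(r)}_i\big]$, exactly as when $\kappa_1>\kappa_2$.

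\emph{What each route buys.} The paper's bookkeeping treats the two chains symmetrically. It bounds the pole order of $N$ with no nondegeneracy at all; only $R^c\ne0$ is used, for the division. Your column reduction uses a hypothesis that is in force anyway and thereby avoids all truncation-error tracking. It also exhibits the $c$-free trace minor directly as the leading coefficient of a single vector-valued response whose pole sees only the retained chain.
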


\begin{condition}[Second gain shape]\label{cond:th2}
$\theta_2\notin\Theta_2$, where $\Theta_2$ is the union over the two-chain low $z_*$ of
the zero sets of
$\theta\mapsto\ipL{w_{z_*}}{g_\theta}$,
$w_{z_*}\coloneqq\ipL{\phi^{(1)}_0}{g_1}\phi^{(2)}_0-\ipL{\phi^{(2)}_0}{g_1}\phi^{(1)}_0$.
\end{condition}

Under Condition \ref{cond:th1}, $w_{z_*}\ne0$ (the coefficients are not both zero, as
in Proposition \ref{lem:step1}(ii)), so $\Theta_2$ is discrete by Lemma \ref{lem:disc}; and
$t_2(z_*)\ne0$ for $\theta_2\notin\Theta_2$, since
$\det[\ipL{\phi^{(r)}_0}{g_i}]=\ipL{w_{z_*}}{g_{\theta_2}}$.

\begin{proposition}[Second stage: splitting the retained chains]\label{lem:step2}
Assume Conditions \ref{cond:S}--\ref{cond:th2} and $0<|\varepsilon_1|\le\bar\varepsilon_1$.
There is $\bar\varepsilon_2=\bar\varepsilon_2(\varepsilon_1)>0$ such that for
$0<|\varepsilon_2|\le\bar\varepsilon_2$ the spectrum of $A_k'$ in the union of the low
disks consists of algebraically simple, pairwise distinct eigenvalues; specifically, in
each $\mathbb D_*$:
\begin{enumerate}
\item[(i)] (Retained chains.) At each two-chain $z_*$, the retained chain of length
$\kappa_2$ splits into $\kappa_2$ simple eigenvalues of $A_k'$ at the roots of
$(\mu''-z_*)^{\kappa_2}=\varepsilon_2t_2(z_*)(1+o(1))$, each a simple zero of $D_2$,
with eigenvector $\Psi^{(1)}(\mu'')$ and residue coordinate $b'=1/D_2'(\mu'')\ne0$.
\item[(ii)] (Simple modes.) Each simple eigenvalue $\mu$ of $A_k^{(1)}$ in the low
region --- under Condition \ref{cond:th1} every low eigenvalue of $A_k$ is moved by
step 1, so these are exactly the roots \eqref{eq:fan} of Proposition \ref{lem:step1} --- either moves to the unique nearby simple zero
$\mu''=\mu+\varepsilon_2\varrho_\mu+O(\varepsilon_2^2)$ of $D_2$, where $\varrho_\mu$ is
the residue of $\ipL{\Psi^{(1)}}{g_2}$ at $\mu$, with $b'=1/D_2'(\mu'')\ne0$; or, if
$\varrho_\mu=0$, stays at $\mu$ with $b'=b^{(1)}(\mu)/D_2(\mu)\ne0$.
\item[(iii)] No other spectrum of $A_k'$ appears in the low region, and the total count
in each $\mathbb D_*$ is $\nu_*$.
\end{enumerate}
\end{proposition}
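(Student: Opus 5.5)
The plan is to run the second stage exactly as the first, with $A_k^{(1)}$ in place of $A_k$, $D_2=1-\varepsilon_2\ipL{\Psi^{(1)}}{g_2}$ in place of $D_1$, and $\Psi^{(1)}$ in place of $\rho_c$. Spectra are counted through Lemma \ref{lem:counting}, applied to the factored determinant of Proposition \ref{prop:factor}. For each $z\in\mathbb D_*$ away from the zeros of $D_1$, $\det\mathfrak D=D_1D_2$ gives
\[
\ord_z\det\mathfrak D=\ord_zD_1+\ord_zD_2 .
\]
The multiplicity of $A_k'$ is therefore that of $A_k^{(1)}$ plus $\ord D_2$; this is Lemma \ref{lem:counting} relative to $A_k^{(1)}$, obtained by subtracting the two instances of \eqref{eq:counting}.

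For item (i), I would fix a two-chain $z_*$. By Lemma \ref{lem:t2}, $\ipL{\Psi^{(1)}}{g_2}$ has a pole of order exactly $\kappa_2$ at $z_*$, with leading coefficient $t_2(z_*)\ne0$ under Condition \ref{cond:th2}. Its principal part does not depend on $\varepsilon_1$. The regular part is holomorphic on a small disk around $z_*$ that excludes the fan roots of \eqref{eq:fan}. The same Rouch\'e argument used in Proposition \ref{lem:step1}(i), run on $|u|=(2|\varepsilon_2t_2|)^{1/\kappa_2}$, then gives $\kappa_2$ distinct simple zeros $\mu''$ of $D_2$, with $D_2'(\mu'')\ne0$. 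On this small disk $D_1$ is nonzero: its only zeros in $\mathbb D_*$ are the fan roots, and its pole at $z_*$ makes $1/D_1$ harmless. So $\ord_{z_*}\det\mathfrak D=-\kappa_1-\kappa_2$, and $\nu(z_*;A_k')=\nu_*-\kappa_1-\kappa_2=0$. The new eigenvalues come from Proposition \ref{prop:resAp}, with kernel vector $(\varepsilon_1\ipL{\Psi}{g_1}/D_1,1)$, and their eigenvectors are $\Psi^{(1)}(\mu'')$. The residue of $\Psi^{(1)}/D_2$ at a simple zero is $\Psi^{(1)}(\mu'')/D_2'(\mu'')$, which gives $b'=1/D_2'(\mu'')$. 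For the tangential response of $A_k'$ I would use the analogue of \eqref{eq:Psi1}, namely $\Psi^{(2)}=\Psi^{(1)}/D_2$, checked by the same algebra as in Proposition \ref{prop:factor}.

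For item (ii), take a fan root $\mu$. There $\ipL{\Psi^{(1)}}{g_2}$ has at most a simple pole with residue $\varrho_\mu$. If $\varrho_\mu\ne0$, then $1-\varepsilon_2(\varrho_\mu/(z-\mu)+h(z))$ has a unique simple zero near $\mu$, at $\mu+\varepsilon_2\varrho_\mu+O(\varepsilon_2^2)$, by the implicit function theorem applied to $(z-\mu)D_2$. Its order of vanishing there is $-1+1$, so the count is preserved. If $\varrho_\mu=0$, then $D_2$ is holomorphic at $\mu$ with $D_2(\mu)=1+O(\varepsilon_2)\ne0$, and $\mu$ remains a simple eigenvalue. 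Its residue coordinate is $b^{(1)}(\mu)/D_2(\mu)$, nonzero by Proposition \ref{lem:step1}(iii).

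For item (iii), I would choose $\bar\varepsilon_2$ so small that $|\varepsilon_2\ipL{\Psi^{(1)}}{g_2}|<1$ on $\mathbb D_*$ minus the small disks around $z_*$ and the fan roots. This uses compactness and the fact that $\Psi^{(1)}$ is bounded there, which I would get from Proposition \ref{lem:step1} via $D_1\ne0$ on that set. With this choice there are no further zeros of $D_2$. The points of $\sigma(A_k)$ and the zeros of $D_1$ are accounted for by the order computation above. Applying Lemma \ref{lem:counting} on $\partial\mathbb D_*$, where $D_2\approx1$, gives total count $\nu_*$. Pairwise distinctness follows because all the small disks are disjoint.

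The main obstacle is uniformity in item (ii). The fan roots lie about $|\varepsilon_1|^{1/\kappa_1}$ apart, so $\bar\varepsilon_2$ must depend on $\varepsilon_1$, which the statement permits. The isolating disks around them must have radius comparable to their spacing, and the regular parts must be bounded on those circles.
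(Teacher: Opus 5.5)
Your proposal is correct and follows the paper's proof essentially step by step. At $z_*$ you use Lemma \ref{lem:t2} with the Puiseux/Rouch\'e argument of Proposition \ref{lem:step1}; at the fan roots you solve $1=\varepsilon_2[\varrho_\mu/(z-\mu)+O(1)]$; elsewhere you use $D_2=1+O(\varepsilon_2)$ with an $\varepsilon_1$-dependent $\bar\varepsilon_2$; you count with Lemma \ref{lem:counting} through $\det\mathfrak D=D_1D_2$; and you read off the residues from $\Psi^{(2)}=\Psi^{(1)}/D_2$. Beyond minor wording, the one difference is that you get pairwise distinctness from the disjointness of the isolating disks, while the paper argues through the $\varepsilon_2\to0$ limits of the eigenvalues.
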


\begin{proof}
By Lemma \ref{lem:t2}, $D_2=1-\varepsilon_2\ipL{\Psi^{(1)}}{g_2}$ has at $z_*$ a pole of
order exactly $\kappa_2$ with leading coefficient $-\varepsilon_2t_2\ne0$, at each
visible simple eigenvalue $\mu$ of $A^{(1)}_k$ a simple pole of residue
$-\varepsilon_2\varrho_\mu$, and is $1+O(\varepsilon_2)$ on the rest of the low region
minus fixed neighborhoods of these points (here the $\varepsilon_1$-dependent bound
$\sup|\ipL{\Psi^{(1)}}{g_2}|$ on that set enters $\bar\varepsilon_2(\varepsilon_1)$).
All neighborhoods in this argument are chosen after $\varepsilon_1$ is fixed and may
depend on it: the roots \eqref{eq:fan} and the retained $z_*$ have a strictly positive
minimum separation, of order $|\varepsilon_1R^c|^{1/\kappa_1}$, and
$\bar\varepsilon_2(\varepsilon_1)$ is reduced so that the second-step root clusters, of radius
$O(|\varepsilon_2t_2|^{1/\kappa_2})$, and the displaced simple eigenvalues,
$O(\varepsilon_2\varrho_\mu)$, remain inside mutually disjoint such neighborhoods.
The zeros of $D_2$: near $z_*$, the Puiseux argument of Proposition \ref{lem:step1}(i) with
$(\kappa_1,R^c,\varepsilon_1)$ replaced by $(\kappa_2,t_2,\varepsilon_2)$ gives
$\kappa_2$ simple zeros at the stated roots, distinct, with
$D_2'(\mu'')=\kappa_2(\mu''-z_*)^{-1}(1+o(1))\ne0$; near a visible simple $\mu$, the
equation $1=\varepsilon_2[\varrho_\mu/(z-\mu)+O(1)]$ gives one simple zero at
$\mu+\varepsilon_2\varrho_\mu(1+O(\varepsilon_2))$; and there are no other zeros. Now
count with Lemma \ref{lem:counting} and $\det\mathfrak D=D_1D_2$: on a small circle
around $z_*$, $\nu(z_*;A_k')=\nu_*-\kappa_1-\kappa_2+0=0$; around each step-2 root,
$\nu=0+0+1=1$; around each step-1 root $\mu$ (a simple zero of $D_1$): if visible,
$D_2$ contributes pole $-1$ there and its nearby zero $+1$ in the slightly larger
circle, so the circle around $\mu$ itself gives $\nu(\mu;A_k')=0+1-1=0$ and the nearby
zero of $D_2$ carries $\nu=1$; if invisible, $\nu(\mu;A_k')=0+1+0=1$ and the eigenvalue
stays at $\mu$. The eigenvectors and residue coordinates: at a simple zero $\mu''$ of
$D_2$ with $D_1(\mu'')\ne0$, the kernel of $\mathfrak D(\mu'')$ is spanned by the vector
producing $\Psi^{(1)}(\mu'')$ (Propositions \ref{prop:resAp}, \ref{prop:factor}), and
the tangential frequency response of $A_k'$ is
\begin{equation}\label{eq:Psi2}
\Psi^{(2)}(z)=\frac{\Psi^{(1)}(z)}{D_2(z)}
\end{equation}
(verified as in Proposition \ref{prop:factor}: $\Psi^{(2)}$ solves
$\mathcal L_k\Psi^{(2)}=z\Lap\Psi^{(2)}$ with
$\Psi^{(2)}(1)=\varepsilon_1\ipL{\Psi^{(2)}}{g_1}$ and
$\Psi^{(2)}_y(1)-c\varepsilon_1\ipL{\Psi^{(2)}}{g_1}-\varepsilon_2\ipL{\Psi^{(2)}}{g_2}=1$),
whose residue at $\mu''$ is $\Psi^{(1)}(\mu'')/D_2'(\mu'')$, giving $b'=1/D_2'(\mu'')$
in the normalization $\chi'=\Psi^{(1)}(\mu'')$; at a retained invisible $\mu$,
$\Psi^{(2)}=\Psi^{(1)}/D_2$ has residue $b^{(1)}(\mu)\rho_c(\mu)/D_2(\mu)$, giving
$b'=b^{(1)}(\mu)/D_2(\mu)\ne0$, the nonvanishing of $b^{(1)}$ at every such root being
Proposition \ref{lem:step1}(iii). Distinctness of all
eigenvalues across the low region holds for $\varepsilon_2$ small because their limits
as $\varepsilon_2\to0$ (the roots at $z_*$ and the step-1 configuration) are distinct,
except for the pairs (step-1 root, its step-2 perturbation), which coincide in the
limit but are the same eigenvalue.
\end{proof}

\subsection{Localization of the high eigenvalues}\label{sec:high}

\begin{lemma}[High-mode estimates]\label{lem:tails}
Enlarge $m_0$ if necessary. There is $C=C(k,Re,c,\theta_1,\theta_2)$ such that for
$m\ge m_0$ and $z$ on the circle $\Gamma_m=\{|z-\lp m|=d_m/4\}$, likewise (with
$\tau_m$ replaced by $\tau_j$) on the origin-centered circles
$C_j\coloneqq\{|z|=r_j\}$, $r_j\coloneqq\tfrac12(|\lambda_j|+|\lambda_{j+1}|)$,
$j\ge m_0$, and also for real $z\ge\Lambda_0$ large:
\begin{enumerate}
\item[(i)] $\big|\ipL{\Psi(z)}{g_i}\big|\le C\,\tau_m$, with
$\tau_m\coloneqq(1+\log m)/m$ (respectively $\le C/\sqrt z$ for real $z$ large);
\item[(ii)] $\big|\ipL{\rho_c(z)}{g_i}-\ipL{\ell}{g_i}\big|\le C\,\tau_m$
(respectively $\to0$);
\item[(iii)] $\big|\ipL{(z-A_k)^{-1}G}{g_i}\big|\le C\,m^{-2}\|G\|_k$ for all
$G\in\Hk$;
\item[(iv)] $\|\Psi(z)\|_k\le C$ and $\|\rho_c(z)-\ell\|_k\le C$.
\end{enumerate}
\end{lemma}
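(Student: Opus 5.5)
The plan is to expand each frequency response in the unconditional Riesz decomposition of Corollary~\ref{cor:decomp}, compute its coordinates exactly from the residue identities \eqref{eq:tailpair}, and reduce (i)--(iv) to sums over $1/|z-\lp n|$. Since $\Psi(z)(1)=0$, the response $\Psi(z)$ lies in $\Hk$. By \eqref{eq:tailpair}, its coordinate along $\ep n$ is $\ip{\Psi(z)}{\tep n}=b_n/(z-\lp n)$ for $n\ge n_1$. Hence
\[
\ipL{\Psi(z)}{g_i}=\ipL{P_{\mathrm{low}}\Psi(z)}{g_i}+\sum_{n\ge n_1}\frac{b_n}{z-\lp n}\,\ipL{\ep n}{g_i},
\]
and the series converges because $V\mapsto\ipL{V}{g_i}$ is bounded on $\Hk$.

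For $\rho_c(z)-\ell\in\Hk$ I would start from the resolvent formula \eqref{eq:resrho}. It gives $\rho_c(z)-\ell=(z_0-A_k)(z-A_k)^{-1}h$ with $h\coloneqq\rho_c(z_0)-\ell$ fixed. The high coordinates are then $(z_0-\lp n)\ip{h}{\tep n}/(z-\lp n)$. Pairing $(z_0-A_k)h$ against $\tep n$ through the trace identity \eqref{eq:trace} should show that the numerator is the boundary coefficient $\beta^c[\tep n]$, up to a bounded term. So the numerators are $O((1+|c|)n/Re)$ by Proposition~\ref{prop:bmc}, exactly as $b_n$ is for $\Psi$. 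Thus (i) and (ii) become one estimate with numerators of size $n$.

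Next I bound the individual factors. From \eqref{eq:kato}, $\|\ep n-e_n\|\le C\|\ep n-e_n\|_k\le C/d_n\le C/n$. Proposition~\ref{prop:stokes-all}(iii) gives $\ipL{e_n}{g_i}=O(n^{-1})$ because $g_i\in C^1$, so $\ipL{\ep n}{g_i}=O(1/n)$. Each term is therefore $O(1/|z-\lp n|)$.

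On $\Gamma_m$ the term $n=m$ is $O(1/d_m)=O(Re/m)$. For $n\ne m$, \eqref{eq:gapjm} together with $|\lp n-\lambda_n|\le\|B_k\|$ gives $|z-\lp n|\ge c|n^2-m^2|/Re$ once $m_0$ is large. The sum $\sum_{n\ne m}|n^2-m^2|^{-1}$ is $O((1+\log m)/m)$: split at $n\le m/2$, $m/2<n<2m$ (the harmonic part, which produces the logarithm) and $n\ge2m$. This yields $\tau_m$.

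The low block and the finitely many modes $n_1\le n<m_0$ are handled through the resolvent. Write $P_{\mathrm{low}}\Psi(z)=(z_0-A_k)(z-A_k)^{-1}P_{\mathrm{low}}\Psi(z_0)$, using \eqref{eq:resPsi} restricted to the finite-dimensional invariant range. Since $|z|\asymp m^2/Re$ on $\Gamma_m$, this contribution is $O(m^{-2})$, which is absorbed into $\tau_m$. The circles $C_j$ are treated identically: there $|z-\lp n|\gtrsim|n^2-j^2|+j$, so the same sums arise with $m$ replaced by $j$. For real $z\ge\Lambda_0$, $|z-\lp n|\gtrsim z+n^2/Re$. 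Then $\sum_n 1/(z+n^2)\lesssim z^{-1/2}$, which gives the stated $C/\sqrt z$ in (i) and decay to $0$ in (ii).

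For (iii), write $G=\sum_n G_n\ep n$ with $\sum_n|G_n|^2\le C\|G\|_k^2$, which holds by the Riesz basis property. Cauchy--Schwarz then gives
\[
\Big|\ipL{(z-A_k)^{-1}G}{g_i}\Big|\le C\|G\|_k\Big(\sum_n\frac{1}{n^2|z-\lp n|^2}\Big)^{1/2}.
\]
The $n=m$ term is $m^{-2}\cdot m^{-2}$. The terms $n\ne m$ contribute at most $m^{-4}$ times a convergent sum, plus $O(m^{-4})$ from the low block. Altogether this is $O(m^{-2})$.

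For (iv), the Riesz property gives $\|\Psi(z)\|_k^2\le C\sum_n|b_n|^2/|z-\lp n|^2$. This is bounded by $C\sum_n n^2/(n^2-m^2)^2+C m^2/d_m^2$, both uniformly bounded, and the same holds for $\rho_c-\ell$.

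The main obstacle I expect is the coordinate formula for $\rho_c(z)-\ell$. One must check carefully that the constant part coming from $\ell$ cancels, so that the coordinates really are $O(n)/(z-\lp n)$ and not $O(n)/(z-\lp n)+O(1)$. A residual $O(1)$ coordinate would spoil the $\tau_m$ bound in (ii). I would first try to verify this directly via \eqref{eq:trace} with $W=\rho_c(z_0)-\ell$, using $\Lap\ell=0$ and $\ell(1)=1$. Uniformity of all constants in $m$, which amounts to enlarging $m_0$ so that the disks $\Gamma_m$ stay clear of neighbouring eigenvalues, is the secondary point to watch.
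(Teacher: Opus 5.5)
Your proposal is correct and follows the paper's proof essentially step for step. Both use the Riesz expansion of Corollary~\ref{cor:decomp} with coordinates read from \eqref{eq:tailpair}, the bounds $|b_n|,|\beta^c[\tep n]|=O(n/Re)$ and $\ipL{\ep n}{g_i}=O(1/n)$, the sum $\sum_{n\ne m}|n^2-m^2|^{-1}=O(\tau_m)$, and the low block via $P_{\mathrm{low}}\Psi(z)=-(z-A_k)^{-1}(A_k-z_0)P_{\mathrm{low}}\Psi(z_0)$. The obstacle you flag for $\rho_c(z)-\ell$ does not arise, and the numerator has no bounded remainder. The paper reads that coordinate directly off the second identity in \eqref{eq:tailpair}, which holds at every $z\in\rho(A_k)$, together with $\ipL{\ell}{\Lap\tep j}=0$ (since $\Lap\ell=0$ and $\tep j\in H^2_0$), so it is exactly $\beta^c[\tep j]/(z-\lp j)$. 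Your detour through \eqref{eq:resrho} at $z_0$ lands on the same formula.
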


\begin{proposition}[High modes of $A_k'$]\label{prop:high}
Assume Conditions \ref{cond:S}--\ref{cond:th2}. There are $\bar\varepsilon_1,\bar\varepsilon_2>0$,
$\bar\varepsilon_1$ as in Lemma \ref{lem:step1} and depending on $c,\theta_1$ only,
$\bar\varepsilon_2$ depending in addition on $\theta_2$ and on $\varepsilon_1$, such that for
$0<|\varepsilon_1|\le\bar\varepsilon_1$, $0<|\varepsilon_2|\le\bar\varepsilon_2$ and
every $m\ge m_0$:
\begin{enumerate}
\item[(i)] $A_k'$ has exactly one eigenvalue $\mu'_m$ in $\{|z-\lp m|<d_m/4\}$, it is
simple, and $|\mu'_m-\lp m|\le C(|\varepsilon_1|+|\varepsilon_2|)/Re$;
\item[(ii)] the spectral projections satisfy
$\|\hat P_m-P_m\|\le C(|\varepsilon_1|+|\varepsilon_2|)/m$, where $\hat P_m$ is the
Riesz projection of $\hat A=\Xi^{-1}A_k'\Xi$; consequently the eigenvector
$\chi'_m\in\Hkp$ of $A_k'$, suitably normalized, satisfies
$\|\chi'_m-\ep m\|_k\le C/m$;
\item[(iii)] the residue coordinate $b'_m$ of $\Psi^{(2)}$ at $\mu'_m$ (in the
normalization of (ii)) satisfies $|b'_m-b_m|\le\tfrac12|b_m|$; in particular
$b'_m\ne0$ and $c_1m/(2Re)\le|b'_m|\le2c_2m/Re$;
\item[(iv)] $\sigma(A_k')$ consists exactly of the low
eigenvalues of Lemmas \ref{lem:step1}--\ref{lem:step2}, of total algebraic multiplicity
$m_0-1$, and the points $\mu'_m$, $m\ge m_0$; in particular
$\{\Rea z>\|B_k\|+1\}\subset\rho(A_k')$.
\end{enumerate}
\end{proposition}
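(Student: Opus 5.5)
The plan is to work with the secular determinant $\det\mathfrak D=D_1D_2$ of Proposition \ref{prop:factor} on the circles $\Gamma_m=\{|z-\lp m|=d_m/4\}$. Everything is then read off from Lemma \ref{lem:counting} together with the estimates of Lemma \ref{lem:tails}.

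\textbf{Step 1: $\det\mathfrak D$ on $\Gamma_m$, and part (i).} By Lemma \ref{lem:tails}(i),(ii), on $\Gamma_m$ we have $D_1(z)=1-\varepsilon_1\ipL{\ell}{g_1}+O(|\varepsilon_1|\tau_m)$. Condition \ref{cond:S} bounds this away from zero, uniformly in $m\ge m_0$. Since $\varepsilon_2$ enters $\mathfrak D$ only through the bounded entries $\ipL{\rho_c}{g_2}$, $\ipL{\Psi}{g_2}$, the same estimates give $D_2=1+O(|\varepsilon_2|)$, with constants independent of $m$. Thus $\det\mathfrak D$ is zero-free and analytic inside and on $\Gamma_m$, because the only pole there would sit at the simple eigenvalue $\lp m$.
\begin{itemize}
\item At $\lp m$, the pairings \eqref{eq:tailpair} show that $\det\mathfrak D$ has at most a simple pole, with residue of size $O(|\varepsilon_1|+|\varepsilon_2|)\cdot m^{-1}$ times the $g_i$-pairings of $\ep m$.
\item A Rouch\'e-type argument applied to $(z-\lp m)\det\mathfrak D(z)$ shows that the argument principle in \eqref{eq:counting} gives $0$ for the winding of $\det\mathfrak D$ less its pole. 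More precisely, $\nu(\Gamma_m;A_k')=1+(\#\text{zeros})-(\text{pole order})$, and the count of zeros minus poles inside is $0$: $\det\mathfrak D$ is close to the nonvanishing constant $1-\varepsilon_1\ipL{\ell}{g_1}$ on $\Gamma_m$, so its winding number on $\Gamma_m$ is zero.
\item Hence $\nu(\Gamma_m;A_k')=1$, i.e. one simple eigenvalue $\mu'_m$.
\end{itemize}
Its location comes from writing $\det\mathfrak D(z)=0$ as $z-\lp m=\varepsilon_1 r_1(z)+\varepsilon_2 r_2(z)$, where the $r_i$ are bounded by the residue coefficients, which are $O(\beta^c_m\ipL{\ep m}{g_i})=O(m/Re\cdot m^{-1})$. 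This gives $|\mu'_m-\lp m|\le C(|\varepsilon_1|+|\varepsilon_2|)/Re$.

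\textbf{Step 2: projections, part (ii).} Conjugate by $\Xi$ of Lemma \ref{lem:Xi} to work on $\Hk$. Write $\hat P_m-P_m=\frac1{2\pi i}\oint_{\Gamma_m}\big[\Xi^{-1}(z-A_k')^{-1}\Xi-(z-A_k)^{-1}\big]dz$, and insert the resolvent formula \eqref{eq:resAp}. The difference consists of three pieces:
\begin{itemize}
\item the $\ell$-correction, bounded by Lemma \ref{lem:Xi};
\item $p\,\rho_c+\tilde q\,\Psi$, with $(p,\tilde q)$ bounded by $|\varepsilon_i|\,|\ipL{W_{\mathrm p}}{g_i}|\le C|\varepsilon_i|m^{-2}\|F\|_k$ from Lemma \ref{lem:tails}(iii);
\item $\|\rho_c-\ell\|_k,\|\Psi\|_k\le C$ from Lemma \ref{lem:tails}(iv).
\end{itemize}
The integrand is therefore $O((|\varepsilon_1|+|\varepsilon_2|)m^{-2})$. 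The contour has length $O(d_m)=O(m/Re)$, so the integral is $O((|\varepsilon_1|+|\varepsilon_2|)/m)$. The remaining part of the integrand, namely $\Xi^{-1}(\cdot)\Xi$ versus the identity acting on the residue, needs the same bound via Lemma \ref{lem:tails}(iii). The eigenvector estimate then follows: take $\chi'_m\propto\Xi\hat P_m\ep m$, and combine with \eqref{eq:kato}.

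\textbf{Step 3: residues, part (iii).} By \eqref{eq:Psi2} and \eqref{eq:Psi1}, $\Psi^{(2)}=\Psi^{(1)}/D_2$. Its residue at $\mu'_m$ is obtained by integrating over $\Gamma_m$: $b'_m\chi'_m=\frac1{2\pi i}\oint\Psi^{(2)}$, to be compared with $b_m\ep m=\frac1{2\pi i}\oint\Psi$ from \eqref{eq:tailpair}. The difference is
\[
\Psi^{(2)}-\Psi=\Big(\tfrac1{D_2}-1\Big)\Psi+\frac{\varepsilon_1\ipL{\Psi}{g_1}}{D_1D_2}\rho_c .
\]
On $\Gamma_m$ this is $O((|\varepsilon_1|+|\varepsilon_2|)\tau_m)$ in $\|\cdot\|_k$. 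Integrating over a contour of length $\sim m/Re$ gives an error $O((|\varepsilon_1|+|\varepsilon_2|)(1+\log m)/Re)$.

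\emph{This is the main obstacle:} the error is not obviously $\le\frac12|b_m|\sim m/Re$ uniformly in $m$. Proposition \ref{prop:bmc} saves it, since $|b_m|\ge c_1m/Re$ beats $\log m$ after enlarging $m_0$ or shrinking $\bar\varepsilon_i$. The coordinate along $\chi'_m$ versus $\ep m$ is handled with Step 2.

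\textbf{Step 4: global count, part (iv).} Apply \eqref{eq:counting} on the circles $C_j$ of Lemma \ref{lem:tails}. There $\det\mathfrak D$ is again uniformly close to the constant $1-\varepsilon_1\ipL{\ell}{g_1}$, so the winding is zero and $\nu(C_j;A_k')=\nu(C_j;A_k)=j$. This total, minus the high eigenvalues $\mu'_m$ for $m_0\le m\le j$ and minus the low count $m_0-1$ of Propositions \ref{lem:step1}--\ref{lem:step2}, leaves nothing, so no spectrum is unaccounted for. The same smallness on the rest of the plane outside the disks, including the real ray $z\ge\Lambda_0$, shows $\det\mathfrak D\ne0$ there. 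Then Proposition \ref{prop:resAp} gives $\{\Rea z>\|B_k\|+1\}\subset\rho(A_k')$.
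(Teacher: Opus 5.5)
Your route is essentially the paper's: the zero winding of $\det\mathfrak D=D_1D_2$ on $\Gamma_m$ fed into Lemma~\ref{lem:counting}, contour integrals of the resolvent difference and of $\Psi^{(2)}$, and the count on the circles $C_j$. Step~3's vector-valued comparison of $\oint\Psi^{(2)}$ with $\oint\Psi$ is a harmless variant of the paper's scalar pairing with $\tep m$.

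Two small points in Step~1. First, $\det\mathfrak D$ is not ``zero-free and analytic'' inside $\Gamma_m$ when its residue $\varrho_m$ at $\lp m$ is nonzero: it then has a pole and a zero there. Second, when $\varrho_m=0$ the eigenvalue is $\lp m$ itself, not a zero of $\det\mathfrak D$. Only the zero winding is needed, and you do use that.

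\textbf{The gap in Step~2.} You list an ``$\ell$-correction, bounded by Lemma~\ref{lem:Xi}'' and a ``remaining part'' of the $\Xi$-conjugation to be bounded the same way. No bound of that kind gives the claimed rate. A term of the resolvent difference controlled only through $\|\Xi-I\|\le C|\varepsilon_1|$ has size about $|\varepsilon_1|/d_m$ on $\Gamma_m$. Over a circle of length $\pi d_m/2$ it integrates to $O(|\varepsilon_1|)$, not $O(\varepsilon/m)$. Then $\sum_m\|\hat P_m-P_m\|^2$, which the Bari--Markus step of Theorem~\ref{thm:spectral} and Proposition~\ref{prop:qc2} rely on, would diverge.

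What makes the estimate true is an exact cancellation:
\begin{itemize}
\item For $F=\Xi G$ one has $F-F(1)\ell=G$, so $W_{\mathrm p}=(z-A_k)^{-1}G$ in \eqref{eq:resAp}.
\item $\Xi^{-1}$ fixes $W_{\mathrm p}$ and $\Psi$, and sends $\rho_c$ to $\rho_c-\ell$.
\end{itemize}
Hence $(z-\hat A)^{-1}-(z-A_k)^{-1}$ is exactly the rank-two operator \eqref{eq:resdiff}, with nothing left over. Both of its scalar factors carry $\varepsilon_i\ipL{(z-A_k)^{-1}\cdot}{g_i}=O(\varepsilon m^{-2})$.

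The same point recurs for the eigenvector. The bound $\|\Xi\hat\chi_m-\hat\chi_m\|_k=O(1/m)$ holds because $\Xi-I$ is the rank-one map $V\mapsto\varepsilon_1\ipL{V}{g_1}\ell/(1-\varepsilon_1\ipL{\ell}{g_1})$. Together with $|\ipL{\hat\chi_m}{g_1}|\le C\|\ep m\|+C\varepsilon/m=O(1/m)$, this gives the rate. The operator bound of Lemma~\ref{lem:Xi} does not.

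\textbf{The gap in Step~4.} The count on $C_j$ is correct, but the half-plane inclusion does not follow from ``the same smallness on the rest of the plane.'' Lemma~\ref{lem:tails} controls $\det\mathfrak D$ only on $\Gamma_m$, on $C_j$ and on a real ray. In the low region $\det\mathfrak D$ has poles of order $\kappa_1$ and is nowhere near constant. So you have not shown $\det\mathfrak D\ne0$ on $\{\Rea z>\|B_k\|+1\}$, and Proposition~\ref{prop:resAp} cannot be invoked.

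Argue instead from the spectrum you have already identified. It consists of:
\begin{itemize}
\item the $\mu'_m$, each within $C\varepsilon/Re$ of $\lp m$;
\item the low eigenvalues, which converge to the low plant eigenvalues with Puiseux displacements $O(|\varepsilon_1|^{1/\kappa_1})+O(|\varepsilon_2|^{1/\kappa_2})$.
\end{itemize}
Shrink $\bar\varepsilon_1$ and $\bar\varepsilon_2$ once more so that these displacements are $<1$; then Proposition~\ref{prop:ops}(ii) gives the claim. This extra smallness is a requirement on the thresholds that your proposal never imposes.

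\textbf{Dependence of the thresholds.} The statement asks that $\bar\varepsilon_1$ depend on $c,\theta_1$ only. If you lump $|\varepsilon_1|+|\varepsilon_2|$ under one constant of Lemma~\ref{lem:tails}, which depends on $\theta_2$, then $\bar\varepsilon_1$ depends on $\theta_2$, which is chosen after $\varepsilon_1$. Keep the $g_1$- and $g_2$-pairings separate, as $C_1|\varepsilon_1|+C_2|\varepsilon_2|$, and impose each smallness condition on its own term.
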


\begin{proof}
Every bound $C(|\varepsilon_1|+|\varepsilon_2|)$ below abbreviates
$C_1|\varepsilon_1|+C_2|\varepsilon_2|$, with $C_1$ carrying the $g_1$ pairings and
depending on $k,Re,c,\theta_1$ alone and $C_2$ the $g_2$ pairings; each smallness
requirement is imposed on the two terms separately, so $\bar\varepsilon_1$ is fixed
before $\theta_2$ is chosen.

(i) Let $s_1\coloneqq1-\varepsilon_1\ipL{\ell}{g_1}\in[\tfrac12,\tfrac32]$ for
$\varepsilon_1$ small. By Lemma \ref{lem:tails}, on $\Gamma_m$,
$|D_1-s_1|\le C|\varepsilon_1|\tau_m$ and, using
$\Psi^{(1)}=\Psi+[\varepsilon_1\ipL{\Psi}{g_1}/D_1]\rho_c$,
$|D_2-1|\le C|\varepsilon_2|\tau_m(1+|\varepsilon_1|)$; hence
$|\det\mathfrak D-s_1|\le C(|\varepsilon_1|+|\varepsilon_2|)\tau_m$ on $\Gamma_m$.
Inside the disk, $\det\mathfrak D$ is meromorphic with the single possible pole
$\lp m$; the residue matrix of $\mathfrak D$ at $\lp m$ is rank one (it is
$-(\varepsilon_1\ipL{\ep m}{g_1},\varepsilon_2\ipL{\ep m}{g_2})^{\mathsf T}\otimes(\beta^c[\tep m],b_m)$,
from the residues $\beta^c[\tep m]\ep m$ of $\rho_c$ and $b_m\ep m$ of $\Psi$), so
$\det\mathfrak D$ has at most a simple pole, of residue $\varrho_m$ with
$|\varrho_m|\le C(|\varepsilon_1|+|\varepsilon_2|)/Re$. Apply Rouch\'e to
$f(z)=(z-\lp m)\det\mathfrak D(z)$, analytic in the disk, against $s_1(z-\lp m)$:
\[
|f-s_1(z-\lp m)|\le\tfrac{d_m}4\,C(|\varepsilon_1|+|\varepsilon_2|)\tau_m+|\varrho_m|(1+C\tau_m)
<s_1\tfrac{d_m}4\quad\text{on }\Gamma_m
\]
for $|\varepsilon_1|+|\varepsilon_2|$ small (uniformly in $m$, since
$|\varrho_m|/d_m\le C\varepsilon/m$). So $f$ has exactly one zero; by Lemma
\ref{lem:counting} the disk contains exactly one eigenvalue of $A_k'$, simple, at the
zero of $\det\mathfrak D$ if $\varrho_m\ne0$ (namely
$\mu'_m=\lp m-\varrho_m/s_1+O(\varepsilon\tau_m\varrho_m)$) or at $\lp m$ itself if
$\varrho_m=0$; in both cases $|\mu'_m-\lp m|\le C(|\varepsilon_1|+|\varepsilon_2|)/Re$.

(ii) By \eqref{eq:resdiff} and Lemma \ref{lem:tails}(iii),(iv), the resolvent difference
on $\Gamma_m$ has norm at most
$\|\mathfrak D^{-1}\|\,C(|\varepsilon_1|+|\varepsilon_2|)m^{-2}\cdot C$, that is, at
most $C'\varepsilon\,Re\,m^{-2}$; so $\|\hat P_m-P_m\|\le\tfrac{d_m}4\cdot C'\varepsilon\,Re\,m^{-2}\le C\varepsilon/m<1$.
With $\hat\chi_m\coloneqq\hat P_m\ep m/\ip{\hat P_m\ep m}{\tep m}$ and
$\chi'_m\coloneqq\Xi\hat\chi_m$: $\|\hat\chi_m-\ep m\|_k\le C\varepsilon/m$ and
$\|\Xi\hat\chi_m-\hat\chi_m\|_k\le C|\varepsilon_1|\,|\ipL{\hat\chi_m}{g_1}|\le C\varepsilon/m$,
since $|\ipL{\ep m}{g_1}|\le C/m$.

(iii) Since $\tep m\in H^2_0$, Lemma \ref{lem:pairconv} gives
$L_m[v]\coloneqq-\ipL{v}{\Lap\tep m}=\ip{v}{\tep m}$ for \emph{every} $v\in H^1$ with
$v(0)=0$, boundary traces of $v$ notwithstanding. $L_m[\Psi^{(2)}(z)]$ is analytic on
$\rho(A_k')$, and by \eqref{eq:tailpair} and \eqref{eq:Psi2},
\[
L_m\big[\Psi^{(2)}(z)\big]=\frac{b_m+F(z)\beta^c[\tep m]}{(z-\lp m)\,D_2(z)},\qquad
F=\frac{\varepsilon_1\ipL{\Psi}{g_1}}{D_1} .
\]
Integrate over $\Gamma_m$: the left side gives $b'_mL_m[\chi'_m]$ (the only singularity of
$L_m[\Psi^{(2)}]$ inside $\Gamma_m$ is the eigenvalue $\mu'_m$: $\Psi^{(2)}$ is
analytic on $\rho(A_k')$, and by (i) the interior of $\Gamma_m$ meets $\sigma(A_k')$
only at $\mu'_m$). On $\Gamma_m$, $|F|\le C|\varepsilon_1|\tau_m$,
$|D_2-1|\le C|\varepsilon_2|\tau_m$, $|\beta^c[\tep m]|\le Cm/Re$ by Proposition
\ref{prop:bmc}, so the right side is
$b_m\big(1+O(\varepsilon\tau_m)\big)+O(\varepsilon\tau_m\,m/Re)$, i.e.
$b_m+O(\varepsilon(1+\log m)/Re)$. And
$L_m[\chi'_m]=\ip{\chi'_m}{\tep m}=1+O(\varepsilon/m)$, the normalization of (ii) up to
the $\Xi$-correction. Hence $|b'_m-b_m|\le C\varepsilon(1+\log m)/Re\le\tfrac12|b_m|$ for $m\ge m_0$ and
$\varepsilon$ small, by \eqref{eq:bm}.

(iv) Let $\Gamma_{\mathrm{big}}\coloneqq\{|z|=r_{\mathrm{big}}\}$ be a fixed circle
enclosing the low region inside the region of Proposition \ref{prop:kato}; on it the
entries of $\mathfrak D-I$ are $O(|\varepsilon_1|+|\varepsilon_2|)$ uniformly, so for
$\varepsilon$ small $\det\mathfrak D\ne0$ on $\Gamma_{\mathrm{big}}$ with winding
number zero, and Lemma \ref{lem:counting} gives
$\nu(\Gamma_{\mathrm{big}};A_k')=\nu(\Gamma_{\mathrm{big}};A_k)=m_0-1$: the low
eigenvalues of Lemmas \ref{lem:step1}--\ref{lem:step2} carry the entire low
multiplicity. Next, on the circle $C_j$ of Lemma \ref{lem:tails},
$|\det\mathfrak D-s_1|\le C(|\varepsilon_1|+|\varepsilon_2|)\tau_j<s_1$, so
$\det\mathfrak D$ is zero- and pole-free on $C_j$ with winding number zero, and Lemma
\ref{lem:counting} gives
$\nu(C_j;A_k')=\nu(C_j;A_k)$; and $\nu(C_j;A_k)=j$: for $m\le j$,
$|\lp m|\le|\lambda_m|+\|B_k\|<|\lambda_m|+d_j/2\le r_j$, while for $m\ge j+1$,
$|\lp m|\ge|\lambda_m|-\|B_k\|>r_j$, so $C_j$ encloses exactly the low block and the
high modes $m_0,\dots,j$, of total multiplicity $(m_0-1)+(j-m_0+1)=j$. The eigenvalues
already identified inside $C_j$ --- the low ones and $\mu'_m$ for $m_0\le m\le j$ ---
number exactly $j$ counted with multiplicity, so there is no other spectrum inside
$C_j$; since $r_j\to\infty$, every point of $\mathbb C$ lies inside some $C_j$, so
there is no other spectrum at all. The half-plane statement follows by shrinking $\bar\varepsilon_1,\bar\varepsilon_2$
once, the constraint on $\bar\varepsilon_1$ involving first-stage constants only: the low eigenvalues converge, as $\varepsilon_1,\varepsilon_2\to0$, to the low
plant eigenvalues (Lemmas \ref{lem:step1}--\ref{lem:step2}; the displacements are
Puiseux, $O(|\varepsilon_1|^{1/\kappa_1})+O(|\varepsilon_2|^{1/\kappa_2})$, so take
them of modulus $<1$), the plant spectrum has $\Rea z\le\|B_k\|$ by Proposition
\ref{prop:ops}(ii), and the high-mode displacements are $O(\varepsilon/Re)$ by (i).
\end{proof}

\subsection{Proof of Theorem \ref{thm:spectral}}\label{sec:spectral}

Theorem \ref{thm:spectral}, stated in Section \ref{sec:main1}, is assembled here. The
Fredholm construction of Section \ref{sec:part-fredholm} needs more than the theorem
states: how far the pre-feedback moves each plant eigenvalue, how close the
pre-compensated eigenvectors stay to the plant's, and how fast the tangential
coefficients grow. Those three quantities are recorded next.

\begin{proposition}[Pre-compensated spectrum: location and asymptotics]\label{prop:spectral-quant}
Under the hypotheses of Theorem \ref{thm:spectral}, with $\{\lp m\}_{m\ge m_0}$ the
plant eigenvalues of Proposition \ref{prop:kato}, $d_m$ the Stokes gaps of Proposition
\ref{prop:stokes-all}(ii), and $\ep m$ the high plant eigenvectors:
\begin{enumerate}
\item[(i)] the total multiplicity of $\sigma(A_k')$ in the low region is $m_0-1$, and
for $m\ge m_0$ the eigenvalue $\mu'_m$ is the unique one in $\{|z-\lp m|<d_m/4\}$,
with $|\mu'_m-\lp m|\le C(|\varepsilon_1|+|\varepsilon_2|)/Re$;
\item[(ii)] $\|\chi'_m-\ep m\|_k\le C/m$ for $m\ge m_0$;
\item[(iii)] $c_1m/(2Re)\le|b'_m|\le2c_2m/Re$ for $m\ge m_0$.
\end{enumerate}
\end{proposition}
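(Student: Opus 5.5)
This is a collation result, so the plan is to read each item off Proposition \ref{prop:high}, with the parameter choices of Theorem \ref{thm:spectral} left unchanged. The hypotheses of Theorem \ref{thm:spectral} are: $c\notin\mathcal C$, $\theta_1\notin\Theta_1$, $0<|\varepsilon_1|\le\bar\varepsilon_1$, $\theta_2\notin\Theta_2$, $0<|\varepsilon_2|\le\bar\varepsilon_2$. These are exactly Conditions \ref{cond:S}--\ref{cond:th2}, provided the two thresholds are taken as the minima of those in Propositions \ref{lem:step1}, \ref{lem:step2} and \ref{prop:high}. I will check that this is legitimate, i.e. that the thresholds are chosen in the same nested order in all three propositions. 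That is, $\bar\varepsilon_1$ depends only on $(c,\theta_1)$ and first-stage constants, as stated in the proof of Proposition \ref{prop:high}, and $\bar\varepsilon_2$ is chosen after $\varepsilon_1$ and $\theta_2$. Taking minima therefore respects the order of quantifiers in Theorem \ref{thm:spectral}.

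Item (i) comes from two parts of Proposition \ref{prop:high}. Part (iv) gives that the low eigenvalues carry total algebraic multiplicity $m_0-1$; this was obtained from the winding-number-zero count on $\Gamma_{\mathrm{big}}$ via Lemma \ref{lem:counting}. Part (i) gives uniqueness of $\mu'_m$ in $\{|z-\lp m|<d_m/4\}$ together with the bound $|\mu'_m-\lp m|\le C(|\varepsilon_1|+|\varepsilon_2|)/Re$. One point needs care here. Proposition \ref{prop:kato} only establishes that the plant eigenvalues $\lp m$ are simple for $m\ge n_1$, while the high index set used here starts at $m_0\ge n_1$. The modes with $n_1\le m<m_0$ are counted as low: they sit inside the disks $\{|z-\lambda_m|\le d_m/4\}$ that were placed in the low region in Proposition \ref{lem:step1}. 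So the multiplicity bookkeeping $(m_0-1)+(j-m_0+1)=j$ on $C_j$ is consistent.

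Item (ii) is Proposition \ref{prop:high}(ii), which gives $\|\chi'_m-\ep m\|_k\le C\varepsilon/m\le C/m$ in the normalization $\chi'_m=\Xi\hat\chi_m$. Item (iii) is Proposition \ref{prop:high}(iii): since $|b'_m-b_m|\le\frac12|b_m|$, combining with $c_1m/Re\le|b_m|\le c_2m/Re$ from Proposition \ref{prop:bmc} yields $\frac12|b_m|\le|b'_m|\le\frac32|b_m|$. Hence $c_1m/(2Re)\le|b'_m|\le 2c_2m/Re$.

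The only step I expect to need real attention is normalization consistency. The residue coordinate $b'_m$ in (iii) has to be computed in the same normalization of $\chi'_m$ as in (ii). This is the case, because the proof of Proposition \ref{prop:high}(iii) uses $L_m[\chi'_m]=1+O(\varepsilon/m)$, which follows from the normalization fixed in (ii). I will state explicitly that Theorem \ref{thm:spectral}(iii) and the present statement both refer to this single normalization. Everything else is a direct citation.
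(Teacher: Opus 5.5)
Your proposal is correct and follows the paper's proof. The paper likewise reads (i) off the winding-number count on $\Gamma_{\mathrm{big}}$ (Lemma \ref{lem:counting}) together with Proposition \ref{prop:high}(i), and reads (ii) and (iii) off Proposition \ref{prop:high}(ii) and (iii). Your added checks on the nesting of the thresholds, and on using a single normalization of $\chi'_m$ in both (ii) and (iii), agree with how those results are stated.
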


\begin{proof}[Proof of Theorem \ref{thm:spectral}]
Combine Propositions \ref{lem:step1}, \ref{lem:step2} (low region) and Proposition
\ref{prop:high} (high modes and resolvent set), with the counting Lemma
\ref{lem:counting} on the circle $\Gamma_{\mathrm{big}}=\{|z|=r_{\mathrm{big}}\}$
enclosing the low region ($\det\mathfrak D\ne0$ there, with winding number zero since
$\det\mathfrak D\to s_1\ne0$ uniformly as $\varepsilon\to0$ on the fixed circle) for the
low count $m_0-1$. For the Riesz basis, compare the spectral subspaces of
$\hat A=\Xi^{-1}A_k'\Xi$ with the \emph{plant} decomposition, not the Stokes one. By
Corollary \ref{cor:decomp} and Lemma \ref{lem:bm}, there is a bounded invertible
$\Theta$ on $\Hk$ carrying the disjoint family
$\{Q^A_{\mathrm{low}}\}\cup\{P_m\}_{m\ge m_0}$,
$Q^A_{\mathrm{low}}\coloneqq P_{\mathrm{low}}+\sum_{n_1\le m<m_0}P_m$, to mutually
orthogonal projections summing to $I$ ($\Theta$ maps the Riesz basis of Corollary
\ref{cor:decomp} to the underlying orthonormal basis, so
$\Theta Q^A_{\mathrm{low}}\Theta^{-1}$ and $\Theta P_m\Theta^{-1}$ are orthogonal).
Apply Lemma \ref{lem:bm} to the conjugated family
$\{\Theta\hat P_{\mathrm{low}}\Theta^{-1}\}\cup\{\Theta\hat P_m\Theta^{-1}\}_{m\ge m_0}$
against these:
$\|\Theta(\hat P_{\mathrm{low}}-Q^A_{\mathrm{low}})\Theta^{-1}\|\le\kappa_\Theta\,C\varepsilon<1$
for $\varepsilon$ small ($\hat P_{\mathrm{low}}-Q^A_{\mathrm{low}}$ is the contour
integral over $\Gamma_{\mathrm{big}}$ of the $O(\varepsilon)$ resolvent difference
\eqref{eq:resdiff}), and
$\sum_{m\ge m_0}\|\Theta(\hat P_m-P_m)\Theta^{-1}\|^2\le\kappa_\Theta^2C^2\varepsilon^2\sum_mm^{-2}<\infty$
by Proposition \ref{prop:high}(ii), with $\kappa_\Theta\coloneqq\|\Theta\|\|\Theta^{-1}\|$.
The lemma yields a Riesz basis, which $\Theta^{-1}$ carries back. Within the finite-dimensional range of
$\hat P_{\mathrm{low}}$ pass to the eigenvector basis (all low eigenvalues simple), a
finite change of basis; then transport by the isomorphism $\Xi$ of Lemma
\ref{lem:Xi}. Item (iii) collects
Proposition \ref{lem:step1}(iii), Proposition \ref{lem:step2}(i)--(ii), and Proposition
\ref{prop:high}(iii). \end{proof}

\begin{proof}[Proof of Proposition \ref{prop:spectral-quant}]
Each item is the corresponding clause of the results just combined: (i) from the
counting above and Proposition \ref{prop:high}(i), (ii) from Proposition
\ref{prop:high}(ii) with $\|\ep m-\chi'_m\|_k\le C/m$, and (iii) from Proposition
\ref{prop:high}(iii).
\end{proof}

\begin{remark}
Theorem \ref{thm:spectral}(iii) is the Fattorini--Hautus condition for the pair
($A_k'$, tangential wall velocity): by the trace identity, $b'_n$ is (a nonzero multiple of)
the tangential trace functional evaluated on the adjoint eigenvector of $A_k'$ at
$\overline{\mu'_n}$. Items (i)--(iii) are the hypotheses of the abstract one-input
Fredholm theory \cite{HayatLoko26}, here obtained for every plant rather than assumed.
The Riesz basis of Theorem \ref{thm:riesz} is the step at which that theory is usually
obstructed; \cite{GHXZ25} reaches it for skew-adjoint operators by a compactness and
duality argument in place of the quadratically close criterion.
\end{remark}

\section{Two wall inputs suffice}\label{sec:necessity}

Theorem \ref{thm:2ctrl}, stated in Section \ref{sec:main1}, is proved here: the two
wall velocity components control the linearized flow at every wavenumber and Reynolds
number, with nothing assumed about the plant. With it come Propositions
\ref{prop:oneinput} and \ref{prop:ceiling}, each conditional on a spectral structure
not exhibited for this flow, the second of which is why the design of Section
\ref{sec:model} uses two stages. A defect of this kind
has forced a change of setup before: the water tank of \cite{CoronHXZ22} is
uncontrollable about its uniform steady states, and Fredholm backstepping is reached
there through a rank-one modification of the control operator, which appears in the
closed loop as a dynamic extension.

\begin{proof}[Proof of Theorem \ref{thm:2ctrl}]
First, solutions exist classically: lift the boundary data by
$\widehat V(t,y)\coloneqq V_c(t)\,y^2(3-2y)-\tau(t)\,y^2(1-y)$, which carries the four
required traces; the remainder solves the no-slip problem with a $C_c^\infty$-in-time
$\Hk$-valued forcing, given by the variation-of-constants formula for the analytic
semigroup of $A_k$, so $V$ is classical and $V(T)\in\Hk$. Let $z_0\in\Hk$ annihilate
the reachable set at time $T$. For $\delta\in(0,T)$, the controls supported in
$(0,T-\delta)$ reach at time $T$ exactly $e^{\delta A_k}$ applied to the
time-$(T-\delta)$ reachable set, so $z_\delta\coloneqq e^{\delta A_k^*}z_0$ annihilates
the reachable set at time $T-\delta$, and
$z_\delta\in\bigcap_m\mathcal D((A_k^*)^m)$ by analyticity of the semigroup. Let
$\zeta(t)\coloneqq e^{(T-\delta-t)A_k^*}z_\delta=e^{(T-t)A_k^*}z_0$ for $t\le T-\delta$;
then $\zeta(t)\in H^4\cap H^2_0$ for $t\le T-\delta$, and $s\mapsto e^{sA_k^*}z_0$ is
analytic on $(0,\infty)$. Along a classical solution,
using \eqref{eq:trace} and $\mathcal L_k^*\zeta=\Lap A_k^*\zeta$,
\[
\frac{\dd}{\dd t}\big[-\ipL{V(t)}{\Lap\zeta(t)}\big]
=-\ipL{\mathcal L_kV}{\zeta}+\ipL{V}{\Lap A_k^*\zeta}
=-\tau(t)\,\alpha[\zeta(t)]+V_c(t)\,\beta[\zeta(t)] ,
\]
where the mass pairing has no boundary terms because $\zeta(t)\in H^2_0$. Integrating
from $0$ to $T-\delta$ and using $V(0)=0$, $V(T-\delta)\in H^1_0$,
$\zeta(T-\delta)=z_\delta\in H^2$:
\[
\ip{V(T-\delta)}{z_\delta}=\int_0^{T-\delta}\big(V_c\,\beta[\zeta]-\tau\,\alpha[\zeta]\big)\dd t .
\]
Annihilation for all controls forces $\alpha[\zeta(t)]=\beta[\zeta(t)]=0$ on
$(0,T-\delta)$, i.e. the traces of $e^{sA_k^*}z_0$ vanish for $s\in(\delta,T)$; since
$\delta\in(0,T)$ was arbitrary and $s\mapsto e^{sA_k^*}z_0$ is analytic, they vanish
for all $s>0$. Decompose by Corollary \ref{cor:decomp} applied
to $A_k^*$ (same proof): $z_0=\sum_\lambda P^*_\lambda z_0$ over the distinct eigenvalues
$\bar\lambda$ of $A_k^*$, where for the high modes the projections are rank one. Then, with
$N_\lambda\coloneqq(A_k^*-\bar\lambda)$ nilpotent on $\operatorname{ran}P^*_\lambda$,
\[
\alpha[\zeta(T-s)]=\sum_\lambda e^{\lambda s}\sum_{j\ge0}\frac{s^j}{j!}\,\alpha\big[N_\lambda^jP^*_\lambda z_0\big],
\]
absolutely convergent for each $s>0$: the terms with $m\ge m_0$ are bounded by
$Ce^{s\Rea\bar\lambda^{\mathrm p}_m}\,m\,|c_m|/Re$ with $(c_m)\in\ell^2$ and
$\Rea\lp m\le-cm^2/Re$, and $\alpha[\tep m]=O(m/Re)$ by Proposition \ref{prop:bmc}.
Order the distinct real parts decreasingly. If some $P^*_\lambda z_0\ne0$, let $R_1$ be
the largest real part carrying a nonzero projection and $G_1$ the (finite) group of
eigenvalues with $\Rea\bar\lambda=R_1$. Then
$e^{-R_1s}\alpha[\zeta(T-s)]=\sum_{\lambda\in G_1}e^{i(\Ima\lambda)s}p_\lambda(s)+O(e^{-gs})$
for some $g>0$, with $p_\lambda$ polynomials; vanishing for all $s>0$ forces every
$p_\lambda\equiv0$: with $d$ the maximal degree in the group, multiply by
$s^{-d}e^{-i(\Ima\bar\lambda_0)s}$ and Ces\`aro-average in $s$ to isolate the
degree-$d$ coefficient at the frequency $\Ima\bar\lambda_0$; repeat over the finitely
many frequencies, then descend inductively in degree $d-1,\dots,0$. Hence
$\alpha[N^j_\lambda P^*_\lambda z_0]=0$ for all $\lambda,j$, and likewise for $\beta$.
Fix $\lambda$ with $w\coloneqq P^*_\lambda z_0\ne0$ and let $j_*$ be maximal with
$N^{j_*}_\lambda w\ne0$. Then $N^{j_*}_\lambda w\in\ker(A_k^*-\bar\lambda)$ with both
traces zero, so $N_\lambda^{j_*}w=0$ by Lemma \ref{lem:inj}: contradiction. Hence
$P^*_\lambda z_0=0$ for every $\lambda$, and $z_0=0$ by Corollary \ref{cor:decomp} for
$A_k^*$.
\end{proof}

\begin{proof}[Proof of Proposition \ref{prop:oneinput}]
$E\coloneqq\ker(z_*-A_k)$ is two-dimensional and lies in $H^4\cap H^2_0$: every
$\phi\in E$ has $\phi(1)=\phi_y(1)=0$. The restriction $F|_E$ is a linear functional on a
two-dimensional space, so it has a nonzero kernel element $\phi$. Then $\phi$ satisfies
all boundary conditions of the closed loop ($\phi_y(1)=0=F[\phi]$) and
$\mathcal L_k\phi=z_*\Lap\phi$, so $e^{z_*t}\phi$ solves it. The normal case is
identical with $G|_E$. For the last statement, $\ker(A_k^*-\bar z_*)$ is
two-dimensional as well, the geometric multiplicities of $A_k$ and $A_k^*$ at
conjugate eigenvalues agreeing for an isolated eigenvalue of finite algebraic
multiplicity; the tangential functional $\alpha$ of \eqref{eq:alphabeta} restricted to
it is a linear functional on a two-dimensional space, hence has a nonzero kernel
element $\zeta$. Thus the Fattorini--Hautus condition fails at $z_*$ for the tangential
component, so no controller acting through it --- of any state dimension --- moves
$z_*$; the normal component is identical with $\beta$.
\end{proof}

\begin{proposition}[Multiplicity retained by one rank-one stage]\label{prop:ceiling}
Let $z_*\in\sigma(A_k)$ carry at least two Jordan chains of length $\ge2$; write
$\kappa_{\max}\ge\kappa_2\ge2$ for the two longest chain lengths and $\nu_*$ for the
algebraic multiplicity. Then for every real weight $g\in L^2$, every $c\in\mathbb R$ and
every $\varepsilon_1$ with $\varepsilon_1\ipL{\ell}{g}\ne1$, the operator $A_k^{(1)}$ of the first stage alone, that is
\eqref{eq:prefeed} with $\varepsilon_2=0$ and $F_k=0$, satisfies
\begin{equation}\label{eq:ceiling}
\nu(z_*;A_k^{(1)})\ \ge\ \nu_*-\kappa_{\max}\ \ge\ \kappa_2\ \ge\ 2 .
\end{equation}
The same bound holds for a rank-one stage acting through the tangential wall velocity
alone. Consequently, at such an eigenvalue one rank-one stage cannot produce an
operator with algebraically simple spectrum, whichever wall velocity components it
acts through, so a second stage is necessary.
\end{proposition}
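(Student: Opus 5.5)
The plan is to read the bound directly off the multiplicity counting Lemma~\ref{lem:counting}, specialised to a single rank-one stage, and to bound the pole order of the secular function by Lemma~\ref{lem:resform}. With $\varepsilon_2=0$ the secular matrix \eqref{eq:secular} is triangular and $\det\mathfrak D=D_1=1-\varepsilon_1\ipL{\rho_c(z)}{g}$, now with a general real weight $g\in L^2$ in place of $g_1$. Lemma~\ref{lem:counting} at the isolated point $z_*$ gives
\[
\nu(z_*;A_k^{(1)})=\nu_*+\ord_{z_*}D_1 .
\]
By Lemma~\ref{lem:resform}, $\rho_c$ has at $z_*$ a pole of order at most $\kappa_{\max}$, so $\ord_{z_*}D_1\ge-\kappa_{\max}$, whatever $c$, $g$ and $\varepsilon_1$. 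Hence $\nu(z_*;A_k^{(1)})\ge\nu_*-\kappa_{\max}$. Since $z_*$ carries at least two chains, of lengths $\kappa_{\max}$ and $\kappa_2$, we have $\nu_*\ge\kappa_{\max}+\kappa_2$. This gives the remaining inequalities $\nu_*-\kappa_{\max}\ge\kappa_2\ge2$.

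Before invoking Lemma~\ref{lem:counting} I have to check its hypotheses for an arbitrary weight.
\begin{itemize}
\item The space $\Hkp$ and the mass inverse of Definition~\ref{def:mass} require only $1-\varepsilon_1\ipL{\ell}{g}\ne0$, which is the stated assumption; Condition~\ref{cond:S} was used only to secure this.
\item Proposition~\ref{prop:resAp} and the proof of Lemma~\ref{lem:counting} use nothing about $g$ beyond $g\in L^2$. I would re-run the Gohberg--Sigal argument behind \eqref{eq:counting}, namely the factorisation of $z-A_k^{(1)}$ through $z-A_k$ and the finite-rank determinant, and confirm it goes through verbatim.
\item $D_1$ must not vanish identically near $z_*$, so that a small circle around $z_*$ avoids zeros of $D_1$ and, in addition to $z_*$, also avoids the other finitely many points of $\sigma(A_k)$. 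For this I would use Lemma~\ref{lem:tails}(ii), whose proof again needs only $g\in L^2$: it gives $\ipL{\rho_c(z)}{g}\to\ipL{\ell}{g}$ for real $z\to+\infty$, so $D_1\to1-\varepsilon_1\ipL{\ell}{g}\ne0$. A meromorphic function with a nonzero limit along a ray is not identically zero, so its zeros and poles are isolated.
\end{itemize}

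For the tangential-only rank-one stage, with boundary conditions $V(1)=0$ and $V_y(1)=\varepsilon\ipL{V}{g}$ on the state space $\Hk$, the same computation as in Proposition~\ref{prop:resAp} gives the scalar secular function $1-\varepsilon\ipL{\Psi(z)}{g}$. Its resolvent representation is $W_{\mathrm p}+\tilde q\,\Psi(z)$, and the counting identity holds with this function in place of $D_1$. By Lemma~\ref{lem:resform}, $\Psi$ also has pole order at most $\kappa_{\max}$ at $z_*$, and the identical inequality follows. Non-vanishing of this secular function comes from Lemma~\ref{lem:tails}(i), since $\ipL{\Psi(z)}{g}\to0$ along the real axis. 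Finally, $\nu(z_*;\cdot)\ge2$ at a single point means that point is not an algebraically simple eigenvalue, so no single rank-one stage yields simple spectrum and a second stage is needed.

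I expect the main obstacle to be justifying Lemma~\ref{lem:counting}, and its tangential analogue, for an arbitrary $L^2$ weight and in a possibly different state space. That lemma was stated for the specific pre-compensated operator, and the accounting of the pole of $\det\mathfrak D$ at an eigenvalue of $A_k$ is delicate. I would first try to rederive it by writing the resolvent difference of Proposition~\ref{prop:resAp} as a finite-rank perturbation and using the logarithmic residue identity for $\operatorname{tr}\oint(z-A)^{-1}$. Only the $L^2$ pairings $\ipL{\cdot}{g}$ and the trace structure of $\rho_c,\Psi$ enter that computation, never the exponential form of $g_1$.
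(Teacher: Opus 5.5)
Your proposal is correct and is essentially the paper's proof. With $\varepsilon_2=0$ the secular determinant reduces to $D_1$ (respectively $1-\varepsilon\ipL{\Psi(z)}{g}$ for the tangential stage), Lemma~\ref{lem:resform} bounds its pole order at $z_*$ by $\kappa_{\max}$, and Lemma~\ref{lem:counting} on a small circle around $z_*$ then gives $\nu(z_*;A_k^{(1)})\ge\nu_*-\kappa_{\max}\ge\kappa_2$.

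The hypothesis checks you add are details the paper leaves implicit, and they go through as you describe. These are that $D_1\not\equiv0$, via its limit $1-\varepsilon_1\ipL{\ell}{g}\ne0$ along the positive reals, and that the counting identity uses nothing about the weight beyond $g\in L^2$.
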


\begin{proof}
With $\varepsilon_2=0$, $\det\mathfrak D=D_1=1-\varepsilon_1\ipL{\rho_c(z)}{g}$. By
Lemma \ref{lem:resform}, the pole order of $z\mapsto\rho_c(z)$ at $z_*$ is at most
$\kappa_{\max}$, hence so is the pole order of $D_1$, and therefore
$\ord_{z_*}D_1\ge-\kappa_{\max}$. Lemma \ref{lem:counting} on a small circle around
$z_*$ gives $\nu(z_*;A_k^{(1)})\ge\nu_*-\kappa_{\max}=\sum_r\kappa_r-\kappa_{\max}\ge\kappa_2$.
For the tangential wall velocity the secular function is $1-\varepsilon\ipL{\Psi(z)}{g}$, with
the same pole-order bound. The final statements follow because an operator with an
eigenvalue of algebraic multiplicity $\ge2$ does not have simple spectrum.
\end{proof}

\section{Fredholm design and closed-loop stability}\label{sec:part-fredholm}

With Theorem \ref{thm:spectral} in hand, this part is short: the shifted Stokes
target, the transformation, and the closed loop.

\subsection{Fredholm design}\label{sec:design}

Standing assumptions from here on: Conditions \ref{cond:S}, \ref{cond:c},
\ref{cond:th1}, \ref{cond:th2} and $0<|\varepsilon_1|\le\bar\varepsilon_1$,
$0<|\varepsilon_2|\le\bar\varepsilon_2$, so that Theorem \ref{thm:spectral} is in force.

\begin{definition}[Target operator]\label{def:target}
Let the shift $q$ satisfy
\begin{equation}\label{eq:qdef}
q\ \ge\ \max\{\omega,\ \|B_k\|+2\}\qquad\text{and}\qquad
\lambda_n-q\in\rho(A_k')\ \text{for every }n\ge1 ,
\end{equation}
and define, on $\Hk$,
\begin{equation}\label{eq:Stilde}
\widetilde S_k\coloneqq S_k-qI,\qquad \mathcal D(\widetilde S_k)=\mathcal D ,
\end{equation}
the shifted Stokes operator, with eigenpairs $(\nu_n,e_n)$,
$\nu_n\coloneqq\lambda_n-q$, and $\nu_{\max}\coloneqq\nu_1=\lambda_1-q\le-\omega$.
\end{definition}

Since $q$ is free, the target decays at any prescribed rate. Its admissible values
omit only a locally finite subset of the half-line: on any compact
interval $[Q_1,Q_2]\subset[\|B_k\|+2,\infty)$, fix $N=N(Q_2)\ge m_0$ with
$d_n>4(Q_2+\|B_k\|+2)$ for $n\ge N$ and with $|\lambda_N|$ exceeding the radius of the
low region by $Q_2+1$. Then for every $q\in[Q_1,Q_2]$ and every $n\ge N$ the point
$\lambda_n-q$ lies inside a Stokes gap, at distance $\ge q-\|B_k\|-1\ge1$ from
$\mu'_n$, at distance $\ge d_n-q-\|B_k\|-1\ge q$ from every other $\mu'_m$ with
$m\ge m_0$, and at distance $\ge1$ below the bounded low spectrum: no $n\ge N$
resonates anywhere on the interval. For each of the finitely many $n<N$, the
resonance $\lambda_n-q=\mu'_m$ admits only finitely many $m$ while $q\in[Q_1,Q_2]$,
since $|\mu'_m|\to\infty$: finitely many excluded $q$ per compact interval. For the
fixed admissible $q$, write $n_q\coloneqq N(Q_2)$ for any $Q_2\ge q$.

The target lives on $\Hk=H^1_0$ with no-slip boundary conditions at both walls; the transformation
below maps the pre-compensated space $\Hkp$ onto $\Hk$. This is a difference from
the one-space Fredholm designs \cite{CoronLu2015,HayatLoko26,BKS26}: the nonlocal
boundary condition of the pre-feedback is carried entirely by the transformation, and
the target is an exactly Stokes-type system.

\begin{lemma}[Expansion of the tangential response]\label{lem:expand2}
$\Psi^{(2)}(z)\in\Hkp$ for $z\in\rho(A_k')$, it satisfies the resolvent formula
$\Psi^{(2)}(z)=\Psi^{(2)}(z_0)-(z-z_0)(z-A_k')^{-1}\Psi^{(2)}(z_0)$, and in the Riesz
basis of Theorem \ref{thm:spectral},
\begin{equation}\label{eq:expand2}
\Psi^{(2)}(z)=\sum_{m\ge1}\frac{b'_m}{z-\mu'_m}\,\chi'_m ,
\end{equation}
convergent in $\Hkp$, where $b'_m$ are the residue coordinates of Theorem
\ref{thm:spectral}(iii).
\end{lemma}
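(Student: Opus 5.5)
The plan has three steps: membership, the resolvent identity, and then the expansion, which is the only delicate one.

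\emph{Membership and resolvent identity.} By the verification following \eqref{eq:Psi2}, $\Psi^{(2)}(z)$ solves $\mathcal L_k\Psi^{(2)}=z\Lap\Psi^{(2)}$ with $\Psi^{(2)}(0)=\Psi^{(2)}_y(0)=0$ and $\Psi^{(2)}(1)=\varepsilon_1\ipL{\Psi^{(2)}}{g_1}$. Hence $\Psi^{(2)}(z)\in\Hkp$. It is not in $\mathcal D(A_k')$, because its tangential trace exceeds the feedback value by exactly $1$.

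For $z,z_0\in\rho(A_k')$, set $W\coloneqq\Psi^{(2)}(z)-\Psi^{(2)}(z_0)$. Both terms carry the same unit defect in the tangential condition, so the defect cancels and $W\in\mathcal D(A_k')$. By Definition \ref{def:Ap}, $A_k'W$ is the unique element of $\Hkp$ with
\[
\Lap(A_k'W)=\mathcal L_kW=z\Lap\Psi^{(2)}(z)-z_0\Lap\Psi^{(2)}(z_0).
\]
This gives $\Lap\big((z-A_k')W\big)=(z-z_0)\Lap\Psi^{(2)}(z_0)$. Both $(z-A_k')W$ and $(z-z_0)\Psi^{(2)}(z_0)$ lie in $\Hkp$, so uniqueness in Definition \ref{def:mass} yields $(z-A_k')W=(z-z_0)\Psi^{(2)}(z_0)$. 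This is the resolvent formula, exactly as in Lemma \ref{lem:resform}.

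\emph{Expansion.} Let $\{\tilde\chi'_m\}$ be the biorthogonal dual of the Riesz basis $\{\chi'_m\}$ from Theorem \ref{thm:spectral}(ii). Since every eigenvalue is simple, the resolvent acts diagonally:
\[
(z-A_k')^{-1}F=\sum_m\frac{\ip{F}{\tilde\chi'_m}}{z-\mu'_m}\chi'_m .
\]
Substituting into the resolvent formula gives
\[
\Psi^{(2)}(z)=\sum_m c_m\Big(1-\frac{z-z_0}{z-\mu'_m}\Big)\chi'_m=\sum_m\frac{c_m(z_0-\mu'_m)}{z-\mu'_m}\chi'_m,
\qquad c_m\coloneqq\ip{\Psi^{(2)}(z_0)}{\tilde\chi'_m},
\]
which converges in $\Hkp$ because $(c_m)\in\ell^2$ and the multipliers $(z_0-\mu'_m)/(z-\mu'_m)$ are bounded in $m$. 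It remains to identify the numerator. The residue at $\mu'_m$ is $c_m(z_0-\mu'_m)\chi'_m$, and Theorem \ref{thm:spectral}(iii) says this residue equals $b'_m\chi'_m$. Hence $c_m(z_0-\mu'_m)=b'_m$, which is \eqref{eq:expand2}.

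The main obstacle is justifying that the termwise residue is legitimate. The plan is to apply the coordinate functional $\ip{\cdot}{\tilde\chi'_m}$ to the resolvent identity. This gives the scalar meromorphic function $c_m(z_0-\mu'_m)/(z-\mu'_m)$, whose only pole is at $\mu'_m$. Its residue is then read off directly, and it matches $b'_m$ by the definition of residue coordinates, consistently with the normalization of $\chi'_m$ used in Propositions \ref{lem:step2} and \ref{prop:high}.
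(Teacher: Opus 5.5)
Your proposal is correct and follows the paper's proof. Both use the same domain-difference computation for $W=\Psi^{(2)}(z)-\Psi^{(2)}(z_0)$, then apply the Riesz-basis coordinate functionals to the resolvent identity. This yields the one-pole scalar function $c_m(z_0-\mu'_m)/(z-\mu'_m)$, whose residue is $b'_m$ by definition. Your convergence argument, via bounded multipliers acting on the $\ell^2$ coordinates of $\Psi^{(2)}(z_0)$, is if anything more self-contained than the paper's appeal to $|b'_m|\asymp m/Re$ and the gaps.

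There is one sign slip. In fact $\Lap\big((z-A_k')W\big)=-(z-z_0)\Lap\Psi^{(2)}(z_0)$, i.e. $(A_k'-z)W=(z-z_0)\Psi^{(2)}(z_0)$. This is what gives the minus sign in the stated resolvent formula, and it is what your expansion step actually uses.
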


\begin{proposition}[Quadratic closeness]\label{prop:qc2}
For every $n\ge1$ set
\begin{equation}\label{eq:psin2}
\beta'_n\coloneqq\frac{b'_n}{\nu_n-\mu'_n},\qquad
\psi'_n\coloneqq\frac{\Psi^{(2)}(\nu_n)}{\beta'_n}.
\end{equation}
Then, with constants depending on the design and on $q$,
$c_3n/Re\le|\beta'_n|\le C_3n/Re$ for $n\ge m_0$, and
\begin{equation}\label{eq:qc2}
\|\psi'_n-\chi'_n\|_k\le\frac Cn,\qquad
\sum_{n\ge1}\|\psi'_n-\Xi e_n\|_k^2<\infty .
\end{equation}
\end{proposition}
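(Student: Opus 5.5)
The plan is to evaluate the modal expansion \eqref{eq:expand2} of Lemma \ref{lem:expand2} at the target eigenvalue $z=\nu_n$. This is legitimate because \eqref{eq:qdef} puts $\nu_n\in\rho(A_k')$. Dividing by $\beta'_n=b'_n/(\nu_n-\mu'_n)$, which is nonzero since $b'_n\ne0$ by Theorem \ref{thm:spectral}(iii), gives
\[
\psi'_n-\chi'_n=\sum_{m\ne n}\frac{b'_m}{b'_n}\,\frac{\nu_n-\mu'_n}{\nu_n-\mu'_m}\,\chi'_m .
\]
Since $\{\chi'_m\}$ is a Riesz basis of $\Hkp$, we get
\[
\|\psi'_n-\chi'_n\|_k\le C\Big(\sum_{m\ne n}\frac{|b'_m|^2|\nu_n-\mu'_n|^2}{|b'_n|^2|\nu_n-\mu'_m|^2}\Big)^{1/2}.
\]
For the finitely many $n<n_q$ this is finite, because the series converges: $|b'_m|\lesssim m$ and $|\nu_n-\mu'_m|\gtrsim m^2$. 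Enlarging $C$ then absorbs these indices. The real work is for $n\ge n_q$.

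\emph{Size of $\beta'_n$.} For $n\ge n_q$ the nonresonance discussion after Definition \ref{def:target} shows that $\nu_n=\lambda_n-q$ lies in a Stokes gap, with $1\le|\nu_n-\mu'_n|\le q+\|B_k\|+1$. The upper bound uses $|\mu'_n-\lambda_n|\le\|B_k\|+1$, from Propositions \ref{prop:kato} and \ref{prop:spectral-quant}(i). Combined with $c_1n/(2Re)\le|b'_n|\le2c_2n/Re$ from Proposition \ref{prop:spectral-quant}(iii), this gives $c_3n/Re\le|\beta'_n|\le C_3n/Re$. The constants depend on $q$; the indices $m_0\le n<n_q$ are finitely many and also fall under the same two-sided bound with adjusted constants.

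\emph{The sum.} Split the indices $m\ne n$ into the low block $m<m_0$ and the high modes $m\ge m_0$.
\begin{itemize}
\item Low block: it is finite and bounded, while $|\nu_n|\sim n^2/Re$, so $|\nu_n-\mu'_m|\ge cn^2$ and these terms contribute $O(n^{-2}\cdot n^{-4})$ to the squared sum.
\item High modes: use $|\mu'_m-\lambda_m|\le\|B_k\|+1$ together with \eqref{eq:gapjm}. Whenever $|m-n|$ exceeds a fixed multiple of $q$, this gives $|\nu_n-\mu'_m|\ge c|n^2-m^2|/Re$. For the boundedly many remaining $m$, the nonresonance bound $|\nu_n-\mu'_m|\ge q\ge1$ applies, and these are absorbed into the same estimate after adjusting constants.
\end{itemize}
With $|b'_m|\le Cm/Re$, the high-mode part of the squared sum is at most
\[
C\sum_{m\ne n}\frac{m^2}{n^2(n^2-m^2)^2}.
\]
Writing $m=n\pm i$, the terms with $m\le2n$ are $O(n^{-2}i^{-2})$ and those with $m>2n$ are $O(m^{-2}n^{-2})$. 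The total is therefore $O(n^{-2})$, which gives $\|\psi'_n-\chi'_n\|_k\le C/n$. I expect this uniform-in-$n$ lower bound on $|\nu_n-\mu'_m|$ for $m$ near $n$ to be the main obstacle. It is exactly where the choice of $q$ (nonresonance, and $q$ strictly inside the gaps for $n\ge n_q$) enters, and I would lean on the explicit distance statements recorded after Definition \ref{def:target}.

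\emph{Quadratic closeness to $\Xi e_n$.} For $n\ge m_0$ use the triangle inequality
\[
\|\psi'_n-\Xi e_n\|_k\le\|\psi'_n-\chi'_n\|_k+\|\chi'_n-\ep n\|_k+\|\ep n-e_n\|_k+\|e_n-\Xi e_n\|_k .
\]
The four terms are bounded as follows:
\begin{itemize}
\item $\|\psi'_n-\chi'_n\|_k\le C/n$ by the previous step;
\item $\|\chi'_n-\ep n\|_k\le C/n$ by Proposition \ref{prop:spectral-quant}(ii);
\item $\|\ep n-e_n\|_k\le C/d_n\le C/n$ by \eqref{eq:kato} and \eqref{eq:gap};
\item $\|e_n-\Xi e_n\|_k\le C|\varepsilon_1|\,|\ipL{e_n}{g_1}|\le C/n$ by \eqref{eq:Xi} and Proposition \ref{prop:stokes-all}(iii), since $g_1\in C^1$.
\end{itemize}
Each term is $O(1/n)$, so the squares are summable. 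The finitely many $n<m_0$ contribute finite terms, which completes \eqref{eq:qc2}. One point to watch is the labelling of the low eigenvectors $\chi'_n$, $n<m_0$, which is arbitrary. This is harmless for \eqref{eq:qc2} because only finitely many indices are involved.
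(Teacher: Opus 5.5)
Your route is the paper's: evaluate \eqref{eq:expand2} at $\nu_n$, bound the off-diagonal part, divide by $\beta'_n$, and close with the four-term triangle inequality. The bounds on $\beta'_n$, the low block, the finitely many $n<n_q$, and the final step are all fine.

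The step that fails is the one you flagged yourself: the indices $m$ near $n$. For these you use only the nonresonance bound $|\nu_n-\mu'_m|\ge q$. Each corresponding term of your squared sum, $|b'_m|^2|\nu_n-\mu'_n|^2/(|b'_n|^2|\nu_n-\mu'_m|^2)$, is then only known to be $O(1)$, because $|b'_m|\asymp|b'_n|$ for $|m-n|$ bounded and $1\le|\nu_n-\mu'_n|\le q+\|B_k\|+1$. The matching terms of $C\sum_{m\ne n}m^2/(n^2(n^2-m^2)^2)$ are of size $n^{-2}$, so no $n$-independent choice of constants absorbs the former into the latter. Since your criterion is stated in terms of $|m-n|$, the exceptional set (which typically contains $n\pm1$) does not disappear as $n$ grows. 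As written, you therefore only get $\|\psi'_n-\chi'_n\|_k=O(1)$, and the square-summability in \eqref{eq:qc2} does not follow.

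The missing observation is that for $n\ge n_q$ there is no exceptional set at all. The criterion should be on $|m^2-n^2|$, and $|m^2-n^2|\ge m+n>n$ for $m\ne n$. Equivalently, the distance recorded after Definition~\ref{def:target} is $d_n-q-\|B_k\|-1$, which is of order $n/Re$; you should use that rather than its weaker consequence $\ge q$.

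Concretely, take $n\ge n_q$ and $m\ge m_0$ with $m\ne n$. Then $|\mu'_m-\lambda_m|\le\|B_k\|+1$ and $|\lambda_n-\lambda_m|\ge d_n>4(q+\|B_k\|+2)$. Hence, using \eqref{eq:gapjm},
$|\nu_n-\mu'_m|\ge|\lambda_n-\lambda_m|-(q+\|B_k\|+1)\ge\tfrac34|\lambda_n-\lambda_m|\ge\tfrac34c_0'|m^2-n^2|/Re$ for every such $m$.

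With this uniform bound your $O(n^{-2})$ summation covers all $m\ne n$, and the proof closes. It then coincides with the paper's argument, which bounds $\|\Psi^{(2)}(\nu_n)-\beta'_n\chi'_n\|_k$ by a constant $C_q$ and divides by $|\beta'_n|\asymp n/Re$.
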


\begin{proof}
By Definition \ref{def:target}, $1\le q-\|B_k\|-1\le|\nu_n-\mu'_n|\le q+\|B_k\|+1$
for $n\ge m_0$, and $|b'_n|\asymp n/Re$ by Theorem \ref{thm:spectral}(iii), giving the
bounds on $\beta'_n$. By \eqref{eq:expand2},
$\Psi^{(2)}(\nu_n)-\beta'_n\chi'_n=\sum_{m\ne n}\frac{b'_m}{\nu_n-\mu'_m}\chi'_m$,
whose $\Hkp$ norm squared is
$\le C\sum_{m\ne n}m^2/(Re^2|\nu_n-\mu'_m|^2)\le C_q$: for $n\ge n_q$ every
denominator with $m\ne n$ is $\ge q$ by Definition \ref{def:target} and
$\ge\tfrac12c_0'|m^2-n^2|/Re$ once $c_0'|m^2-n^2|/Re\ge2(q+\|B_k\|+1)$, so the split
of Proposition \ref{prop:bmc} bounds the sum, and the finitely many $n<n_q$ contribute
finitely. Divide by $\beta'_n$. Then
$\|\psi'_n-\Xi e_n\|_k\le\|\psi'_n-\chi'_n\|_k+\|\chi'_n-\ep n\|_k+\|\ep n-e_n\|_k+\|e_n-\Xi e_n\|_k$,
each $O(1/n)$: the second by Theorem \ref{thm:spectral}(ii), the third by
\eqref{eq:kato}, the fourth since
$\|\Xi e_n-e_n\|_k\le C|\varepsilon_1||\ipL{e_n}{g_1}|\le C/n$.
\end{proof}

\begin{proposition}[Completeness]\label{prop:complete2}
$\{\psi'_n\}_{n\ge1}$ is complete in $\Hkp$.
\end{proposition}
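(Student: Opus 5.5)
We argue by duality in the Riesz basis of Theorem \ref{thm:spectral}. Let $\{\tilde\chi'_m\}$ be the biorthogonal dual of $\{\chi'_m\}$ in $\Hkp$, and suppose $h\in\Hkp$ satisfies $\ip{\psi'_n}{h}=0$ for every $n\ge1$. Write $h=\sum_m\overline{c_m}\,\tilde\chi'_m$ with $(c_m)\in\ell^2$, so that $c_m=\overline{\ip{\chi'_m}{h}}$ up to conjugation convention. By the expansion \eqref{eq:expand2} of Lemma \ref{lem:expand2}, define
\begin{equation*}
G(z)\coloneqq\ip{\Psi^{(2)}(z)}{h}=\sum_{m\ge1}\frac{b'_m\,c_m}{z-\mu'_m},
\end{equation*}
a meromorphic function on $\mathbb C$ with at most simple poles at the $\mu'_m$. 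The hypothesis says $G(\nu_n)=0$ for every $n$, since $\psi'_n$ is a nonzero multiple of $\Psi^{(2)}(\nu_n)$ and $\beta'_n\ne0$ (note $\nu_n\in\rho(A_k')$ by \eqref{eq:qdef}).

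First, we control the growth of $G$. Because $|b'_m|\asymp m/Re$ and the $\mu'_m$ lie within $O(1)$ of $\lambda_m\sim-m^2\pi^2/Re$, we have $\sum_m|b'_mc_m|/|\mu'_m|<\infty$, so $G$ is of Cauchy-transform type. Multiplying by a canonical product vanishing at the $\mu'_m$ gives an entire function. We take $E(z)\coloneqq\prod_m(1-z/\mu'_m)$ and $\Phi(z)\coloneqq\prod_n(1-z/\nu_n)$; both are entire of order $1/2$, since the zeros grow like $m^2$. Then $F\coloneqq E\,G$ is entire, of order at most $1/2$, and vanishes at every $\nu_n$, so $F/\Phi$ is entire. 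We then estimate $F/\Phi$ on the circles $C_j$ of Lemma \ref{lem:tails} (radii between consecutive Stokes eigenvalues). There $|E/\Phi|$ is bounded above by a polynomial: the zeros of $E$ and $\Phi$ interlace asymptotically, with $\mu'_m-\nu_m\to q+O(1)$ bounded, so the ratio $\prod(1-z/\mu'_m)/(1-z/\nu_m)$ is of at most polynomial growth away from the zeros. Also $|G(z)|=O(1)$ on $C_j$ by splitting the sum near and far from $|z|$. Liouville then makes $F/\Phi$ a polynomial.

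Next, we use the decay of $G$ along the positive real axis to kill that polynomial. For real $z\to+\infty$, $G(z)\to0$ (dominated convergence, using $\sum|b'_mc_m|/(z+m^2)\to0$), while $|E(z)/\Phi(z)|$ grows at most polynomially there. A sharper count is needed here, and we expect it to be the main obstacle: matching the degree of the polynomial against the product asymptotics, exactly in the manner of the completeness argument of \cite{CoronLu2015,HayatLoko26}. What we would try first is to compute $\log|E(z)/\Phi(z)|$ along $z=x>0$ as $\sum_m\log|(1+x/|\mu'_m|)/(1+x/|\nu_m|)|$. Since $|\nu_m|-|\mu'_m|\approx q$ is bounded, this sum is $O(\log x)$ with an explicit constant, in fact $\sim\frac{q\,Re}{2\pi^2}\cdot$bounded. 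Combined with the $\ell^2$ decay of $G$, at least $O(x^{-1/2})$, this should force the polynomial to vanish after a refinement comparing $G(x)\sqrt x$. If the degree bookkeeping fails, the fallback is the standard route: show instead that the map $e_n\mapsto\psi'_n$ is Fredholm, by quadratic closeness (Proposition \ref{prop:qc2}), and establish completeness through $\omega$-linear independence of $\{\psi'_n\}$. The latter follows from the interpolation property, since $\psi'_n$ is determined by the resolvent at distinct points $\nu_n$ and a finite relation $\sum a_n\Psi^{(2)}(\nu_n)=0$ is excluded by evaluating residues of the partial fractions.

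Finally, once $G\equiv0$, each residue $b'_mc_m$ vanishes. Since $b'_m\ne0$ for all $m$ by Theorem \ref{thm:spectral}(iii), we get $c_m=0$ for all $m$, hence $h=0$ because $\{\tilde\chi'_m\}$ is a Riesz basis. This proves completeness of $\{\psi'_n\}$ in $\Hkp$.
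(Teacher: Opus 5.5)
The overall plan matches the paper's: pair the annihilator with \eqref{eq:expand2} to get $G(\lambda)=\sum_m b'_mc_m/(\lambda-\mu'_m)$ vanishing at every $\nu_n$, remove poles and zeros with canonical products, bound on the circles $|\lambda|=r_j$, pass to the positive real axis, and finish with $b'_m\ne0$. The gap is the decisive estimate, which you flag as the obstacle but do not supply; the bound you do give is too weak to close.

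From polynomial growth of $|E/\Phi|$ you only get that $EG/\Phi$ is a polynomial $p$. On the real axis $|p(x)|\le|E(x)/\Phi(x)|\,|G(x)|$. With $\log|E(x)/\Phi(x)|=O(\log x)$ and $G(x)\to0$, this bounds $\deg p$ but does not make $p$ vanish.

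The proposed refinement through $G(x)\sqrt x$ has no footing either. Since $|b'_m|\asymp m/Re$ and $|x-\mu'_m|\gtrsim x+m^2/Re$, Cauchy--Schwarz gives only $|G(x)|\lesssim x^{-1/4}$; no $O(x^{-1/2})$ decay is available for general $(c_m)\in\ell^2$.

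What is missing is that the product ratio is \emph{bounded}, with a finite nonzero limit along the positive axis. Your $O(\log x)$ count is too crude. Write
$\log|E(x)/\Phi(x)|=\sum_m\log|\nu_m/\mu'_m|+\sum_m\log\bigl|1+(\nu_m-\mu'_m)/(x-\nu_m)\bigr|$.
The first sum converges absolutely because $\sup_m|\nu_m-\mu'_m|<\infty$ and $|\mu'_m|\asymp m^2/Re$. The second is $O\bigl(\sum_m(x+cm^2/Re)^{-1}\bigr)=O(x^{-1/2})$.

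The same reasoning gives $\log|E/\Phi|\le C+C\sum_m|\lambda-\nu_m|^{-1}$, and this sum is $O(j^{-1}\log j)$ on $|\lambda|=r_j$. Hence $EG/\Phi$ is bounded on the circles, so it is bounded entire (maximum principle) and constant (Liouville). The real-axis limit, with $G\to0$ and $E/\Phi\to\prod_m\nu_m/\mu'_m\ne0$, forces that constant to be zero.

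This is exactly the paper's argument. The paper normalizes both products at a real point $\mu_0\in\rho(A_k')$ rather than at $0$, since nothing excludes $0\in\sigma(A_k')$, in which case your $E$ is undefined. Your own remark that $\mu'_m-\nu_m$ stays bounded while the zeros grow like $m^2$ already contains the needed estimate; you drew from it only the polynomial conclusion.

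Your fallback does not avoid the obstacle. The index-zero Fredholm argument (Proposition \ref{prop:qc2} and the proof of Theorem \ref{thm:riesz}) does reduce completeness to $\omega$-independence. But excluding \emph{finite} relations $\sum a_n\Psi^{(2)}(\nu_n)=0$ by residues is not $\omega$-independence. A family quadratically close to an orthonormal basis can be finitely independent yet $\omega$-dependent and incomplete: take $f_1=\sum_{n\ge2}2^{-n}e_n$ and $f_n=e_n$ for $n\ge2$.

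An infinite relation $\sum_na_n\psi'_n=0$, read in the coordinates of $\chi'_m$, says $\sum_n(a_n/\beta'_n)/(\nu_n-\mu'_m)=0$ for every $m$. That is the same interpolation problem with the roles of $\{\nu_n\}$ and $\{\mu'_m\}$ exchanged, and it needs the same bounded-ratio estimate on the canonical products.
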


\begin{proof}
Let $\zeta\in\Hkp$ with $\ip{\psi'_n}{\zeta}=0$ for all $n$; put
$z_m\coloneqq\ip{\chi'_m}{\zeta}$, so $(z_m)\in\ell^2$, and
\[
G(\lambda)\coloneqq\sum_{m\ge1}\frac{b'_mz_m}{\lambda-\mu'_m},
\]
meromorphic with simple poles at $\mu'_m$, residues $b'_mz_m$, and $G(\nu_n)=0$ for all
$n$ by \eqref{eq:expand2}. Fix real $\mu_0>\|B_k\|+1$, in $\rho(A_k')$ by Proposition
\ref{prop:high}(iv), and set
$P(\lambda)\coloneqq\prod_m\frac{\lambda-\mu'_m}{\mu_0-\mu'_m}$,
$Q(\lambda)\coloneqq\prod_n\frac{\lambda-\nu_n}{\mu_0-\nu_n}$, convergent since
$\sum_m|\mu_0-\mu'_m|^{-1}<\infty$ ($|\mu'_m|\asymp m^2/Re$). The $\nu_n$ are distinct
and $H\coloneqq GP/Q$ is entire. On the circles $|\lambda|=r_j\coloneqq\tfrac12(|\lambda_j|+|\lambda_{j+1}|)$,
$j$ large enough that $d_j>8(q+\|B_k\|+2)$: the moduli $|\nu_n|=q+|\lambda_n|$ are
the Stokes moduli shifted by $q$, so the midpoint circles keep
$\dist(\lambda,\{\nu_n\})\ge d_j/8$; and, by Theorem
\ref{thm:spectral}(i) and \eqref{eq:kato}, $\dist(\lambda,\{\mu'_m\})\ge d_j/8$; hence
\[
|G(\lambda)|\le\|z\|_{\ell^2}\Big(\sum_m\frac{|b'_m|^2}{|\lambda-\mu'_m|^2}\Big)^{1/2}\le C,\qquad
\Big|\frac{P}{Q}\Big|\le C\exp\Big(C\sum_m\frac{|\nu_m-\mu'_m|}{|\lambda-\nu_m|}\Big)\le C' ,
\]
using $|\nu_m-\mu'_m|\le q+\|B_k\|+1$ for $m\ge m_0$, the finitely many low
differences bounded by a constant of the design, and
$\sum_m(|m^2-j^2|+cj)^{-1}=O(j^{-1}\log j)$. So $H$ is bounded on the circle family,
hence on $\mathbb C$ by the maximum principle, hence constant. Along real
$t\to+\infty$: $|G(t)|\to0$ by dominated convergence
($|t-\mu'_m|\ge t+cm^2/Re-C$), and $P(t)/Q(t)$ tends to the nonzero constant
$\prod_m\frac{\mu_0-\nu_m}{\mu_0-\mu'_m}$ since
$\sum_m|\nu_m-\mu'_m|/(t+|\nu_m|)=O(t^{-1/2})$. Hence $H\equiv0$, so $G\equiv0$, all
residues $b'_mz_m$ vanish, and $b'_m\ne0$ (Theorem \ref{thm:spectral}(iii)) gives
$z_m=0$ for all $m$, i.e. $\zeta=0$.
\end{proof}

\begin{theorem}[Riesz basis; Hilbert--Schmidt structure between the two spaces]\label{thm:riesz}
$\{\psi'_n\}_{n\ge1}$ is a Riesz basis of $\Hkp$. The operator $R:\Hk\to\Hkp$ defined by
$Re_n\coloneqq\psi'_n$ is bounded and boundedly invertible, and $R-\Xi$ is
Hilbert--Schmidt (as maps into $H^1_{(0}$). Consequently $T\coloneqq R^{-1}:\Hkp\to\Hk$
is bounded and boundedly invertible with $T\psi'_n=e_n$, and $T-\Xi^{-1}$ is
Hilbert--Schmidt.
\end{theorem}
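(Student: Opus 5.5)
The plan is the classical Bari route: quadratic closeness plus completeness gives a Riesz basis. Proposition \ref{prop:qc2} supplies $\sum_n\|\psi'_n-\Xi e_n\|_k^2<\infty$, and Proposition \ref{prop:complete2} supplies completeness of $\{\psi'_n\}$ in $\Hkp$.

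First I observe that $\{\Xi e_n\}$ is itself a Riesz basis of $\Hkp$. This holds because $\{e_n\}$ is an orthonormal basis of $\Hk$ (Proposition \ref{prop:ops}(i)) and $\Xi:\Hk\to\Hkp$ is an isomorphism (Lemma \ref{lem:Xi}).

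Next I define $R$ on finite combinations by $R\sum c_ne_n=\sum c_n\psi'_n$ and write $R=\Xi+K$ with $Ke_n\coloneqq\psi'_n-\Xi e_n$. Since $\sum_n\|Ke_n\|_k^2<\infty$, $K$ extends to a Hilbert--Schmidt operator from $\Hk$ into $H^1_{(0}$, and its range lies in $\Hkp$, which is a closed subspace. Hence $R$ is bounded, and $R-\Xi$ is Hilbert--Schmidt.

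It remains to prove invertibility. I write $R=\Xi(I+\Xi^{-1}K)$. Here $\Xi^{-1}K$ is compact on $\Hk$, so $I+\Xi^{-1}K$ is Fredholm of index zero, and it suffices to show that $R$ has dense range and is injective.

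\textbf{Dense range.} The range of $R$ contains every $\psi'_n$, so it is dense by Proposition \ref{prop:complete2}. Because $R$ is Fredholm its range is closed, so $R$ is in fact onto. Index zero then forces the kernel to be trivial as well.

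This step is the expected obstacle, since it is where the usual argument needs $\omega$-linear independence. Index zero handles it cleanly: onto implies injective. The alternative would be to verify directly that no nontrivial $\ell^2$ combination $\sum c_n\psi'_n$ vanishes. That would go through the biorthogonal functionals of $\{\chi'_m\}$ and the residue representation \eqref{eq:expand2}, and it would essentially repeat the interpolation argument of Proposition \ref{prop:complete2}.

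Thus $R$ is bounded with bounded inverse (open mapping), so $\{\psi'_n\}=\{Re_n\}$ is a Riesz basis of $\Hkp$.

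For the statements about $T=R^{-1}$, note first that $T\psi'_n=e_n$ by construction. Then
\[
T-\Xi^{-1}=R^{-1}(\Xi-R)\Xi^{-1}=-R^{-1}K\Xi^{-1},
\]
which is Hilbert--Schmidt as a product of a Hilbert--Schmidt operator with bounded ones. The only point to monitor is that $K$ is Hilbert--Schmidt into $\Hkp$ itself, not merely into $H^1_{(0}$. This holds because $\Hkp$ carries the restriction of the $\|\cdot\|_k$ norm and each $Ke_n\in\Hkp$.
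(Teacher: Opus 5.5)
Your proposal is correct and follows essentially the same route as the paper. Quadratic closeness \eqref{eq:qc2} makes $R-\Xi$ Hilbert--Schmidt, and Fredholm index zero turns the completeness of Proposition~\ref{prop:complete2} into bijectivity, so $\omega$-independence never has to be checked directly; the last claim then follows from $T-\Xi^{-1}=-T(R-\Xi)\Xi^{-1}$. The only difference is cosmetic: the paper sets $K\coloneqq\Xi^{-1}(R-\Xi)$ on $\Hk$, whereas you take $K=R-\Xi$ into $\Hkp$ and factor out $\Xi$ afterwards.
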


\begin{proof}
$K\coloneqq\Xi^{-1}(R-\Xi)$ is Hilbert--Schmidt on $\Hk$: $Ke_n=\Xi^{-1}\psi'_n-e_n$ and
$\|\Xi^{-1}\psi'_n-e_n\|_k\le\|\Xi^{-1}\|\,\|\psi'_n-\Xi e_n\|_k$ is square-summable by
\eqref{eq:qc2}. So $R=\Xi(I+K)$ with $I+K$ Fredholm of index zero; $I+K$ is injective
iff $\{\psi'_n\}$ is $\omega$-independent and surjective iff $\{\psi'_n\}$ is complete
in $\Hkp$; index zero makes these equivalent, and completeness is Proposition
\ref{prop:complete2}. Hence $R$ is invertible, $\{\psi'_n\}=R(\{e_n\})$ is a Riesz
basis, and $T-\Xi^{-1}=-T(R-\Xi)\Xi^{-1}$ is Hilbert--Schmidt.
\end{proof}

\subsection{Closed-loop stability}\label{sec:cl}

\begin{definition}[Feedback and closed-loop realizations]\label{def:cl}
With
$\gamma_n\coloneqq\psi'_{n,y}(1)-c\varepsilon_1\ipL{\psi'_n}{g_1}-\varepsilon_2\ipL{\psi'_n}{g_2}$,
define
\begin{equation}\label{eq:F2}
F_k[V]\coloneqq\sum_{n\ge1}a_n\gamma_n\qquad\text{for }V=\sum_na_n\psi'_n\in\Hkp ,
\end{equation}
the abstract closed loop
\begin{equation}\label{eq:Acl2}
A_{\mathrm{cl}}\coloneqq T^{-1}\widetilde S_kT,\qquad
\mathcal D(A_{\mathrm{cl}})\coloneqq\{V\in\Hkp:\ TV\in\mathcal D\},
\end{equation}
and the PDE realizations
\begin{gather}
\mathcal D_{\max}'\coloneqq\{V\in H^3:\ V(0)=V_y(0)=0,\ V(1)=\varepsilon_1\ipL{V}{g_1}\},\qquad
A_{\max}V\coloneqq\Lapm^{-1}\mathcal L_kV,\label{eq:Amax2}\\
\mathcal D(\mathcal A_F)\coloneqq\{V\in\mathcal D_{\max}':\
V_y(1)=c\varepsilon_1\ipL{V}{g_1}+\varepsilon_2\ipL{V}{g_2}+F_k[V]\},\qquad
\mathcal A_F\coloneqq A_{\max}\big|_{\mathcal D(\mathcal A_F)} .\label{eq:AF2}
\end{gather}
\end{definition}

The operator $\mathcal A_F$ is the Orr--Sommerfeld evolution \eqref{eq:OSk} under the
full feedback \eqref{eq:prefeed}: the static pre-feedback in both boundary conditions,
and the Fredholm term $F_k$ in the tangential one.

\begin{lemma}[Feedback coefficients]\label{lem:gamman}
The coefficients $\gamma_n$ of Definition \ref{def:cl} satisfy
\begin{equation}\label{eq:gamman}
\gamma_n=1/\beta'_n\ \ \text{for every }n,\qquad (\gamma_n)_n\in\ell^2 .
\end{equation}
\end{lemma}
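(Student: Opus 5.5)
The identity $\gamma_n=1/\beta'_n$ is a direct trace computation from the definition \eqref{eq:psin2}, and square-summability then follows from the size of $\beta'_n$ recorded in Proposition \ref{prop:qc2}. Throughout, write $\mathcal G[V]\coloneqq V_y(1)-c\varepsilon_1\ipL{V}{g_1}-\varepsilon_2\ipL{V}{g_2}$, so that $\gamma_n=\mathcal G[\psi'_n]$.

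The key observation is that $\Psi^{(2)}(z)$ is normalized so that $\mathcal G[\Psi^{(2)}(z)]=1$ for every $z\in\rho(A_k')$. This is the boundary condition recorded after \eqref{eq:Psi2} in the proof of Proposition \ref{lem:step2}:
\[
\Psi^{(2)}_y(1)-c\varepsilon_1\ipL{\Psi^{(2)}}{g_1}-\varepsilon_2\ipL{\Psi^{(2)}}{g_2}=1 .
\]
It is verified from \eqref{eq:Psi1bc} and $\Psi^{(2)}=\Psi^{(1)}/D_2$, in exactly the way \eqref{eq:Psi1bc} was obtained in Proposition \ref{prop:factor}. I would re-derive it briefly here. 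Since $\Psi^{(1)}_y(1)-c\varepsilon_1\ipL{\Psi^{(1)}}{g_1}=1$, dividing by $D_2$ gives
\[
\Psi^{(2)}_y(1)-c\varepsilon_1\ipL{\Psi^{(2)}}{g_1}=\frac{1}{D_2}.
\]
Moreover $\varepsilon_2\ipL{\Psi^{(2)}}{g_2}=(1-D_2)/D_2$ by the definition of $D_2$. Subtracting the second identity from the first gives the value $1$.

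This normalization needs $\nu_n\in\rho(A_k')$, which is guaranteed by \eqref{eq:qdef}. By Proposition \ref{prop:resAp}, together with $\det\mathfrak D=D_1D_2$, we then have $D_2(\nu_n)\neq0$, provided also $\nu_n\in\rho(A_k)$ and $D_1(\nu_n)\ne0$. This is the step requiring care. If $\nu_n$ happened to be a pole of $\Psi$, $\rho_c$ or $D_1$, I would instead use the removable-singularity viewpoint: $\Psi^{(2)}$ is analytic on $\rho(A_k')$ by the resolvent formula of Lemma \ref{lem:expand2}. The functional $\mathcal G$ is continuous on $H^3$-valued analytic families, since the range of $(z-A_k')^{-1}$ lies in $H^3$ (proof of Proposition \ref{prop:resAp}). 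Hence the identity $\mathcal G[\Psi^{(2)}(z)]=1$ extends by continuity from the dense set where all factors are regular. With that in hand, linearity gives $\gamma_n=\mathcal G[\Psi^{(2)}(\nu_n)]/\beta'_n=1/\beta'_n$; note that $\beta'_n\neq0$ because $b'_n\ne0$ by Theorem \ref{thm:spectral}(iii).

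For square-summability, Proposition \ref{prop:qc2} gives $|\beta'_n|\ge c_3n/Re$ for $n\ge m_0$. Hence $|\gamma_n|\le Re/(c_3 n)$, and $\sum_n|\gamma_n|^2$ is finitely many finite terms plus a tail bounded by a multiple of $\sum n^{-2}$. Since $\{\psi'_n\}$ is a Riesz basis (Theorem \ref{thm:riesz}), this $\ell^2$ bound is precisely what makes $F_k$ in \eqref{eq:F2} a bounded functional on $\Hkp$.

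The main obstacle is justifying the evaluation of the trace $\psi'_{n,y}(1)$ at $\nu_n$ uniformly, that is, the continuity of $\mathcal G$ along $\Psi^{(2)}$. I would handle it through the $H^3$ regularity of the range of the resolvent together with the elliptic estimate of Lemma \ref{lem:ell}.
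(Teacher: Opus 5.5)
Your proposal is correct and follows the paper's proof. Both use the normalization $\Psi^{(2)}_y(1)-c\varepsilon_1\ipL{\Psi^{(2)}}{g_1}-\varepsilon_2\ipL{\Psi^{(2)}}{g_2}=1$ of \eqref{eq:Psi2}, evaluate it at $\nu_n$ and divide by $\beta'_n$, and then get summability from $|\beta'_n|\ge c_3n/Re$ (Proposition \ref{prop:qc2}). Your extra care at points where $\Psi$, $\rho_c$ or $D_1$ might be singular is sound but more than needed: by the resolvent formula of Lemma \ref{lem:expand2}, $\Psi^{(2)}(z)-\Psi^{(2)}(z_0)$ lies in $\mathcal D(A_k')$, where $\mathcal G$ vanishes identically, so the normalization holds on all of $\rho(A_k')$ without any continuity argument.
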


\begin{theorem}[Rapid stabilization at wavenumber $k$, no plant hypothesis]\label{thm:main}
Fix $k\ne0$, $Re>0$, admissible design parameters as in Theorem \ref{thm:spectral}, and
a target as in Definition \ref{def:target}. Then:
\begin{enumerate}
\item[(i)] $F_k$ is a bounded linear functional on $\Hkp$.
\item[(ii)] $A_{\mathrm{cl}}=\mathcal A_F$: the abstract closed loop is exactly the
Orr--Sommerfeld operator with the boundary feedback \eqref{eq:prefeed}, i.e.
\begin{equation}\label{eq:Uc2}
V_c=\varepsilon_1\ipL{V}{g_1},\qquad
U_c=\frac{i}{2\pi k}\Big[c\varepsilon_1\ipL{V}{g_1}+\varepsilon_2\ipL{V}{g_2}+F_k[V]\Big].
\end{equation}
\item[(iii)] $\mathcal A_F$ generates the analytic semigroup $T^{-1}e^{\widetilde S_kt}T$
on $\Hkp$, and
\begin{equation}\label{eq:decay2}
\|e^{\mathcal A_Ft}\|_{\Hkp\to\Hkp}\le\|T\|\,\|T^{-1}\|\,e^{\nu_{\max}t},\qquad
\nu_{\max}<0 .
\end{equation}
By \eqref{eq:uV} this is exponential decay of the $L^2$ energy
$\|u(k,\cdot)\|^2+\|V(k,\cdot)\|^2$ of the velocity pair. Given any $\omega>0$, the
choice $q\ge\omega$ in Definition \ref{def:target} achieves
$\nu_{\max}\le-\omega$: rapid stabilization at an arbitrary decay rate.
\item[(iv)] The target variable $\alpha\coloneqq TV\in\Hk$ satisfies the no-slip
shifted Stokes system $\alpha_t=\widetilde S_k\alpha$, with $\alpha(t)\in\mathcal D$ for
$t>0$, and for $t=0$ when $V(0)\in\mathcal D(A_{\mathrm{cl}})$.
\end{enumerate}
\end{theorem}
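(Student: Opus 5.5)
The plan is to work entirely in the Riesz basis $\{\psi'_n\}$ of Theorem~\ref{thm:riesz}, on which $T$ acts by $T\psi'_n=e_n$, and to show that both $\mathcal A_F$ and $A_{\mathrm{cl}}$ are diagonal there with eigenvalues $\nu_n$.

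\emph{Item (i).} For $V=\sum_na_n\psi'_n\in\Hkp$ the coordinates are $a_n=\ip{TV}{e_n}$, so $\|(a_n)\|_{\ell^2}\le\|T\|\,\|V\|_k$. Lemma~\ref{lem:gamman} gives $(\gamma_n)\in\ell^2$, and Cauchy--Schwarz then yields
\[
|F_k[V]|\le\|(\gamma_n)\|_{\ell^2}\,\|T\|\,\|V\|_k .
\]
So $F_k$ is bounded, and the series \eqref{eq:F2} converges absolutely.

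\emph{Item (ii), first half: agreement on the basis.} By Definition~\ref{def:target}, $\nu_n\in\rho(A_k')$. By Lemma~\ref{lem:expand2} and \eqref{eq:Psi2}, $\psi'_n=\Psi^{(2)}(\nu_n)/\beta'_n$ lies in $H^3$ and satisfies
\begin{gather*}
\mathcal L_k\psi'_n=\nu_n\Lap\psi'_n,\qquad \psi'_n(0)=\psi'_{n,y}(0)=0,\qquad \psi'_n(1)=\varepsilon_1\ipL{\psi'_n}{g_1},\\
\psi'_{n,y}(1)-c\varepsilon_1\ipL{\psi'_n}{g_1}-\varepsilon_2\ipL{\psi'_n}{g_2}=1/\beta'_n .
\end{gather*}
The last quantity is $\gamma_n$ by definition, and $\gamma_n=1/\beta'_n=F_k[\psi'_n]$ by Lemma~\ref{lem:gamman}. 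Hence $\psi'_n\in\mathcal D(\mathcal A_F)$ and $\mathcal A_F\psi'_n=\Lapm^{-1}(\nu_n\Lap\psi'_n)=\nu_n\psi'_n$. On the other side, $A_{\mathrm{cl}}\psi'_n=T^{-1}\widetilde S_ke_n=\nu_n\psi'_n$. The two operators therefore agree on $\mathcal E\coloneqq\operatorname{span}\{\psi'_n\}$. Since $\operatorname{span}\{e_n\}$ is a core for the self-adjoint $\widetilde S_k$, the set $\mathcal E=T^{-1}\operatorname{span}\{e_n\}$ is a core for $A_{\mathrm{cl}}$.

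\emph{Item (ii), second half: equality of the operators.}
\begin{enumerate}
\item[(a)] Show $\mathcal A_F$ is closed. It is $A_{\max}$ restricted by one boundary condition, $V_y(1)=\Phi[V]$, where $\Phi$ is a functional bounded on $\Hkp$. Using Lemma~\ref{lem:ell}, the graph norm of $A_{\max}$ controls $\|V\|_{H^3}$, hence the trace $V_y(1)$. The constraint is therefore closed in the graph norm, and $A_{\max}$ itself is closed by the resolvent argument of Proposition~\ref{prop:resAp}.
\item[(b)] From (a) and the agreement on $\mathcal E$, $A_{\mathrm{cl}}=\overline{\mathcal A_F|_{\mathcal E}}\subset\mathcal A_F$.
\item[(c)] Fix a real $z>\max\{\nu_{\max},\|B_k\|+1\}$ with $z\in\rho(A_k)\cap\rho(A_k')$. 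Since $z-A_{\mathrm{cl}}$ is bijective, (b) gives $\mathcal A_F=A_{\mathrm{cl}}$ as soon as $z-\mathcal A_F$ is injective.
\item[(d)] Injectivity. If $\mathcal A_FV=zV$, then, as in the proof of Proposition~\ref{prop:resAp}, $V$ is a combination of $\rho_c(z)$ and $\Psi(z)$. The pre-feedback constraints force $V=s\,\Psi^{(2)}(z)$, and the last boundary condition reduces to
\[
s\bigl(1-F_k[\Psi^{(2)}(z)]\bigr)=0 .
\]
\end{enumerate}

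\emph{The main obstacle} is showing $F_k[\Psi^{(2)}(z)]\ne1$ for $z\notin\{\nu_n\}$. This is the infinite-dimensional return-difference identity. I would compute the coordinates $\ip{T\Psi^{(2)}(z)}{e_n}$ from the resolvent identity of Lemma~\ref{lem:expand2}, applied between $z$ and $\nu_n$. This should give
\[
\ip{T\Psi^{(2)}(z)}{e_n}=\frac{\beta'_n\,\bigl(1-F_k[\Psi^{(2)}(z)]\bigr)}{z-\nu_n}+(\text{terms vanishing identically}).
\]
If $1-F_k[\Psi^{(2)}(z)]=0$, then $T\Psi^{(2)}(z)=0$ and $\Psi^{(2)}(z)=0$, which is impossible since its tangential trace condition equals $1$. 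If the computation is awkward, the fallback is to verify $T\mathcal A_FV=\widetilde S_kTV$ directly for $V\in\mathcal D(\mathcal A_F)$ by pairing with $e_n$ and integrating by parts with Lemma~\ref{lem:trace}. The boundary term at $y=1$ cancels exactly because of the feedback condition, and this gives $\mathcal A_F\subset A_{\mathrm{cl}}$ without needing injectivity.

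\emph{Items (iii) and (iv).} With $A_{\mathrm{cl}}=\mathcal A_F$, the operator is similar via $T$ to $\widetilde S_k$, which is self-adjoint with $\sup\sigma(\widetilde S_k)=\nu_{\max}=\lambda_1-q\le-\omega$ (using $\lambda_1<0$). Hence $\mathcal A_F$ generates the analytic semigroup $T^{-1}e^{\widetilde S_kt}T$, and $\|e^{\widetilde S_kt}\|=e^{\nu_{\max}t}$ gives \eqref{eq:decay2}. Equation \eqref{eq:uV} converts the $\|\cdot\|_k$ bound into the $L^2$ energy of the velocity pair $(u,V)$. For (iv), $\alpha=TV$ satisfies $\alpha(t)=e^{\widetilde S_kt}\alpha(0)$, which lies in $\mathcal D$ for $t>0$ by analyticity, and for $t=0$ when $V(0)\in\mathcal D(A_{\mathrm{cl}})$ by definition of that domain.
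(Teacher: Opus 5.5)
Your reduction is the paper's. Agreement of $\mathcal A_F$ and $A_{\mathrm{cl}}$ on the core $\spn\{\psi'_n\}$, together with the $H^3$ graph estimate, gives $A_{\mathrm{cl}}\subset\mathcal A_F$. Equality then hinges on $1-F_k[\Psi^{(2)}(z)]\ne0$ at a single point $z\in\rho(A_{\mathrm{cl}})\cap\rho(A_k')$. The paper phrases this as $\mathcal D'_{\max}=\mathcal D(A_{\mathrm{cl}})\oplus\spn\{\Psi^{(2)}(\lambda)\}$ followed by evaluating $\Gamma$ on $\Psi^{(2)}(\lambda)$, which is equivalent to your injectivity step (d).

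\textbf{The gap.} That nonvanishing is exactly what you leave unproved, and neither of your routes delivers it.
\begin{itemize}
\item \emph{Coordinate computation.} Computing $\ip{T\Psi^{(2)}(z)}{e_n}$ through the resolvent formula of Lemma~\ref{lem:expand2} requires $\ip{T(z-A_k')^{-1}\psi'_n}{e_n}$. That is, you need how the closed-loop coordinate functionals $\ip{T\,\cdot}{e_n}=\ip{\cdot}{T^*e_n}$ interact with $A_k'$ on $\mathcal D(A_k')$, and $\mathcal D(A_k')$ is not contained in $\mathcal D(A_{\mathrm{cl}})$. All you know about $T^*e_n$ is that it is an eigenvector of $A_{\mathrm{cl}}^*$, whose boundary conditions are what is being established.
\item \emph{Integration-by-parts fallback.} The same circularity defeats it. Lemma~\ref{lem:trace} needs a smooth clamped test function, and $T^*e_n$ is neither explicit nor known to be one.
\item \emph{The guessed identity.} Its constant is wrong. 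As $z\to\nu_n$ the left side tends to $\ip{T\Psi^{(2)}(\nu_n)}{e_n}=\beta'_n$. The right side tends to $-\beta'_n\,\frac{\mathrm d}{\mathrm dz}F_k[\Psi^{(2)}(z)]\big|_{z=\nu_n}$, which is not $\beta'_n$ in general.
\item \emph{The target statement.} $F_k[\Psi^{(2)}(z)]\ne1$ for all $z\in\rho(A_k')\setminus\{\nu_n\}$ is true, but only as a consequence of the theorem, not as an input to it.
\end{itemize}

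\textbf{The missing idea.} You choose $z$, and you need only one. Take $z$ real and large. In \eqref{eq:expand2}, $|b'_m|\le Cm/Re$ and $\Rea(z-\mu'_m)\ge z+cm^2/Re-C$. Hence $\|\Psi^{(2)}(z)\|_k^2\le C\sum_m(m/Re)^2(z+cm^2/Re-C)^{-2}\to0$ as $z\to+\infty$, by dominated convergence. Since $F_k$ is bounded by your item (i), $F_k[\Psi^{(2)}(z)]\to0$. So $1-F_k[\Psi^{(2)}(z)]\ne0$ for $z$ large, and your step (d) closes. This is how the paper concludes.

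\textbf{A minor correction.} $A_{\max}$ has empty resolvent set, since its kernel at every $\lambda$ is at least one-dimensional. Its closedness therefore comes from the $H^3$ graph estimate you cite, not from Proposition~\ref{prop:resAp}. The rest of (a)--(c) and your items (i), (iii), (iv) are fine.
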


\begin{proof}
(i) $(\gamma_n)\in\ell^2$ by \eqref{eq:gamman} and $|\beta'_n|\ge c_3n/Re$; the
coordinates of the Riesz basis satisfy $\|(a_n)\|_{\ell^2}\le C\|V\|_k$; Cauchy--Schwarz.

(ii) \emph{Step 1: $A_{\mathrm{cl}}\subset\mathcal A_F$.} Finite combinations
$\mathcal C$ of $\{\psi'_n\}$ form a core for $A_{\mathrm{cl}}$ (finite combinations of
$\{e_n\}$ are a core for $\widetilde S_k$, and $T$ is an isomorphism intertwining the
graphs). Each $\psi'_n\in H^4$ satisfies
$\mathcal L_k\psi'_n=\nu_n\Lap\psi'_n$ and the three boundary conditions of
$\mathcal D(\mathcal A_F)$ (Definition \ref{def:cl}), and
$A_{\mathrm{cl}}\psi'_n=T^{-1}\widetilde S_ke_n=\nu_n\psi'_n=\Lapm^{-1}\mathcal L_k\psi'_n$,
since $\nu_n\psi'_n\in\Hkp$ solves $\Lap U=\mathcal L_k\psi'_n$. Hence on $\mathcal C$,
$\Lap(A_{\mathrm{cl}}V)=\mathcal L_kV$ in $H^{-1}$, so
$\|\Lap^2V\|_{H^{-1}}\le Re\big(\|A_{\mathrm{cl}}V\|_k+C\|V\|_k\big)$ (the pairing
$\langle\Lap W,\varphi\rangle=-\ipL{W_y}{\varphi_y}-4\pi^2k^2\ipL{W}{\varphi}$ is
bounded by $\|W\|_k$ for any $W\in H^1$), and Lemma \ref{lem:ell} gives the graph
estimate $\|V\|_{H^3}\le C(\|A_{\mathrm{cl}}V\|_k+\|V\|_k)$ on $\mathcal C$. For
$V\in\mathcal D(A_{\mathrm{cl}})$ take $V^{(j)}\in\mathcal C$ converging in graph norm;
then $V^{(j)}\to V$ in $H^3$, all traces and the bounded functionals
$\ipL{\cdot}{g_i}$, $F_k$ pass to the limit, and
$\Lapm^{-1}\mathcal L_k:H^3\cap\Hkp\to\Hkp$ is continuous. Hence
$V\in\mathcal D(\mathcal A_F)$ and $\mathcal A_FV=A_{\mathrm{cl}}V$.

\emph{Step 2: codimension.} Fix real $\lambda>\|B_k\|+1$, so
$\lambda\in\rho(A_k')$ (Proposition \ref{prop:high}(iv)) and
$\lambda\in\rho(A_{\mathrm{cl}})$ ($\nu_{\max}<0$). The kernel of $\lambda-A_{\max}$ on
$\mathcal D_{\max}'$ is the set of solutions of $\mathcal L_kV=\lambda\Lap V$ with
$V(0)=V_y(0)=0$ (a two-dimensional space) and $V(1)=\varepsilon_1\ipL{V}{g_1}$: at least
one-dimensional; exactly one, since a two-dimensional kernel would contain a nonzero
element additionally satisfying
$V_y(1)-c\varepsilon_1\ipL{V}{g_1}-\varepsilon_2\ipL{V}{g_2}=0$, i.e. an eigenvector of
$A_k'$, contradicting $\lambda\in\rho(A_k')$. It is spanned by $\Psi^{(2)}(\lambda)$.
For $V\in\mathcal D_{\max}'$ put
$W\coloneqq(\lambda-A_{\mathrm{cl}})^{-1}(\lambda-A_{\max})V\in\mathcal D(A_{\mathrm{cl}})$
(note $(\lambda-A_{\max})V\in\Hkp$); then $(\lambda-A_{\max})(V-W)=0$ by Step 1, so
$\mathcal D_{\max}'=\mathcal D(A_{\mathrm{cl}})\oplus\spn\{\Psi^{(2)}(\lambda)\}$.

\emph{Step 3: identification.} Set
$\Gamma[V]\coloneqq V_y(1)-c\varepsilon_1\ipL{V}{g_1}-\varepsilon_2\ipL{V}{g_2}-F_k[V]$
on $\mathcal D_{\max}'$, so $\mathcal D(\mathcal A_F)=\ker\Gamma\supset\mathcal D(A_{\mathrm{cl}})$
by Step 1. By \eqref{eq:expand2},
$\|\Psi^{(2)}(\lambda)\|_k^2\le C\sum_m(m/Re)^2(\lambda+cm^2/Re)^{-2}\to0$ as
$\lambda\to+\infty$ along the reals, so
$\Gamma[\Psi^{(2)}(\lambda)]=1-F_k[\Psi^{(2)}(\lambda)]\to1$; fix $\lambda$ large with
$\Gamma[\Psi^{(2)}(\lambda)]\ne0$. Any $V\in\ker\Gamma$ decomposes by Step 2 as
$V=W+t\Psi^{(2)}(\lambda)$ with $W\in\mathcal D(A_{\mathrm{cl}})$, and
$0=\Gamma[V]=t\,\Gamma[\Psi^{(2)}(\lambda)]$ forces $t=0$. Hence
$\mathcal D(\mathcal A_F)=\mathcal D(A_{\mathrm{cl}})$, and the operators agree there by
Step 1.

(iii) $\widetilde S_k$ is self-adjoint with spectrum in $(-\infty,\nu_{\max}]$;
similarity by the isomorphism $T$ preserves generation, analyticity, and gives
\eqref{eq:decay2}; then (ii) and \eqref{eq:uV}. The rate statement is Definition
\ref{def:target}. (iv) is \eqref{eq:Acl2}.
\end{proof}

\section{Closed loop in physical variables}\label{sec:phys}

This section assembles the modal designs into the physical channel: the closed loop is
a single analytic semigroup on a physical energy space, the pressure is reconstructed,
and Theorem \ref{thm:first} is proved. Wavenumbers are $k\in\mathbb Z/L$ and
$\mathcal B$ is the band of Theorem \ref{thm:first}. The design parameters can be
taken common to the whole band: choose $c$ outside the union of the finite excluded
sets of the finitely many $k\in\mathcal B$, then $\theta_1$ outside the union of the
corresponding discrete sets, a common $\varepsilon_1\ne0$ below the minimum threshold,
then $\theta_2$ and $\varepsilon_2$ likewise, and shifts $q(k)$ as in Definition
\ref{def:target} with $\nu_{\max}(k)=\lambda_1(k)-q(k)\le-\omega$; unions of finitely
many finite, respectively discrete, sets are finite, respectively discrete, so such
parameters exist.

\begin{lemma}[Reality]\label{lem:real}
The exclusion sets and thresholds of Theorem \ref{thm:spectral} can be taken identical
for $k$ and $-k$, and the designs chosen so that
$\psi'^{(-k)}_n=\overline{\psi'^{(k)}_n}$ and
$F_{-k}[\overline V]=\overline{F_k[V]}$. Consequently, for real physical initial data
the controls $V_c(x),U_c(x)$ below are real.
\end{lemma}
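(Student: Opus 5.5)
The plan is to transport everything at wavenumber $k$ to $-k$ by complex conjugation, and then read off reality of the physical controls from Hermitian symmetry of their Fourier coefficients.

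\textbf{Conjugation symmetry of the plant.} The coefficients of $\mathcal L_k$ in \eqref{eq:OSk} are $\tfrac1{Re}\Lap^2$, $8\pi ki\,y(y-1)\Lap$ and $-16\pi ki$. Since $\Lap$ depends on $k$ only through $k^2$, we have
\[
\overline{\mathcal L_kV}=\mathcal L_{-k}\overline V .
\]
The conjugation map $\mathcal J V\coloneqq\overline V$ is an antiunitary isometry of $\Hk$ onto $\mathcal H_{-k}$, with $\|\overline V\|_{-k}=\|V\|_k$. Because $g_1,g_2,\ell$ are real, $\mathcal J$ also maps $\Hkp$ (at $k$) onto the corresponding space at $-k$. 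It therefore intertwines all the operators of the construction:
\[
\mathcal J S_k=S_{-k}\mathcal J,\qquad \mathcal J A_k=A_{-k}\mathcal J,\qquad \mathcal J A_k'=A_{-k}'\mathcal J,
\]
the last holding for the same real parameters $(c,\theta_i,\varepsilon_i)$. Uniqueness in Definition \ref{def:Psi} then gives
\[
\Psi^{(-k)}(\bar z)=\overline{\Psi^{(k)}(z)},\qquad \rho_c^{(-k)}(\bar z)=\overline{\rho_c^{(k)}(z)},
\]
and, since $\alpha,\beta$ in \eqref{eq:alphabeta} are conjugation-equivariant, the same relation holds for $\mathfrak D$, $D_1$, $D_2$ and $\Psi^{(2)}$. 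Every object entering the exclusion sets is thus the conjugate of its counterpart at $-k$: the spectra satisfy $\sigma(A_{-k})=\overline{\sigma(A_k)}$, Jordan chains conjugate to Jordan chains, and $\beta^c$, $R^c$ and $t_2$ are conjugated. The defining sets are zero sets of real-variable functions of $c$ and $\theta$, and the thresholds are moduli. Hence $\mathcal C$, $\Theta_1$, $\Theta_2$ and $\bar\varepsilon_1,\bar\varepsilon_2$ can be chosen identical for $k$ and $-k$, for instance by taking unions of exclusions and minima of thresholds over the pair. This is compatible with the band-wide choice already described in the text. The Stokes data $\lambda_n$, $e_n$ depend on $k^2$ only, and \eqref{eq:char}, \eqref{eq:efn} are even in $k$ with the normalization $a>0$, so $e_n^{(-k)}=e_n^{(k)}$ is real. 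Choose $q(-k)=q(k)$; the condition $\lambda_n-q\in\rho(A'_{-k})$ is symmetric because $\nu_n$ is real.

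\textbf{Conjugation of the design.} With $\nu_n$ real, the conjugation relation for $\Psi^{(2)}$ gives $\Psi^{(2),(-k)}(\nu_n)=\overline{\Psi^{(2),(k)}(\nu_n)}$. For the normalization in \eqref{eq:psin2} I take the residue coordinates $b'_n$ relative to the conjugated eigenvectors $\chi'^{(-k)}_n\coloneqq\overline{\chi'^{(k)}_n}$, which gives $b'^{(-k)}_n=\overline{b'^{(k)}_n}$. This choice is legitimate: eigenvalue labels at $-k$ are the conjugates of those at $k$, and the construction of Theorem \ref{thm:spectral} does not fix normalizations. It follows that $\beta'^{(-k)}_n=\overline{\beta'^{(k)}_n}$ and $\psi'^{(-k)}_n=\overline{\psi'^{(k)}_n}$. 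Consequently $T_{-k}\overline V=\overline{T_kV}$, and $\gamma_n^{(-k)}=\overline{\gamma_n^{(k)}}$ because traces and real-weight pairings commute with conjugation. Writing $V=\sum a_n\psi'^{(k)}_n$ gives $\overline V=\sum\bar a_n\psi'^{(-k)}_n$, and therefore
\[
F_{-k}[\overline V]=\sum\bar a_n\overline{\gamma_n}=\overline{F_k[V]} .
\]
This is the main point requiring care. Everything else is covered by uniqueness, but this step needs the chosen normalizations, the enumeration of the low eigenvalues, and the branch labels in Propositions \ref{lem:step1}--\ref{lem:step2} to be made conjugation-compatible. I would simply define the $-k$ data as the conjugates of the $k$ data and verify that they satisfy every defining property.

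\textbf{Reality of the controls.} Real physical data have Hermitian Fourier coefficients, $V(-k,\cdot)=\overline{V(k,\cdot)}$. The modal normal control $\varepsilon_1\ipL{V(k)}{g_1}$ is then Hermitian because $g_1$ is real, so $V_c$ is real; equivalently, \eqref{eq:Vcphys} has a real kernel $\chi$. For the tangential control,
\[
U_c(-k)=\frac{i}{2\pi(-k)}\,\overline{\bigl[c\varepsilon_1\ipL{V(k)}{g_1}+\varepsilon_2\ipL{V(k)}{g_2}+F_k[V(k)]\bigr]}=\overline{U_c(k)},
\]
since $\overline{i/(2\pi k)}=-i/(2\pi k)=i/(2\pi(-k))$. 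The band $\mathcal B$ is symmetric, so $U_c$ is real. The mean-mode term involves the real kernel $k_0$ and is real.
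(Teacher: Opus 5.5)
Your proposal is correct and follows the paper's proof. Conjugation intertwines the $k$ and $-k$ constructions, because $\overline{\mathcal L_kV}=\mathcal L_{-k}\overline V$ and $g_1,g_2,\ell,c,\theta_i,\varepsilon_i$ are real; the same real shift $q$ is admissible at both wavenumbers; and Hermitian symmetry of a real field then gives $V_c(-k)=\overline{V_c(k)}$ and $U_c(-k)=\overline{U_c(k)}$. Your departures are cosmetic: you make the exclusion sets identical by taking unions and minima over the pair rather than observing that they coincide, and you spell out the conjugation-compatible enumeration and eigenvector normalization that the paper covers by ``conjugation intertwines every object of the construction.''
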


\begin{proof}
$\Lap$ and $\ell$ are even in $k$, $\mathcal L_{-k}\overline V=\overline{\mathcal L_kV}$,
and the weights $g_1,g_2$ and parameters $(c,\theta_i,\varepsilon_i)$ are real; hence
$\overline{\Psi^{(k)}(z)}=\Psi^{(-k)}(\bar z)$, $\overline{\rho_c^{(k)}(z)}=\rho_c^{(-k)}(\bar z)$,
$\overline{\mathfrak D^{(k)}(z)}=\mathfrak D^{(-k)}(\bar z)$,
$\sigma(A'_{-k})=\overline{\sigma(A'_k)}$, and all trace data conjugate. The exclusion
sets, defined by vanishing of moduli of conjugate-symmetric quantities, coincide.
Choosing the same real shift $q$ (admissible: $\rho(A'_{-k})=\overline{\rho(A'_k)}$
and $\lambda_n$ real), conjugation intertwines every object of the construction. Since a real field has $V(-k,y)=\overline{V(k,y)}$,
\eqref{eq:Uc2} gives $V_c(-k)=\overline{V_c(k)}$, $U_c(-k)=\overline{U_c(k)}$.
\end{proof}

\begin{proposition}[Mean mode]\label{lem:mean}
At $k=0$, continuity and $V(0)=0$ give $V\equiv0$, and the streamwise mean obeys
$u_t=\tfrac1{Re}u_{yy}$, $u(0)=0$, $u(1)=U_c(0,t)$. Let $c_0>0$ and
\begin{equation}\label{eq:k0}
k_0(y,\eta)\coloneqq-c_0Re\,\eta\,
\frac{I_1\big(\sqrt{c_0Re\,(y^2-\eta^2)}\big)}{\sqrt{c_0Re\,(y^2-\eta^2)}} ,
\end{equation}
$I_1$ the modified Bessel function. Then $w\coloneqq u-\int_0^yk_0(y,\eta)u(\eta)\dd\eta$
is a boundedly invertible Volterra transformation of $L^2(0,1)$, and under the
feedback $U_c(0,t)=\int_0^1k_0(1,\eta)u(\eta,t)\dd\eta$ the mean mode is mapped onto
$w_t=\tfrac1{Re}w_{yy}-c_0w$, $w(0)=w(1)=0$. Hence the closed-loop mean mode generates
an analytic semigroup on $L^2(0,1)$ with
$\|u(t)\|\le C_0e^{-(\pi^2/Re+c_0)t}\|u(0)\|$, $C_0$ the condition number of the
transformation; choosing $c_0\ge\omega$ gives decay at least at the rate $\omega$.
\end{proposition}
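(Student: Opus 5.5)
Three things have to be checked: that the mean mode decouples and reduces to the heat equation, that the Volterra kernel maps it onto the damped target, and that the target decays at the stated rate.

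\emph{Reduction at $k=0$.} At $k=0$ the continuity equation \eqref{eq:cont} in Fourier form reads $V_y(0,y)=0$. With $V(0,0)=0$ from \eqref{eq:bc} this gives $V(0,\cdot)\equiv0$. Substituting into \eqref{eq:linu}, the advection term $4y(y-1)u_x$ vanishes at $k=0$, and so do the coupling term $4(2y-1)V$ and the pressure term $p_x$. What remains is $u_t=u_{yy}/Re$, with $u(0)=0$ and $u(1)=U_c(0,t)$.

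\emph{Kernel and target.} I will follow the classical reaction--diffusion backstepping of \cite[Ch.~4]{SGK2009}. Rescaling time by $Re$ turns the plant into $u_s=u_{yy}$ and the target into $w_s=w_{yy}-c_0Re\,w$. The kernel PDE is then
\[
k_{yy}-k_{\eta\eta}=c_0Re\,k,\qquad k(y,0)=0,\qquad k(y,y)=-\tfrac{c_0Re}{2}\,y .
\]
Its classical solution is \eqref{eq:k0}. To verify it, I will write $k=-\lambda\eta\,I_1(\sqrt{\lambda(y^2-\eta^2)})/\sqrt{\lambda(y^2-\eta^2)}$ with $\lambda=c_0Re$. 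The modified Bessel equation gives the kernel PDE. The boundary data follow from $I_1(x)/x\to\tfrac12$ and from the factor $\eta$.

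Next I differentiate $w=u-\int_0^yk\,u$ in time. I substitute $u_t=u_{yy}/Re$ and integrate by parts twice, using $u(0)=0$ and $k(y,0)=0$. This yields $w_t-w_{yy}/Re+c_0w=0$ exactly when the kernel equations hold. At the walls, $w(0)=u(0)=0$. The feedback $U_c(0,t)=\int_0^1k_0(1,\eta)u\dd\eta$ gives $w(1)=0$.

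\emph{Invertibility and decay.} The kernel is continuous on the closed triangle, so the Volterra operator $I-\mathbb K_0$ is bounded on $L^2$. Its inverse is given by the Neumann series, which converges because iterated Volterra kernels with a continuous kernel are bounded by $\|k\|_\infty^n y^{n-1}/(n-1)!$. The inverse is therefore a bounded Volterra operator; one may also cite the explicit $J_1$ inverse kernel from \cite{SGK2009}.

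The target operator $w\mapsto w_{yy}/Re-c_0w$ with Dirichlet conditions is self-adjoint on $L^2$. Its top eigenvalue is $-\pi^2/Re-c_0$, so it generates an analytic contraction semigroup decaying at exactly that rate. Conjugating by the transformation gives an analytic semigroup for the closed-loop $u$. The bound $\|u(t)\|\le\|(I-\mathbb K_0)^{-1}\|\,\|I-\mathbb K_0\|\,e^{-(\pi^2/Re+c_0)t}\|u(0)\|$ follows, and $c_0\ge\omega$ yields decay at rate at least $\omega$.

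The main obstacle is the rigorous identification of the closed-loop generator with the conjugated target, rather than a formal calculation. I will handle it as in Theorem~\ref{thm:main}(ii). First, show that for smooth data in the image domain the boundary condition $w(1)=0$ is equivalent to the feedback law. Then observe that the conjugated domain $(I-\mathbb K_0)^{-1}(H^2\cap H^1_0)$ coincides with the set of $H^2$ functions with $u(0)=0$ that satisfy the feedback trace condition, and that the two operators agree there.
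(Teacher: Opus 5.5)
Your proposal is correct and follows the paper's proof: the classical reaction--diffusion backstepping kernel of \cite[Ch.~4]{SGK2009} with $k_0(y,0)=0$ and $k_0(y,y)=-\tfrac{c_0Re}{2}y$, a bounded Volterra inverse from the continuity of the kernel, and transfer of the target's $\pi^2/Re+c_0$ decay through the transformation. Your explicit identification of the domain, $(I-\mathbb K_0)^{-1}(H^2\cap H^1_0)=\{u\in H^2:\ u(0)=0,\ u(1)=\int_0^1k_0(1,\eta)u(\eta)\dd\eta\}$, gives the generator-level similarity that the paper leaves implicit when it says the estimate transfers.
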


\begin{proof}
The kernel equations for the transformation are
$\tfrac1{Re}(k_{0,yy}-k_{0,\eta\eta})=c_0\,k_0$ with $k_0(y,0)=0$ and
$k_0(y,y)=-\tfrac{c_0Re}2y$, whose solution is \eqref{eq:k0}
\cite[Ch.~4]{SGK2009}; $k_0$ is continuous on the triangle, so the transformation is
bounded with bounded Volterra inverse. The target decays at
$\pi^2/Re+c_0$ in $L^2$, and the estimate transfers through the transformation.
\end{proof}

\begin{lemma}[Unactuated modes: dissipativity and sector]\label{lem:sector}
For $|k|\ge K$, compute the numerical range $W(A_k)$ of the block $A_k$ in
$\ip{\cdot}{\cdot}$, equivalently in the energy metric
$\ip{\cdot}{\cdot}/(4\pi^2k^2)$, which leaves $W$ unchanged. Then:
since $K\ge8Re$,
$\Rea W(A_k)\le-2\pi^2k^2/Re\le-\omega$ and $|\Ima W(A_k)|\le|\Rea W(A_k)|$, so
$W(A_k)\subset\Sigma\coloneqq\{z:\Rea z\le-\omega,\ |\Ima z|\le|\Rea z|+1\}$.
Consequently the direct sum $A_\infty$ of the unactuated blocks generates an
analytic semigroup on the $\ell^2$ sum of the modal energy spaces, with
$\|e^{tA_\infty}\|\le e^{-\omega t}$, per mode
$\|e^{tA_k}\|\le e^{-2\pi^2k^2t/Re}$, and $\|A_ke^{sA_k}\|\le C\,e^{C_1s}/s$ for
$s>0$, with $C=C(Re)$ independent of $k$ and $C_1\coloneqq1$.
\end{lemma}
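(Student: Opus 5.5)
The plan is a direct energy computation of $\ip{A_kV}{V}$ on the core $H^4\cap H^2_0$, with real and imaginary parts bounded separately; the semigroup statements then follow from standard sectorial-operator theory. By Lemma \ref{lem:pairconv} and Definition \ref{def:ops}, for $V\in H^4\cap H^2_0$ with $\|V\|_k=1$,
\[
\ip{A_kV}{V}=-\ipL{\mathcal L_kV}{V}
=-\tfrac1{Re}\|\Lap V\|^2-8\pi ki\ipL{y(y-1)\Lap V}{V}+16\pi ki\|V\|^2 .
\]
Here I used $\ipL{\Lap^2V}{V}=\|\Lap V\|^2$, which holds because $V\in H^2_0$. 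The numerical range is unchanged by rescaling the inner product by the positive constant $4\pi^2k^2$, so the energy-metric remark in the statement is immediate. Closedness of $A_k$ and density of the core (Proposition \ref{prop:ops}) then transfer the bounds to all of $W(A_k)$.

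\emph{Real part.} The term $16\pi ki\|V\|^2$ is purely imaginary. In the advection term the $-4\pi^2k^2|V|^2$ part of $\Lap V\,\overline V$ also contributes only an imaginary quantity. For the remaining part, integrate by parts:
\[
\ipL{y(y-1)V''}{V}=-\int_0^1 y(y-1)|V'|^2\dd y-\int_0^1(2y-1)V'\overline V\dd y .
\]
The first integral is real. The real part of the second equals $\tfrac12\int(2y-1)(|V|^2)'=-\|V\|^2$, which is also real. So the only contribution of the mean flow to $\Rea\ip{A_kV}{V}$ is $8\pi k\,\Ima\int(2y-1)V'\overline V$, bounded by $8\pi|k|\,\|V\|\,\|V'\|$. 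Set $x\coloneqq\|\Lap V\|$. Since $\|V'\|\le1$ and $\|V\|\le x/(\pi^2+4\pi^2k^2)$, this term is at most $2x/(\pi|k|)$. Completing the square gives
\[
\Rea\ip{A_kV}{V}\le-\frac{x^2}{2Re}+\frac{2Re}{\pi^2k^2}.
\]
Next, $\Lap^{-1}$ has spectrum below $-(\pi^2+4\pi^2k^2)$ in the energy metric, so $x^2\ge\pi^2+4\pi^2k^2$. This yields $\Rea\le-2\pi^2k^2/Re-\pi^2/(2Re)+2Re/(\pi^2k^2)$, and the last two terms combine to a non-positive number once $|k|\ge2Re/\pi^2$, which $|k|\ge K\ge8Re$ guarantees. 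Finally $K\ge\pi^{-1}\sqrt{\omega Re/2}$ gives $2\pi^2k^2/Re\ge\omega$.

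\emph{Imaginary part.} All advection terms are bounded by a multiple of $|k|\big(\|V'\|^2+\|V\|\|V'\|+k^2\|V\|^2+\|V\|^2\big)$. With $\|V\|_k=1$ this is at most $C_*|k|$ for an explicit small absolute constant: $8\pi|k|\cdot\tfrac14$ from $|y(y-1)|\le\tfrac14$, plus lower-order pieces of size $O(|k|^{-1})$. Then $|\Ima W|\le|\Rea W|$ reduces to $C_*|k|\le2\pi^2k^2/Re$, i.e. $|k|\ge C_*Re/(2\pi^2)$. This is again covered by $|k|\ge8Re$, provided the constants are tracked honestly. That constant bookkeeping is the step I expect to need care: the factor $8$ in $K$ must dominate $C_*/(2\pi^2)$, and if it does not, the sector would be relaxed to $|\Ima z|\le|\Rea z|+1$, which is all that $\Sigma$ needs.

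\emph{Semigroup consequences.} Each $A_k$ is closed with $\rho(A_k)\ni$ large real numbers (Proposition \ref{prop:ops}) and $W(A_k)\subset\Sigma$. Hence $\sigma(A_k)\subset\overline{W(A_k)}$, and $\|(z-A_k)^{-1}\|\le1/\dist(z,W(A_k))\le1/\dist(z,\Sigma)$ off $\Sigma$. By Lumer--Phillips, $\|e^{tA_k}\|\le e^{t\sup\Rea W(A_k)}\le e^{-2\pi^2k^2t/Re}$. The direct sum $A_\infty$ has numerical range in the closed convex hull of $\Sigma$, so it generates a contraction semigroup with $\|e^{tA_\infty}\|\le e^{-\omega t}$. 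Because the resolvent bound off $\Sigma$ is uniform in $k$, the sector $\Sigma$ has half-opening angle $\pi/4$ plus a bounded vertical offset, and shifting by $C_1=1$ absorbs that offset. The standard sectorial estimate then gives $\|A_ke^{sA_k}\|\le Ce^{C_1s}/s$ with $C$ independent of $k$, and also analyticity of the semigroup generated by $A_\infty$.
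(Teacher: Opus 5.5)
Your argument is correct and has the same architecture as the paper's: the pairing $\ip{A_kV}{V}=-\ipL{\mathcal L_kV}{V}$, separate bounds on real and imaginary parts, then the numerical-range resolvent bound, Lumer--Phillips, and a $k$-uniform sectorial estimate. Where you differ is in how the central estimate is organized.

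The paper never integrates the advection term by parts. It bounds $|q|\le\tfrac14\|\Lap V\|\,\|V\|\le\|\Lap V\|^2/(4p)$ and $\|V\|^2\le\|\Lap V\|^2/p^2$, with $p=\pi^2+4\pi^2k^2$. Both $\Rea\ip{A_kV}{V}\le-\|\Lap V\|^2/(2Re)$ and $|\Ima\ip{A_kV}{V}|\le\|\Lap V\|^2/(2Re)$ then become multiples of the single quantity $\|\Lap V\|^2$. The sector inequality reduces to one numerical check, and it uses nothing about the mean flow beyond $|y(y-1)|\le\tfrac14$.

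You instead isolate exactly, by one integration by parts, the only piece of the mean-flow term that reaches the real part, $-8\pi k\,\Ima\int_0^1(2y-1)V_y\overline V\dd y$. Your sign is off, which is immaterial since you bound the modulus. You then complete the square against $\|\Lap V\|^2\ge p\|V\|_k^2$. This is sharper and shows where the shear enters the energy balance. The cost is that your imaginary-part bound is in $\|V\|_k^2$ and must be compared with $2\pi^2k^2/Re$, which is the step you left unfinished.

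That comparison closes with a wide margin. The same integration by parts gives
\[
\Ima\ip{A_kV}{V}=-8\pi k\int_0^1y(1-y)\big(|V_y|^2+4\pi^2k^2|V|^2\big)\dd y+8\pi k\|V\|^2 .
\]
The two terms have opposite signs, and $\|V\|^2\le\|V\|_k^2/\pi^2<\tfrac14\|V\|_k^2$. Hence $|\Ima\ip{A_kV}{V}|\le2\pi|k|\,\|V\|_k^2\le2\pi^2k^2\|V\|_k^2/Re$ as soon as $|k|\ge Re/\pi$.

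Two cautions on this step:
\begin{itemize}
\item Bound the lower-order pieces through $\|V\|^2\le\|V\|_k^2/\pi^2$, not as $O(|k|^{-1})$. The hypothesis allows small $|k|$ when $Re$ and $\omega Re$ are small, and there a $1/|k|$ bound would break the comparison.
\item Drop the fallback of relaxing the sector to $|\Ima z|\le|\Rea z|+1$. It would not give the inequality $|\Ima W(A_k)|\le|\Rea W(A_k)|$ that the lemma states.
\end{itemize}
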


\begin{proof}
For
$V\in\mathcal D$,
$\ip{A_kV}{V}=-\ipL{\mathcal L_kV}{V}=-\tfrac1{Re}\|\Lap V\|^2-8\pi ki\,q+16\pi ki\|V\|^2$,
$q\coloneqq\ipL{y(y-1)\Lap V}{V}$. By Proposition \ref{prop:ops}(i),
$\|\Lap V\|^2\ge4\pi^2k^2\|V\|_k^2$, and, $-\Lap$ having least eigenvalue
$p\coloneqq\pi^2+4\pi^2k^2$ on $H^1_0$, $\|V\|\le\|\Lap V\|/p$. Since
$|y(y-1)|\le\tfrac14$, $|q|\le\tfrac14\|\Lap V\|\,\|V\|\le\|\Lap V\|^2/(4p)$, and
$\|V\|^2\le\|\Lap V\|^2/p^2$. Hence
\begin{gather}
\Rea\ip{A_kV}{V}\le-\|\Lap V\|^2\Big(\frac1{Re}-\frac{2\pi|k|}{p}\Big)\le-\frac{\|\Lap V\|^2}{2Re}
\le-\frac{2\pi^2k^2}{Re}\,\|V\|_k^2,\label{eq:sectorRe}\\
|\Ima\ip{A_kV}{V}|\le\|\Lap V\|^2\Big(\frac{2\pi|k|}{p}+\frac{16\pi|k|}{p^2}\Big),\label{eq:sectorIm}
\end{gather}
and since $K\ge8Re$, so that $Re\le|k|/8$, and since $k^2/p\le1/(4\pi^2)$ and
$k^2/p^2\le1/(4\pi^4)$,
\begin{equation}\label{eq:sectorc}
Re\Big(\frac{2\pi|k|}{p}+\frac{16\pi|k|}{p^2}\Big)
\le\frac{\pi k^2}{4p}+\frac{2\pi k^2}{p^2}\le\frac1{16\pi}+\frac1{2\pi^3}<\frac12 ,
\end{equation}
which gives the second inequality in \eqref{eq:sectorRe} and
$|\Ima|\le|\Rea|$, for every $Re>0$; the bound $2\pi^2k^2/Re\ge\omega$ holds because
$K\ge\pi^{-1}\sqrt{\omega Re/2}$. Each block is closed with compact resolvent and
$\{\Rea z>\|B_k\|\}\subset\rho(A_k)$ (Proposition \ref{prop:ops}(ii)), so the component of
$\mathbb C\setminus\overline{W(A_k)}$ containing the far positive real axis meets
$\rho(A_k)$; the numerical-range bound then gives, on that component,
$\|(z-A_k)^{-1}\|\le1/\dist(z,W(A_k))\le1/\dist(z,\Sigma)$, uniformly in $k$. Since
$|\Ima z|\le|\Rea z|+C_1=\Rea(C_1-z)$ on $\Sigma$,
$\Sigma\subset\{z:|\arg(C_1-z)|\le\pi/4\}$: the bound is a sectorial estimate with
vertex $C_1$ and half-angle $\pi/4$. The direct sum has resolvent the direct sum, with
the same bound, hence generates an analytic semigroup, and the contour representation
$A_ke^{sA_k}=\tfrac1{2\pi i}\oint ze^{sz}(z-A_k)^{-1}\dd z$ over the shifted sector
boundary gives the uniform derivative bound. The contraction bounds follow from
Lumer--Phillips with \eqref{eq:sectorRe}.
\end{proof}

By \eqref{eq:uV}, the map $J_kV\coloneqq\big(V_y/(-2\pi ki),V\big)$ carries $\Hk$,
respectively $\Hkp$, with the norm $\|\cdot\|_k/(2\pi|k|)$, isometrically onto the
$k$-th physical energy subspace; the modal blocks below are transported by these maps.
At $k=0$ the feedback condition of Proposition \ref{lem:mean} lies in the domain of the
mean block, $L^2(0,1)$ carrying no trace.

\begin{definition}[Physical state space and generator]\label{def:X}
Let $X$ be the space \eqref{eq:Xspace} of Theorem \ref{thm:first}, with
$\|(u,V)\|_X^2\coloneqq\|u\|_{L^2}^2+\|V\|_{L^2}^2$, and let $A_X$ be the orthogonal
modal direct sum of the closed-loop blocks $\mathcal A_F(k)$ for
$k\in\mathcal B\setminus\{0\}$ (Theorem \ref{thm:main}), the closed-loop mean block of
Proposition \ref{lem:mean}, and the unactuated blocks of Lemma \ref{lem:sector}, with
$\mathcal D(A_X)$ the set of $w\in X$ whose modes lie in the block domains with
square-summable block images.
\end{definition}

\begin{lemma}[Uniform high-mode graph and pressure estimate]\label{lem:pgraph}
There is $C=C(Re)$ such that for every $|k|\ge K$ and $V\in\mathcal D$,
\begin{equation}\label{eq:kgraph}
\|V\|_{H^3}\le C(1+|k|)^3\big(\|A_kV\|_k+\|V\|_k\big),
\end{equation}
and, with $u\coloneqq V_y/(-2\pi ki)$, $u_t\coloneqq(A_kV)_y/(-2\pi ki)$ and
$2\pi ki\,p\coloneqq\tfrac1{Re}\Lap u+8\pi ki\,y(y-1)u+4(2y-1)V-u_t$,
\begin{equation}\label{eq:pk}
\|p\|_{L^2(0,1)}\le C(1+|k|)^3\big(\|A_kV\|_e+\|V\|_e\big),
\end{equation}
where $\|\cdot\|_e=\|\cdot\|_k/(2\pi|k|)$ is the modal energy norm.
\end{lemma}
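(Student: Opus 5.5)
Both estimates reduce to the elliptic estimate of Lemma~\ref{lem:ell}, with its $k$-dependence tracked explicitly; the uniformity in $k$ comes from the sectorial bound of Lemma~\ref{lem:sector}. For \eqref{eq:kgraph}, start from $\Lap A_kV=\mathcal L_kV$, i.e.
\[
\tfrac1{Re}\Lap^2V=\Lap A_kV-8\pi ki\,y(y-1)\Lap V+16\pi ki\,V
\]
in $H^{-1}$. Bound the right side in the $H^{-1}$ norm dual to $\|\cdot\|_k$. The term $\|\Lap A_kV\|$ equals $\|A_kV\|_k$, since $\Lap$ is an isometry. The advective terms are bounded by $C|k|\,\|\Lap V\|_{H^{-1}}+C|k|\,\|V\|$, hence by $C|k|\,\|V\|_k$; here I use $\|\Lap V\|_{H^{-1}}=\|V\|_k$ and the multiplier $y(y-1)$, which is bounded on $H^1_0$ with an $O(1)$ constant in the $\|\cdot\|_k$ norm. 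This gives
\[
\|\Lap^2V\|_{H^{-1}}\le Re\big(\|A_kV\|_k+C|k|\,\|V\|_k\big).
\]
I then redo Lemma~\ref{lem:ell} with explicit $k$: write $V_{yyyy}=\Lap^2V+8\pi^2k^2V_{yy}-16\pi^4k^4V$ and interpolate using the clamped conditions at both walls. Each conversion between $\|\cdot\|_k$ and $H^1$ or $H^{-1}$ costs at most a factor $(1+|k|)$. Collecting the powers should give the stated $(1+|k|)^3$. The constant depends on $Re$ only through the prefactor.

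For \eqref{eq:pk}, express $p$ from the given definition: $|2\pi k|\,\|p\|$ is at most $\tfrac1{Re}\|\Lap u\|+C|k|\,\|u\|+C\|V\|+\|u_t\|$. Since $u=V_y/(-2\pi ki)$, we have $\|\Lap u\|\le C(1+|k|)^2\|V\|_{H^3}/|k|$, which is controlled by \eqref{eq:kgraph}. The remaining terms are controlled as follows: $\|u\|$ and $\|V\|$ by $\|V\|_e$ via \eqref{eq:uV}, and $\|u_t\|\le\|A_kV\|_e$ by \eqref{eq:uV} applied to $A_kV\in H^1_0$. Converting $\|\cdot\|_k$ to $\|\cdot\|_e$ costs a factor $2\pi|k|$, and dividing by $|2\pi k|$ returns it. With these factors the bound stays within $(1+|k|)^3$, up to one extra power that is absorbed after the division.

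The main obstacle is the power count in the explicit-$k$ version of Lemma~\ref{lem:ell}. I would prove it by integrating by parts against $V_{yy}$ and $V$ on the clamped interval, obtaining
\[
\|V_{yyy}\|\lesssim\|\Lap^2V\|_{H^{-1}}+k^2\|V_y\|,
\]
and then using $k^2\|V_y\|\le|k|\,\|V\|_k\cdot|k|$. If this yields $(1+|k|)^4$ instead, I would sharpen it using the dissipativity \eqref{eq:sectorRe}, which gives $\|\Lap V\|^2\le 2Re\,|\ip{A_kV}{V}|$ and so absorbs one power of $k$.
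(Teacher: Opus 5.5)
Your route is the paper's: bound $\Lap^2V$ in $H^{-1}$ from $\Lap A_kV=\mathcal L_kV$, pass to $V_{yyyy}$, and rerun the one-dimensional chain of Lemma \ref{lem:ell}. That chain has $k$-free constants in the unweighted dual norm, and this, not Lemma \ref{lem:sector}, is what makes the constants uniform in $k$. The gap is in the exponent bookkeeping, and in the pressure step it is an actual miscount. From $\|\Lap u\|\le C(1+|k|)^2\|V\|_{H^3}/|k|$ and \eqref{eq:kgraph} you get $\|\Lap u\|\le C(1+|k|)^5|k|^{-1}(\|A_kV\|_k+\|V\|_k)=C'(1+|k|)^5(\|A_kV\|_e+\|V\|_e)$. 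The prefactor $1/(2\pi|k|\,Re)$ of $\Lap u$ in $p$ then leaves $(1+|k|)^5/|k|$, i.e.\ $(1+|k|)^4$. There is only one factor $1/(2\pi|k|)$. It is used up cancelling $\|\cdot\|_k=2\pi|k|\,\|\cdot\|_e$, and your sentence spends it a second time to absorb the extra power.

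The loss comes from sending $4\pi^2k^2\|V_y\|$ through $\|V\|_{H^3}$. Split $\Lap u=(V_{yyy}-4\pi^2k^2V_y)/(-2\pi ki)$. Bound $\|V_{yyy}\|$ by \eqref{eq:kgraph}, and bound $4\pi^2k^2\|V_y\|\le4\pi^2k^2\|V\|_k$ directly. This gives $\|\Lap u\|\le C(1+|k|)^3(\|A_kV\|_e+\|V\|_e)$, and the prefactor leaves room to spare because $|k|\ge K\ge8Re$. Theorem \ref{thm:phys} would tolerate any polynomial power, but the lemma claims the cube.

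For \eqref{eq:kgraph} you stop at ``should give''. Your rule that each norm conversion costs at most one factor $(1+|k|)$ would in fact produce $k^4$. The decisive term is $16\pi^4k^4\|V\|_{H^{-1}}$, and it is $O(|k|^3\|V\|_k)$ only because the $L^2$ norm gains a power, $\|V\|\le\|V\|_k/(2\pi|k|)$. With that gain the count closes as follows:
your bound $\|\Lap^2V\|\le Re(\|A_kV\|_k+C|k|\,\|V\|_k)$, in the norm dual to $\|\cdot\|_k$, costs one factor $(1+|k|)$ to reach the unweighted dual norm;
$8\pi^2k^2\|V_{yy}\|_{H^{-1}}\le8\pi^2k^2\|V\|_k$;
and the chain returns exactly $(1+|k|)^3$, which is the paper's count.

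Your dissipativity fallback is also sound. By \eqref{eq:sectorRe}, $\|\Lap V\|^2\le2Re\,\|A_kV\|_k\|V\|_k$, which controls $k^2\|V\|$ and $|k|\,\|V_y\|$ and lowers the exponent in \eqref{eq:kgraph} to $2$ for $|k|\ge K$. With that version even your lossy pressure count closes at $(1+|k|)^3$, but the proposal as written commits to the primary route. Drop the integration by parts against $V_{yy}$: $V_{yy}\notin H^1_0$, so the $H^{-1}$ pairing is not available for it. The mean-zero argument of Lemma \ref{lem:ell} already converts $\|V_{yyyy}\|_{H^{-1}}$ into $\|V_{yyy}\|$.
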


By a solution of the closed loop with initial field $w_0\in X$ is meant
$w\in C([0,\infty);X)\cap C^1((0,\infty);X)$ with $w(t)\in\mathcal D(A_X)$ and
$\dot w=A_Xw$ for $t>0$, and $w(0)=w_0$; existence and uniqueness below are within this
class.

\begin{theorem}[Closed-loop linearized Navier--Stokes, periodic channel]\label{thm:phys}
Under a common-band design as above:
\begin{enumerate}
\item[(i)] $A_X$ generates an analytic semigroup on $X$, with
\begin{equation}\label{eq:phys}
\|e^{tA_X}\|_{X}\le C^{1/2}e^{-\omega t},\qquad t\ge0,
\end{equation}
$C$ the largest squared condition number of the transformations over the band, the
mean mode included; every wavenumber decays at least at the rate $\omega$ in its modal
energy, and $e^{tA_X}$ commutes with complex conjugation of the field.
\item[(ii)] For $w_0\in X$ and $t>0$, every mode is a classical solution; defining,
for $k\ne0$,
\begin{equation}\label{eq:pressure}
2\pi ki\,p(k,\cdot,t)\coloneqq\frac1{Re}\Lap u(k)+8\pi ki\,y(y-1)\,u(k)+4(2y-1)V(k)-u_t(k),
\end{equation}
and $p(0,\cdot,t)\coloneqq0$, the triple $(u,V,p)$ satisfies
\eqref{eq:linu}--\eqref{eq:cont} mode by mode, with the wall conditions \eqref{eq:bc},
the feedback \eqref{eq:Uc2} on $\mathcal B$ and $U_c=V_c=0$ off it, and
$p(\cdot,t)\in L^2(\Omega_L)$.
\end{enumerate}
\end{theorem}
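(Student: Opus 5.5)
The plan is to treat $A_X$ as an orthogonal direct sum of modal generators and to deduce every statement from uniform bounds on the blocks. By Parseval and the isometries $J_k$, $X$ is the orthogonal $\ell^2$ sum of the modal energy spaces: $J_k\Hkp$ for $k\in\mathcal B\setminus\{0\}$, $L^2(0,1)$ (the $u$-component) at $k=0$, and $J_k\Hk$ for $k\notin\mathcal B$. The constraints in \eqref{eq:Xspace} are exactly the modal constraints defining these spaces. For the finitely many actuated $k\ne0$, Theorem \ref{thm:main}(iii) gives analytic semigroups with $\|e^{t\mathcal A_F(k)}\|\le\|T\|\|T^{-1}\|e^{-\omega t}$, since $q(k)$ is chosen with $\nu_{\max}(k)\le-\omega$. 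Proposition \ref{lem:mean} gives the analogous bound at $k=0$. Lemma \ref{lem:sector} gives contraction with rate $\omega$, uniformly, for the infinitely many unactuated blocks. Since the semigroup of a direct sum is the direct sum of the semigroups, its norm is the supremum of the block norms. This yields \eqref{eq:phys} with $C^{1/2}$ the largest condition number, which is finite because only finitely many blocks are actuated.

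For analyticity I will use the resolvent. On a common sector $\{z:|\arg(z-C_1')|<\pi/2+\delta\}$ each actuated block satisfies $\|(z-A)^{-1}\|\le M/|z-C_1'|$. This is because each block is similar to a self-adjoint operator with spectrum in $(-\infty,-\omega]$, and the finitely many similarity constants are absorbed into $M$. Lemma \ref{lem:sector} supplies the same type of bound, uniformly in $k$, for the tail. The direct-sum resolvent is therefore uniformly sectorial, and $A_X$ generates an analytic semigroup. Closedness and the characterization of $\mathcal D(A_X)$ follow from the block definition. Existence and uniqueness in the stated class are then standard semigroup facts: analyticity gives $w(t)\in\mathcal D(A_X)$ and $C^1$ for $t>0$. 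Conjugation symmetry follows from Lemma \ref{lem:real} together with the conjugation symmetry of the unactuated and mean blocks, since $\mathcal L_{-k}\overline V=\overline{\mathcal L_kV}$.

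For (ii), I will show that each mode is classical for $t>0$. Each modal component of $e^{tA_X}w_0$ lies in the block domain, which is contained in $H^3$ by Theorem \ref{thm:main}(ii) and Definition \ref{def:ops}. Substituting $u=V_y/(-2\pi ki)$ into \eqref{eq:OSk}, with the pressure defined by \eqref{eq:pressure}, makes \eqref{eq:linu} an identity. I will then check \eqref{eq:linV} by applying $\Lap$, or equivalently by integrating the Orr--Sommerfeld equation, and using the continuity equation. The boundary conditions come from the domain: \eqref{eq:Uc2} on $\mathcal B$, and $V(1)=V_y(1)=0$ off it.

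The main obstacle is $p(\cdot,t)\in L^2(\Omega_L)$, which requires square-summability of the pressure over infinitely many unactuated modes. The finitely many actuated modes are harmless. For $|k|\ge K$ I plan to apply Lemma \ref{lem:pgraph}:
\[
\|p(k)\|\le C(1+|k|)^3\big(\|A_kV(k,t)\|_e+\|V(k,t)\|_e\big).
\]
The factor $(1+|k|)^3$ must be controlled using the decay of the modal semigroups. Writing $V(k,t)=e^{tA_k}V(k,0)$ and splitting $e^{tA_k}=e^{(t/2)A_k}e^{(t/2)A_k}$, the bound $\|A_ke^{sA_k}\|\le Ce^{C_1s}/s$ gives
\[
\|A_kV(k,t)\|_e\le C t^{-1}e^{C_1t/2}e^{-\pi^2k^2t/Re}\|V(k,0)\|_e,
\]
and $\|V(k,t)\|_e$ obeys the same Gaussian factor. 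Because $\sup_k(1+|k|)^3e^{-\pi^2k^2t/Re}<\infty$ for each fixed $t>0$, the modal pressure norms are dominated by a constant times the modal initial energies, which are square-summable. This gives $p(\cdot,t)\in L^2(\Omega_L)$, and completes the proof of Theorem \ref{thm:first} as well.
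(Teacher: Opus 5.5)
Your plan follows the paper's proof. It uses the orthogonal modal direct sum, gets sectoriality and the rate from the finitely many similarity constants $\|T_k\|\,\|T_k^{-1}\|$ and $C_0$ plus the $k$-uniform sector of Lemma~\ref{lem:sector}, and gets $p(\cdot,t)\in L^2(\Omega_L)$ from Lemma~\ref{lem:pgraph} combined with $\|A_ke^{sA_k}\|\le Ce^{C_1s}/s$. Part (i) and the pressure summability go through as you write them. Splitting at $t/2$ instead of at the paper's $s=\min\{t,2\}/2$ only gives up uniformity in large $t$, which the statement does not claim.

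The step in (ii) that does not close as you describe it is the verification of \eqref{eq:linV}. First, \eqref{eq:linu} holds by the definition \eqref{eq:pressure} alone, with no use of \eqref{eq:OSk}; it is \eqref{eq:linV} that needs \eqref{eq:OSk}. The two mechanisms you name fall short. Applying $\Lap$ determines the residual $r_V$ of \eqref{eq:linV} only modulo $\spn\{\cosh2\pi ky,\sinh2\pi ky\}$, and integrating \eqref{eq:OSk} in $y$ determines it only modulo integration constants. You have no wall values of $r_V$ to remove them, because $p$ is defined through \eqref{eq:linu} with no pressure boundary condition.

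What closes the step is a first-order algebraic identity. Let $r_u$ be the residual of \eqref{eq:linu}. Continuity $2\pi ki\,u+V_y=0$ cancels the terms $(2y-1)(8\pi ki\,u+4V_y)$ and gives $\Lap V=-2\pi ki\,(u_y-2\pi ki\,V)$. Hence, pointwise,
\[
\Lap V_t-\mathcal L_kV=-2\pi ki\,\partial_yr_u-4\pi^2k^2\,r_V .
\]
In words, $\partial_y$ of \eqref{eq:linu} minus $2\pi ki$ times \eqref{eq:linV} is the vorticity form of \eqref{eq:OSk}. With $r_u=0$ and \eqref{eq:OSk} in force, $r_V=0$ follows at once; this is the paper's argument.

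The identity also shows what regularity you need. Since $r_V$ contains $p_y$, it involves $V_{yyyy}$ and $(A_kV)_{yy}$, so block domains contained in $H^3$ do not suffice for ``classical''. Instead use $w(t)\in\mathcal D(A_X^r)$ for every $r$, which analyticity gives, and bootstrap through $\Lap(AV)=\mathcal L_kV$.

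Finally, treat $k=0$ separately. There $V\equiv0$, so \eqref{eq:linV} reduces to $p_y=0$, which $p(0,\cdot,t)=0$ satisfies, and \eqref{eq:linu} is the closed-loop block of Proposition~\ref{lem:mean}.
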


\begin{proof}
(i) The modal subspaces are orthogonal and $A_X$-invariant; the resolvent of $A_X$ is
the direct sum of the block resolvents, bounded by the maximum of finitely many
sectorial bounds ($\mathcal A_F(k)=T_k^{-1}\widetilde S_kT_k$, Theorem
\ref{thm:main}(iii)) and the uniform bound of Lemma \ref{lem:sector}; hence $A_X$ is
sectorial. The decay: Parseval decouples the wavenumbers;
$\|e^{t\mathcal A_F(k)}\|\le\|T_k\|\|T_k^{-1}\|e^{-\omega t}$ on the band (Theorem
\ref{thm:main}(iii) with \eqref{eq:uV}), $\|e^{tA_0}\|\le C_0e^{-\omega t}$ for the mean mode (Proposition \ref{lem:mean}), and
$\|e^{tA_\infty}\|\le e^{-\omega t}$ off the band (Lemma \ref{lem:sector}); combining
gives \eqref{eq:phys}.
Reality: the conjugation symmetry of Lemma \ref{lem:real} intertwines the $\pm k$
blocks.

(ii) For $t>0$, analyticity gives $w(t)\in\mathcal D(A_X^r)$ for every $r$, and each
mode bootstraps through the smooth-coefficient fourth-order ODE to any spatial
regularity; in particular every mode is classical. The streamwise equation \eqref{eq:linu} holds by the definition
\eqref{eq:pressure} of $p$. For the normal equation: with
$\omega\coloneqq u_y(k)-2\pi ki\,V(k)$, continuity gives $\Lap V(k)=-2\pi ki\,\omega$,
and \eqref{eq:OSk} is exactly the vorticity equation
$\omega_t=\tfrac1{Re}\Lap\omega+8\pi ki\,y(y-1)\omega+8V$. On the other hand,
$\partial_y$ of \eqref{eq:linu} minus $2\pi ki$ times \eqref{eq:linV} eliminates $p$
and, using
$\partial_y\big[8\pi ki\,y(y-1)u\big]+4\partial_y\big[(2y-1)V\big]
=8\pi ki\,y(y-1)u_y+8V$ (the terms $8\pi ki(2y-1)u$ and $4(2y-1)V_y$ cancel by
continuity), yields the same vorticity identity minus $2\pi ki$ times the residual of
\eqref{eq:linV}. Hence that residual vanishes, and \eqref{eq:linV} holds; continuity
and the wall traces hold by the block domains, the controlled traces being
\eqref{eq:Uc2}. At $k=0$: $V\equiv0$ and \eqref{eq:linu} is the mean block of
Proposition \ref{lem:mean} under its feedback, while \eqref{eq:linV} reads $p_y=0$, so
the mean pressure is constant in $y$; it is determined only up to an additive
constant, and $p(0,\cdot,t)=0$ is one admissible choice.
Summability of the pressure: fix $t>0$ and set $s\coloneqq\min\{t,2\}/2$. For
$|k|\ge K$, write $V_k(t)=e^{sA_k}e^{(t-s)A_k}V_k(0)$; Lemma \ref{lem:sector} gives
$\|A_ke^{sA_k}\|\le Ce^{C_1s}/s$ with $C$ independent of $k$ and
$\|e^{(t-s)A_k}\|\le e^{-2\pi^2k^2(t-s)/Re}$ for $|k|\ge K$, so
\[
\|A_kV_k(t)\|_e+\|V_k(t)\|_e\le C_t\,e^{-\pi^2k^2t/Re}\,\|V_k(0)\|_e ,
\]
with $C_t$ locally bounded in $t>0$ (for $t\le2$, $t-s=t/2$; for $t>2$, $s=1$ and
$2(t-1)\ge t$). Lemma \ref{lem:pgraph} then gives
$\|p(k,\cdot,t)\|\le C_t(1+|k|)^3e^{-\pi^2k^2t/Re}\|w_{0,k}\|_e$, and
$\sup_k(1+|k|)^3e^{-\pi^2k^2t/Re}<\infty$ for each $t>0$, so
$\sum_k\|p(k,\cdot,t)\|^2\le C_t'\|w_0\|_X^2$; the finitely many modes with
$|k|<K$ contribute finitely, each being classical with $p(k,\cdot,t)\in L^2(0,1)$.
\end{proof}

With Theorem \ref{thm:phys}, the closed loop of the periodic channel is a well-posed
linear evolution in the primitive variables on the compatibility space $X$; initial
fields outside $X$ --- whose controlled modes violate the feedback trace at $t=0$ ---
are intrinsic to static feedback through the wall value and are not treated.

The band grows with $\omega$ like $\sqrt{\omega Re}$, and with it the
number of terms in \eqref{eq:synth}; the gains themselves do not grow with $|k|$, the
wavenumbers outside $\mathcal B$ being left to viscosity.

\begin{proof}[Proof of Theorem \ref{thm:first}]
Take a common-band design as above, with per-wavenumber targets of Definition
\ref{def:target} at rate $\nu_{\max}(k)\le-\omega$. By Theorem \ref{thm:phys}(i), $A_X$ generates an analytic semigroup on $X$ with
$\|e^{tA_X}\|_X\le C^{1/2}e^{-\omega t}$, so for every initial field in $X$ the closed
loop has the unique solution $e^{tA_X}w_0$, which satisfies \eqref{eq:decay-first} and
is real for real data; the pressure is Theorem \ref{thm:phys}(ii). The physical
feedback: expanding $V$ in Fourier series, the modal laws \eqref{eq:Uc2} on
$\mathcal B\setminus\{0\}$ assemble into \eqref{eq:Vcphys}--\eqref{eq:Ucphys} with the
displayed $\chi$ and the synthesis \eqref{eq:synth}; the conjugate symmetry of Lemma
\ref{lem:real} makes $\chi$ and $\mathcal G$ real, and the mean-mode law of Proposition
\ref{lem:mean}, which acts on $u(0,\cdot,t)=\tfrac1L\int_0^Lu(t,\xi,\cdot)\dd\xi$ with
the real kernel \eqref{eq:k0}, is the second term of \eqref{eq:Ucphys}. The kernels: for each
$k\in\mathcal B$, $\mathcal K^{(k)}$ is the integral kernel of $I-T_k$ on $\Hkp$: the
boundary part $(I-\Xi^{-1})V=\varepsilon_1\ipL{V}{g_1}\ell$ contributes
$\varepsilon_1\ell(y)g_1(\eta)$, which fixes
$\mathcal K(1,\eta)=\varepsilon_1g_1(\eta)$ and $\mathcal K(0,\eta)=0$ in
\eqref{eq:kpde-c}, while $\Xi^{-1}-T_k$ is Hilbert--Schmidt (Theorem \ref{thm:riesz})
and contributes a kernel vanishing at both walls in $y$. The wall-trace functionals in
\eqref{eq:kfeed} are the bounded functionals $\varepsilon_1\ipL{\cdot}{g_1}$ and
$c\varepsilon_1\ipL{\cdot}{g_1}+\varepsilon_2\ipL{\cdot}{g_2}+F_k$, in the pairing
sense of Remark \ref{rem:diag}; on $\mathcal D(\mathcal A_F)$, where
$T_kV\in\mathcal D\subset H^2_0$, they coincide with the classical traces, the gain
identity \eqref{eq:kfeed} with the decomposition \eqref{eq:prefeed} is the boundary
condition of $\mathcal D(\mathcal A_F)$ read together with $\alpha_y(1)=0$, and
$\mathcal K_y(0,\eta)=0$ expresses $(T_kV)_y(0)=0$. The system \eqref{eq:kpde} is the
formal transcription of the identity $T_kA_{\mathrm{cl}}=\widetilde S_kT_k$ of
\eqref{eq:Acl2} and Theorem \ref{thm:main}(ii), written in kernels.
\end{proof}

\section{Numerical illustration}\label{sec:numerics}

The purpose of this section is not fluid-dynamical. It is to show that the gain
constructed in Sections \ref{sec:part-spectral} and \ref{sec:part-fredholm} is not an
abstraction: it is computed from the plant data by the recipe of Theorem
\ref{thm:main}, and when it is applied at the wall of a nonlinear Navier--Stokes
simulation it does to the full flow what the theorem says the feedback does to the
linearization. The design is carried out on the
linearized plant; the test is run on the nonlinear one, at finite amplitude.

\begin{figure*}[p]
\centering
\includegraphics[width=\textwidth]{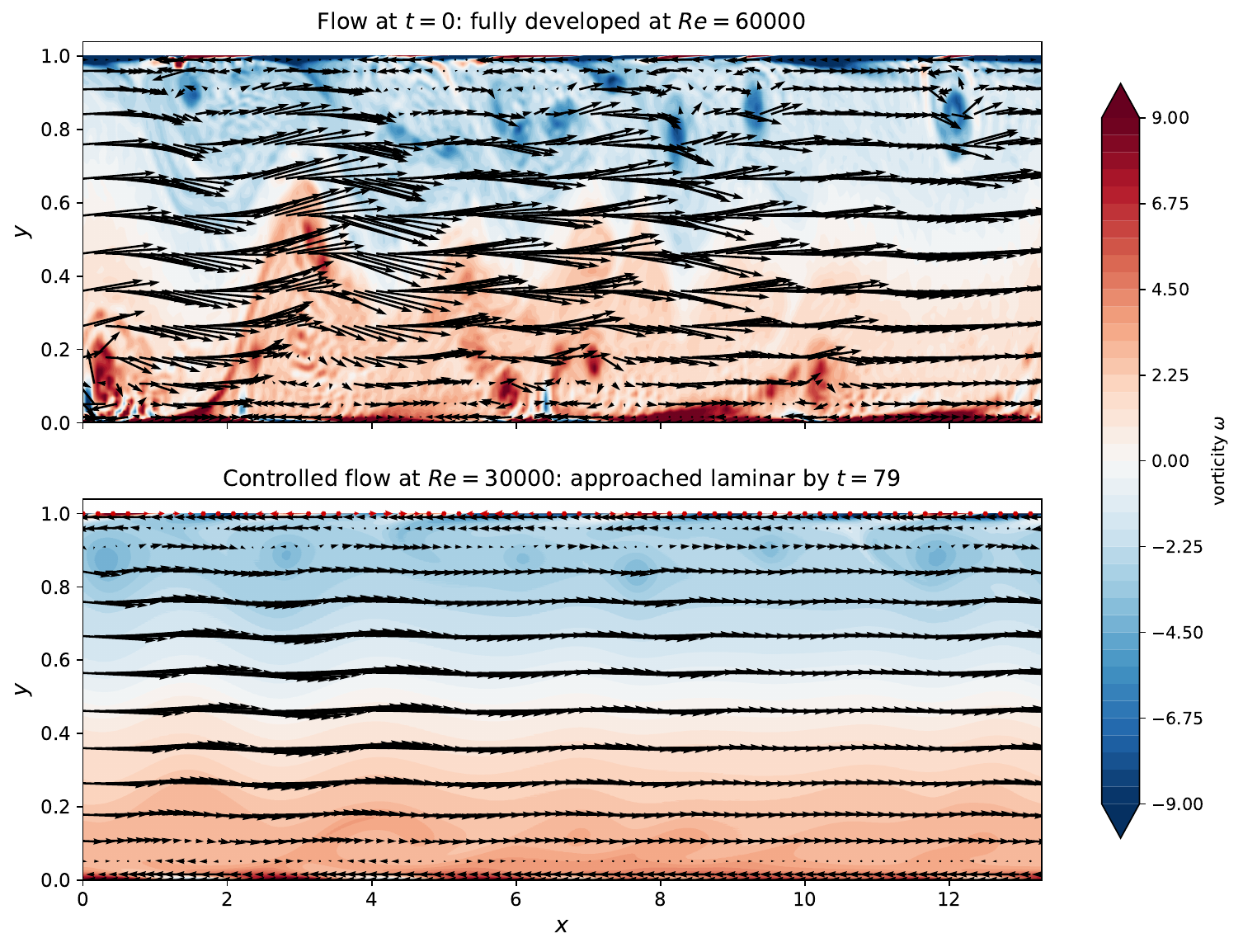}
\caption{Nonlinear channel flow at $Re=U^eH/\nu=30000$ ($15000$ on the half-height
convention, $2.6$ times critical), $L=13.3$, at which one wavenumber of the plant is
unstable, $k=0.3$. Color is vorticity, black arrows are velocity in the frame moving
at speed $0.20$, red arrows at the upper wall are the actuation, drawn at the same
scale as the velocity. Above: the initial condition, a fully developed flow carried
over from $Re=60000$, with $E'=2.52\times10^{-2}$. Below: the flow at $t=79.1$ under
the feedback of Theorem \ref{thm:first}, at $E'=2.38\times10^{-3}$.}
\label{fig:dns}
\end{figure*}

\subsection{Plant and numerical method}\label{sec:num-method}

The nonlinear two-dimensional Navier--Stokes equations are solved in the periodic
channel $(x,y)\in[0,L)\times[0,1]$ with $L=13.3$ and $Re=30000$ in the variables of
Section \ref{sec:model}, that is $Re=U^eH/\nu$ with $H$ the channel height and $U^e$
the centerline velocity of the parabolic profile; on the half-height convention of the
hydrodynamic stability literature this is $15000$, against a critical value of $5772$.
The formulation is vorticity--streamfunction, $\omega=u_y-V_x=\Delta\psi$ with
$u=\psi_y$ and $V=-\psi_x$, so that the pressure is eliminated and the wall conditions
of Section \ref{sec:model} are carried by $\psi$:
\begin{equation}\label{eq:dnsvort}
\omega_t+u\omega_x+V\omega_y=\frac1{Re}\Delta\omega .
\end{equation}
The lower wall is no-slip and impermeable; at the upper wall $u(t,x,1)=U_c(t,x)$ and
$V(t,x,1)=V_c(t,x)$ are the two actuated velocity components. The flux is held fixed
at its laminar value $2/3$.

Discretization is Fourier in $x$ with $N_x=256$ modes and dealiasing by the two-thirds
rule, and Chebyshev collocation in $y$ with $N_y=576$ points. Time stepping is
Crank--Nicolson on the viscous term and third-order Adams--Bashforth on the convective
term, with the step held below $2\times10^{-3}$ and reduced so that the streamwise
Courant number stays under $0.05$. The vorticity has no boundary condition of its own:
its wall values are the two unknowns of an influence-matrix problem, fixed at each step
so that the streamfunction recovered from $\omega$ carries the prescribed wall
velocities. The Helmholtz solves for $\omega$ and $\psi$ are performed in the
eigenbasis of the Chebyshev second-derivative matrix with homogeneous data, computed
once and serving every wavenumber; the eigenvalues are real and the eigenvector matrix
has condition number $3.3$, so the solves are accurate to $10^{-13}$ across the
wavenumber range.

Two checks establish that the plant is the intended one. The Orr--Sommerfeld eigenvalue
computed by the same Chebyshev discretization reproduces the benchmark of
\cite{Orszag71} at $Re=10000$, $\alpha=1$ to all eight published digits. The growth
rate of a small disturbance in the nonlinear code agrees with that eigenvalue to
$0.1\%$, and the parabolic profile is a steady solution of the discretization to
$10^{-11}$ over the duration of a run.

\subsection{Design choices}\label{sec:num-design}

In a channel of length $L=13.3$ the admissible wavenumbers are $k=m/L$ with $m$ an
integer, the corresponding mode being $e^{2\pi imx/L}$, and the Orr--Sommerfeld
operator at $Re=30000$ is unstable at exactly one of them, $m=4$, that is $k=0.3$, with $\lambda=+0.01147-0.40557i$ and phase speed $0.215$. The band
actuated is $m=1,\dots,5$, which is wider than the unstable set because the band of
Theorem \ref{thm:first} selects the slow wavenumbers and not the unstable ones: the
four long waves in the band are stable but decay far more slowly than the rate asked
of the closed loop. The mean mode carries no actuation in this run.

The design parameters are $c=0.7$, $\theta_1=1.3$, $\theta_2=3.1$ and
$\varepsilon_1=\varepsilon_2=0.2$. Theorem \ref{thm:first} places every controlled
wavenumber at one common shift $q$, which is what a decay rate uniform in $k$ requires.
In the nonlinear run the shift is instead graduated in the wavenumber: the unstable
wavenumber is assigned the closed-loop eigenvalue $-\sigma$ with $\sigma=0.15$, and
each stable wavenumber has its leading eigenvalue deepened by the factor $1+\beta$ with
$\beta=3$. Grading the shift is a specialization within the design, since the
construction is carried out one wavenumber at a time and the estimate of Theorem
\ref{thm:first} holds with $\omega$ the smallest of the assigned rates. Its purpose is
frugality of the actuation: a common shift large enough for the unstable wavenumber
drives the already well-damped long waves much further left than the nonlinear flow
requires, and enlarges the wall velocity throughout.

The pre-feedback does not stabilize: at $k=0.3$ it moves the unstable eigenvalue from
$+0.01147$ to the $\lambda'$ of Table \ref{tab:gains}, still in the right half plane.
The stabilization is the tangential input's.

\begin{table}[t]
\centering
\caption{The five actuated wavenumbers. Here $\lambda'$ is the leading eigenvalue of
the pre-compensated operator $A_k'$, $q_k$ the shift assigned to it, and $F_k$ the
Fredholm gain of \eqref{eq:prefeed}.}
\label{tab:gains}
\begin{tabular}{cccccc}
\hline
$m$ & $k$ & $\lambda'$ & $q_k$ & $\lambda'-q_k$ & $\max|F_k|$\\
\hline
1 & $0.075$ & $-0.0273$ & $0.0819$ & $-0.1092$ & $1.50$\\
2 & $0.150$ & $-0.0395$ & $0.1184$ & $-0.1579$ & $2.40$\\
3 & $0.225$ & $-0.0189$ & $0.0566$ & $-0.0755$ & $1.65$\\
4 & $0.300$ & $+0.0056$ & $0.1556$ & $-0.1500$ & $7.41$\\
5 & $0.375$ & $-0.0158$ & $0.0474$ & $-0.0632$ & $4.57$\\
\hline
\end{tabular}
\end{table}

The gain is assembled exactly as in the proof of Theorem \ref{thm:main}: the
pre-compensated operator $A_k'$ is discretized on the same Chebyshev grid with the two
static conditions \eqref{eq:prefeed} written into the boundary rows of the pencil; its
eigenvalues $\mu_n'$ and the residue coefficients $b_n'$ are obtained from the right
and left eigenvectors; the frequency responses $\Psi^{(2)}(\nu_n)$ are obtained by
solving the pencil with unit tangential wall velocity; and $F_k=\gamma^{\!\top}T_k^{-1}$
with $\gamma_n=1/\beta_n'$ as in Lemma \ref{lem:gamman}. Modes that are not moved have
$\nu_n=\mu_n'$ and therefore $\gamma_n=0$, so the construction leaves them untouched.
The closed-loop spectrum of the discretized pencil under the gains of Table
\ref{tab:gains} lies on the prescribed targets to ten digits. At these parameters the
transformation is ill conditioned, $\|T_k\|\|T_k^{-1}\|\approx10^{7}$, which is the
non-normality of the plant rather than an artifact of the design; the constants left
unquantified by Theorem \ref{thm:main} are therefore not benign, and their growth with
the number of moved eigenvalues is what limits the construction in double precision.
The corresponding quantity for a Volterra design of this plant is the kernel itself,
whose growth in $Re$ is the subject of the estimate in \cite[Prop.~1]{VK07}.

\subsection{Result}\label{sec:num-result}

\begin{figure}[t]
\centering
\includegraphics[width=0.95\textwidth]{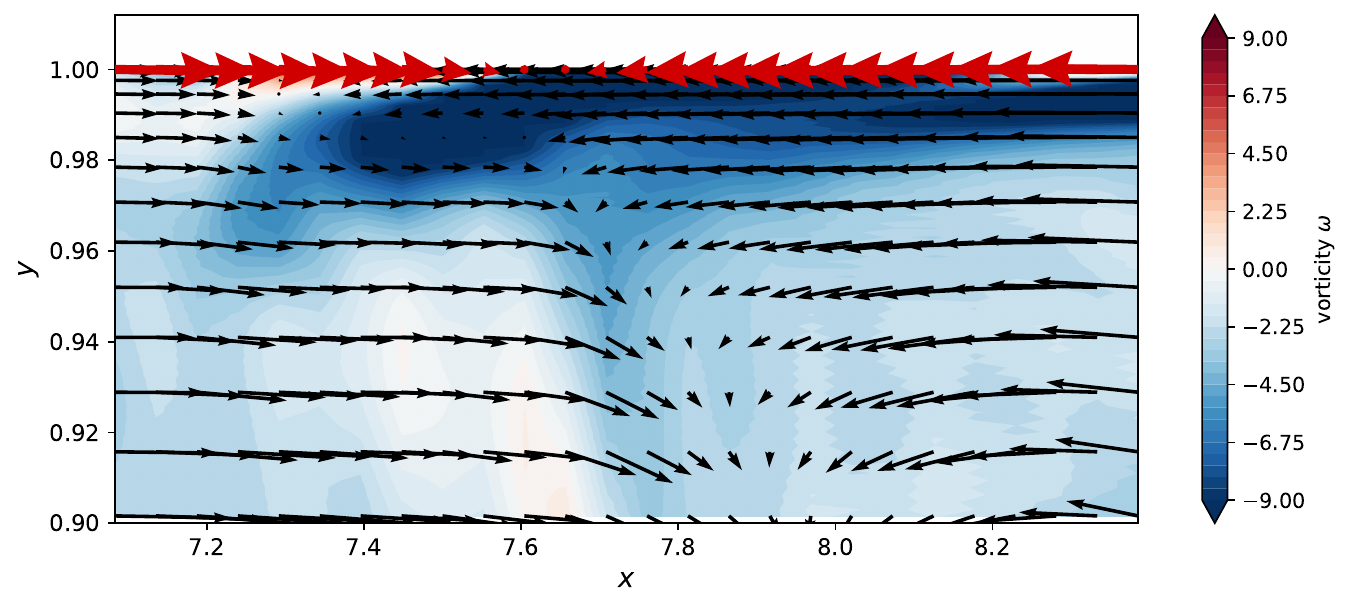}
\caption{A tenth of the channel length at the actuated wall, at $t=24$, early in the
transient. Red arrows are the wall actuation $(U_c,V_c)$, black arrows the velocity in
the frame moving at speed $0.20$, at one common scale. The two wall jets meet near
$x=7.7$, and the fluid they deliver leaves the wall there, along the pale separator in
the vorticity.}
\label{fig:wall}
\end{figure}

The initial condition is the upper panel of Figure \ref{fig:dns}: a fully developed
flow at $Re=60000$, at which two wavenumbers are unstable, grown from the unstable
eigenmode until it saturates and then carried over to the $Re=30000$ plant. It is used
in place of the saturated state of the $Re=30000$ plant itself in order to make the
test harder while keeping the initial datum a physical one --- a solution of the
nonlinear equations rather than a synthetic disturbance, and one of larger amplitude
and richer structure than the plant being controlled would produce on its own. Its
disturbance energy is $E'\coloneqq\langle(u-U^e)^2+V^2\rangle=2.52\times10^{-2}$.

The feedback is applied from the initial instant. The simulation achieved exponential
stabilization and was terminated once the energy had fallen under a tenth of its
initial value, at $t\approx80$. The lower panel of Figure \ref{fig:dns} is the flow at
$t=79.1$, at $E'=2.38\times10^{-3}$; that snapshot is shown because the fluid dynamics
are still alive in it, and yet attenuated to near extinction of the perturbations. The
tangential wall velocity peaks at $0.54$ at the initial instant, when the control is
confronted with the carried-over flow, and is down to $0.090$ at $t=79.1$; the normal
wall velocity is $0.0031$ there, smaller by a factor of thirty, as the pre-feedback
strength $\varepsilon_1$ dictates.

Figure \ref{fig:wall} is a segment of a tenth of the channel length at the actuated
wall, taken early in the transient where the actuation is largest. The tangential wall
velocity changes sign twice across the segment; the two wall jets that result collide
near $x=7.7$, and the fluid they deliver is carried away from the wall in the wedge of
converging velocity vectors below that point, which reaches to $y=0.92$. The
actuation and the velocity just inside the wall are drawn at a single scale and are of
the same length. Over this window the wall velocity $|(U_c,V_c)|$ has maximum $0.315$
and mean $0.196$, against a maximum of $0.310$ and a mean $0.188$ for the perturbation
velocity $|(u-U^e,V)|$ at $y=0.95$; that is, the actuation drives the fluid at the wall
no harder than the fluid a twentieth of the channel height inside is already moving.

\section{Conclusions}\label{sec:status}

The linearized channel is stabilized at an arbitrary rate by static feedback of both
wall velocity components, with no hypothesis on the Orr--Sommerfeld spectrum: multiple
eigenvalues and Jordan chains of any admissible length are admitted, and every
restriction falls on the design parameters. The Fredholm transformation reaches a
Navier--Stokes plant, the first in that line to be non-normal and to carry a mass
operator in its evolution.

Against \cite{VK07}, the same plant with the same actuators: on the normal wall
velocity the two laws ask for about the same thing, a one-pole filter per wavenumber
there and a static average of arbitrarily small gain here, both subject to the same
$t=0$ wall-compatibility condition. What separates the designs is the
transformation. The Volterra transformation requires spatial causality, which the
first stage of \cite{VK07} imposes; the Fredholm transformation accepts the plant's
nonlocal structure, while its gain is constructed from spectral data
instead of in closed form, through an analysis with thresholds and exclusion sets that
have no Volterra counterpart. The arbitrary decay rate discriminates nothing: a
damped Volterra target yields it too, and the damping already appears in \cite{VTC08},
where increasing it speeds the controlled modes of this same channel. For anyone designing a controller for this
flow, \cite{VK07} remains an instrument with certain advantages: its kernels are
explicit, its
stability holds in $H^1$ and $H^2$, and its gain follows from the plant data by a
computation rather than from a spectral construction. The two constructions place
their conditioning differently. The Volterra gain is the kernel trace itself, and it
carries the growth in $Re$ of the kernel that Proposition 1 of \cite{VK07} bounds; the
Fredholm gain is bounded, and the growth is confined to the transformation that
produces it, which is inverted once. What is added here is the reach
of the method and the results of Sections~\ref{sec:part-spectral} and
\ref{sec:necessity} on the plant itself: two-input controllability of the linearized
flow, and controllability of every mode from the tangential wall velocity for every
plant.

In the nonlinear run of Section \ref{sec:numerics} the feedback acts through twenty
real inputs on a discretized state of about 150,000 variables with quadratic
nonlinearity, one input per seventy-four hundred states, a ratio at which nonlinear
feedback design is seldom exercised.

The test at $Re=30000$ demonstrates that applying the feedback is equivalent to slowing
down the flow twentyfold. The decay rates assigned in Table \ref{tab:gains} are those
the uncontrolled case has at a Reynolds number of about $1500$.

The Volterra design passed from this channel to three dimensions, and to conducting
fluids, without a change of method \cite{VSK09}, the magnetic field there being
removable to leave the 3-D Navier--Stokes channel. The same passage is expected for
the design given here: the third velocity component adds a second wavenumber and a
larger modal system, and the analysis of Sections \ref{sec:part-spectral} and
\ref{sec:part-fredholm} is carried out one wavenumber at a time.


\appendix
\section{Proofs of supporting results}\label{app:proofs}

\subsection{Energy space, plant operators, and Stokes spectrum}

\begin{proof}[Proof of Lemma \ref{lem:ell}]
$\Lap^2V=V_{yyyy}-8\pi^2k^2V_{yy}+16\pi^4k^4V$, and $\|V_{yy}\|_{H^{-1}}\le\|V_y\|$,
$\|V\|_{H^{-1}}\le\|V\|$, so $\|V_{yyyy}\|_{H^{-1}}\le\|\Lap^2V\|_{H^{-1}}+C\|V\|_{H^1}$.
For $g\in L^2$, $c_k\|g-\bar g\|\le\|g_y\|_{H^{-1}}\le C_k\|g-\bar g\|$ with
$\bar g=\int_0^1g$: in the unweighted dual norm this is an equality, since
$\{\varphi_y:\varphi\in H^1_0\}$ is the mean-zero subspace of $L^2$ and
$\|\varphi_y\|=\|\varphi\|_{H^1_0}$, and the $k$-weighted and unweighted $H^{-1}$
norms are equivalent. Applying this to $g=V_{yyy}$:
$\|V_{yyy}\|\le C_k'\|V_{yyyy}\|_{H^{-1}}+|V_{yy}(1)-V_{yy}(0)|$, and
$|V_{yy}(1)-V_{yy}(0)|\le2\|V_{yy}\|_{L^\infty}\le\epsilon\|V_{yyy}\|+C_\epsilon\|V_{yy}\|$
by the one-dimensional Agmon inequality; similarly
$\|V_{yy}\|\le\epsilon\|V_{yyy}\|+C_\epsilon\|V_y\|$. Absorbing gives \eqref{eq:ell}.
\end{proof}

\begin{proof}[Proof of Proposition \ref{prop:ops}(i)]
Consider the form $\mathfrak s[V,W]\coloneqq-\tfrac1{Re}\ipL{\Lap V}{\Lap W}$ on $H^2_0$,
closed, symmetric, nonpositive in $\Hk$, with $H^2_0$ dense in $\Hk$. For $V,W\in H^2_0$,
integrating by parts once,
$\ipL{\Lap V}{\Lap W}=\ipL{\Lap V}{W_{yy}}-4\pi^2k^2\ipL{\Lap V}{W}$, and
$\ipL{\Lap V}{W_{yy}}=-\ipL{(\Lap V)_y}{W_y}$ when $\Lap V\in H^1$, with no boundary term
since $W_y(0)=W_y(1)=0$. Thus $\mathfrak s[V,\cdot]$ is $\|\cdot\|_k$-bounded iff
$\Lap^2V\in H^{-1}$ iff $V\in H^3$, and then
$\mathfrak s[V,W]=-\tfrac1{Re}\ipL{\Lap^2V}{W}=\ip{S_kV}{W}$. So the form operator is
$S_k$ on $\mathcal D$; compact resolvent from
$\mathcal D\hookrightarrow\Hk$ compact; $H^4\cap H^2_0$ dense in the graph norm by
\eqref{eq:ell}. Eigenfunctions solve a constant-coefficient ODE, hence smooth. For
$V\in H^2_0$,
$\|\Lap V\|^2-4\pi^2k^2\|V\|_k^2=\|V_{yy}\|^2+4\pi^2k^2\|V_y\|^2>0$ unless $V=0$ (the
cross term integrates by parts without boundary contribution), so every eigenvalue
satisfies $\lambda<-4\pi^2k^2/Re$.
\end{proof}

\begin{proof}[Proof of Proposition \ref{prop:ops}(ii)]
For $V\in H^1_0$, $\Lap V\in H^{-1}$ with $\|\Lap V\|_{H^{-1}}=\|V\|_k$. Multiplication by
$y(y-1)$ is bounded on $H^1_0$, hence by duality on $H^{-1}$, and
$\Lap^{-1}:H^{-1}\to H^1_0$ is an isometry; $-16\pi ki\,\Lap^{-1}$ is bounded
$L^2\to H^1_0$. So $B_k$ is bounded, and its $\Hk$-adjoint is explicit: for
$V\in\mathcal D$ and $\zeta\in H^2\cap H^2_0$,
$\ip{B_kV}{\zeta}=-8\pi ki\ipL{\Lap V}{y(y-1)\zeta}+16\pi ki\ipL{V}{\zeta}
=-\ipL{\Lap V}{B_k^*\zeta}$ with
\begin{equation}\label{eq:Bstar}
B_k^*\zeta=-8\pi ki\,y(y-1)\zeta+\Lap^{-1}\big(16\pi ki\,\zeta\big),
\end{equation}
consistent with \eqref{eq:Lkstar}: $\Lap(S_k+B_k^*)\zeta=\mathcal L_k^*\zeta$ on
$H^4\cap H^2_0$. The spectral statements are those for bounded
perturbations of a self-adjoint operator with compact resolvent; the numerical range
bound gives the half-plane. If $(A_k-z_*)\phi_0=0$ then
$\Lap^2\phi_0=Re\,\Lap(z_*\phi_0-B_k\phi_0)\in L^2$, so $\phi_0\in H^4$; for a Jordan
chain $(A_k-z_*)\phi_j=\phi_{j-1}$ the same bootstrap applies inductively. For
$A_k^*$: by \eqref{eq:Bstar}, $B_k^*\zeta\in H^3$ whenever $\zeta\in H^3\cap H^2_0$,
so the chain equation gives
$\Lap^2\zeta_j=Re\,\Lap(\bar z_*\zeta_j+\zeta_{j-1}-B_k^*\zeta_j)\in H^1\subset L^2$,
and $\zeta_j\in H^4$ inductively.
\end{proof}

\begin{proof}[Proof of Lemma \ref{lem:trace}]
For $W\in H^4$: integrating $\ipL{\Lap^2W}{\zeta}$ by parts twice with
$\zeta=\zeta_y=0$ at the walls gives $\ipL{\Lap W}{\Lap\zeta}$; then
\[
\ipL{\Lap W}{\Lap\zeta}
=\Big[W_y\overline{\Lap\zeta}-W\,\overline{(\Lap\zeta)_y}\Big]_0^1+\ipL{W}{\Lap^2\zeta}
=W_y(1)\overline{\zeta_{yy}(1)}-W(1)\overline{\zeta_{yyy}(1)}+\ipL{W}{\Lap^2\zeta},
\]
using $W(0)=W_y(0)=0$, $\Lap\zeta(1)=\zeta_{yy}(1)$ and
$(\Lap\zeta)_y(1)=\zeta_{yyy}(1)$ (from $\zeta(1)=\zeta_y(1)=0$). For the advection term,
$\ipL{8\pi ki\,y(y-1)\Lap W}{\zeta}=8\pi ki\ipL{\Lap W}{y(y-1)\zeta}$, and two
integrations by parts produce the bracket
$[W_y\,\overline{y(y-1)\zeta}-W\,\overline{(y(y-1)\zeta)_y}]_0^1$, which vanishes since
$y(y-1)$ and $(y(y-1)\zeta)_y=(2y-1)\zeta+y(y-1)\zeta_y$ vanish at both walls (using
$\zeta\in H^2_0$). The constant term contributes its conjugate. Extend to $W\in H^3$ by
density; both sides are $H^3$-continuous in $W$. The last statement follows by pairing
with $\Lap^{-1}$: for $\zeta\in H^4\cap H^2_0$ and $W\in\mathcal D$ (so $W(1)=W_y(1)=0$),
$\ip{A_kW}{\zeta}=-\ipL{\mathcal L_kW}{\zeta}=-\ipL{W}{\mathcal L^*_k\zeta}
=\ip{W}{\Lap^{-1}\mathcal L^*_k\zeta}$, the last equality valid since
$\Lap^{-1}\mathcal L^*_k\zeta\in H^1_0$ and $W\in H^1_0$; the core property follows from
\eqref{eq:ell} as in Proposition~\ref{prop:ops}(i).
\end{proof}

\begin{proof}[Proof of Lemma \ref{lem:pairconv}]
One integration by parts gives
$\ip{W}{\zeta}=\big[W\overline{\zeta_y}\big]_0^1-\ipL{W}{\Lap\zeta}$, and the bracket
vanishes because $\zeta_y(0)=\zeta_y(1)=0$.
\end{proof}

\begin{proof}[Proof of Proposition \ref{prop:stokes-all}(i)]
$S_ke=\lambda e$ with $e\in\mathcal D$ is $\Lap(\Lap-Re\,\lambda)e=0$ with the
no-slip conditions $e(0)=e_y(0)=e(1)=e_y(1)=0$ at both walls. Solutions are spanned by $e^{\pm2\pi ky}$ and
$e^{\pm\beta y}$, $\beta^2=4\pi^2k^2+Re\,\lambda=-\sigma^2$, so $\beta=i\sigma$. Imposing
$e(0)=e_y(0)=0$ leaves the span of $f,g$. The conditions $e(1)=e_y(1)=0$ have a
nontrivial solution iff $f(1)g_y(1)-f_y(1)g(1)=0$. Direct differentiation gives
$g_y(1)=2\pi kf(1)$, so the condition is $2\pi kf(1)^2-f_y(1)g(1)=0$. With
$f(1)=\cosh2\pi k-\cos\sigma$, $f_y(1)=2\pi k\sinh2\pi k+\sigma\sin\sigma$,
$g(1)=\sinh2\pi k-\tfrac{2\pi k}{\sigma}\sin\sigma$, expansion and
$\cosh^2-\sinh^2=1$, $\cos^2+\sin^2=1$ yield \eqref{eq:char}.
\end{proof}

\begin{proof}[Proof of Proposition \ref{prop:stokes-all}(ii)]
Set
$\Phi(\sigma)\coloneqq\frac{\sigma^2-4\pi^2k^2}{\sigma}\sin\sigma-\frac{4\pi k}{\sinh2\pi k}\big(1-\cosh(2\pi k)\cos\sigma\big)$
and $\Phi_0(\sigma)\coloneqq\sigma\sin\sigma$. On $\Gamma_n=\{|\sigma-n\pi|=\pi/2\}$,
write $\sigma=n\pi+z$: $|\sin\sigma|=|\sin z|$, and
$|\sin z|^2=\sin^2(\Rea z)+\sinh^2(\Ima z)\ge(2\Rea z/\pi)^2+(\Ima z)^2=1$ for
$|z|=\pi/2$; also $|\sin z|,|\cos z|\le\cosh(\pi/2)$. Hence on $\Gamma_n$,
$|\Phi_0(\sigma)|\ge(n+\tfrac12)\pi$ while
$|\Phi(\sigma)-\Phi_0(\sigma)|\le C_k$, bounded in $n$. For $n\ge n_0$ with
$(n_0+\tfrac12)\pi>C_k$, Rouch\'e's theorem gives exactly one zero of $\Phi$ inside
$\Gamma_n$, simple, and real since $\Phi(\bar\sigma)=\overline{\Phi(\sigma)}$. The disks
cover $\sigma\ge(n_0+\frac12)\pi$ except the points $(n+\frac32)\pi$, where
$|\sin\sigma|=1$ and $\Phi\ne0$ by the same bound. Since every Stokes eigenvalue
corresponds to a real $\sigma>0$ (Proposition \ref{prop:ops}(i)), these roots are all the Stokes
eigenvalues with $\sigma\ge(n_0+\frac12)\pi$. With $\sigma_n=(n+1)\pi+\epsilon_n$,
$|\epsilon_n|<\pi/2$, \eqref{eq:char} gives
$\big|\frac{\sigma_n^2-4\pi^2k^2}{\sigma_n}\sin\epsilon_n\big|\le C_k$, hence
$|\sin\epsilon_n|\le C_k'/n$, and $|\sin\epsilon|\ge\frac2\pi|\epsilon|$ on
$|\epsilon|\le\pi/2$ gives $\epsilon_n=O(n^{-1})$; expanding
$\sin\epsilon_n=\epsilon_n+O(n^{-3})$, $\cos\epsilon_n=1+O(n^{-2})$ in \eqref{eq:char}
gives \eqref{eq:sigman}. For \eqref{eq:gap}:
$\lambda_n-\lambda_{n+1}=(\sigma_{n+1}-\sigma_n)(\sigma_{n+1}+\sigma_n)/Re$ with
$\sigma_{n+1}-\sigma_n=\pi+\epsilon_{n+1}-\epsilon_n\ge\pi-Cn^{-1}\ge\pi/2$ (enlarging
$n_0$) and $\sigma_{n+1}+\sigma_n\ge(2n+1)\pi$ for $n\ge n_0$.
For the count below $(n_0+\frac12)\pi$, put $x=y-\tfrac12$, $a\coloneqq2\pi k$; the
clamped conditions split into $\sigma\tan(\sigma/2)=-a\tanh(a/2)$ for eigenfunctions
even in $x$ and $\sigma\cot(\sigma/2)=a\coth(a/2)$ for odd ones, the poles of $\tan$
and $\cot$ forcing both coefficients to vanish. The left sides have derivatives
$(\sigma+\sin\sigma)/(2\cos^2(\sigma/2))>0$ and
$(\sin\sigma-\sigma)/(2\sin^2(\sigma/2))<0$, and run from $-\infty$ to $0$ on
$\big((2r-1)\pi,2r\pi\big)$ and from $+\infty$ to $0$ on $\big(2r\pi,(2r+1)\pi\big)$,
giving one root in each; on $(0,\pi)$ the even sides have opposite signs and the odd
left side is below $2<a\coth(a/2)$. Hence exactly one root in each
$\big(n\pi,(n+1)\pi\big)$, $n\ge1$, and none below.
For \eqref{eq:gapjm}: for $j\ge n_0$ it follows from
$\sigma_j\in(j\pi,(j+1)\pi)$; for $j<n_0$ and $n\ge n_0$,
$|\lambda_j-\lambda_n|\ge|\lambda_n|-|\lambda_{n_0}|\ge c\,n^2/Re\ge c\,|j^2-n^2|/(n_0^2Re)$
once $n\ge2n_0$, and the finitely many remaining pairs are handled by decreasing $c_0'$.
\end{proof}

\begin{proof}[Proof of Proposition \ref{prop:stokes-all}(iii)]
From $a_nf(1)+b_ng(1)=0$ and \eqref{eq:sigman},
$b_n/a_n=-(\cosh2\pi k-(-1)^{n+1})/\sinh2\pi k+O(n^{-1})$, bounded. Then
$e_{n,y}=a_n(2\pi k\sinh2\pi ky+\sigma_n\sin\sigma_ny)+b_n2\pi k(\cosh2\pi ky-\cos\sigma_ny)$,
so $\|e_{n,y}\|^2=a_n^2\sigma_n^2\|\sin\sigma_ny\|^2(1+O(n^{-1}))=\tfrac12a_n^2\sigma_n^2(1+O(n^{-1}))$
and $4\pi^2k^2\|e_n\|^2=O(a_n^2)$, giving the first two claims. Next
$e_{n,yy}(1)=a_n(4\pi^2k^2\cosh2\pi k+\sigma_n^2\cos\sigma_n)+b_n(4\pi^2k^2\sinh2\pi k+2\pi k\sigma_n\sin\sigma_n)$
is dominated by $a_n\sigma_n^2\cos\sigma_n=a_n\sigma_n^2(-1)^{n+1}(1+O(n^{-2}))$. For
\eqref{eq:efest3}, from $f_{yyy}=8\pi^3k^3\sinh(2\pi ky)-\sigma^3\sin(\sigma y)$ and
$g_{yyy}=8\pi^3k^3\cosh(2\pi ky)+2\pi k\sigma^2\cos(\sigma y)$,
\[
e_{n,yyy}(1)=a_n\big(8\pi^3k^3\sinh2\pi k-\sigma_n^3\sin\sigma_n\big)
+b_n\big(8\pi^3k^3\cosh2\pi k+2\pi k\,\sigma_n^2\cos\sigma_n\big).
\]
Here $\sin\sigma_n=(-1)^{n+1}\epsilon_n(1+O(n^{-2}))$ with $\epsilon_n$ from
\eqref{eq:sigman}, so
\[
-a_n\sigma_n^3\sin\sigma_n=2\sqrt2\,\sigma_n\,\frac{2\pi k}{\sinh2\pi k}\,(-1)^{n+1}\big(\cosh2\pi k-(-1)^{n+1}\big)\big(1+O(n^{-1})\big),
\]
while
$b_n\,2\pi k\sigma_n^2\cos\sigma_n=-\sqrt2\,\sigma_n\frac{2\pi k}{\sinh2\pi k}(-1)^{n+1}\big(\cosh2\pi k-(-1)^{n+1}\big)(1+O(n^{-1}))$;
the hyperbolic terms are $O(1)$. Adding the two contributions gives \eqref{eq:efest3}.
Since $\cosh2\pi k>1$, the right side of \eqref{eq:efest3} is nonzero for $n\ge n_0$
(enlarging $n_0$ if needed). Finally $|\ipL{h}{e_n}|\le\|h\|\,\|e_n\|$.
\end{proof}

\begin{proof}[Proof of Proposition \ref{prop:stokes-all}(iv)]
Suppose $S_ke=\lambda e$, $e\ne0$, and $e(1)=e_y(1)=e_{yy}(1)=0$. Expanding in the
basis centered at $y=1$,
$e=A_1\cos\sigma(y-1)+B_1\sin\sigma(y-1)+C_1\cosh2\pi k(y-1)+D_1\sinh2\pi k(y-1)$, the
conditions at $y=1$ give $A_1+C_1=0$, $B_1\sigma+D_1\,2\pi k=0$ and
$-A_1\sigma^2+C_1\,4\pi^2k^2=0$; the first and third force
$C_1(4\pi^2k^2+\sigma^2)=0$, so $A_1=C_1=0$ and
$e=B_1[\sin\sigma(y-1)-\tfrac{\sigma}{2\pi k}\sinh2\pi k(y-1)]$. The no-slip condition
at $y=0$ gives $e_y(0)=B_1\sigma[\cos\sigma-\cosh2\pi k]=0$; since
$\cosh2\pi k>1\ge\cos\sigma$ for $k\ne0$ and real $\sigma$ (Proposition
\ref{prop:ops}), $B_1=0$ and $e\equiv0$. For simplicity: $S_k$ is self-adjoint, so
algebraic and geometric multiplicities agree, and a two-dimensional eigenspace would
contain a nonzero solution of the linear condition $e_{yy}(1)=0$.
\end{proof}

\begin{proof}[Proof of Lemma \ref{lem:bm}]
Define $K$ by $Ke_{n,i}\coloneqq(Q_n-Q_n^0)e_{n,i}$, so $(I+K)e_{n,i}=Q_ne_{n,i}$ and
$\|K\|_{HS}^2\le\sum_n\operatorname{rank}Q_n^0\,\|Q_n-Q_n^0\|^2<\infty$; thus $I+K$ is
Fredholm of index zero. If $\sum_{n,i}c_{n,i}Q_ne_{n,i}=0$ with $(c_{n,i})\in\ell^2$,
applying the bounded $Q_m$ termwise gives $Q_m\sum_ic_{m,i}e_{m,i}=0$; and $Q_m$ is
injective on $\operatorname{ran}Q_m^0$, since $x=Q_m^0x$, $Q_mx=0$ imply
$\|x\|=\|(Q_m^0-Q_m)x\|\le\|Q_m-Q_m^0\|\|x\|$. So all $c_{m,i}=0$, $I+K$ is injective,
hence invertible, and $\{Q_ne_{n,i}\}$ is the image of an orthonormal basis under a
bounded invertible operator.
\end{proof}

\subsection{Frequency responses and pre-compensated operator}

\begin{proof}[Proof of Lemma \ref{lem:resform}]
$W\coloneqq\Psi(z)-\Psi(z_0)$ has all four boundary values zero, so
$W\in H^4\cap H^2_0\subset\mathcal D$, and
$(\mathcal L_k-z\Lap)W=(z-z_0)\Lap\Psi(z_0)$. Apply $\Lap^{-1}$: since $W\in H^1_0$,
$\Lap^{-1}\Lap W=W$; since $\Psi(z_0)(1)=0$, $\Lap^{-1}\Lap\Psi(z_0)=\Psi(z_0)$. Thus
$(A_k-z)W=(z-z_0)\Psi(z_0)$, which is \eqref{eq:resPsi}. For $\rho_c$: now
$\rho_c(z_0)(1)=1$, so $\Lap^{-1}\Lap\rho_c(z_0)=\rho_c(z_0)-\ell$, the subtraction of
the harmonic part matching the traces; the same computation gives \eqref{eq:resrho}.
The meromorphic continuation and the pole-order bound follow because the $z$-dependence
sits entirely in $(z-A_k)^{-1}$ applied to a fixed vector of $\Hk$
(note $\rho_c(z_0)-\ell\in H^1_0$), and the principal part of the resolvent at $z_*$ is
$\sum_{j=1}^{\kappa_{\max}}(z-z_*)^{-j}N^{j-1}P_{z_*}$ with $N$ the nilpotent part on
$\operatorname{ran}P_{z_*}$.
\end{proof}

\begin{proof}[Proof of Lemma \ref{lem:chains}]
The chain relations at the differential level read
$\mathcal L_k^*\zeta_p=\Lap(\bar z_*\zeta_p+\zeta_{p-1})$ (with $\zeta_{-1}\coloneqq0$),
by applying $\Lap$ to $A_k^*\zeta_p=\Lap^{-1}\mathcal L_k^*\zeta_p$. Set
$x_p(z)\coloneqq-\ipL{\rho_c(z)}{\Lap\zeta_p}$. Apply \eqref{eq:trace} with
$W=\rho_c(z)$, $\zeta=\zeta_p$, using $\mathcal L_k\rho_c=z\Lap\rho_c$,
$\ipL{\Lap\rho_c}{\zeta_p}=\ipL{\rho_c}{\Lap\zeta_p}$ (no boundary terms since
$\zeta_p\in H^2_0$), $\rho_{c,y}(1)=c$, $\rho_c(1)=1$:
\[
-z\,x_p=c\,\alpha_p-\beta_p+\ipL{\rho_c}{\Lap(\bar z_*\zeta_p+\zeta_{p-1})}
=c\,\alpha_p-\beta_p-z_*x_p-x_{p-1},
\]
that is, $(z-z_*)x_p=\beta^c_p+x_{p-1}$, and induction from $x_{-1}=0$ gives the first
formula of \eqref{eq:xy}. For $\Psi$, the traces are $\Psi_y(1)=1$, $\Psi(1)=0$, so the
recursion is $(z-z_*)y_p=-\alpha_p+y_{p-1}$ with
$y_p\coloneqq-\ipL{\Psi(z)}{\Lap\zeta_p}$.
\end{proof}

\begin{proof}[Proof of Lemma \ref{lem:inj}]
$\zeta\in H^4\cap H^2_0$ solves the fourth-order ODE
$\mathcal L_k^*\zeta=\bar z_*\Lap\zeta$ with smooth coefficients and leading coefficient
$1/Re$. If also $\zeta_{yy}(1)=\zeta_{yyy}(1)=0$, then together with
$\zeta(1)=\zeta_y(1)=0$ all four Cauchy data at $y=1$ vanish, so $\zeta\equiv0$ by
uniqueness for linear ODEs. Injectivity: a nontrivial kernel element of the trace map
would be such a $\zeta$; and a linear injection into $\mathbb C^2$ bounds the dimension
by $2$. The statement for $A_k$ is identical with $\mathcal L_k\phi=z_*\Lap\phi$.
\end{proof}

\begin{proof}[Proof of Lemma \ref{lem:Xi}]
$\Xi V\in H^1_{(0}$ and
$(\Xi V)(1)=\varepsilon_1\ipL{V}{g_1}/(1-\varepsilon_1\ipL{\ell}{g_1})$; a direct check
gives $(\Xi V)(1)=\varepsilon_1\ipL{\Xi V}{g_1}$, so $\Xi V\in\Hkp$. For $W\in\Hkp$,
$V\coloneqq W-W(1)\ell$ satisfies $V(1)=0$ and
$\varepsilon_1\ipL{V}{g_1}=\varepsilon_1\ipL{W}{g_1}-W(1)\varepsilon_1\ipL{\ell}{g_1}
=W(1)(1-\varepsilon_1\ipL{\ell}{g_1})$, whence $\Xi V=V+W(1)\ell=W$. Boundedness of both
maps and the smallness estimate are immediate from
$|\ipL{V}{g_1}|\le\|g_1\|\|V\|\le C\|V\|_k$.
\end{proof}

\begin{proof}[Proof of Lemma \ref{lem:counting}]
Conjugate to $\Hk$: $\hat A\coloneqq\Xi^{-1}A_k'\Xi$, which has the same spectrum and
multiplicities as $A_k'$. For $G\in\Hk$: $(\Xi G)(1)\ell$ cancels inside
\eqref{eq:resAp}, giving
\begin{equation}\label{eq:resdiff}
(z-\hat A)^{-1}-(z-A_k)^{-1}
=\big(\rho_c(z)-\ell\big)\otimes\mathfrak a(z)+\Psi(z)\otimes\mathfrak b(z),
\end{equation}
where, with $R(z)\coloneqq(z-A_k)^{-1}$,
$\mathfrak a=[\mathfrak D^{-1}]_{11}\,\varepsilon_1\ipL{R(z)\,\cdot}{g_1}+[\mathfrak D^{-1}]_{12}\,\varepsilon_2\ipL{R(z)\,\cdot}{g_2}$
and
$\mathfrak b=[\mathfrak D^{-1}]_{21}\,\varepsilon_1\ipL{R(z)\,\cdot}{g_1}+[\mathfrak D^{-1}]_{22}\,\varepsilon_2\ipL{R(z)\,\cdot}{g_2}$
(here $u\otimes\mathfrak f$ denotes $G\mapsto\mathfrak f[G]\,u$, and $\Xi^{-1}\rho_c=\rho_c-\ell$,
$\Xi^{-1}\Psi=\Psi$ since $\Psi(1)=0$). The difference is rank two, hence trace class,
and by Lemma \ref{lem:resform},
$\varepsilon_i\ipL{R(z)(\rho_c(z)-\ell)}{g_i}=-\varepsilon_i\ipL{\dot\rho_c(z)}{g_i}=\mathfrak D_{i1}'(z)$
and $\varepsilon_i\ipL{R(z)\Psi(z)}{g_i}=-\varepsilon_i\ipL{\dot\Psi(z)}{g_i}=\mathfrak D_{i2}'(z)$.
Therefore
\[
\tr\big[(z-\hat A)^{-1}-(z-A_k)^{-1}\big]
=\mathfrak a\big[\rho_c-\ell\big]+\mathfrak b\big[\Psi\big]
=\tr\big(\mathfrak D(z)^{-1}\mathfrak D'(z)\big)
=\frac{(\det\mathfrak D)'(z)}{\det\mathfrak D(z)}
\]
by Jacobi's formula. Integrating over $\Gamma$ and using that the trace of a Riesz
projection is the enclosed algebraic multiplicity,
$\nu(\Gamma;\hat A)-\nu(\Gamma;A_k)=\frac1{2\pi i}\oint_\Gamma\tr\big[(z-\hat A)^{-1}-(z-A_k)^{-1}\big]\dd z$,
which is \eqref{eq:counting} by the argument principle for the meromorphic function
$\det\mathfrak D$.
\end{proof}

\begin{proof}[Proof of Lemma \ref{lem:disc}]
Entirety is clear. If $\ipL{v}{g_\theta}\equiv0$ then all $\theta$-derivatives at
$\theta=0$ vanish: $\int_0^1v(y)(y-1)^j\dd y=0$ for every $j\ge0$, so $v$ is orthogonal
to all polynomials and $v=0$. For $\Psi$: by Watson's lemma with $s=1-y$,
$\Psi(z_0)(1-s)=-s+O(s^2)$ (since $\Psi(1)=0$, $\Psi_y(1)=1$), so
$\int_0^1\Psi(z_0)(1-s)e^{-\theta s}\dd s=-\theta^{-2}+O(\theta^{-3})$.
\end{proof}

\subsection{Splitting, high modes, and closed loop}

\begin{proof}[Proof of Lemma \ref{lem:t2}]
\emph{Independence of $\varepsilon_1$.} With $Q\coloneqq\ipL{\Psi}{g_1}$,
$P\coloneqq\ipL{\rho_c}{g_1}$,
\[
F\coloneqq\frac{\varepsilon_1Q}{1-\varepsilon_1P}=\frac{Q}{\varepsilon_1^{-1}-P}
=-\frac QP\cdot\frac1{1-(\varepsilon_1P)^{-1}} .
\]
$1/P$ has a zero of order $\kappa_1$ at $z_*$, so $(\varepsilon_1P)^{-1}$ has a zero of
order $\kappa_1$; hence the Taylor coefficients of $F$ at $z_*$ of orders
$0,\dots,\kappa_1-1$ coincide with those of $-Q/P$ and are $\varepsilon_1$-free.
(That $F$ is regular at $z_*$ follows since $P$ has a genuine pole there.) In
$\Psi^{(1)}=\Psi+F\rho_c$, the factor $\rho_c$ has pole order $\le\kappa_1$, so the
principal part of $F\rho_c$ at $z_*$ uses only those first $\kappa_1$ Taylor
coefficients of $F$. Therefore the principal part of $\ipL{\Psi^{(1)}}{g_2}$ equals that
of $\ipL{\Psi}{g_2}-(Q/P)\ipL{\rho_c}{g_2}=N/P$, which is \eqref{eq:Nfun}.

\emph{Laurent bookkeeping.} Write $u=z-z_*$ and, per chain $r$, form the polynomials
\[
A^{(r)}(u)=\sum_{p=0}^{\kappa_r-1}\alpha^{(r)}_pu^p,\qquad
B^{(r)}(u)=\sum_{p=0}^{\kappa_r-1}\beta^{c,(r)}_pu^p,\qquad
G_i^{(r)}(u)=\sum_{a=0}^{\kappa_r-1}\ipL{\phi^{(r)}_a}{g_i}u^a .
\]
By Lemma \ref{lem:chains} and the biorthogonal normalization \eqref{eq:biorth}, the
principal parts at $z_*$ are
\[
\ipL{\Psi}{g_i}\ \dot=\ -\sum_ru^{-\kappa_r}\big[A^{(r)}G_i^{(r)}\big]_{<\kappa_r},\qquad
\ipL{\rho_c}{g_i}\ \dot=\ \sum_ru^{-\kappa_r}\big[B^{(r)}G_i^{(r)}\big]_{<\kappa_r},
\]
where $[X]_{<\kappa}$ truncates a power series past $u^{\kappa-1}$ and $\dot=$ means
equality modulo functions regular at $z_*$. Form $N$. The same-chain contribution of chain $r$ is
\begin{equation}\label{eq:samechain}
-u^{-2\kappa_r}\Big(\big[B^{(r)}G_1^{(r)}\big]_{<\kappa_r}\big[A^{(r)}G_2^{(r)}\big]_{<\kappa_r}
-\big[B^{(r)}G_2^{(r)}\big]_{<\kappa_r}\big[A^{(r)}G_1^{(r)}\big]_{<\kappa_r}\Big)
=O\big(u^{-\kappa_r}\big),
\end{equation}
because the untruncated products cancel identically,
$B^{(r)}G_1^{(r)}\cdot A^{(r)}G_2^{(r)}=B^{(r)}G_2^{(r)}\cdot A^{(r)}G_1^{(r)}$, while
each truncation error starts at internal order $\kappa_r$ against a bounded cofactor;
and $-\kappa_r\ge-\kappa_1>-\kappa_1-\kappa_2$, so no same-chain term reaches the
bottom order --- including the equal-chain case $\kappa_1=\kappa_2$, where
\eqref{eq:samechain} gives order $-\kappa$ against a bottom of $-2\kappa$. The
cross-chain products reach the bottom order only through their constant coefficients:
writing $X_0$ for the $u^0$ coefficient of a bracket,
\begin{multline}\label{eq:Nbottom}
N_{-\kappa_1-\kappa_2}
=-B^{(1)}_0G^{(1)}_{1,0}A^{(2)}_0G^{(2)}_{2,0}-B^{(2)}_0G^{(2)}_{1,0}A^{(1)}_0G^{(1)}_{2,0}\\
+B^{(1)}_0G^{(1)}_{2,0}A^{(2)}_0G^{(2)}_{1,0}+B^{(2)}_0G^{(2)}_{2,0}A^{(1)}_0G^{(1)}_{1,0}
=\big(A^{(1)}_0B^{(2)}_0-A^{(2)}_0B^{(1)}_0\big)\det\big[G^{(r)}_{i,0}\big],
\end{multline}
with $A^{(r)}_0=\alpha^{(r)}_0$, $B^{(r)}_0=\beta^{c,(r)}_0$,
$G^{(r)}_{i,0}=\ipL{\phi^{(r)}_0}{g_i}$; the $c$-terms cancel in the first factor. Thus
$N$ has no Laurent coefficients below $-\kappa_1-\kappa_2$. Dividing by
$P=u^{-\kappa_1}(R^c+O(u))$: with $1/P=u^{\kappa_1}\sum_{m\ge0}s_mu^m$, $s_0=1/R^c$,
\begin{equation}\label{eq:division}
[N/P]_j=\sum_{m\ge0}N_{j-m-\kappa_1}\,s_m ,
\end{equation}
so for $j\le-\kappa_2-1$ every index on the right falls below the bottom order and the
coefficient vanishes, giving pole order $\le\kappa_2$; at $j=-\kappa_2$ only $m=0$
survives, $[N/P]_{-\kappa_2}=N_{-\kappa_1-\kappa_2}/R^c$, which is \eqref{eq:t2}.
\end{proof}

\begin{proof}[Proof of Lemma \ref{lem:tails}]
Expand in the unconditional decomposition of Corollary \ref{cor:decomp}. For (i):
$\ipL{\Psi(z)}{g}=\ipL{P_{\mathrm{low}}\Psi(z)}{g}+\sum_{j\ge n_1}\frac{b_j}{z-\lp j}\ipL{\ep j}{g}$
by \eqref{eq:tailpair}. Here $|b_j|\le Cj/Re$, $|\ipL{\ep j}{g}|\le\|\ep j\|\,\|g\|\le C/j$
(Propositions \ref{prop:kato}, \ref{prop:stokes-all}(iii)), and on $\Gamma_m$,
$|z-\lp j|\ge c|j^2-m^2|/(2Re)$ for $j\ne m$ while $|z-\lp m|=d_m/4\ge c_0m/(4Re)$; the
$j=m$ term is $O(1/m)$ and $\sum_{j\ne m}|j^2-m^2|^{-1}=O(\tau_m)$. For the low part,
the identity $(z-z_0)(z-A_k)^{-1}=I+(A_k-z_0)(z-A_k)^{-1}$ and \eqref{eq:resPsi} give
$P_{\mathrm{low}}\Psi(z)=-(z-A_k)^{-1}(A_k-z_0)P_{\mathrm{low}}\Psi(z_0)$, of norm
$\le C/|z|\le CRe/m^2$. (ii) is identical using
$\ip{\rho_c(z)-\ell}{\tep j}=\beta^c[\tep j]/(z-\lp j)$ from \eqref{eq:tailpair},
$\ipL{\ell}{\Lap\tep j}=0$, and
$P_{\mathrm{low}}(\rho_c(z)-\ell)=-(z-A_k)^{-1}(A_k-z_0)P_{\mathrm{low}}(\rho_c(z_0)-\ell)$,
the projection being applied before $A_k$ because $\rho_c(z_0)-\ell\notin\mathcal D$.
For (iii):
$\ipL{(z-A_k)^{-1}G}{g}=\ipL{(z-A_k)^{-1}P_{\mathrm{low}}G}{g}+\sum_j\frac{\ip{G}{\tep j}}{z-\lp j}\ipL{\ep j}{g}$;
Cauchy--Schwarz with $\big(\sum_jj^{-2}|z-\lp j|^{-2}\big)^{1/2}=O(m^{-2})$ (the $j=m$
term is $O(m^{-1}\cdot Re/m)$, the sum near $j=m$ is $O(Re^2m^{-4})$) and the low part
$O(1/|z|)$ give the bound. (iv) is the vector version of the sums in (i)--(ii) with
$\ell^2$ instead of $\ell^1$ coefficients. On $C_j$ the same decompositions apply. The moduli $|\lambda_m|$ increase in $m$, so
for $m\le j$, $|z|-|\lambda_m|\ge r_j-|\lambda_j|=\tfrac12(|\lambda_{j+1}|-|\lambda_j|)\ge d_j/2$,
and for $m\ge j+1$, $|\lambda_m|-|z|\ge|\lambda_{j+1}|-r_j\ge d_j/2$; hence
$|z-\lp m|\ge\big||z|-|\lambda_m|\big|-\|B_k\|\ge d_j/4$ for every $m$. Moreover, for
$m\le j-1$, $\big||z|-|\lambda_m|\big|\ge|\lambda_j|-|\lambda_m|\ge c_0'|j^2-m^2|/Re$,
and for $m\ge j+2$,
$\big||z|-|\lambda_m|\big|\ge|\lambda_m|-|\lambda_{j+1}|\ge c_0'|m^2-(j+1)^2|/Re\ge\tfrac12c_0'|m^2-j^2|/Re$,
both by \eqref{eq:gapjm}; absorbing $\|B_k\|$ after enlarging $m_0$,
$|z-\lp m|\ge c|m^2-j^2|/Re$ for $|m-j|\ge2$, while for $m\in\{j,j+1\}$ the bound
$d_j/4\ge c_0j/(4Re)$ serves. And $|z|\asymp j^2/Re$; every estimate goes through with
$m$ replaced by $j$. For real $z\to+\infty$ all denominators are
$\ge z+cj^2/Re$ and the same computations give the stated decay.
\end{proof}

\begin{proof}[Proof of Lemma \ref{lem:expand2}]
Membership in $\Hkp$ and the domain-difference computation: for
$W_d\coloneqq\Psi^{(2)}(z)-\Psi^{(2)}(z_0)$, all three (linear, nonlocal) boundary
conditions of $\mathcal D(A_k')$ are satisfied by $W_d$, since both responses satisfy the
same inhomogeneous set \eqref{eq:Psi2}; and
$\Lap(A_k'W_d)=\mathcal L_kW_d=z\Lap\Psi^{(2)}(z)-z_0\Lap\Psi^{(2)}(z_0)$ identifies
$A_k'W_d=z\Psi^{(2)}(z)-z_0\Psi^{(2)}(z_0)$ (the right side lies in $\Hkp$), whence
$(A_k'-z)W_d=(z-z_0)\Psi^{(2)}(z_0)$. Let $J'_m$ be the coordinate functionals of
$\{\chi'_m\}$ and $c_m(z)\coloneqq J'_m[\Psi^{(2)}(z)]$. Applying $J'_m$ to the
resolvent formula and using $J'_m[(z-A_k')^{-1}V]=J'_m[V]/(z-\mu'_m)$:
$c_m(z)=c_m(z_0)(\mu'_m-z_0)/(z-\mu'_m)$, a pure simple-pole rational function whose
residue is by definition the residue coordinate $b'_m$. Convergence of
\eqref{eq:expand2}: $(b'_m/(z-\mu'_m))_m\in\ell^2$ by Theorem \ref{thm:spectral}(iii)
and the gaps.
\end{proof}

\begin{proof}[Proof of Lemma \ref{lem:gamman}]
The tangential trace combination of $\Psi^{(2)}$ is identically one \eqref{eq:Psi2}, so
$\gamma_n$ is that combination evaluated at $\nu_n$ and divided by $\beta'_n$, i.e.
$\gamma_n=1/\beta'_n$; summability is $|\beta'_n|\ge c_3n/Re$ from Proposition
\ref{prop:qc2}.
\end{proof}

\begin{proof}[Proof of Lemma \ref{lem:pgraph}]
From $\Lap(A_kV)=\mathcal L_kV$,
$\Lap^2V=Re\big[\Lap(A_kV)-8\pi ki\,y(y-1)\Lap V+16\pi ki\,V\big]$. For $W\in H^1_0$
and $\varphi\in H^1_0$,
$|\ipL{\Lap W}{\varphi}|\le\|W_y\|\|\varphi_y\|+4\pi^2k^2\|W\|\|\varphi\|\le(1+2\pi|k|)\|W\|_k\|\varphi\|_{H^1_0}$,
so $\|\Lap W\|_{H^{-1}}\le(1+2\pi|k|)\|W\|_k$ in the unweighted dual norm;
multiplication by $y(y-1)$ is bounded on $H^1_0$, hence on $H^{-1}$, with a $k$-free
constant; and $\|16\pi ki\,V\|_{H^{-1}}\le C\|V\|_k$. Hence
$\|\Lap^2V\|_{H^{-1}}\le CRe(1+|k|)^2(\|A_kV\|_k+\|V\|_k)$. Retaining the
$k$-dependence in the elliptic estimate of Lemma \ref{lem:ell}:
$\|V_{yyyy}\|_{H^{-1}}\le\|\Lap^2V\|_{H^{-1}}+8\pi^2k^2\|V_y\|+16\pi^4k^4\|V\|
\le\|\Lap^2V\|_{H^{-1}}+C(1+|k|)^3\|V\|_k$, and the one-dimensional chain there has
$k$-free constants, giving \eqref{eq:kgraph}. For \eqref{eq:pk}, divide
\eqref{eq:pressure} by $2\pi ki$ and estimate term by term with
$u=V_y/(-2\pi ki)$: $\|\Lap u\|\le(\|V_{yyy}\|+4\pi^2k^2\|V_y\|)/(2\pi|k|)$,
$\||k|y(y-1)u\|\le C\|V\|_k$, $\|(2y-1)V\|\le C\|V\|$,
$\|u_t\|\le\|A_kV\|_k/(2\pi|k|)$; assembling with \eqref{eq:kgraph} and converting by
$\|\cdot\|_k=2\pi|k|\,\|\cdot\|_e$ gives \eqref{eq:pk}.
\end{proof}

\medskip\paragraph{Acknowledgment.}
The author's problems, ideas, and results were developed with the assistance of Claude and ChatGPT in final theorem formulation, proofs, simulations, and drafting throughout the paper, under the author's correction and complete verification.
This work was funded by AFOSR grant FA9550-23-1-0535 and NSF grant ECCS-2151525.

\end{document}